\documentclass[acmsmall,nonacm,screen]{acmart}
\settopmatter{printccs=false,printacmref=false}
\renewcommand\footnotetextcopyrightpermission[1]{} 

\setcopyright{none}
\bibliographystyle{ACM-Reference-Format}
\citestyle{acmauthoryear}   

%
%
\makeatletter
\@ifundefined{lhs2tex.lhs2tex.sty.read}%
  {\@namedef{lhs2tex.lhs2tex.sty.read}{}%
   \newcommand\SkipToFmtEnd{}%
   \newcommand\EndFmtInput{}%
   \long\def\SkipToFmtEnd#1\EndFmtInput{}%
  }\SkipToFmtEnd

\newcommand\ReadOnlyOnce[1]{\@ifundefined{#1}{\@namedef{#1}{}}\SkipToFmtEnd}
\usepackage{amstext}
\usepackage{amssymb}
\usepackage{stmaryrd}
\DeclareFontFamily{OT1}{cmtex}{}
\DeclareFontShape{OT1}{cmtex}{m}{n}
  {<5><6><7><8>cmtex8
   <9>cmtex9
   <10><10.95><12><14.4><17.28><20.74><24.88>cmtex10}{}
\DeclareFontShape{OT1}{cmtex}{m}{it}
  {<-> ssub * cmtt/m/it}{}

\DeclareFontShape{OT1}{cmtt}{bx}{n}
  {<5><6><7><8>cmtt8
   <9>cmbtt9
   <10><10.95><12><14.4><17.28><20.74><24.88>cmbtt10}{}
\DeclareFontShape{OT1}{cmtex}{bx}{n}
  {<-> ssub * cmtt/bx/n}{}

\newcommand{\Conid}[1]{\mathit{#1}}
\newcommand{\Varid}[1]{\mathit{#1}}
\newcommand{\anonymous}{\kern0.06em \vbox{\hrule\@width.5em}}

\newcommand{\bind}{\mathbin{>\!\!\!>\mkern-6.7mu=}}

\usepackage{polytable}

\@ifundefined{mathindent}%
  {\newdimen\mathindent\mathindent\leftmargini}%
  {}%

\def\resethooks{%
  \global\let\SaveRestoreHook\empty
  \global\let\ColumnHook\empty}
\newcommand*{\savecolumns}[1][default]%
  {\g@addto@macro\SaveRestoreHook{\savecolumns[#1]}}
\newcommand*{\restorecolumns}[1][default]%
  {\g@addto@macro\SaveRestoreHook{\restorecolumns[#1]}}
\newcommand*{\aligncolumn}[2]%
  {\g@addto@macro\ColumnHook{\column{#1}{#2}}}

\resethooks

\newcommand{\onelinecommentchars}{\quad-{}- }
\newcommand{\commentbeginchars}{\enskip\{-}
\newcommand{\commentendchars}{-\}\enskip}

\newcommand{\visiblecomments}{%
  \let\onelinecomment=\onelinecommentchars
  \let\commentbegin=\commentbeginchars
  \let\commentend=\commentendchars}

\newcommand{\invisiblecomments}{%
  \let\onelinecomment=\empty
  \let\commentbegin=\empty
  \let\commentend=\empty}

\visiblecomments

\newlength{\blanklineskip}
\setlength{\blanklineskip}{0.66084ex}

\newcommand{\hsindent}[1]{\quad}
\let\hspre\empty
\let\hspost\empty

\EndFmtInput
\makeatother
%
%
%
%
%
%
%
%
%
\ReadOnlyOnce{polycode.fmt}%
\makeatletter

\newcommand{\hsnewpar}[1]%
  {{\parskip=0pt\parindent=0pt\par\vskip #1\noindent}}

\newcommand{\hscodestyle}{}


\newcommand{\sethscode}[1]%
  {\expandafter\let\expandafter\hscode\csname #1\endcsname
   \expandafter\let\expandafter\endhscode\csname end#1\endcsname}


%
  {\par\noindent
   \advance\leftskip\mathindent
   \hscodestyle
   \let\\=\@normalcr
   \let\hspre\(\let\hspost\)%
   \pboxed}%
  {\endpboxed\)%
   \par\noindent
   \ignorespacesafterend}


%
  {\hsnewpar\abovedisplayskip
   \advance\leftskip\mathindent
   \hscodestyle
   \let\hspre\(\let\hspost\)%
   \pboxed}%
  {\endpboxed%
   \hsnewpar\belowdisplayskip
   \ignorespacesafterend}

  {\hsnewpar\abovedisplayskip
   \advance\leftskip\mathindent
   \hscodestyle
   \let\\=\@normalcr
   \(\pboxed}%
  {\endpboxed\)%
   \hsnewpar\belowdisplayskip
   \ignorespacesafterend}


\newcommand{\plainhs}{\sethscode{plainhscode}}

\plainhs


%
  {\hsnewpar\abovedisplayskip
   \advance\leftskip\mathindent
   \hscodestyle
   \let\\=\@normalcr
   \(\parray}%
  {\endparray\)%
   \hsnewpar\belowdisplayskip
   \ignorespacesafterend}


%
  {\parray}{\endparray}


%
  {\(\parray}{\endparray\)}


\def\codeframewidth{\arrayrulewidth}
\RequirePackage{calc}

  {\parskip=\abovedisplayskip\par\noindent
   \hscodestyle
   \arrayrulewidth=\codeframewidth
   \tabular{@{}|p{\linewidth-2\arraycolsep-2\arrayrulewidth-2pt}|@{}}%
   \hline\framedhslinecorrect\\{-1.5ex}%
   \let\endoflinesave=\\
   \let\\=\@normalcr
   \(\pboxed}%
  {\endpboxed\)%
   \framedhslinecorrect\endoflinesave{.5ex}\hline
   \endtabular
   \parskip=\belowdisplayskip\par\noindent
   \ignorespacesafterend}

\newcommand{\framedhslinecorrect}[2]%
  {#1[#2]}


%
  {\(\def\column##1##2{}%
   \let\>\undefined\let\<\undefined\let\\\undefined
   \newcommand\>[1][]{}\newcommand\<[1][]{}\newcommand\\[1][]{}%
   \def\fromto##1##2##3{##3}%
   }{\) }%


%
  {\let\orighscode=\hscode
   \let\origendhscode=\endhscode
   \def\endhscode{\def\hscode{\endgroup\def\@currenvir{hscode}\\}\begingroup}
   \orighscode\def\hscode{\endgroup\def\@currenvir{hscode}}}%
  {\origendhscode
   \global\let\hscode=\orighscode
   \global\let\endhscode=\origendhscode}%

\makeatother
\EndFmtInput
%
%
%
%
%
%
\ReadOnlyOnce{forall.fmt}%
\makeatletter


\let\HaskellResetHook\empty
\newcommand*{\AtHaskellReset}[1]{%
  \g@addto@macro\HaskellResetHook{#1}}
\newcommand*{\HaskellReset}{\HaskellResetHook}

\newcommand\hsforall{\global\let\hsdot=\hsperiodonce}
\newcommand*\hsperiodonce[2]{#2\global\let\hsdot=\hscompose}
\newcommand*\hscompose[2]{#1}

\AtHaskellReset{\global\let\hsdot=\hscompose}

\HaskellReset

\makeatother
\EndFmtInput

\renewcommand{\anonymous}{\_}




%
  {\hsnewpar\abovedisplayskip\centering
   \hscodestyle
   \let\hspre\(\let\hspost\)%
   \pboxed}%
  {\endpboxed%
   \hsnewpar\belowdisplayskip
   \ignorespacesafterend}

%
  {\relax\ifmmode\expandafter\pmboxed\else\expandafter\plainhscodecentered\fi}%
  {\relax\ifmmode\expandafter\endpmboxed\else\expandafter\endplainhscodecentered\fi}
\sethscode{autohscode}

\usepackage[bb=px]{mathalpha}
\usepackage{fancyvrb}
\usepackage{mathpartir}
\usepackage{subcaption}
\usepackage{tikz}
\usepackage{tikz-cd}
\usepackage{verbatim}
\usepackage{mathtools}
\usepackage{enumitem}

\usepackage{dingbat}
\usepackage{textgreek}
\usepackage{xifthen}
\usepackage{wasysym}
\usepackage{doi}
\usepackage{wrapfig}

\usepackage[nameinlink]{cleveref}

%
%

%

\DeclarePairedDelimiter{\classify}{\lceil}{\rceil}
\DeclarePairedDelimiter{\aset}{\{}{\}}
\DeclarePairedDelimiter{\atuple}{\langle}{\rangle}

\DeclarePairedDelimiter{\abrace}{\{}{\}}

\DeclarePairedDelimiter{\sem}{\llbracket}{\rrbracket}

\DeclarePairedDelimiter{\avert}{\vert}{\vert}

\newcommand{\bN}{\mathbb{N}}
\newcommand{\bT}{\mathbb{T}}

\newcommand{\bA}{\mathbb{A}}

\newcommand{\sP}{\mathscr{P}}

\newcommand{\fS}{\mathfrak{S}}


\newcommand{\MLTT}{Martin-L\"{o}f type theory}

\newcommand{\Jdg}{\mathbb{J}}
\newcommand{\impfun}[1]{\abrace{#1} \to}
\newcommand{\impapp}[1]{ \; \abrace{#1}}
\newcommand{\imparg}[1]{\abrace{#1}}
\newcommand{\impabs}[1]{\lambda \abrace{#1}.\;}
\newcommand{\extty}[3]{\aset{#1 \mid #2 \hookrightarrow #3}}
\newcommand{\Omod}{\mathop{\normalfont\Circle}}
\newcommand{\Cmod}{\mathop{\normalfont\CIRCLE}}
\newcommand{\Omodal}{\Circle\mathit{\hyp{}\mkern-1mu modal}}
\newcommand{\Cmodal}{\CIRCLE\mathit{\hyp{}\mkern-1mu modal}}
\newcommand{\isprop}{\mathit{is\hyp{}prop}}
\newcommand{\Oeta}{\eta^{\circ}}
\newcommand{\Ceta}{\eta^{\bullet}}
\newcommand{\Ceps}{\epsilon^{\bullet}}
\newcommand{\sop}{{\color{brown}\mathfrak{ob}}} 
\newcommand{\iso}{\cong}
\newcommand{\unitty}{\mathrm{1}}
\newcommand{\unitel}{{*}}
\newcommand{\emptyty}{\mathrm{0}}
\newcommand{\gluetysymbol}{\ltimes}
\newcommand{\gluety}[2]{(#1) \mathbin{\gluetysymbol} #2}
\newcommand{\glue}{\mathit{glue}}
\newcommand{\unglue}{\mathit{unglue} \; }
\newcommand{\gluetm}[2]{[\sop \hookrightarrow #2 \mid  #1]}
\newcommand{\TTSTC}{\textsc{StcTT}}
\newcommand{\TTASM}{\textsc{AsmTT}}
\newcommand{\TTSTCLR}{\textsc{2-TTstc}}
\newcommand{\lpre}{{\mathit{pre}}\;}
\newcommand{\lprf}{{\mathit{prf}}\;}

\newcommand{\inl}{\mathit{inl}}
\newcommand{\inr}{\mathit{inr}}
\newcommand{\refl}{\mathit{refl}}
\newcommand{\ifft}{\quad\text{iff}\quad}
\newcommand{\diverges}{\uparrow}
\newcommand{\converges}{\downarrow}

\newcommand{\hypo}[2]{#1 \vdash #2}

\newcommand{\iskind}[2]{\hypo{#1}{#2 \; \mathrm{kind}}}

\newcommand{\fst}{\pi_1}
\newcommand{\snd}{\pi_2}

\newcommand{\Hexc}{H_{\mathit{exc}}}
\newcommand{\hexc}{h_{\mathit{exc}}}

\newcommand{\CatJdgOf}[1]{\mathop{\mathsc{Jdg}}#1}
\newcommand{\CatGlueOf}[1]{\mathop{\mathsc{Gl}}#1}
\newcommand{\GFha}{\CatGlueOf{\Fha}}

\newcommand{\TyC}{U_{\CatC}}

\newcommand{\LFMod}[2]{#1\mathsc{-Mod}\mkern1mu(#2)}

\newcommand{\COmega}{\Omega^\bullet}

\newcommand{\Uir}{U^{\mathrm{ir}}}
\newcommand{\CU}{U^\bullet}


\newsavebox{\pullbackbox}
\sbox\pullbackbox{%
\begin{tikzpicture}%
  \draw[line width=rule_thickness] (0,0) -- (1.5ex,0ex) -- (1.5ex,1.5ex);%
\end{tikzpicture}}

\newsavebox{\pushoutbox}
\sbox\pushoutbox{%
\begin{tikzpicture}%
  \draw[line width=rule_thickness] (0,0) -- (-1.5ex,0ex) -- (-1.5ex,-1.5ex);%
\end{tikzpicture}}

\newcommand{\vcomp}{\mathbin{\cdot}}
\newcommand{\hcomp}{\mathbinscr{\circ}}

\newcommand{\Asm}{\mathsc{Asm}}
\newcommand{\AsmK}{\mathsc{Asm}(\mathbb{K})}
\newcommand{\Eff}{\mathsc{Eff}}

\newcommand{\PDom}{\Varid{PDom}}
\newcommand{\DOM}{\Varid{Dom}}
\renewcommand{\Pr}[1]{\mathop{\mathsc{PSh}}{#1}}
\newcommand{\PrC}{\Pr{\CatC}}
\newcommand{\PrD}{\Pr{\CatD}}

\newcommand{\Set}{\mathsc{Set}}

\newcommand{\CatSet}{\Set}

\newcommand{\Comma}[2]{#1 \downarrow #2}
\newcommand{\blank}{{-}}
\newcommand{\tuple}[1]{\langle #1 \rangle}

\newcommand{\yo}{\mathscr{y}}

\newcommand{\identity}{\mathit{id}}
\newcommand{\IdF}{\mathrm{Id}}

\newcommand{\Hom}{\mathsc{Hom}}

\newcommand{\LCCC}{\mathsc{LCCC}}

\newcommand{\realizes}[3][]{#2 \models_{#1} #3}

\newcommand{\oomega}{\bar{\omega}}

\newcommand{\Sier}{Sierpi\'{n}ski}
\newcommand{\pto}{\rightharpoonup}

\newcommand{\Pow}{\sP}


\newcommand{\lhskeyword}[1]{\textsc{\textls*[60]{#1}}}
\newcommand{\mathsc}[1]{\textnormal{\scshape #1}}
\newcommand{\Fha}{\text{\rmfamily F\rlap{\textsubscript{\textomega}}{\textsuperscript{ha}}}}
\newcommand{\Fhar}{\text{\rmfamily rF\rlap{\textsubscript{\textomega}}{\textsuperscript{ha}}}}
\newcommand{\Fom}{\text{\rmfamily F\textsubscript{\textomega}}}
\newcommand{\Spc}{\quad\quad}

\newcommand{\SpcAnd}{\quad\text{and}\quad}

\mathchardef\hyp="2D
\newcommand{\defeq}{\vcentcolon=}
\newcommand{\reason}[1]{\quad\abrace{\text{#1}}}

\newcommand{\reasonMult}[1]{\quad%
\left\{
\begin{tabular}{l}%
#1%
\end{tabular}%
\right\}
}

\newcommand{\spreadStar}[1]{\makebox[\linewidth]{\hfill\math #1 \endmath\hfill}}
\newcommand{\spreadNoStar}[1]{\makebox[\linewidth-\widthof{(\theequation)}-1.0001em]{\hfill\math #1 \endmath\hfill}
}
\makeatletter
\def\spread{\@ifstar\spreadStar\spreadNoStar}
\makeatother
\newcommand{\also}{\hfill}
\newcommand{\spreadTag}[2]{\makebox[\linewidth-\widthof{(#1)}-1.0001em]{\hfill\math #2 \endmath\hfill\tag{#1}}}

\newcommand{\mathbinscr}[1]{%
 \mathbin{\mskip1mu\nonscript\mskip-1mu%
          #1%
          \mskip1mu\nonscript\mskip-1mu}}

\newcommand{\LCCLF}{\textsc{LccLF}}

\newcommand{\allfor}{\bar{\forall}}


%

\newenvironment{lgather*}{%
 \[\array{l}%
}{\endarray\]\ignorespacesafterend}

\newenvironment{lgather}[1]{%
  \begin{equation}\label{#1}\begin{array}{l}%
  }{\end{array}\end{equation}\ignorespacesafterend}

  {\hsnewpar{0pt}
   \advance\leftskip\mathindent
   \hscodestyle
   \let\hspre\(\let\hspost\)%
   \pboxed}%
  {\endpboxed%
   \hsnewpar{0pt}
   \ignorespacesafterend}

\newcommand{\plainhsnoskip}{\sethscode{plainhscodenoskip}}

\theoremstyle{plain}
\theoremstyle{acmplain}
\newtheorem{theorem}{Theorem}[section]
\newtheorem{lemma}[theorem]{Lemma}
\newtheorem{conjecture}[theorem]{Conjecture}

\newtheorem*{lemma*}{Lemma}
\newtheorem*{conjecture*}{Conjecture}
\newtheorem*{proposition*}{Proposition}

\theoremstyle{definition}
\theoremstyle{acmdefinition}
\newtheorem{definition}[theorem]{Definition}

\newtheorem{question}[theorem]{Question}
\newtheorem{example}[theorem]{Example}
\newtheorem{axiom}[theorem]{Axiom}
\newtheorem{remark}[theorem]{Remark}
\newtheorem{notation}[theorem]{Notation}

\newtheorem{corollary}[theorem]{Corollary}
\newtheorem{typetheory}[theorem]{Language}

\newtheorem*{definition*}{Definition}
\newtheorem*{example*}{Example}
\newtheorem*{problem*}{Problem}
\newtheorem*{question*}{Question}
\newtheorem*{axiom*}{Axiom}
\newtheorem*{remark*}{Remark}
\newtheorem*{notation*}{Notation}
\newtheorem*{assumption*}{Assumption}
\newtheorem*{convention*}{Convention}
\newtheorem*{terminology*}{Terminology}
\newtheorem*{corollary*}{Corollary}


\NewDocumentEnvironment{definitionE}{O{}O{}+b}{%
\begin{theoremEnd}[restate, no link to proof, #2]{definition}[#1]%
#3%
\end{theoremEnd}%
}{}

\NewDocumentEnvironment{theoremE}{O{}O{}+b}{%
\begin{theoremEnd}[restate, no link to proof, #2]{theorem}[#1]%
#3%
\end{theoremEnd}%
}{}

\NewDocumentEnvironment{lemmaE}{O{}O{}+b}{%
\begin{theoremEnd}[restate, no link to proof, #2]{lemma}[#1]%
#3%
\end{theoremEnd}%
}{}

\NewDocumentEnvironment{propositionE}{O{}O{}+b}{%
\begin{theoremEnd}[restate, no link to proof, #2]{proposition}[#1]%
#3%
\end{theoremEnd}%
}{}

\NewDocumentEnvironment{corollaryE}{O{}O{}+b}{%
\begin{theoremEnd}[restate, no link to proof, #2]{corollary}[#1]%
#3%
\end{theoremEnd}%
}{}

\NewDocumentEnvironment{proofE}{O{Proof}+b}{%
\begin{proofEnd}[text proof={#1 of \string\Cref{thm:prAtEnd\pratendcountercurrent}}]%
#2%
\end{proofEnd}%
}{}

\crefname{equation}{}{}
\creflabelformat{equation}{#2(#1)#3}

\newlist{proofcase}{enumerate}{2}
\setlist[proofcase,1]{wide, parsep=0pt, topsep=\itemsep, partopsep=0pt, 
  label={\textit{Case \arabic*.}},ref=\arabic*, itemsep=0pt}
\setlist[proofcase,2]{wide, parsep=0pt, topsep=0pt, partopsep=0pt, 
  label={\textit{Case \theproofcasei.\arabic*.}},ref=\theproofcasei.\arabic*, itemsep=0pt}

\newenvironment{proofcase*}
  {\proofcase[label={\textit{Case \theproofcasei.\alph*.}},ref=\theproofcasei.\alph*]}
  {\endproofcase}

\crefname{proofcasei}{case}{cases}
\Crefname{proofcasei}{Case}{Cases}
\crefname{proofcaseii}{case}{cases}
\Crefname{proofcaseii}{Case}{Cases}

  {\fleqn[\mathindent]\allowdisplaybreaks\csname align*\endcsname}%
  {\csname endalign*\endcsname}



\newcommand{\CatC}{\mathscr{C}}

\newcommand{\CatG}{\mathscr{G}}

\newcommand{\CatD}{\mathscr{D}}

\newcommand{\MM}{S}
\newcommand{\rM}{\mathit{R'}}
\newcommand{\RM}{R}






\newcommand\mbind{%
  \ensuremath{\mathbin{>\mkern-6.8mu>\mkern-6.9mu=}}
}


\newcommand{\typing}[2]{#1 \vdash #2}
\newcommand{\typingSub}[3]{#1 \vdash_{#3} #2}

\hyphenation{monoi-dal}
\hyphenation{equip-ped}
\hyphenation{meta-language}
\hyphenation{co-cor-di-al}


\newenvironment{element}{}{}

\newcommand{\ponders}[3]{\ignorespaces}
\newcommand{\TODO}[1]{}

\definecolor{light-gray}{gray}{0.9}
\newcommand{\hole}[1]{\colorbox{light-gray}{\ensuremath{#1}}}

\renewcommand{\Varid}[1]{\mathit{#1}}
\renewcommand{\Conid}[1]{\mathit{#1}}

\DefineVerbatimEnvironment%
  {core}{Verbatim}
  {xleftmargin=\mathindent}

\newcommand\numberthis{\addtocounter{equation}{1}\tag{\theequation}}

\newcommand{\varcitet}[3][]{\citeauthor{#2}#3~[\ifthenelse{\isempty{#1}}{\citeyear{#2}}{\citeyear[#1]{#2}}]}

\DeclareUnicodeCharacter{2080}{_0}
\DeclareUnicodeCharacter{2081}{_1}
\DeclareUnicodeCharacter{2082}{_2}
\DeclareUnicodeCharacter{2083}{_3}
\DeclareUnicodeCharacter{2200}{\forall}
\DeclareUnicodeCharacter{03B1}{\alpha}
\DeclareUnicodeCharacter{03B2}{\beta}
\DeclareUnicodeCharacter{03BB}{\lambda}
\DeclareUnicodeCharacter{03BC}{\mu}
\DeclareUnicodeCharacter{1D538}{\mathbb{A}}
\DeclareUnicodeCharacter{1D53D}{\mathbb{F}}
\DeclareUnicodeCharacter{1D54B}{\mathbb{T}}
\DeclareUnicodeCharacter{220B}{\ni}
\DeclareUnicodeCharacter{25A1}{\box}
\DeclareUnicodeCharacter{03A3}{\ensuremath{\mathrm{\Sigma}}}
\DeclareUnicodeCharacter{03B7}{\ensuremath{\eta}}

\allowdisplaybreaks

\begin{document}

\title{Handling Higher-Order Effectful Operations with Judgemental Monadic Laws}
\titlenote{A version of this paper without appendices will appear in POPL 2026.
The appendices contain detailed proofs and additional explanation of the recursive 
variant of the language studied in this paper.}

\author{Zhixuan Yang}
\orcid{0000-0001-5573-3357}             
\affiliation{
  \department{Department of Computing}    
  \institution{Imperial College London}            
  \country{United Kingdom}                    
}
\email{s.yang20@imperial.ac.uk}          

\author{Nicolas Wu}
\orcid{0000-0002-4161-985X}             
\affiliation{
  \department{Department of Computing}    
  \institution{Imperial College London}            
  \country{United Kingdom}                    
}
\email{n.wu@imperial.ac.uk}          

\begin{abstract}
This paper studies the design of programming languages with handlers of
\emph{higher-order effectful operations} -- effectful operations that may take
in computations as arguments or return computations as output.
We present and analyse a core calculus with higher-kinded impredicative
polymorphism, handlers of higher-order effectful operations, and optionally
general recursion.
The distinctive design choice of this calculus is that handlers are carried by lawless
raw monads, while the computation judgements still satisfy the monadic laws
judgementally.
We present the calculus with a logical framework and give denotational models of the calculus using realizability semantics.
We prove closed-term \emph{canonicity}  and \emph{parametricity} for the
recursion-free fragment of the language using synthetic Tait computability
and a novel form of the $\top\top$-lifting technique.
\end{abstract}

\maketitle

\section{Introduction}\label{sec:intro}

\subsection{What Are Higher-Order Effects and Handlers?}

\subsubsection{Motivating Higher-Order Effects}
One view of  Plotkin and Pretnar's [\citeyear{PlotPret09Hand,PlotPret13Hand}]
\emph{effect handlers} is that they are a language feature that empowers
the programmer to freely extend the programming language with new
syntax, and to interpret the syntax compositionally where needed using
effect handlers. 
The syntax that is possible to be added is restricted to the form of
\emph{generic operations}, each of which takes in a
parameter of some type $P$ and returns a result of some type $A$.
For example, the operation of outputting has 
\ensuremath{\Conid{String}} as the parameter type and the unit type as the return type;
the operation of making a nondeterministic choice has the unit type as its 
parameter type and the Boolean type as the return type.
In a call-by-value calculus parameterised by a set $\Sigma$ of operations, the
typing rule for invoking a generic operation $o$ with a parameter $v$ is simply
\[
\inferrule{
      o : (P, A) \in \Sigma 
    \\ \typingSub{\Gamma}{v : P}{\Sigma}
}{
  \typingSub{\Gamma}{o_v : A}{\Sigma}
}
\]
Generic operations are equivalent to \emph{algebraic operations}
\citep{PP03Alg}, which have the syntax $\typingSub{\Gamma}{o_v(a. \; t) : B}{\Sigma}$, where the return value of an operation is
bound as a variable $a$ in a `continuation' $\typingSub{\Gamma,a:A}{t : B}{\Sigma}$, and $o_v(a. \; t)$ judgementally commutes with sequential
composition: 
\begin{equation}\label{eq:algebraicity}
\ensuremath{\lhskeyword{let}\;\Varid{x}} = o_v(a.\; t)\; \ensuremath{\lhskeyword{in}\;\Varid{k}} \;=\; o_v(a.\;\ensuremath{\lhskeyword{let}\;\Varid{x}\mathrel{=}\Varid{t}\;\lhskeyword{in}\;\Varid{k}})
\end{equation}

In the original calculus of \citeauthor{PlotPret09Hand}, the parameter type $P$
and the return type $A$ must be both \emph{ground types}, such as integers or
Booleans, which do not involve \emph{computations} directly or indirectly.
The reason for this restriction is that in the denotational model, the
semantics of computations depends on the signatures of
the operations, so if the signatures of the operations also
involve computations, they would form a mutual recursion and 
greatly complicate the semantics.

Nonetheless, it is common for programming languages to have
effectful operations that take in computations as arguments or return
computations as results.
Typical examples include {exception catching} \ensuremath{\Varid{try}\;\{\mskip1.5mu \Varid{p}\mskip1.5mu\}\;\Varid{catch}\;\{\mskip1.5mu \Varid{h}\mskip1.5mu\}}, which
has as arguments the exception-raising program $p$ and the exception-handling
program $h$; {parallel composition} \ensuremath{\Varid{par}\;\{\mskip1.5mu \Varid{p}\mskip1.5mu\}\;\{\mskip1.5mu \Varid{q}\mskip1.5mu\}}, which takes in programs
$p$ and $q$ to be executed in parallel;
{scoped resource acquisition} \ensuremath{\Varid{with}\;\Varid{r}\;\{\mskip1.5mu \Varid{p}\mskip1.5mu\}}, which opens/closes 
a resource $r$, e.g.\ a file, before/after entering the scope of a program $p$.
Such `higher-order operations' may be implemented as effect handlers, and indeed
exception handling was a primary motivation for
\citeauthor{PlotPret09Hand}'s proposal of effect handlers.

However, as analysed by \citet{Wu2014}, implementing higher-order operations as
handlers causes the loss of compositionality that is enjoyed by
ordinary operations.
For example, if the programmer decides to write a program $q$ using
exception throwing and catching, and if exception catching 
\ensuremath{\Varid{try}\;\{\mskip1.5mu \Varid{p}\mskip1.5mu\}\;\Varid{catch}\;\{\mskip1.5mu \Varid{h}\mskip1.5mu\}} is implemented as effect handling \ensuremath{\lhskeyword{hdl}\;\Varid{p}\;\lhskeyword{with}\;\{\mskip1.5mu \Varid{throw}\mapsto\Varid{h}\mskip1.5mu\}}
in the program $q$, then the programmer cannot 
give alternative semantics to exception catching in $q$
(e.g.\ after $p$ throws an exception and $h$ is executed, $p$ gets \emph{re-tried}
again), because there is no mechanism for the programmer to reinterpret the
effect handling construct \ensuremath{\lhskeyword{hdl}\;\Varid{p}\;\lhskeyword{with}\;\Varid{h}}. 

To give better treatment of higher-order operations in the framework of
algebraic effects and handlers, a number of authors have studied
\emph{higher-order algebraic effects} and their handlers
\citep{Wu2014,PirogSWJ18,YPWvS2021,vSPW21,vS24,Bach_Poulsen_vd_Rest_2023,Frumin2024,YangWu23}.
Note, however, the precise technical meaning of `higher-order (algebraic) effects'
varies in the cited papers, although they share the connotation of \emph{effectful
operations that may take in computations as arguments and/or return
computations as results}.

\subsubsection{The Quick-and-Dirty Approach}
To begin with, there is \emph{nothing} stopping the language designer simply 
removing
the restriction on the parameter types $P$ and return types $A$ 
to be ground types,
as far as only 
the {type system} and {operational semantics} are concerned.
Indeed, most follow-up work on effect handlers does not have this restriction.
When denotational semantics is concerned, this relaxation necessitates solving
\emph{mixed-variant} recursive equations between the semantics of computation types
and operation signatures, which can be done using techniques from \emph{domain
theory}, as demonstrated by \citet{BauerPretnar2014} and
\citet{Kiselyov_Mu_Sabry_2021} using {classical domain theory}, and more
recently by \citet{Frumin2024} using {synthetic guarded domain theory}.

Simply removing the ground-type restriction in the type system is
a `quick-and-dirty' approach to higher-order algebraic effects, which does
not reveal the inherent structure in higher-order operations.
For example, suppose that \ensuremath{\Varid{try}\;\{\mskip1.5mu \Varid{p}\mskip1.5mu\}\;\Varid{catch}\;\{\mskip1.5mu \Varid{h}\mskip1.5mu\}} is implemented as an
operation \ensuremath{\Varid{catch}} with two computation parameters in this way, and \ensuremath{\Varid{throw}} is
implemented as a nullary operation with no parameters
as usual, a (deep) handler \ensuremath{\{\mskip1.5mu \Varid{catch}\;\Varid{p}\;\Varid{h}\;\Varid{k}\mapsto\cdots;\Varid{throw}\mapsto\cdots\mskip1.5mu\}} for
them would accept \emph{unhandled} computations \ensuremath{\Varid{p}} and \ensuremath{\Varid{h}} as arguments
that may invoke \ensuremath{\Varid{catch}} and \ensuremath{\Varid{throw}}, in contrast to the continuation $k$ 
for which \ensuremath{\Varid{catch}} and \ensuremath{\Varid{throw}} are already handled.
This gives the programmer full flexibility in how to deal with the computation
parameters \ensuremath{\Varid{p}} and \ensuremath{\Varid{h}} but
undermines handlers as a \emph{structured programming construct} that can be
reasoned about effectively.
After all, what makes (deep) handlers stand out among powerful control
operators is their simple mental model for programmers -- a
native form of \emph{catamorphism/fold} that replaces operation calls in the 
program being handled with the corresponding handler clauses.

\subsubsection{The Structured Approach}
To expose the inherent structure in higher-order operations more sharply,
several authors have proposed a number of refined definitions of (subsets of)
higher-order operations: 
\emph{scoped operations}
\citep{PirogSWJ18,YPWvS2021,Matache2025,BvTT2024}, \emph{latent operations}
\citep{vSPW21}, \emph{hefty operations} \citep{Bach_Poulsen_vd_Rest_2023}, and
the general frameworks by \citet{Wu2014}, \citet{YangWu23} and \citet{vS24}.
Regardless of the technical differences in their proposals, the common idea is
that the signature of a higher-order operation can take a different form from
ordinary operations, and the carrier of a handler (usually called an algebra in
this line of work) for a higher-order operation does not have to be a type
anymore, and usually is a type constructor of kind \ensuremath{\Conid{Type}\to \Conid{Type}}.

Taking scoped operations as an example, in the formulation of
\citet{BvTT2024}, the signature of a scoped operation $s : (P,  S)$ consists
of two types: $P$ still means that the operation $s$ takes in
a parameter of type $P$, but the type $S$ no longer means that the operation
returns a value of type $S$.
Instead, it means that the operation $s$ delimits $S$-many scopes;
for example, $S$ would be the two-element type for \ensuremath{\Varid{try}\;\{\mskip1.5mu \cdots\mskip1.5mu\}\;\Varid{catch}\;\{\mskip1.5mu \cdots\mskip1.5mu\}} because 
it delimits two scopes.
The typing rule for invoking scoped operations in a
call-by-value programming language is
\begin{equation}\label{eq:scp:op:call}
\inferrule{
       s : (P, S) \in \Sigma 
    \\ \typingSub{\Gamma}{v : P}{\Sigma}
    \\ \typingSub{\Gamma, x : S}{ c : A }{\Sigma}
}{
  \typingSub{\Gamma}{s_v\; \{x.\; c\} : A}{\Sigma}
}
\end{equation}
where the term $c$ represents \ensuremath{\Conid{S}}-many computations that the operation $s$
takes in as arguments; these computations can return an \emph{arbitrary} type
$A$, which will also be the type of the whole operation call.
This rule here is a simplified
version of the calculus by \citet{BvTT2024}, but it is the essence.

The reader might have noted that scoped operations $s_v\; \{x.\; c\}$ have
essentially the same syntax as algebraic operations $o_v(a.\; t)$.
Their difference is that scoped operations do not have to satisfy the equation
$(\ensuremath{\lhskeyword{let}\;\Varid{y}} = s_v\;\{x.\; c\}\; \ensuremath{\lhskeyword{in}\;\Varid{k}}) = s_v\;\{ x.\;\ensuremath{\lhskeyword{let}\;\Varid{y}\mathrel{=}\Varid{c}\;\lhskeyword{in}\;\Varid{k}} \}$, while
algebraic operations have to satisfy \Cref{eq:algebraicity}.
Informally, this means that $s_c\{x.\; c\}$ genuinely delimits a scope $\{x.\; c\}$ that it acts on, whereas the scope $(a.\; t)$ delimited by an algebraic operation $o_v(a. t)$
is superfluous.

Because a call to a scoped operation $s : (P,S)$ is polymorphic in the return
type $A$, a handler for the scoped operation $s$ will also need to be carried
by a type constructor $\ensuremath{\Conid{M}\mathbin{:}\Conid{Type}\to \Conid{Type}}$ rather than just a type.
A possible typing rule for handling a scoped operation $s$ is
\begin{equation}\label{eq:scp:op:hdl}
\inferrule{
\typingSub{\Gamma}{p : A}{\Sigma} \\
\typingSub{\Gamma, a : \alpha}{r : M\;\alpha}{\Sigma} \\
\typingSub{\Gamma,\; v :  P,\; c : S \to M\; \alpha}{h : M \; \alpha}{\Sigma}
}{
\typingSub{\Gamma}{\ensuremath{\lhskeyword{hdl}\;\Varid{p}\;\lhskeyword{with}\;\{\mskip1.5mu \lhskeyword{val}\;\Varid{a}\mapsto\Varid{r};\Varid{s}\;\Varid{v}\;\Varid{c}\mapsto\Varid{h}\mskip1.5mu\}} : M\;A}{\Sigma}
}
\end{equation}
in which $\alpha$ is a free type variable.
The practical difference between this approach and the quick-and-dirty approach
mentioned above is that (1) the types $P$ and $S$ in the signature of a scoped
operation can still be ground types, so that
the denotational semantics involves no mixed-variance recursive equations, and
reasoning principles such as fusion laws are still available \citep{YPWvS2021}, 
and (2) a handler $h$ of $s$ always works with recursively handled computations
$c : S \to M \; \alpha$,
so the programmer's mental model of handling scoped operations $s$ can still
be a fold that replaces every return value \ensuremath{\lhskeyword{val}\;\Varid{x}} with the term $r$ and
every operation call to $s$ with the
corresponding handler clause $h$, including those
calls nested in the scopes of other calls such as $s_v\;\{x.\fbox{$s_u\;\{y. \cdots\}$} \}$.

\subsection{Interaction of Effect Handlers and Sequential Composition}

\subsubsection{A Problem with Sequential Composition}
The rules above for invoking \Cref{eq:scp:op:call} and handling
\Cref{eq:scp:op:hdl} scoped operations are still not the end of the story. 
Any effectful language should have a construct for sequential
composition of computations, such as \ensuremath{\lhskeyword{let}}-bindings. 
A question then
is \emph{how handling should interact with sequential composition}. 
Concretely, suppose that we have a computation 
\[
\inferrule{
\typingSub{\Gamma}{s_v\{x. \; c\} : A}{\Sigma}
\\
\typingSub{\Gamma, a : A}{d : B}{\Sigma}
}{
\typingSub{\Gamma}{\ensuremath{\lhskeyword{let}}\; a = s_v\{x. \; c\} \;\ensuremath{\lhskeyword{in}}\; d : B}{\Sigma}
}
\]
that is the sequential composition of a scoped operation $s_v\{x. \; c\}$
followed by another computation $d$.
What should be the result of applying the handling construct
\Cref{eq:scp:op:hdl} to this computation?
Applying the handler recursively to $s_v\{x. \; c\}$ and $d$ would
give us two terms of type 
\begin{equation}\label{eq:handling:recursive:results}
\spread{
\Gamma \vdash_\Sigma \cdots : M\; A
\also
\text{and}
\also
\Gamma, a : A\vdash_\Sigma \cdots : M\; B
}
\end{equation}
and our goal is to have a term of type $M\;B$.
Now we have two ways to proceed:
\begin{enumerate}
\item[(i)]\label{approach:monad}
asking $M$ to additionally come with a monadic bind
$\bind : M\;\alpha \to (\alpha \to M\;\beta) \to M\;\beta$,
using which we can combine the recursively handled results
\Cref{eq:handling:recursive:results} into $M\;B$.

\item[(ii)]\label{approach:kont}
modifying the handler rule \Cref{eq:scp:op:hdl} so that the handler of $s$ takes in a
continuation parameter
\[
\inferrule{
\cdots \\
\typingSub{\Gamma,\; v :  P,\; c : S \to M\; \alpha, \;
  \fbox{$k : \alpha \to M\; \beta$}}
  {h' : M \,\fbox{$\beta$}}{\Sigma}
}{
\typingSub{\Gamma}{\ensuremath{\lhskeyword{hdl}\;\Varid{p}\;\lhskeyword{with}\;\{\mskip1.5mu \lhskeyword{val}\;\Varid{a}\mapsto\Varid{r};\Varid{s}\;\Varid{v}\;\Varid{c}\;\fbox{$k$}\mapsto\Varid{h'}\mskip1.5mu\}} : M\;A}{\Sigma}
}
\]
and the recursively handled result $\typingSub{\Gamma, a : A}{\cdots : M \; B}{\Sigma}$ of $d$ is supplied as the continuation parameter $k$ to the handler
when handling the operation $s_v\{x.\;c\}$.
\end{enumerate}
To programming language designers familiar with effect handlers,
\hyperref[approach:kont]{approach (ii)} may appear as the more natural one, since there is already a
similar continuation parameter for handlers of generic/algebraic operations in
\citeauthor{PlotPret09Hand}'s design. 
Indeed, \citet{BvTT2024}'s calculus of scoped effects and handlers follows
this approach, but let us not commit to this choice too quickly; instead,
let us first analyse the connections and differences between these two
approaches.

From an algebraic point of view, the difference in these two approaches lies in
their views of the \emph{universal property} of effectful computations.
Crudely speaking, \hyperref[approach:monad]{approach (i)} views computations with operation $s$ as the initial object
among \emph{monads} $M$ on types equipped with operations of type $\ensuremath{\forall \Varid{\alpha}\hsforall \hsdot{\circ }{.\ }\Conid{P}\to (\Conid{S}\to \Conid{M}\;\Varid{\alpha})\to \Conid{M}\;\Varid{\alpha}}$,
whereas \hyperref[approach:kont]{approach (ii)} views computations with $s$ as the initial
object among type constructors \ensuremath{\Conid{M}\mathbin{:}\Conid{Type}\to \Conid{Type}} equipped with operations of type \ensuremath{\forall \Varid{\alpha}\hsforall \hsdot{\circ }{.\ }\Varid{\alpha}\to \Conid{M}\;\Varid{\alpha}} and
\[
\ensuremath{\forall \Varid{\alpha}\hsforall \;\Varid{\beta}\hsdot{\circ }{.\ }\Conid{P}\to (\Conid{S}\to \Conid{M}\;\Varid{\alpha})\to (\Varid{\alpha}\to \Conid{M}\;\Varid{\beta})\to \Conid{M}\;\Varid{\beta}}.
\]

\subsubsection{An Analogy to Lists}
The contrast between these two views is analogous to the following more
familiar situation.
The type of lists \ensuremath{[\mskip1.5mu \Conid{A}\mskip1.5mu]} of elements of type $A$ has the universal property that
\ensuremath{[\mskip1.5mu \Conid{A}\mskip1.5mu]} together with the empty list $\ensuremath{\Varid{nil}} : 1 \to [A]$ and \ensuremath{\Varid{cons}\mathbin{:}\Conid{A}\to [\mskip1.5mu \Conid{A}\mskip1.5mu]\to [\mskip1.5mu \Conid{A}\mskip1.5mu]} is initial among all types $B$ equipped with functions $1 \to B$ and
$A \to B \to B$.
But the type \ensuremath{[\mskip1.5mu \Conid{A}\mskip1.5mu]} has another universal property: first of all, it can be
equipped with a monoid structure of \ensuremath{\Varid{nil}\mathbin{:}\mathrm{1}\to [\mskip1.5mu \Conid{A}\mskip1.5mu]} and list concatenation \ensuremath{\mathbin{+\!\!+}\mathbin{:}[\mskip1.5mu \Conid{A}\mskip1.5mu]\to [\mskip1.5mu \Conid{A}\mskip1.5mu]\to [\mskip1.5mu \Conid{A}\mskip1.5mu]}, and this monoid with the function $(\lambda
x.\; \ensuremath{\Varid{cons}\;\Varid{x}\;\Varid{nil}}) : A \to [A]$ has the property of being the free monoid over
the type $A$
(which means that for every monoid $M$ with a function $f : A \to M$, there is
a unique monoid homomorphism $h : [A] \to M$ such that $\ensuremath{\Varid{h}\;(\Varid{cons}\;\Varid{x}\;\Varid{nil})} = f\; x$ for
all $x \in A$).
Therefore the same type $[A]$ can be equipped with different algebraic
structures, giving rise to different universal properties.  
It is pointless to ask which of the universal properties is `the correct one' for the type
\ensuremath{[\mskip1.5mu \Conid{A}\mskip1.5mu]} \emph{per se}.
The right question should be -- which class of algebraic structures are we 
interested in when using \ensuremath{[\mskip1.5mu \Conid{A}\mskip1.5mu]}, monoids $M$ with functions 
$A \to M$ or types $B$ with functions $1 \to B$ and $A \to B \to B$?

Similarly, the right question to ask about approaches \hyperref[approach:monad]{(i)} and \hyperref[approach:kont]{(ii)} above should
be -- when programming with scoped effects, are we interested in (i) monads $M$
with operations of type
\ensuremath{\forall \Varid{\alpha}\hsforall \hsdot{\circ }{.\ }\Conid{P}\to (\Conid{S}\to \Conid{M}\;\Varid{\alpha})\to \Conid{M}\;\Varid{\alpha}}
or (ii) type constructors $M : \ensuremath{\Conid{Type}\to \Conid{Type}}$ equipped with operations of type
$\ensuremath{\forall \Varid{\alpha}\hsforall \;\Varid{\beta}\hsdot{\circ }{.\ }\Conid{P}\to (\Conid{S}\to \Conid{M}\;\Varid{\alpha})\to (\Varid{\alpha}\to \Conid{M}\;\Varid{\beta})\to \Conid{M}\;\Varid{\beta}}$
and \ensuremath{\forall \Varid{\alpha}\hsforall \hsdot{\circ }{.\ }\Varid{\alpha}\to \Conid{M}\;\Varid{\alpha}}?

\subsubsection{Sequential Composition is An Operation}
To this question, we advocate for the answer (i).
Our rationale is that sequential composition \emph{ought to be} an operation
for a notion of computation, rather than only a meta-level operation that the
initial object `accidentally' supports.
This view is supported by the fact that in practice it is not uncommon to
consider equations involving both effectful operations and sequential
composition. 
For example, in the study of process algebra \citep{Bergstra_1985}, the
distributivity equations
\[
\spread*{
(p + q); r = (p; r) + (q; r)
\also
p; (q + r) = (p; q) + (p; r)
}
\]
between nondeterministic choice and sequential composition are usually
considered (and most models of concurrency satisfy the former but not the
latter).  
If sequential composition is not an operation in the algebraic theory
of the effect of concurrency, these two equations would not be expressible, and
it would be meaningless to ask whether a handler
of concurrency satisfies these two equations if we followed \hyperref[approach:kont]{approach (ii)}.

The point that we are raising here holds regardless of whether the
programming language formally checks equations on effectful
operations -- even in simply typed programming languages, programmers usually
have equations informally in their minds and reason about effectful programs
with these equations \citep{Gibbons_Hinze_2011}.
Also, our point is not specifically about higher-order operations
either: even for algebraic/generic operations like nondeterministic choice $p +
q$, we may already want to consider equations about the interaction of the
effectful operations with sequential composition.  And this is impossible in
standard algebraic effects
\citep{PlotkinP02} because sequential composition is not an operation in the
theory of the effect, but only an operation that the free algebras determined by
the algebraic theory `happen to have'.  

\subsubsection{A Problem with Laws}
As the final twist of our discussion of higher-order effect handlers 
in this section, we discuss how we should deal with the monadic
laws.
As we advocated above, our handlers of higher-order algebraic effects will
be carried by monads, which are type constructors \ensuremath{\Conid{M}\mathbin{:}\Conid{Type}\to \Conid{Type}} with
operations for returning \ensuremath{\Varid{ret}\mathbin{:}\forall \Varid{\alpha}\hsforall \hsdot{\circ }{.\ }\Varid{\alpha}\to \Conid{M}\;\Varid{\alpha}} 
and
sequential composition \ensuremath{\mbind\mathbin{:}\forall \Varid{\alpha}\hsforall \;\Varid{\beta}\hsdot{\circ }{.\ }\Conid{M}\;\Varid{\alpha}\to (\Varid{\alpha}\to \Conid{M}\;\Varid{\beta})\to \Conid{M}\;\Varid{\beta}} satisfying the monadic laws
\begin{equation}\label{eq:monadic:laws}
\spread{
\ensuremath{\Varid{ret}\;\Varid{a}\mbind\Varid{k}\mathrel{=}\Varid{k}\;\Varid{a}}
\also
\ensuremath{\Varid{m}\mbind\Varid{ret}\mathrel{=}\Varid{m}}
\also
\ensuremath{(\Varid{m}\mbind\Varid{k})\mbind\Varid{k'}\mathrel{=}\Varid{m}\mbind(\lambda \Varid{x}.\;\Varid{k}\;\Varid{x}\mbind\Varid{k'})}
} 
\end{equation}
asserting that \ensuremath{\Varid{ret}} is the left and right identity of \ensuremath{\mbind} and \ensuremath{\mbind} is 
associative.
If our effectful programming language has dependent types -- in particular, 
identity types -- we can demand every handler to come with proofs
of the equations \Cref{eq:monadic:laws}, and the type checker can check the proofs
mechanically.
But if our language does not have dependent types,
it will be impossible for us to mechanically check if
these laws are satisfied. 
In this case our programming language can only ask
handlers to be carried by `raw monads' $\tuple{M, \ensuremath{\Varid{ret}}, \ensuremath{\mbind}}$ that do not
necessarily satisfy these laws.
Therefore, in the absence of dependent types, we as the language designer will
not enforce the programmer to supply lawful monads, but there is a closely
related but different question:

\begin{question}\label{question}
When handlers of computations are carried by
raw monads not necessarily satisfying monadic laws \Cref{eq:monadic:laws},
can we still make \ensuremath{\lhskeyword{let}}-bindings and \ensuremath{\lhskeyword{val}}-returning of the computation
judgements satisfy the following monadic laws judgementally?
\begin{gather}
\spread{
\ensuremath{(\lhskeyword{let}\;\Varid{x}\mathrel{=}\lhskeyword{val}\;\Varid{a}\;\lhskeyword{in}\;\Varid{k})}  \ =\   k[a/x]
\also
\ensuremath{(\lhskeyword{let}\;\Varid{x}\mathrel{=}\Varid{m}\;\lhskeyword{in}\;\lhskeyword{val}\;\Varid{x})}  \ =\   \ensuremath{\Varid{m}}
}
\label{eq:comp:monad:laws:val}
\\
\ensuremath{(\lhskeyword{let}\;\Varid{y}\mathrel{=}(\lhskeyword{let}\;\Varid{x}\mathrel{=}\Varid{m}\;\lhskeyword{in}\;\Varid{k})\;\lhskeyword{in}\;\Varid{k'})} \ =\ \ensuremath{(\lhskeyword{let}\;\Varid{x}\mathrel{=}\Varid{m}\;\lhskeyword{in}\;(\lhskeyword{let}\;\Varid{y}\mathrel{=}\Varid{k}\;\lhskeyword{in}\;\Varid{k'}))}
\label{eq:comp:monad:laws:let}
\end{gather}
\end{question}

It might seem that the answer to this question would be unavoidably
negative, since the computations of our language are handled into raw monads
that do not necessarily satisfy the monadic laws, and syntax
shall only satisfy the equations that are satisfied in \emph{all} semantic
models. 
This would be rather unfortunate because these laws are arguably the most
fundamental algebraic properties of a notion of computation, and they are
needed by programmers to reason about effectful programs and compilers to do
optimisation, for example, to rewrite a computation that invokes no effectful
operations to the form \ensuremath{\lhskeyword{val}\;\Varid{v}} of returning a pure value \ensuremath{\Varid{v}}.

\subsection{Contributions of This Paper}
In this paper we show that the answer to \Cref{question} is actually positive,
provided that we are willing to \emph{not} have the commutativity of handlers
and \ensuremath{\lhskeyword{let}}-bindings:
\[
\big(\ensuremath{\lhskeyword{hdl}\;(\lhskeyword{let}\;\Varid{x}\mathrel{=}\Varid{m}\;\lhskeyword{in}\;\Varid{k})\;\lhskeyword{with}\;\Varid{h}}\big)
\;=\;
\big(\ensuremath{\lhskeyword{let}\;\Varid{x}\mathrel{=}(\lhskeyword{hdl}\;\Varid{m}\;\lhskeyword{with}\;\Varid{h})\;\lhskeyword{in}\;(\lhskeyword{hdl}\;\Varid{k}\;\lhskeyword{with}\;\Varid{h})}\big)
\]
We present a core calculus for higher-order effects and 
handlers that we call System \Fha{} (\Cref{sec:fha}).
This calculus extends \varcitet{Girard1986}{'s}{} System \Fom{} with
higher-order effects in the fine-grain call-by-value style.
Specifically, the signature of a higher-order algebraic effect is given
as a higher-order functor $H : \ensuremath{(\Conid{Type}\to \Conid{Type})\to (\Conid{Type}\to \Conid{Type})}$ 
in \Fha{} following \citet{YangWu23} and 
\citet{vS24}'s categorical frameworks.
Every signature $H$ has a corresponding judgement of computations that supports
\ensuremath{\lhskeyword{let}}-bindings, \ensuremath{\lhskeyword{val}}-returning, invoking $H$-operations, and being handled 
with raw monads equipped with $H$-operations.

As promised, the computation judgements of \Fha{} satisfy the equations
\Cref{eq:comp:monad:laws:val,eq:comp:monad:laws:let}, and we show that the
design of \Fha{} `works' by establishing the following meta-theoretic
properties about \Fha:

\begin{enumerate}
\item
In \Cref{sec:fha:real:model},
a denotational model of \Fha{} is given based on realizability,
establishing the \emph{consistency}  of the equational theory of \Fha{} 
(\Cref{thm:fha:consistency})
and providing a
way
to translate \Fha{} terms to untyped computational models such as
$\lambda$-calculus or Turing machines (\Cref{thm:fha:extraction}).
The key idea of this model is to use
a continuation-passing-style (CPS) transformation to reconcile the mismatch
between lawful computation judgements and lawless handlers in \Cref{question}.
An extension of \Fha{} with general recursion is also considered, and the
realizability model is extended to support recursion (\Cref{thm:fhar:extraction}) using \emph{synthetic domain theory} \citep{Longley_Simpson_1997}.

\item
In \Cref{sec:fha:lr:model}, a logical relation model of \Fha{} is constructed using the method of
\emph{synthetic Tait computability} \citep{sterling:2021:thesis} and a novel
version \emph{$\top\top$-lifting} \citep{Lindley_Stark_2005}.  From this model
we obtain \emph{canonicity} (\Cref{thm:canonicity}) and 
\emph{parametricity} (\Cref{rem:parametricity}) of closed \Fha-terms,
which also imply the \emph{adequacy} of the realizability model of \Fha{} 
(\Cref{cor:adequacy}).
\end{enumerate}

What makes us think that \Fha{} is interesting is that it demonstrates how we
can recover important equations on computations even when the language is not
expressive enough to follow the foundational mathematical theory fully
faithfully.
The retained equational laws are \emph{sound} with respect to a
`compiler' (the realizability model) and are \emph{complete} enough for
reducing closed terms to canonical values.
Therefore these laws are especially useful for compiler writers,
who need equation laws to do optimisations but cannot trust the equations
existing only as code comments.

A methodological character of this paper is its heavy use of modern
type-theoretic and category-theoretic tools to study the language \Fha{}.  
Although this paper is too short to serve as a fully satisfactory introduction
to these tools, we hope that there is still some pedagogical value in this
paper by showing how these tools are used coherently to present and
analyse a polymorphic programming language that is not too simplistic or
complicated. 
We hope this can contribute to making these powerful abstract tools more
accessible to working programming language theorists.

\section{A Core Calculus for Higher-Order Effect Handlers}\label{sec:fha}
In this section we present the type theory that we call System \Fha, an 
extension of System \Fom{} with handlers of higher-order effectful operations.
Instead of defining the calculus by a grammar and typing rules in the traditional way, 
in this paper we use a \emph{logical framework} to present \Fha.
Logical frameworks are type theories designed for defining other type theories, and
they usually provide useful general results for theories definable in the
framework. 
In particular, the logical framework that we will use frees us from
manipulating variables and substitutions manually and provides a notion of
\emph{semantic models} of \Fha{} automatically.

The structure of this section is as follows.
In \Cref{sec:lf}, we briefly introduce the logical framework that we will use.
In \Cref{sec:fhom}, we define \varcitet{Girard1986}{'s}{} System \Fom{} in
the logical framework, which is going to be the basis of our language \Fha{}. 
In \Cref{sec:fha:co}, we add computational judgements to \Fom{}, giving us
\Fha{}.
Finally, in \Cref{sec:gen:rec}, we consider an extension of \Fha{} with 
general recursion.

\subsection{A Logical Framework}\label{sec:lf}

The logical framework (LF) that we will use is the one informally introduced by
\citet{sterling:2021:thesis} in his PhD thesis, which is called \LCCLF{} by
\citet{Yang2025} in his more formal treatment of this framework.
This LF has been used by a number of authors in the study of various type
theories
\citep{sterling2021LRAT,Sterling_Angiuli_2021,Grodin_Niu_Sterling_Harper_2024,Niu2022,Sterling_et_al_2022}.
We aim to be self-contained about the LF in this paper, but our introduction
is unavoidably rather terse and we refer the reader to \citet[Chapter
1]{sterling:2021:thesis} and \citet{Yang2025} if needed.

\begin{typetheory}\label{lang:lf}
The logical framework $\LCCLF$ is a type theory with a universe $\Jdg$ and 
\begin{itemize}
\item
inside the universe $\Jdg$, the type formers of extensional \MLTT{}
(the unit type $1$, $\Sigma$-types, $\Pi$-types, extensional equality types $x
= y$), and

\item
outside the universe $\Jdg$, the unit type $1$, all $\Sigma$-types, and
\emph{restricted} $\Pi$-types $\Pi\; A\; B$ where the domain type $A$ must be in
$\Jdg$ (so $\Pi \; \Jdg \; (\lambda \textunderscore.\;\Jdg)$ will not be a valid type).
\end{itemize}
\end{typetheory}

\begin{notation}
We adopt some concrete syntax similar to Agda
\citep{Agda_Developers_Agda} when working with dependent type theories. 
Dependent function types are written as $(a : A) \to B$ where $a$ may occur
in $B$, or $A \to B$
when $B$ does not depend on $a : A$. 
Iterated $\Sigma$-types are written as records of fields with labels.
Implicit function types $\impfun{a : A} B$ are used when the arguments can
be inferred.
Things whose names are irrelevant are denoted by the wildcard 
`\textunderscore'.
\end{notation}

The way to define an object type theory in the LF is to write a
\emph{signature}, which is a sequence of variable declarations in the LF, where
the type of each variable may depend on the preceding declarations (so
formally, a signature is exactly a context in the LF).
The idea is the \emph{judgements-as-types} principle  as follows
\citep{Harper1993,MartinLof1987}:
\begin{enumerate}
\item
\emph{Judgement forms} of the object theory are declared as LF-functions $A \to
\Jdg$ into the universe $\Jdg$.
For example, the judgement form of a proposition being true, traditionally
written as `$\hypo{}{P\; \textrm{true}}$', is declared as $\ensuremath{\Varid{true}} : \ensuremath{\Varid{prop}} \to
\Jdg$, assuming some $\ensuremath{\Varid{prop}} : \Jdg$ is already declared.

\item
\emph{Inference rules} for object-theory judgements are declared as
LF-functions; for example, the declaration $\ensuremath{\Varid{and\hyp{}intro}\mathbin{:}(\Conid{P},\Conid{Q}\mathbin{:}\Varid{prop})\to \Varid{true}\;\Conid{P}\to \Varid{true}\;\Conid{Q}\to \Varid{true}\;(\Conid{P}\mathrel{\wedge}\Conid{Q})}$ says that the judgement of \ensuremath{\Conid{P}\mathrel{\wedge}\Conid{Q}} being true can be derived from both
$P$ and $Q$ being true.

\item
\emph{Judgemental equalities} of the object-theory can be treated in two ways:
(i) they can be declared as judgements in $\Jdg$ just like other judgements, or
(ii) they can be declared using the (extensional) equality types of the LF. 
Logical frameworks following the former approach are sometimes called
\emph{syntactic logical frameworks}, and those following the latter are 
called \emph{semantic logical frameworks}; see \citet{Harper2016}
and \citet[\S 0.1.2.2]{sterling:2021:thesis} for a comparison between
them.
We will follow the semantic approach, which has the advantage that there is no
need to have the tedious congruence rules for the judgemental equalities
w.r.t.\ all constructs of the object theory, since equality types in the LF are
always respected.
\end{enumerate}
These points will be demonstrated concretely in the example of defining 
Girard's~[\citeyear{Girard1986,Girard1972}] System \Fom{} in the LF below.
Compared to the traditional `gamma-and-turnstile' presentation, there are
three advantages of using \LCCLF{} to present our language \Fha:
\begin{enumerate}
  \item It provides a compact type-theoretic notation to present the rules of 
  \Fha{}, and by using \emph{higher-order abstract syntax} (HOAS),
  standard components of \Fha{} such as contexts and substitutions can be dealt
  with automatically.

  \item It provides a notion of \emph{models} of \Fha{} in any locally
  cartesian closed category (LCCC) $\CatC$.
  By using the internal language of $\CatC$, models of \Fha{} can be defined in
  a type-theoretic manner.

  \item It provides a \emph{classifying category} for \Fha{}, which is an LCCC
  $\CatJdgOf{\Fha}$ such that models of \Fha{} in any LCCC $\CatC$ are
  equivalent to LCCC-functors $\CatJdgOf{\Fha} \to \CatC$. 
  Applying category-theoretic tools to the category $\CatJdgOf{\Fha}$, such as
  Yoneda embedding and Artin gluing, we can do logical-relation proofs for
  \Fha{} in a convenient type-theoretic language.
\end{enumerate}

\subsection{The Signature of System \Fom}\label{sec:fhom}
In the following, we present the signature of \Fha{} in two steps: we first
define Girard's~[\citeyear{Girard1986,Girard1972}] System \Fom{} in the LF 
(\Cref{lang:lf}), and in the next section we bring in computations.

\subsubsection{Kinds}
System \Fom{} has the following declarations for \emph{kinds}:
\begin{align}
&\ensuremath{\Varid{ki}\mathbin{:}\Jdg}&
&\ensuremath{\Varid{el}\mathbin{:}\Varid{ki}\to\Jdg}&
&\ensuremath{\Varid{ty}\mathbin{:}\Varid{ki}}&
&\ensuremath{\text{\textunderscore}{\Rightarrow_k}\text{\textunderscore}\mathbin{:}\Varid{ki}\to\Varid{ki}\to\Varid{ki}}
\tag{\Fom-1}
\label{eq:fha:kinds}
\end{align}
where we have a judgement form \ensuremath{\Varid{ki}} for kinding, a judgement form \ensuremath{\Varid{el}} for
elements of kinds, a base kind \ensuremath{\Varid{ty}\mathbin{:}\Varid{ki}} whose elements will be \emph{types},
and function kinds \ensuremath{\Varid{k}_{\mathrm{1}}\;{\Rightarrow_k}\;\Varid{k}_{\mathrm{2}}}.
These declarations correspond to the following things in the traditional presentation.
The declaration \ensuremath{\Varid{ki}\mathbin{:}\Jdg} corresponds to the judgement form
`$\iskind{\Gamma}{\cdots}$' of something being a kind, and accordingly an
LF-element \ensuremath{\Varid{k}\mathbin{:}\Varid{ki}} corresponds to `$\iskind{\Gamma}{k}$'.
The LF-type \ensuremath{\Varid{el}\;\Varid{k}\mathbin{:}\Jdg} for some \ensuremath{\Varid{k}\mathbin{:}\Varid{ki}} corresponds to the judgement form
`$\hypo{\Gamma}{\cdots : k}$' of something being an element of a kind.
The two declarations \ensuremath{\Varid{ty}} and \ensuremath{\text{\textunderscore}{\Rightarrow_k}\text{\textunderscore}} correspond to two inference rules
for constructing kinds:
\[
\spread*{
\inferrule{ }{\iskind{\Gamma}{\ensuremath{\Varid{ty}}}}
\also
\inferrule{\iskind{\Gamma}{k_1} \\ \iskind{\Gamma}{k_2}}
{\iskind{\Gamma}{\ensuremath{\Varid{k}_{\mathrm{1}}\;{\Rightarrow_k}\;\Varid{k}_{\mathrm{2}}}}}
}
\]


Elements of the base kind $\ensuremath{\Varid{ty}\mathbin{:}\Varid{ki}}$ include a unit type \ensuremath{\Varid{unit}}, a
two-element type \ensuremath{\Varid{bool}}, function types \ensuremath{\Conid{A}\;{\Rightarrow_t}\;\Conid{B}}, and impredicative
polymorphic function types \ensuremath{\allfor\;\Varid{k}\;\Conid{A}} where $k$ can be of any kind:
\begin{equation}
\begin{gathered}
\ensuremath{\Varid{unit}\mathbin{:}\Varid{el}\;\Varid{ty}} \hspace{1.5cm} \ensuremath{\Varid{bool}\mathbin{:}\Varid{el}\;\Varid{ty}} \hspace{1.5cm} \ensuremath{\text{\textunderscore}{\Rightarrow_t}\text{\textunderscore}\mathbin{:}\Varid{el}\;\Varid{ty}\to\Varid{el}\;\Varid{ty}\to\Varid{el}\;\Varid{ty}}  \\
\ensuremath{\allfor\mathbin{:}(\Varid{k}\mathbin{:}\Varid{ki})\to(\Varid{el}\;\Varid{k}\to\Varid{el}\;\Varid{ty})\to\Varid{el}\;\Varid{ty}}
\end{gathered}
\tag{\Fom-2}
\end{equation}
We use the $\forall$ symbol with a bar for the polymorphic function type so that we will not confuse it with meta-level universal quantification later.

Elements of function kinds are specified using \emph{higher-order
abstract syntax} (HOAS) via an isomorphism to functions in the LF:
\begin{equation}\label{eq:fha:el:kfun}
\ensuremath{\Varid{{\Rightarrow_k}\hyp{}iso}\mathbin{:}\{\mskip1.5mu \Varid{k}_{\mathrm{1}},\Varid{k}_{\mathrm{2}}\mathbin{:}\Varid{ki}\mskip1.5mu\}\to\Varid{el}\;(\Varid{k}_{\mathrm{1}}\;{\Rightarrow_k}\;\Varid{k}_{\mathrm{2}})\cong(\Varid{el}\;\Varid{k}_{\mathrm{1}}\to\Varid{el}\;\Varid{k}_{\mathrm{2}})}
\tag{\Fom-3}
\end{equation}
where the type $\cong$ of isomorphisms between two LF-types $A$ and $B$ is
the following record type:
\begin{hscode}\SaveRestoreHook
\column{B}{@{}>{\hspre}l<{\hspost}@{}}%
\column{3}{@{}>{\hspre}l<{\hspost}@{}}%
\column{9}{@{}>{\hspre}l<{\hspost}@{}}%
\column{E}{@{}>{\hspre}l<{\hspost}@{}}%
\>[B]{}\lhskeyword{record}\;\Conid{A}\;{\cong}\;\Conid{B}\;\lhskeyword{where}{}\<[E]%
\\
\>[B]{}\hsindent{3}{}\<[3]%
\>[3]{}\Varid{fwd}{}\<[9]%
\>[9]{}\mathbin{:}\Conid{A}\to \Conid{B}{}\<[E]%
\\
\>[B]{}\hsindent{3}{}\<[3]%
\>[3]{}\Varid{bwd}{}\<[9]%
\>[9]{}\mathbin{:}\Conid{B}\to \Conid{A}{}\<[E]%
\\
\>[B]{}\hsindent{3}{}\<[3]%
\>[3]{}\anonymous {}\<[9]%
\>[9]{}\mathbin{:}(\Varid{a}\mathbin{:}\Conid{A})\to \Varid{bwd}\;(\Varid{fwd}\;\Varid{a})\mathrel{=}\Varid{a}{}\<[E]%
\\
\>[B]{}\hsindent{3}{}\<[3]%
\>[3]{}\anonymous {}\<[9]%
\>[9]{}\mathbin{:}(\Varid{b}\mathbin{:}\Conid{B})\to \Varid{fwd}\;(\Varid{bwd}\;\Varid{b})\mathrel{=}\Varid{b}{}\<[E]%
\ColumnHook
\end{hscode}\resethooks
Let us unpack the declaration \Cref{eq:fha:el:kfun} a bit and see how it 
corresponds to the traditional presentation.
Given two kinds $k_1, k_2 : \ensuremath{\Varid{ki}}$, the record \ensuremath{\Varid{el}\;(\Varid{k}_{\mathrm{1}}\;{\Rightarrow_k}\;\Varid{k}_{\mathrm{2}})\cong(\Varid{el}\;\Varid{k}_{\mathrm{1}}\to\Varid{el}\;\Varid{k}_{\mathrm{2}})} consists of four fields:
the forward-direction function \ensuremath{\Varid{el}\;(\Varid{k}_{\mathrm{1}}\;{\Rightarrow_k}\;\Varid{k}_{\mathrm{2}})\to (\Varid{el}\;\Varid{k}_{\mathrm{1}}\to\Varid{el}\;\Varid{k}_{\mathrm{2}})} 
says that whenever we have an element of the kind \ensuremath{\Varid{k}_{\mathrm{1}}\;{\Rightarrow_k}\;\Varid{k}_{\mathrm{2}}} and an element
of the kind \ensuremath{\Varid{k}_{\mathrm{1}}}, we can construct an element of the kind \ensuremath{\Varid{k}_{\mathrm{2}}}.
This corresponds to the following rule in the traditional presentation:
\[
\inferrule{
\hypo{\Gamma}{F : k_1  \Rightarrow_k k_2 }
\\
\typing{\Gamma}{A : k_1 }
}{
\typing{\Gamma}{F \; A : k_2}
}
\]
The backward direction \ensuremath{(\Varid{el}\;\Varid{k}_{\mathrm{1}}\to\Varid{el}\;\Varid{k}_{\mathrm{2}})\to \Varid{el}\;(\Varid{k}_{\mathrm{1}}\;{\Rightarrow_k}\;\Varid{k}_{\mathrm{2}})}
takes in an LF-function as its argument.
In HOAS, LF-functions correspond to adding new variables to the context
of the object theory, and applications of LF-functions correspond to substitutions in the object theory, 
so the traditional counterpart of the backward direction is
\[
\inferrule{
\hypo{\Gamma, \alpha : k_1}{F : k_2}
}{
\hypo{\Gamma}{\lambda \alpha.\; F : k_1 \Rightarrow_k k_2 }
}
\]
The other two fields of the isomorphism \ensuremath{\Varid{el}\;(\Varid{k}_{\mathrm{1}}\;{\Rightarrow_k}\;\Varid{k}_{\mathrm{2}})\cong(\Varid{el}\;\Varid{k}_{\mathrm{1}}\to\Varid{el}\;\Varid{k}_{\mathrm{2}})} assert that these two directions are mutual inverses, and this is exactly $\eta$- and $\beta$-rules of \ensuremath{\Varid{k}_{\mathrm{1}}\;{\Rightarrow_k}\;\Varid{k}_{\mathrm{2}}}:
\[
\spread*{
\inferrule{
  \hypo{\Gamma}{F : k_1  \Rightarrow_k k_2 }
}{
  \typing{\Gamma}{(\lambda A.\; F \; A) \equiv F : k_1 \Rightarrow_k k_2}
}
\also
\inferrule{
  \hypo{\Gamma, \alpha : k_1}{F : k_2}
  \\  
  \typing{\Gamma}{\alpha : k_1 }
}{
  \typing{\Gamma}{(\lambda \alpha.\; F) \; A  \equiv F[A/\alpha] : k_2}
}
}
\]
In summary, we have used a single LF-declaration \Cref{eq:fha:el:kfun} to
express what would be four rules in the traditional presentation of functions.
We will use this technique a lot in our specification of \Fha. 


\subsubsection{Types}
For terms of types, there is a judgement \ensuremath{\Varid{tm}\mathbin{:}\Varid{el}\;\Varid{ty}\to \Jdg}.  
Terms of (polymorphic) function types and the unit type
are still specified by HOAS:
\begin{gather*}
\spread*{
\ensuremath{\Varid{tm}\mathbin{:}\Varid{el}\;\Varid{ty}\to \Jdg}  \also \ensuremath{\Varid{unit\hyp{}iso}\mathbin{:}\Varid{tm}\;\Varid{unit}\cong\unitty}} \\
\ensuremath{\Varid{{\Rightarrow_t}\hyp{}iso}\mathbin{:}\{\mskip1.5mu \Conid{A},\Conid{B}\mathbin{:}\Varid{el}\;\Varid{ty}\mskip1.5mu\}\to\Varid{tm}\;(\Conid{A}\;{\Rightarrow_t}\;\Conid{B})\cong(\Varid{tm}\;\Conid{A}\to\Varid{tm}\;\Conid{B})} \label{eq:fha:terms} \tag{\Fom-4} \\
\ensuremath{\Varid{\allfor\hyp{}iso}\mathbin{:}\{\mskip1.5mu \Varid{k}\mathbin{:}\anonymous \mskip1.5mu\}\;\{\mskip1.5mu \Conid{A}\mathbin{:}\anonymous \mskip1.5mu\}\to\Varid{tm}\;(\allfor\;\Varid{k}\;\Conid{A})\cong((\Varid{α}\mathbin{:}\Varid{el}\;\Varid{k})\to\Varid{tm}\;(\Conid{A}\;\Varid{α}))}
\end{gather*}
For the two-element type, we only include two terms \ensuremath{\Varid{tt}} and \ensuremath{\Varid{ff}}:
\begin{equation}\label{eq:ttff}
\spreadTag{\Fom-5}{\ensuremath{\Varid{tt}\mathbin{:}\Varid{tm}\;\Varid{bool}} \also \ensuremath{\Varid{ff}\mathbin{:}\Varid{tm}\;\Varid{bool}}}
\end{equation}
This completes the signature of System \Fom{}.
We have set it up in a minimal way for simplicity:
we did not even include an eliminator for \ensuremath{\Varid{bool}}, because having term
constructors is sufficient for stating the canonicity of \ensuremath{\Varid{bool}}
terms in \Cref{thm:canonicity}.  More useful types/kinds, such as a Boolean
type with the correct eliminator, products, and lists, can be added
easily and can be found in \Cref{app:sig}.

\begin{example}
Let us see an example of a program of \Fom{} defined in the LF.
Writing \ensuremath{\Varid{app}} and \ensuremath{\Varid{abs}} for the forward and backward directions of the
isomorphism \ensuremath{\Varid{{\Rightarrow_t}\hyp{}iso}} respectively, and \ensuremath{\Conid{App}} and \ensuremath{\Conid{Abs}} for the two directions
of \ensuremath{\Varid{\allfor\hyp{}iso}}, the Church numeral of $2$ is 
\begin{hscode}\SaveRestoreHook
\column{B}{@{}>{\hspre}l<{\hspost}@{}}%
\column{29}{@{}>{\hspre}l<{\hspost}@{}}%
\column{36}{@{}>{\hspre}l<{\hspost}@{}}%
\column{E}{@{}>{\hspre}l<{\hspost}@{}}%
\>[B]{}\Varid{two}\mathbin{:}\Varid{tm}\;(\allfor\;\Varid{ty}\;(\lambda \Varid{α}.\;{}\<[29]%
\>[29]{}(\Varid{α}\;{\Rightarrow_t}\;\Varid{α})\;{\Rightarrow_t}\;\Varid{α}\;{\Rightarrow_t}\;\Varid{α})){}\<[E]%
\\
\>[B]{}\Varid{two}\mathrel{=}\Conid{Abs}\;(\lambda \Varid{α}.\;\Varid{abs}\;(\lambda \Varid{f}.\;{}\<[36]%
\>[36]{}\Varid{abs}\;(\lambda \Varid{x}.\;\Varid{app}\;\Varid{f}\;(\Varid{app}\;\Varid{f}\;\Varid{x})))){}\<[E]%
\ColumnHook
\end{hscode}\resethooks
\end{example}

\subsubsection{Derived Concepts}

Later we will see that signatures of higher-order effectful operations 
in \Fha{} are given as higher-order endofunctors
\ensuremath{(\Varid{ty}\;{\Rightarrow_k}\;\Varid{ty})\;{\Rightarrow_k}\;(\Varid{ty}\;{\Rightarrow_k}\;\Varid{ty})} 
over \Fom{}-functors \ensuremath{\Varid{ty}\;{\Rightarrow_k}\;\Varid{ty}}, and handlers in \Fha{} always have a monad
structure.
These derived concepts, such as functors and monads, are essentially
the same as the definitions in Haskell, and are collected in 
\Cref{fig:derived:concepts}, which will be ingredients for our computation
judgements in the next step.
Note that because \Fom{} does not have equality types, the equational laws
for functors/monads are not included in these definitions (just like in Haskell),
and they are called \emph{raw functors/monads}.

\begin{figure}
\begin{hscode}\SaveRestoreHook
\column{B}{@{}>{\hspre}l<{\hspost}@{}}%
\column{3}{@{}>{\hspre}l<{\hspost}@{}}%
\column{7}{@{}>{\hspre}l<{\hspost}@{}}%
\column{9}{@{}>{\hspre}l<{\hspost}@{}}%
\column{10}{@{}>{\hspre}l<{\hspost}@{}}%
\column{33}{@{}>{\hspre}l<{\hspost}@{}}%
\column{34}{@{}>{\hspre}l<{\hspost}@{}}%
\column{35}{@{}>{\hspre}l<{\hspost}@{}}%
\column{65}{@{}>{\hspre}l<{\hspost}@{}}%
\column{E}{@{}>{\hspre}l<{\hspost}@{}}%
\>[B]{}\Varid{efty}\mathbin{:}\Varid{ki}{}\<[34]%
\>[34]{}\quad \text{\color{gray}-{}- endo-functions on types}{}\<[E]%
\\
\>[B]{}\Varid{efty}\mathrel{=}(\Varid{ty}\;{\Rightarrow_k}\;\Varid{ty}){}\<[E]%
\\[\blanklineskip]%
\>[B]{}\Varid{fmap\hyp{}ty}\mathbin{:}(\Conid{F}\mathbin{:}\Varid{el}\;\Varid{efty})\to\Varid{el}\;\Varid{ty}{}\<[34]%
\>[34]{}\quad \text{\color{gray}-{}- the type of functor action on functions}{}\<[E]%
\\
\>[B]{}\Varid{fmap\hyp{}ty}\;\Conid{F}\mathrel{=}\allfor\;\Varid{ty}\;(\lambda \Varid{α}.\;\allfor\;\Varid{ty}\;(\lambda \Varid{β}.\;(\Varid{α}\;{\Rightarrow_t}\;\Varid{β})\;{\Rightarrow_t}\;(\Conid{F}\;\Varid{α}\;{\Rightarrow_t}\;\Conid{F}\;\Varid{β}))){}\<[E]%
\\[\blanklineskip]%
\>[B]{}\lhskeyword{record}\;\Conid{RawFunctor}\mathbin{:}\Jdg\;\lhskeyword{where}{}\<[E]%
\\
\>[B]{}\hsindent{3}{}\<[3]%
\>[3]{}\Varid{0}{}\<[7]%
\>[7]{}\mathbin{:}\Varid{el}\;\Varid{efty}{}\<[34]%
\>[34]{}\quad \text{\color{gray}-{}- underlying type function of a functor}{}\<[E]%
\\
\>[B]{}\hsindent{3}{}\<[3]%
\>[3]{}\Varid{fmap}{}\<[7]%
\>[7]{}\mathbin{:}\Varid{tm}\;(\,\Varid{fmap\hyp{}ty}\;\Varid{0}){}\<[34]%
\>[34]{}\quad \text{\color{gray}-{}- action on functions between types}{}\<[E]%
\\[\blanklineskip]%
\>[B]{}\lhskeyword{record}\;\Conid{RawMonad}\mathbin{:}\Jdg\;\lhskeyword{where}{}\<[E]%
\\
\>[B]{}\hsindent{3}{}\<[3]%
\>[3]{}\Varid{0}{}\<[9]%
\>[9]{}\mathbin{:}\Varid{el}\;\Varid{efty}{}\<[34]%
\>[34]{}\quad \text{\color{gray}-{}- underlying type function of a monad}{}\<[E]%
\\
\>[B]{}\hsindent{3}{}\<[3]%
\>[3]{}\Varid{ret}{}\<[9]%
\>[9]{}\mathbin{:}\Varid{tm}\;(\allfor\;\Varid{ty}\;(\lambda \Varid{α}.\;\Varid{α}\;{\Rightarrow_t}\;\Varid{0}\;\Varid{α})){}\<[E]%
\\
\>[B]{}\hsindent{3}{}\<[3]%
\>[3]{}\Varid{bind}{}\<[9]%
\>[9]{}\mathbin{:}\Varid{tm}\;(\allfor\;\Varid{ty}\;(\lambda \Varid{α}.\;\allfor\;\Varid{ty}\;(\lambda \Varid{β}.\;\Varid{0}\;\Varid{α}\;{\Rightarrow_t}\;(\Varid{α}\;{\Rightarrow_t}\;\Varid{0}\;\Varid{β})\;{\Rightarrow_t}\;\Varid{0}\;\Varid{β}))){}\<[E]%
\\[\blanklineskip]%
\>[B]{}\Varid{trans}\mathbin{:}(\Conid{F},\Conid{G}\mathbin{:}\Varid{el}\;\Varid{efty})\to\Varid{el}\;\Varid{ty}{}\<[35]%
\>[35]{}\quad \text{\color{gray}-{}- transformations between endo-functions on types }{}\<[E]%
\\
\>[B]{}\Varid{trans}\;\Conid{F}\;\Conid{G}\mathrel{=}\allfor\;\Varid{ty}\;(\lambda \Varid{α}.\;\Conid{F}\;\Varid{α}\;{\Rightarrow_t}\;\Conid{G}\;\Varid{α}){}\<[E]%
\\[\blanklineskip]%
\>[B]{}\lhskeyword{record}\;\Conid{RawHFunctor}\mathbin{:}\Jdg\;\lhskeyword{where}{}\<[E]%
\\
\>[B]{}\hsindent{3}{}\<[3]%
\>[3]{}\Varid{0}{}\<[10]%
\>[10]{}\mathbin{:}\Varid{el}\;(\Varid{efty}\;{\Rightarrow_k}\;\Varid{efty}){}\<[33]%
\>[33]{}\quad \text{\color{gray}-{}- action on the underlying type functions of functors}{}\<[E]%
\\
\>[B]{}\hsindent{3}{}\<[3]%
\>[3]{}\Varid{hfmap}{}\<[10]%
\>[10]{}\mathbin{:}(\Conid{F}\mathbin{:}\Conid{RawFunctor})\to\Varid{tm}\;(\,\Varid{fmap\hyp{}ty}\;(\Varid{0}\;(\Conid{F}\;.\Varid{0}))){}\<[65]%
\>[65]{}\quad \text{\color{gray}-{}- action on the \ensuremath{\Varid{fmap}} of functors}{}\<[E]%
\\
\>[B]{}\hsindent{3}{}\<[3]%
\>[3]{}\Varid{hmap}{}\<[10]%
\>[10]{}\mathbin{:}(\Conid{F},\Conid{G}\mathbin{:}\Conid{RawFunctor})\to\Varid{tm}\;(\Varid{trans}\;(\Conid{F}\;.\Varid{0})\;(\Conid{G}\;.\Varid{0})){}\<[65]%
\>[65]{}\quad \text{\color{gray}-{}- action on transformations}{}\<[E]%
\\
\>[10]{}\to\Varid{tm}\;(\Varid{trans}\;(\Varid{0}\;(\Conid{F}\;.\Varid{0}))\;(\Varid{0}\;(\Conid{G}\;.\Varid{0}))){}\<[E]%
\ColumnHook
\end{hscode}\resethooks
\caption{Derived concepts in System \Fom}
\label{fig:derived:concepts}
\end{figure}

\begin{notation}\label{notation:omit:iso}
The isomorphisms \ensuremath{\Varid{{\Rightarrow_t}\hyp{}iso}}, \ensuremath{\Varid{{\Rightarrow_k}\hyp{}iso}}, and \ensuremath{\Varid{\allfor\hyp{}iso}} for
constructing/applying (polymorphic) functions easily clutter our notation when
working with raw monads, so we will leave them implicit.
For example, we just write \ensuremath{\Conid{M}\;\Varid{.bind}\;\Varid{\alpha}\;\Varid{\beta}\;\Varid{m}\;(\lambda \Varid{x}.\;\Varid{n})} instead of
the totally unreadable
\ensuremath{\Varid{app}\;(\Varid{app}\;(\Conid{App}\;(\Conid{App}\;(\Conid{M}\;\Varid{.bind})\;\Varid{\alpha})\;\Varid{\beta})\;\Varid{m})\;(\Conid{Abs}\;(\lambda \Varid{x}.\;\Varid{n}))}.

We will also sometimes suppress the field accessor \ensuremath{.\Varid{0}} of raw functors/monads in
\Cref{fig:derived:concepts} for readability, so given $\ensuremath{\Conid{F}} :
\ensuremath{\Conid{RawFunctor}}$ and $\ensuremath{\Conid{X}\mathbin{:}\Varid{el}\;\Varid{ty}}$, we may write \ensuremath{\Conid{F}\;\Conid{X}} for \ensuremath{\Conid{F}\;.\Varid{0}\;\Conid{X}}.
\end{notation}

\begin{example}\label{ex:signature:exception}
The higher-order functor $\Hexc$ for the effects of exception throwing and
catching is
\begin{lgather*}
\Hexc\; .0 : \ensuremath{\Varid{el}\;((\Varid{ty}\;{\Rightarrow_k}\;\Varid{ty})\;{\Rightarrow_k}\;(\Varid{ty}\;{\Rightarrow_k}\;\Varid{ty}))} \\
\Hexc\; .0\; F\; A = \ensuremath{\Varid{unit}} + (F\; A \times F\; A)
\end{lgather*}
with the evident functorial actions on $F$ and $A$, 
and
${(\times)}, {(+)} : \ensuremath{\Varid{el}\;\Varid{ty}\to \Varid{el}\;\Varid{ty}\to \Varid{el}\;\Varid{ty}}$ are the binary product and coproduct
types that are definable using Church encodings (or added directly into \Fha;
see \Cref{app:eff:fam}).
For every (raw) monad $M$, a transformation \ensuremath{\allfor\;\Varid{ty}\;(\lambda \Varid{\alpha}.\;\Hexc\;\Conid{M}\;\Varid{\alpha}\;{\Rightarrow_t}\;\Conid{M}\;\Varid{\alpha})} is the same as two transformations \ensuremath{\allfor\;\Varid{ty}\;(\lambda \Varid{\alpha}.\;\Varid{unit}\;{\Rightarrow_t}\;\Conid{M}\;\Varid{\alpha})} and \ensuremath{\allfor\;\Varid{ty}\;(\lambda \Varid{\alpha}.\;\Conid{M}\;\Varid{\alpha}\;\mathbin{\times_t}\;\Conid{M}\;\Varid{\alpha}\;{\Rightarrow_t}\;\Conid{M}\;\Varid{\alpha})},
which are the operations of throwing and catching on $M$ respectively.
\end{example}

\subsection{Computation Judgements}\label{sec:fha:co}
Now we are ready to add computation judgements to \Fom{} to obtain \Fha.

\subsubsection{Computations}
We follow the \emph{fine-grain call-by-value} (FGCBV) approach
\citep{Lassen_1998,Levy2003}.
For each \ensuremath{\Conid{H}\mathbin{:}\Conid{RawHFunctor}} and \ensuremath{\Conid{A}\mathbin{:}\Varid{el}\;\Varid{ty}}, there
is a judgement \ensuremath{\Varid{co}\;\Conid{H}\;\Conid{A}} for \emph{computations} of \ensuremath{\Conid{A}}-values with effectful
operations specified by \ensuremath{\Conid{H}}:
\begin{equation}\label{eq:fha:co}
\ensuremath{\Varid{co}\mathbin{:}(\Conid{H}\mathbin{:}\Conid{RawHFunctor})\to(\Conid{A}\mathbin{:}\Varid{el}\;\Varid{ty})\to\Jdg}
\tag{\Fha-1}
\end{equation}
The judgement has the following two rules for pure computations and sequential
composition of computations respectively:
\begin{equation}
\begin{aligned}
&\ensuremath{\Varid{val}\mathbin{:}\{\mskip1.5mu \Conid{H},\Conid{A}\mskip1.5mu\}\to\Varid{tm}\;\Conid{A}\to\Varid{co}\;\Conid{H}\;\Conid{A}} \\
&\ensuremath{\Varid{let\hyp{}in}\mathbin{:}\abrace{\Conid{H},\Conid{A},\Conid{B}}\to\Varid{co}\;\Conid{H}\;\Conid{A}\to(\Varid{tm}\;\Conid{A}\to\Varid{co}\;\Conid{H}\;\Conid{B})\to\Varid{co}\;\Conid{H}\;\Conid{B}}
\end{aligned}
\tag{\Fha-2}\label{eq:fha:co:letin}
\end{equation}
The interaction of \ensuremath{\Varid{val}} and \ensuremath{\Varid{let\hyp{}in}} is axiomatised by the
following judgemental equalities, which are exactly the equations
\Cref{eq:comp:monad:laws:val,eq:comp:monad:laws:let} from the introduction and are essentially the monadic
laws:
{\allowdisplaybreaks
\begin{alignat*}{2}
&\ensuremath{\Varid{val\hyp{}let}}&&\ensuremath{\mathbin{:}\{\mskip1.5mu \Conid{H},\Conid{A},\Conid{B}\mskip1.5mu\}\to(\Varid{a}\mathbin{:}\Varid{tm}\;\Conid{A})\to(\Varid{k}\mathbin{:}\Varid{tm}\;\Conid{A}\to\Varid{co}\;\Conid{H}\;\Conid{B})\to\Varid{let\hyp{}in}\;(\Varid{val}\;\Varid{a})\;\Varid{k}=\Varid{k}\;\Varid{a}}\\
&\ensuremath{\Varid{let\hyp{}val}}&&\ensuremath{\mathbin{:}\{\mskip1.5mu \Conid{H},\Conid{A}\mskip1.5mu\}\to(\Varid{m}\mathbin{:}\Varid{co}\;\Conid{H}\;\Conid{A})\to\Varid{let\hyp{}in}\;\Varid{m}\;\Varid{val}=\Varid{m}}    \\
&\ensuremath{\Varid{let\hyp{}assoc}}&&\ensuremath{\mathbin{:}\{\mskip1.5mu \Conid{H},\Conid{A},\Conid{B},\Conid{C}\mskip1.5mu\}\to(\Varid{m₁}\mathbin{:}\Varid{co}\;\Conid{H}\;\Conid{A})} {\tag{\Fha-3}\label{eq:fha:co:laws}}  \\
&\ensuremath{}&&\ensuremath{\to(\Varid{m₂}\mathbin{:}\Varid{tm}\;\Conid{A}\to\Varid{co}\;\Conid{H}\;\Conid{B})\to(\Varid{m₃}\mathbin{:}\Varid{tm}\;\Conid{B}\to\Varid{co}\;\Conid{H}\;\Conid{C})} \\
&\ensuremath{}&&\ensuremath{\to\Varid{let\hyp{}in}\;(\Varid{let\hyp{}in}\;\Varid{m₁}\;\Varid{m₂})\;\Varid{m₃}=\Varid{let\hyp{}in}\;\Varid{m₁}\;(\lambda \Varid{a}.\;\Varid{let\hyp{}in}\;(\Varid{m₂}\;\Varid{a})\;\Varid{m₃})}
\end{alignat*}
}

\subsubsection{Thunks}
We also introduce a new type former \ensuremath{\Varid{th}\;\Conid{H}\;\Conid{A}} for \emph{thunks} of
computations of $A$-values with effects of $H$, 
whose terms are isomorphic to computations:
\begin{equation}\label{el:tmnd}
\spreadTag{\Fha-4}{
\ensuremath{\Varid{th}\mathbin{:}\Conid{RawHFunctor}\to \Varid{el}\;\Varid{ty}\to \Varid{el}\;\Varid{ty}}
\also
\ensuremath{\Varid{th\hyp{}iso}\mathbin{:}\{\mskip1.5mu \Conid{H},\Conid{A}\mskip1.5mu\}\to \Varid{tm}\;(\Varid{th}\;\Conid{H}\;\Conid{A})\;{\cong}\;\Varid{co}\;\Conid{H}\;\Conid{A}}
}
\end{equation}
The two directions of the isomorphism \ensuremath{\Varid{th\hyp{}iso}} will be called \ensuremath{{\Uparrow}}
and \ensuremath{{\Downarrow}} respectively:
\begin{gather*}
\spread*{\ensuremath{{\Uparrow}} : \ensuremath{\Varid{tm}\;(\Varid{th}\;\Conid{H}\;\Conid{A})\to \Varid{co}\;\Conid{H}\;\Conid{A}} \also
\ensuremath{{\Downarrow}} : \ensuremath{\Varid{co}\;\Conid{H}\;\Conid{A}\to \Varid{tm}\;(\Varid{th}\;\Conid{H}\;\Conid{A})}}
\end{gather*}

Thunks can be packed into a raw monad:
\begin{equation*}
\begin{hscode}\SaveRestoreHook
\column{B}{@{}>{\hspre}l<{\hspost}@{}}%
\column{15}{@{}>{\hspre}l<{\hspost}@{}}%
\column{E}{@{}>{\hspre}l<{\hspost}@{}}%
\>[B]{}\Varid{th\hyp{}mnd}\mathbin{:}\Conid{RawHFunctor}\to\Conid{RawMonad}{}\<[E]%
\\
\>[B]{}\Varid{th\hyp{}mnd}\;\Conid{H}\;\Varid{.0}{}\<[15]%
\>[15]{}\mathrel{=}\Varid{th}\;\Conid{H}{}\<[E]%
\\
\>[B]{}\Varid{th\hyp{}mnd}\;\Conid{H}\;\Varid{.ret}{}\<[15]%
\>[15]{}\mathrel{=}\lambda \Varid{\alpha}\;\Varid{x}.\;{\Downarrow}\;(\Varid{val}\;\Varid{x}){}\<[E]%
\\
\>[B]{}\Varid{th\hyp{}mnd}\;\Conid{H}\;\Varid{.bind}{}\<[15]%
\>[15]{}\mathrel{=}\lambda \Varid{\alpha}\;\Varid{\beta}\;\Varid{m}\;\Varid{k}.\;{\Downarrow}\;(\Varid{let\hyp{}in}\;({\Uparrow}\;\Varid{m})\;(\lambda \Varid{a}.\;{\Uparrow}\;(\Varid{k}\;\Varid{a}))){}\<[E]%
\ColumnHook
\end{hscode}\resethooks
\end{equation*}
in which the isomorphisms \ensuremath{\Varid{abs}} and \ensuremath{\Conid{Abs}} for constructing
(polymorphic) function types are left implicit (\Cref{notation:omit:iso}).
Following equations \Cref{eq:fha:co:laws}, \ensuremath{\Varid{th\hyp{}mnd}} satisfies the monad
laws too.


\subsubsection{Operations}
Effectful operations that computations can perform are introduced by
\begin{equation}
\begin{hscode}\SaveRestoreHook
\column{B}{@{}>{\hspre}l<{\hspost}@{}}%
\column{E}{@{}>{\hspre}l<{\hspost}@{}}%
\>[B]{}\Varid{op}\mathbin{:}\{\mskip1.5mu \Conid{H},\Conid{A},\Conid{B}\mskip1.5mu\}\to\Varid{tm}\;(\Conid{H}\;(\Varid{th}\;\Conid{H})\;\Conid{A})\to(\Varid{tm}\;\Conid{A}\to\Varid{co}\;\Conid{H}\;\Conid{B})\to\Varid{co}\;\Conid{H}\;\Conid{B}{}\<[E]%
\ColumnHook
\end{hscode}\resethooks
\tag{\Fha-5}
\label{eq:fha:op}
\end{equation}
The first argument \ensuremath{\Varid{o}\mathbin{:}\Varid{tm}\;(\Conid{H}\;(\Varid{th}\;\Conid{H})\;\Conid{A})}  is the input to an operation
call, such as some parameters or computations that the operation call acts on.
The second argument \ensuremath{\Varid{k}\mathbin{:}\Varid{tm}\;\Conid{A}\to\Varid{co}\;\Conid{H}\;\Conid{B}} of \ensuremath{\Varid{op}} is
the `continuation' of the computation after this operation call, where the
argument \ensuremath{\Varid{tm}\;\Conid{A}} of \ensuremath{\Varid{k}} is the result of the operation call.
The result \ensuremath{\Varid{op}\;\Varid{o}\;\Varid{k}} is understood as the computation that first makes an operation
call with input \ensuremath{\Varid{o}}, which returns an \ensuremath{\Conid{A}}-value, and then continues as \ensuremath{\Varid{k}}.

\begin{example}\label{ex:operation:exception}
Continuing \Cref{ex:signature:exception}, we can use \ensuremath{\Varid{op}} to define the operations of throwing and catching for the computation judgement \ensuremath{\Varid{co}\;\Hexc}:
\[
\spread*{
\begin{hscode}\SaveRestoreHook
\column{B}{@{}>{\hspre}l<{\hspost}@{}}%
\column{E}{@{}>{\hspre}l<{\hspost}@{}}%
\>[B]{}\Varid{throw}\mathbin{:}\{\mskip1.5mu \Conid{A}\mskip1.5mu\}\to \Varid{co}\;\Hexc\;\Conid{A}{}\<[E]%
\\
\>[B]{}\Varid{throw}\mathrel{=}\Varid{op}\;(\Varid{inl}\;\unitel)\;\Varid{val}{}\<[E]%
\ColumnHook
\end{hscode}\resethooks
\also
\begin{hscode}\SaveRestoreHook
\column{B}{@{}>{\hspre}l<{\hspost}@{}}%
\column{E}{@{}>{\hspre}l<{\hspost}@{}}%
\>[B]{}\Varid{catch}\mathbin{:}\{\mskip1.5mu \Conid{A}\mskip1.5mu\}\to \Varid{co}\;\Hexc\;\Conid{A}\to \Varid{co}\;\Hexc\;\Conid{A}\to \Varid{co}\;\Hexc\;\Conid{A}{}\<[E]%
\\
\>[B]{}\Varid{catch}\;\Varid{p}\;\Varid{h}\mathrel{=}\Varid{op}\;(\Varid{inr}\;({\Downarrow}\;\Varid{p},\;{\Downarrow}\;\Varid{h}))\;\Varid{val}{}\<[E]%
\ColumnHook
\end{hscode}\resethooks
}
\]
\end{example}

The interaction of operation calls and sequential composition of computations
is the following, which is similar to the condition for \emph{algebraic operations}
of \citet{Plotkin_Power_2001}:
\begin{equation}
\begin{hscode}\SaveRestoreHook
\column{B}{@{}>{\hspre}l<{\hspost}@{}}%
\column{8}{@{}>{\hspre}l<{\hspost}@{}}%
\column{E}{@{}>{\hspre}l<{\hspost}@{}}%
\>[B]{}\Varid{let\hyp{}op}{}\<[8]%
\>[8]{}\mathbin{:}\{\mskip1.5mu \Conid{H},\Conid{A},\Conid{B},\Conid{C}\mskip1.5mu\}\to(\Varid{p}\mathbin{:}\Varid{tm}\;(\Conid{H}\;(\Varid{th}\;\Conid{H})\;\Conid{A})){}\<[E]%
\\
\>[8]{}\to(\Varid{k}\mathbin{:}\Varid{tm}\;\Conid{A}\to\Varid{co}\;\Conid{H}\;\Conid{B})\to(\Varid{k'}\mathbin{:}\Varid{tm}\;\Conid{B}\to\Varid{co}\;\Conid{H}\;\Conid{C}){}\<[E]%
\\
\>[8]{}\to\Varid{let\hyp{}in}\;(\Varid{op}\;\Varid{p}\;\Varid{k})\;\Varid{k'}=\Varid{op}\;\Varid{p}\;(\lambda \Varid{a}.\;\Varid{let\hyp{}in}\;(\Varid{k}\;\Varid{a})\;\Varid{k'}){}\<[E]%
\ColumnHook
\end{hscode}\resethooks
\tag{\Fha-6}
\label{eq:fha:letop}
\end{equation}
The equation \Cref{eq:fha:letop} implies that every operation call \ensuremath{\Varid{op}\;\Varid{o}\;\Varid{k}} is
equal to \ensuremath{\Varid{let\hyp{}in}\;(\Varid{op}\;\Varid{o}\;\Varid{val})\;\Varid{k}}, we could have alternatively defined
\Cref{eq:fha:op} as \ensuremath{\Varid{op'}\mathbin{:}\{\mskip1.5mu \Conid{H},\Conid{A},\Conid{B}\mskip1.5mu\}\to\Varid{tm}\;(\Conid{H}\;(\Varid{th}\;\Conid{H})\;\Conid{A})\to\Varid{co}\;\Conid{H}\;\Conid{A}} without
the \ensuremath{\Varid{k}} argument, which is
the formulation \Cref{eq:scp:op:call} that we used in the introduction.
This does not make a big technical difference, and we choose the
formulation \Cref{eq:fha:op} as it is closer to the rule presented by
\citet{PlotPret09Hand,PlotPret13Hand} for ordinary algebraic effects.

\subsubsection{Handling}
Now we axiomatise that computations \ensuremath{\Varid{co}\;\Conid{H}\;\Conid{A}} can be \emph{handled}, or \emph{evaluated}, by any raw monad supporting the operations from \ensuremath{\Conid{H}} (for which
we wrote \ensuremath{\lhskeyword{hdl}\;\Varid{p}\;\lhskeyword{with}\;\Varid{h}} in \Cref{sec:intro}).
We define the following structure for monads supporting operations from \ensuremath{\Conid{H}}:
\begin{hscode}\SaveRestoreHook
\column{B}{@{}>{\hspre}l<{\hspost}@{}}%
\column{3}{@{}>{\hspre}l<{\hspost}@{}}%
\column{9}{@{}>{\hspre}l<{\hspost}@{}}%
\column{E}{@{}>{\hspre}l<{\hspost}@{}}%
\>[B]{}\lhskeyword{record}\;\Conid{Handler}\;(\Conid{H}\mathbin{:}\Conid{RawHFunctor})\mathbin{:}\Jdg\,\;\lhskeyword{where}{}\<[E]%
\\
\>[B]{}\hsindent{3}{}\<[3]%
\>[3]{}\lhskeyword{include}\;\Conid{RawMonad}\;\lhskeyword{as}\;\Conid{M}{}\<[E]%
\\
\>[B]{}\hsindent{3}{}\<[3]%
\>[3]{}\Varid{malg}{}\<[9]%
\>[9]{}\mathbin{:}\Varid{tm}\;(\Varid{trans}\;(\Conid{H}\;\Conid{M})\;\Conid{M}){}\<[E]%
\ColumnHook
\end{hscode}\resethooks
where by `\ensuremath{\lhskeyword{include}\;\Conid{RawMonad}\;\lhskeyword{as}\;\Conid{M}}', we mean that \ensuremath{\Conid{Handler}} has all
the fields of the record \ensuremath{\Conid{RawMonad}} from \Cref{fig:derived:concepts} -- namely,
\ensuremath{\Varid{0}}, \ensuremath{\Varid{ret}}, and \ensuremath{\Varid{bind}}.
Moreover, for every \ensuremath{\Varid{h}\mathbin{:}\Conid{Handler}\;\Conid{H}}, there is a projection $h.M : \ensuremath{\Conid{RawMonad}}$.
As usual, \ensuremath{\Varid{h}\;\Varid{.0}\;\Conid{A}} will be abbreviated as \ensuremath{\Varid{h}\;\Conid{A}}.
We then add to \Fha{} the following declaration that evaluates a computation
with effect \ensuremath{\Conid{H}} with a handler of \ensuremath{\Conid{H}}: 
\begin{equation}
\ensuremath{\Varid{hdl}\mathbin{:}\{\mskip1.5mu \Conid{H},\Conid{A}\mskip1.5mu\}\to(\Varid{h}\mathbin{:}\Conid{Handler}\;\Conid{H})\to\Varid{co}\;\Conid{H}\;\Conid{A}\to\Varid{tm}\;(\Varid{h}\;\Conid{A})}
\tag{\Fha-7}
\label{eq:fha:eval}
\end{equation}
The last piece of the signature of \Fha{} is the computation rules for \ensuremath{\Varid{hdl}},
which are similar to the operational semantics of handlers in
conventional algebraic effects
\citep{PlotPret09Hand,PlotPret13Hand}:
\begin{itemize}[wide]
\item
When the computation is a value, it is handled by the \ensuremath{\Varid{ret}} of the monad,
\begin{equation}
\begin{hscode}\SaveRestoreHook
\column{B}{@{}>{\hspre}l<{\hspost}@{}}%
\column{10}{@{}>{\hspre}l<{\hspost}@{}}%
\column{E}{@{}>{\hspre}l<{\hspost}@{}}%
\>[B]{}\Varid{hdl\hyp{}val}{}\<[10]%
\>[10]{}\mathbin{:}\{\mskip1.5mu \Conid{H},\Conid{A}\mskip1.5mu\}\to(\Varid{h}\mathbin{:}\Conid{Handler}\;\Conid{H})\to(\Varid{a}\mathbin{:}\Varid{tm}\;\Conid{A})\to\Varid{hdl}\;\Varid{h}\;(\Varid{val}\;\Varid{a})=\Varid{h}\;\Varid{.ret}\;\Conid{A}\;\Varid{a}{}\<[E]%
\ColumnHook
\end{hscode}\resethooks
\tag{\Fha-8}\label{eq:fha:evalval}
\end{equation}

\item
When the computation is an operation call, it is handled by the corresponding
operation on the monad, with all subterms recursively handled:
\begin{equation}
\begin{hscode}\SaveRestoreHook
\column{B}{@{}>{\hspre}l<{\hspost}@{}}%
\column{9}{@{}>{\hspre}l<{\hspost}@{}}%
\column{12}{@{}>{\hspre}l<{\hspost}@{}}%
\column{16}{@{}>{\hspre}l<{\hspost}@{}}%
\column{17}{@{}>{\hspre}l<{\hspost}@{}}%
\column{23}{@{}>{\hspre}l<{\hspost}@{}}%
\column{E}{@{}>{\hspre}l<{\hspost}@{}}%
\>[B]{}\Varid{hdl\hyp{}op}{}\<[9]%
\>[9]{}\mathbin{:}\{\mskip1.5mu \Conid{H},\Conid{A},\Conid{B}\mskip1.5mu\}\to(\Varid{h}\mathbin{:}\Conid{Handler}\;\Conid{H}){}\<[E]%
\\
\>[9]{}\to(\Varid{p}\mathbin{:}\Varid{tm}\;(\Conid{H}\;(\Varid{th}\;\Conid{H})\;\Conid{A}))\to(\Varid{k}\mathbin{:}\Varid{tm}\;\Conid{A}\to\Varid{co}\;\Conid{H}\;\Conid{B}){}\<[E]%
\\
\>[9]{}\to{}\<[12]%
\>[12]{}\lhskeyword{let}\;{}\<[17]%
\>[17]{}\Conid{T}{}\<[23]%
\>[23]{}\mathrel{=}\Varid{fct\hyp{}of\hyp{}mnd}\;(\Varid{th\hyp{}mnd}\;\Conid{H}){}\<[E]%
\\
\>[17]{}\Conid{M}{}\<[23]%
\>[23]{}\mathrel{=}\Varid{fct\hyp{}of\hyp{}mnd}\;(\Varid{h}\;\Varid{.M}){}\<[E]%
\\
\>[17]{}\Varid{p'}{}\<[23]%
\>[23]{}\mathrel{=}\Conid{H}\;\Varid{.hmap}\;\Conid{T}\;\Conid{M}\;(\lambda \Varid{α}\;\Varid{c}.\;\Varid{hdl}\;\Varid{h}\;({\Uparrow}\;\!\Varid{c}))\;\anonymous \;\Varid{p}{}\<[E]%
\\
\>[12]{}\lhskeyword{in}\;{}\<[16]%
\>[16]{}\Varid{hdl}\;\Varid{h}\;(\Varid{op}\;\Varid{p}\;\Varid{k})=\Varid{h}\;\Varid{.bind}\;\anonymous \;\anonymous \;(\Varid{h}\;\Varid{.malg}\;\anonymous \;\Varid{p'})\;(\lambda \Varid{a}.\;\Varid{hdl}\;\Varid{h}\;(\Varid{k}\;\Varid{a})){}\<[E]%
\ColumnHook
\end{hscode}\resethooks
\tag{\Fha-9}\label{eq:fha:evalop}
\end{equation}
where \ensuremath{\Varid{fct\hyp{}of\hyp{}mnd}} is the canonical functor structure of a monad:
\begin{hscode}\SaveRestoreHook
\column{B}{@{}>{\hspre}l<{\hspost}@{}}%
\column{E}{@{}>{\hspre}l<{\hspost}@{}}%
\>[B]{}\Varid{fct\hyp{}of\hyp{}mnd}\mathbin{:}\Conid{RawMonad}\to\Conid{RawFunctor}{}\<[E]%
\\
\>[B]{}\Varid{fct\hyp{}of\hyp{}mnd}\;\Varid{m}\;.\Varid{0}\mathrel{=}\Varid{m}\;\Varid{.0}{}\<[E]%
\\
\>[B]{}\Varid{fct\hyp{}of\hyp{}mnd}\;\Varid{m}\;.\Varid{fmap}\;\Varid{α}\;\Varid{β}\;\Varid{f}\;\Varid{ma}\mathrel{=}\Varid{m}\;\Varid{.bind}\;\Varid{α}\;\Varid{β}\;\Varid{ma}\;(\lambda \Varid{a}.\;\Varid{m}\;\Varid{.ret}\;\anonymous \;(\,\Varid{f}\;\Varid{a})){}\<[E]%
\ColumnHook
\end{hscode}\resethooks
and \ensuremath{\Varid{p'}} is the handler \ensuremath{\Varid{h}} recursively applied to the argument \ensuremath{\Varid{p}} using the
functorial action of \ensuremath{\Conid{H}}.
\end{itemize}
This completes the signature of \Fha{}. 
The full signature of \Fha{} is collected in \Cref{app:sig}.

\begin{example}\label{ex:hdl:exception:standard}
Continuing \Cref{ex:operation:exception}, we can define handlers of exception
throwing and catching using the \ensuremath{\Conid{Maybe}} monad \ensuremath{\Conid{Maybe}\;\Conid{X}\mathrel{=}\Varid{unit}\mathbin{+}\Conid{X}} (also known
as the \emph{option} monad) \citep{Spivey1990,Moggi89a}.
The standard semantics of throwing and catching on \ensuremath{\Conid{Maybe}\;\Conid{X}} is 
\[
\begin{hscode}\SaveRestoreHook
\column{B}{@{}>{\hspre}l<{\hspost}@{}}%
\column{48}{@{}>{\hspre}l<{\hspost}@{}}%
\column{E}{@{}>{\hspre}l<{\hspost}@{}}%
\>[B]{}\Varid{throw}\mathbin{:}\Varid{tm}\;(\allfor\;\Varid{ty}\;(\lambda \Varid{\alpha}.\;\Varid{unit}\;{\Rightarrow_t}\;\Conid{Maybe}\;\Varid{\alpha})){}\<[E]%
\\
\>[B]{}\Varid{throw}\;\anonymous \;\unitel\mathrel{=}\Varid{inl}\;\unitel{}\<[E]%
\\[\blanklineskip]%
\>[B]{}\Varid{catch}\mathbin{:}\Varid{tm}\;(\allfor\;\Varid{ty}\;(\lambda \Varid{\alpha}.\;\Conid{Maybe}\;\Varid{\alpha}\;\mathbin{\times_t}\;\Conid{Maybe}\;\Varid{\alpha}\;{\Rightarrow_t}\;\Conid{Maybe}\;\Varid{\alpha})){}\<[E]%
\\
\>[B]{}\Varid{catch}\;\anonymous \;\tuple{\Varid{p},\Varid{h}}\mathrel{=}\Varid{case}\;(\lambda \unitel.\;\Varid{h})\;{}\<[48]%
\>[48]{}(\lambda \Varid{x}.\;\Varid{inr}\;\Varid{x})\;\Varid{p}{}\<[E]%
\ColumnHook
\end{hscode}\resethooks
\]
where \ensuremath{\Varid{case}\mathbin{:}(\Varid{tm}\;\Varid{a}\to \Varid{tm}\;\Varid{c})\to (\Varid{tm}\;\Varid{b}\to \Varid{tm}\;\Varid{c})\to \Varid{tm}\;(\Varid{a}\mathbin{+}\Varid{b})\to \Varid{tm}\;\Varid{c}}
is the eliminator of the coproduct type $a+b$ (see \Cref{app:eff:fam} for details).
We define \ensuremath{\hexc\mathbin{:}\Conid{Handler}\;\Hexc} by 
\[
\spread*{
\hexc\; .\ensuremath{\Conid{M}} = \ensuremath{\Conid{Maybe}} 
\also
\text{and}
\also
\hexc \; .\ensuremath{\Varid{malg}} \; \alpha\;o = \ensuremath{\Varid{case}\;(\Varid{throw}\;\Varid{\alpha})\;(\Varid{catch}\;\Varid{\alpha})\;\Varid{o}}.
}
\]
\end{example}

\begin{example}
Continuing the last example, we can also define a non-standard handler of
exceptions such that if \ensuremath{\Varid{p}} in \ensuremath{\Varid{catch}\;\Varid{p}\;\Varid{h}} throws an exception and \ensuremath{\Varid{h}} is
executed, then \ensuremath{\Varid{p}} gets re-run.
This of course only makes sense in the presence of some other effects, such
as mutable state, otherwise re-running \ensuremath{\Varid{p}} after \ensuremath{\Varid{h}} would simply throw the
same exception again.

Let \ensuremath{\Conid{M}\mathbin{:}\Conid{RawMonad}} be any raw monad. The type constructor \ensuremath{\lambda \Conid{A}.\;\Conid{M}\;(\Conid{Maybe}\;\Conid{A})} can be extended to a monad $M_\ensuremath{\Varid{exc}}$ \citep[4.1.1]{Moggi89a}, for which we
can define exception throwing and catching:
\begin{hscode}\SaveRestoreHook
\column{B}{@{}>{\hspre}l<{\hspost}@{}}%
\column{E}{@{}>{\hspre}l<{\hspost}@{}}%
\>[B]{}\Varid{throw'}\mathbin{:}\Varid{tm}\;(\allfor\;\Varid{ty}\;(\lambda \Varid{\alpha}.\;\Varid{unit}\;{\Rightarrow_t}\;\Conid{M}\;(\Conid{Maybe}\;\Varid{\alpha}))){}\<[E]%
\\
\>[B]{}\Varid{throw'}\;\anonymous \;\unitel\mathrel{=}\Conid{M}\;\Varid{.ret}\;\anonymous \;(\Varid{inl}\;\unitel){}\<[E]%
\\[\blanklineskip]%
\>[B]{}\Varid{catch'}\mathbin{:}\Varid{tm}\;(\allfor\;\Varid{ty}\;(\lambda \Varid{\alpha}.\;\Conid{M}\;(\Conid{Maybe}\;\Varid{\alpha})\;\mathbin{\times_t}\;\Conid{M}\;(\Conid{Maybe}\;\Varid{\alpha})\;{\Rightarrow_t}\;\Conid{M}\;(\Conid{Maybe}\;\Varid{\alpha}))){}\<[E]%
\\
\>[B]{}\Varid{catch'}\;\anonymous \;\tuple{\Varid{p},\Varid{h}}\mathrel{=}\Conid{M}\;\Varid{.bind}\;\anonymous \;\anonymous \;\Varid{p}\;(\Varid{case}\;(\lambda \unitel.\;\Conid{M}\;\Varid{.bind}\;\anonymous \;\anonymous \;\Varid{h}\;(\lambda \anonymous .\;\Varid{p}))\;(\lambda \Varid{x}.\;\Conid{M}\;\Varid{.ret}\;\anonymous \;\Varid{x})){}\<[E]%
\ColumnHook
\end{hscode}\resethooks
In the definition of \ensuremath{\Varid{catch'}}, the first argument to \ensuremath{\Varid{case}} corresponds to when
\ensuremath{\Varid{p}} produces an exception.
In this case, we first run \ensuremath{\Varid{h}} (whose return value gets discarded), and
if \ensuremath{\Varid{h}} does not throw an exception, we then run \ensuremath{\Varid{p}} again.
The second time we run \ensuremath{\Varid{p}}, we do not try to catch the exceptions that \ensuremath{\Varid{p}} may
throw, although this is not the only possible behaviour.
For example, if we have general recursion (to be discussed shortly in
\Cref{sec:gen:rec}), we can keep running \ensuremath{\Varid{p}} and \ensuremath{\Varid{h}} until \ensuremath{\Varid{p}} succeeds or \ensuremath{\Varid{h}}
fails.
We can package the monad $M_\ensuremath{\Varid{exc}}$ and the two operations above into
\ensuremath{h_{\Varid{retry}}\mathbin{:}(\Conid{M}\mathbin{:}\Conid{RawMonad})\to \Conid{Handler}\;\Hexc}.
\end{example}

\begin{example}
For any raw monad \ensuremath{\Conid{M}}, the handler \ensuremath{h_{\Varid{retry}}\;\Conid{M}\mathbin{:}\Conid{Handler}\;\Hexc} from the last
example only handles the effect of exceptions.
In many cases, we can also lift the operations that $M$ already has to \ensuremath{\Conid{M}\hsdot{\circ }{.\ }\Conid{Maybe}} in a canonical way \citep{MM10Mon,Jaskelioff09}, giving us a handler of both exceptions and these operations.
Let \ensuremath{\Conid{H}\mathbin{:}\Conid{RawHFunctor}} be any effect signature and \ensuremath{\Varid{alg}\mathbin{:}\Varid{tm}\;(\Varid{trans}\;(\Conid{H}\;\Conid{M})\;\Conid{M})} an operation on $M$.
Suppose we also have some
\[\ensuremath{\Varid{\sigma}_H\mathbin{:}\{\mskip1.5mu \Conid{F},\Conid{G}\mathbin{:}\Conid{RawFunctor}\mskip1.5mu\}\to \Varid{tm}\;(\Varid{trans}\;(\Conid{H}\;(\Conid{F}\hsdot{\circ }{.\ }\Conid{G}))\;(\Conid{H}\;\Conid{F}\hsdot{\circ }{.\ }\Conid{G}))}.\]
Then we can define $\ensuremath{\Varid{alg}}^\sharp : \ensuremath{\Varid{tm}\;(\Varid{trans}\;(\Conid{H}\;(\Conid{M}\hsdot{\circ }{.\ }\Conid{Maybe}))\;(\Conid{M}\hsdot{\circ }{.\ }\Conid{Maybe}))}$ by
\[
\ensuremath{\Varid{alg}}^\sharp \; \alpha \; o = \ensuremath{\Varid{alg}\;(\Conid{Maybe}\;\Varid{\alpha})\;(\Varid{\sigma}_H\;\{\mskip1.5mu \Conid{M}\mskip1.5mu\}\;\{\mskip1.5mu \Conid{Maybe}\mskip1.5mu\}\;\Varid{\alpha}\;\Varid{o})}.
\]
Packaging everything up, we have \ensuremath{\{\mskip1.5mu \Conid{H},\Varid{\sigma}_H\mskip1.5mu\}\to \Conid{Handler}\;\Conid{H}\to \Conid{Handler}\;(\Hexc\mathbin{+}\Conid{H})}, where \ensuremath{\Hexc\mathbin{+}\Conid{H}} is the coproduct of the two raw higher-order functors.
The resulting construction is an example of \emph{modular handlers}
\citep{YangWu23, SPWJ19, YangWu21}.

For a wide range of effect signatures \ensuremath{\Conid{H}}, we have canonical choices of the 
term $\sigma_H$ above.
For example, a generic operation \ensuremath{\Conid{P}\;{\Rightarrow_t}\;\Conid{M}\;\Conid{A}} on a monad $M$ is the same as
$\ensuremath{\Varid{trans}}\; (K_{P,A}\;M)\; M$ where $K_{P,A}$ is the constant higher-order functor
$K_{P,A}\;M\;\alpha = P \ensuremath{\mathbin{\times_t}} (\ensuremath{\Conid{A}\;{\Rightarrow_t}\;\Varid{\alpha}})$.
In this case $\sigma_H$
can just be the identity function.
Similarly, scoped operations will correspond to \ensuremath{\Conid{H}} of the form $\lambda M.\;S \hcomp M$ for some fixed functor $S$.
In this case, $\sigma_H$ can also be the identity function.
\end{example}

\begin{remark}
Due to space limit, we cannot demonstrate many programming examples of \Fha{}
here, and we refer interested readers to the previous work on higher-order
effect handlers \citep{YPWvS2021,Wu2014,vS24,vSPW21,BvTT2024}, whose examples
can be adapted to \Fha{} (or the recursive variant in \Cref{sec:gen:rec}) easily.
\end{remark}

\begin{remark}\label{rem:no:evallet}
We do not include in \Fha{} the equation asserting that \ensuremath{\Varid{hdl}}
also commutes with \ensuremath{\Varid{let\hyp{}in}}:
\begin{hscode}\SaveRestoreHook
\column{B}{@{}>{\hspre}l<{\hspost}@{}}%
\column{9}{@{}>{\hspre}l<{\hspost}@{}}%
\column{E}{@{}>{\hspre}l<{\hspost}@{}}%
\>[B]{}\Varid{hdl\hyp{}let}\mathbin{:}\{\mskip1.5mu \Conid{H},\Conid{A},\Conid{B}\mskip1.5mu\}\to (\Varid{h}\mathbin{:}\Conid{Handler}\;\Conid{H})\to (\Varid{c}\mathbin{:}\Varid{co}\;\Conid{H}\;\Conid{A})\to (\Varid{f}\mathbin{:}\Varid{tm}\;\Conid{A}\to\Varid{co}\;\Conid{H}\;\Conid{B}){}\<[E]%
\\
\>[B]{}\hsindent{9}{}\<[9]%
\>[9]{}\to \Varid{hdl}\;\Varid{h}\;(\Varid{let\hyp{}in}\;\Varid{c}\;\Varid{f})=\Varid{h}\;\Varid{.bind}\;\Conid{A}\;\Conid{B}\;(\Varid{hdl}\;\Varid{h}\;\Varid{c})\;(\lambda \Varid{a}.\;\Varid{hdl}\;\Varid{h}\;(\,\Varid{f}\;\Varid{a})){}\<[E]%
\ColumnHook
\end{hscode}\resethooks
This is because we have chosen to work with \emph{raw} monads that may not 
validate the
monad laws, whereas computations \ensuremath{\Varid{co}\;\Conid{H}\;\Conid{A}} are axiomatised to always satisfy
these laws \Cref{eq:fha:co:laws}. 
Consequently, we can freely re-associate let-bindings in computations but not
in raw monads, so having \ensuremath{\Varid{hdl\hyp{}let}} would result in inconsistency.
Although \ensuremath{\Varid{hdl\hyp{}let}} is left out, later we will prove the \emph{canonicity} of \Fha{} -- handling \emph{closed} elements of computations
never gets stuck.
This is intuitively because in the empty context, every computation is always
equal to a computation without \ensuremath{\Varid{let\hyp{}in}} because of the equations \ensuremath{\Varid{let\hyp{}val}},
\ensuremath{\Varid{let\hyp{}assoc}} and \ensuremath{\Varid{let\hyp{}op}}.
\end{remark}

\begin{remark}\label{el:mod:hdl:in:fha}
We did not include in \Fha{} any built-in support for \emph{type-and-effect
systems} that track the effect operations that a computation may perform
\citep{BauerPretnar2014,KammarPlotkin2012,Lucassen1988}
or support for \emph{modular handlers} \citep{YangWu21,YangWu23} that
organise handlers in a composable way, 
because both of them can be \emph{derived concepts} in \Fha{}, provided that we
extend \Fha{} with some standard type/kind connectives such as products and
lists.
For example, an effect row of \emph{algebraic operations} can be given as
a type-level list \ensuremath{[\mskip1.5mu \Varid{ty}\;\mathbin{\times_k}\;\Varid{ty}\mskip1.5mu]}, where each element \ensuremath{(\Conid{P},\Conid{A})} of the list   
determines an operation receiving an argument of type $P$ and returning 
a value of type $A$.
Every list \ensuremath{[\mskip1.5mu \Varid{ty}\;\mathbin{\times_k}\;\Varid{ty}\mskip1.5mu]} then determines a \ensuremath{\Conid{RawHFunctor}} that can be supplied
to the computation judgement \ensuremath{\Varid{co}}.
In this way, effect tracking is a \emph{user-level library} rather than a
built-in feature of \Fha, and effect polymorphism is just a special case of
the (higher-kinded) polymorphism that \Fha{} already has.
Details of how this is done can be found in \Cref{app:eff:fam}.
\end{remark}

\begin{remark}
We will not discuss type checking for \Fha{} in this paper, as \Fha{}
adds little type-level complexity to \Fom, and we expect the existing
algorithms for type-checking polymorphic calculi \citep{Dunfield2013,
Jones_2007, Daan2008} can be extended to work with \Fha{}.
\end{remark}

\subsection{An Extension of General Recursion}\label{sec:gen:rec}

The \ensuremath{\Varid{hdl}} construct of \Fha{} is a form of \emph{structural recursion}, and
it is also possible to extend \Fha{} with \emph{general recursion}.
We will refer to this extension as \Fhar.
The signature of \Fhar{} extends that of \Fha{} with a new family of judgements
\ensuremath{\Varid{pco}} for \emph{partial computations} that has the same signature as \ensuremath{\Varid{co}}:
\begin{equation}\label{el:fhar:pco}
\ensuremath{\Varid{pco}\mathbin{:}(\Conid{H}\mathbin{:}\Conid{RawHFunctor})\to(\Conid{A}\mathbin{:}\Varid{el}\;\Varid{ty})\to\Jdg}.
\tag{\Fhar-1}
\end{equation}
The original computation judgement \ensuremath{\Varid{co}} is still kept in the language and is
used for total computations as usual.
Most accompanying rules for \ensuremath{\Varid{co}} in \Cref{sec:fha:co} are inherited by \ensuremath{\Varid{pco}}:
\ensuremath{\Varid{val}}, \ensuremath{\Varid{let\hyp{}in}}, \ensuremath{\Varid{th}}, \ensuremath{\Varid{op}}, and all their associated equations.
We shall refer to the copy of them for \ensuremath{\Varid{pco}} by the same names as before,
except for thunks of partial computations, which we call \ensuremath{\Varid{pth}\mathbin{:}\Conid{RawHFunctor}\to \Varid{el}\;\Varid{ty}\to \Varid{el}\;\Varid{ty}}.

The new rules for \ensuremath{\Varid{pco}} are as expected a fixed-point combinator:
\begin{equation}
\begin{aligned}
& \ensuremath{\Varid{fix}\mathbin{:}\{\mskip1.5mu \Conid{H},\Conid{A}\mskip1.5mu\}\to (\Varid{pth}\;\Conid{H}\;\Conid{A}\to \Varid{pco}\;\Conid{H}\;\Conid{A})\to \Varid{pco}\;\Conid{H}\;\Conid{A}} \\
& \ensuremath{\Varid{fix\hyp{}eq}\mathbin{:}\{\mskip1.5mu \Conid{H},\Conid{A},\Varid{f}\mskip1.5mu\}\to \Conid{Y}\;\Varid{f}\mathrel{=}\Varid{f}\;({\Downarrow}\;\Conid{Y}\;\Varid{f})}
\end{aligned}
\tag{\Fhar-2} \label{eq:fhar:y}
\end{equation}

\begin{element}
Another difference between \ensuremath{\Varid{pco}} and \ensuremath{\Varid{co}} is their elimination rules:
\ensuremath{\Varid{hdl}} allows computations \ensuremath{\Varid{co}\;\Conid{H}\;\Conid{A}} to be handled into
any raw monad \ensuremath{\Conid{M}} equipped with an $H$-operation, but naturally, \ensuremath{\Varid{pco}} shall
only be handled into monads \ensuremath{\Conid{M}} that support recursion.
In the current call-by-value setting, the only thing that supports
recursion is \ensuremath{\Varid{pco}}, so we will require that the raw monad $M$ send every type \ensuremath{\Conid{A}\mathbin{:}\Varid{el}\;\Varid{ty}} to thunks of partial computations \ensuremath{\Varid{pth}\;\Conid{H}\;(\Conid{F}\;\Conid{A})} for some \ensuremath{\Conid{H}} and \ensuremath{\Conid{F}\mathbin{:}\Varid{el}\;\Varid{ty}\to \Varid{el}\;\Varid{ty}}:
\begin{hscode}\SaveRestoreHook
\column{B}{@{}>{\hspre}l<{\hspost}@{}}%
\column{3}{@{}>{\hspre}l<{\hspost}@{}}%
\column{7}{@{}>{\hspre}l<{\hspost}@{}}%
\column{E}{@{}>{\hspre}l<{\hspost}@{}}%
\>[B]{}\lhskeyword{record}\;\Conid{HandlerRec}\;(\Conid{H}\mathbin{:}\Conid{RawHFunctor})\mathbin{:}\Jdg\;\lhskeyword{where}{}\<[E]%
\\
\>[B]{}\hsindent{3}{}\<[3]%
\>[3]{}\lhskeyword{include}\;\Conid{Handler}\;\Conid{H}\;\lhskeyword{as}\;\Varid{h}{}\<[E]%
\\
\>[B]{}\hsindent{3}{}\<[3]%
\>[3]{}\Conid{F}{}\<[7]%
\>[7]{}\mathbin{:}\Varid{el}\;\Varid{ty}\to \Varid{el}\;\Varid{ty}{}\<[E]%
\\
\>[B]{}\hsindent{3}{}\<[3]%
\>[3]{}\Varid{eq}{}\<[7]%
\>[7]{}\mathbin{:}(\Conid{A}\mathbin{:}\Varid{el}\;\Varid{ty})\to \Varid{h}\;\Conid{A}\mathrel{=}\Varid{pth}\;\Conid{H}\;(\Conid{F}\;\Conid{A}){}\<[E]%
\ColumnHook
\end{hscode}\resethooks
Here we have formulated the requirement \ensuremath{\Varid{eq}} using the equality type of \LCCLF,
and in an implementation of the type checker for \Fhar{}, the equation \ensuremath{\Varid{eq}}
may be mechanically checked since the kind language of \Fhar{} is normalising.
The language \Fhar{} then has the following declaration:
\begin{equation}\label{eq:fhar:eval}
\ensuremath{\Varid{hdl}\mathbin{:}\{\mskip1.5mu \Conid{H},\Conid{A}\mskip1.5mu\}\to(\Varid{h}\mathbin{:}\Conid{HandlerRec}\;\Conid{H})\to\Varid{pco}\;\Conid{H}\;\Conid{A}\to\Varid{tm}\;(\Varid{h}\;\Conid{A})}
\tag{\Fhar-3}
\end{equation}
together with equations \ensuremath{\Varid{hdl\hyp{}val}} and \ensuremath{\Varid{hdl\hyp{}op}} similar to those of \ensuremath{\Varid{co}} (\ref{eq:fha:evalval}, \ref{eq:fha:evalop}) with \ensuremath{\Varid{co}} replaced by \ensuremath{\Varid{pco}}, \ensuremath{\Varid{th}} replaced by \ensuremath{\Varid{pth}}, and \ensuremath{\Conid{Handler}} replaced by \ensuremath{\Conid{HandlerRec}}.
\end{element}

For discussing its meta-theoretic properties,
it is convenient to include in \Fhar{} the empty type:
\begin{equation}\label{eq:fhar:empty}
\spreadTag{\Fhar-4}{
\ensuremath{\Varid{empty}\mathbin{:}\Varid{el}\;\Varid{ty}}
\also
\ensuremath{\Varid{absurd}\mathbin{:}(\Conid{A}\mathbin{:}\Varid{el}\;\Varid{ty})\to\Varid{tm}\;\Varid{empty}\to\Varid{tm}\;\Conid{A}}
}
\end{equation}
so that we have a judgement \ensuremath{\Varid{pco}\;\Conid{VoidH}} of partial computations without
any other effects, where \ensuremath{\Conid{VoidH}} is the constant raw higher-order
functor: \ensuremath{\Conid{VoidH}\;\anonymous \;\anonymous \mathrel{=}\Varid{empty}}.

\subsection{Semantic Models and the Category of Judgements}
 
We have presented the calculus \Fha{} using \LCCLF, which we hope to be an example demonstrating the compactness and precision of using a
type-theoretic LF to present programming languages.
However, syntactic nicety is not the only advantage of using LFs.
A bigger advantage is that an LF can provide useful general results for
object languages defined in it.
For every signature,
\LCCLF{} provides (1) a notion of semantic \emph{models}, (2) a category of
\emph{judgements}, and (3) an equivalence between models and functors out of
the category of judgements.
These results are established by \citet{Yang2025} in detail, and below we
record the special cases for the signature \Fha{}.

\subsubsection{Semantic Models}
To define the concept of models of \Fha{} in a category, we first need an 
auxiliary concept of categories that can interpret the whole logical framework
$\LCCLF$.

\begin{definition}\label{def:lf:cat}
An \emph{LF-category} is a category $\CatC$ together with an interpretation of
the type formers of \LCCLF{} (\Cref{lang:lf}), i.e.\ the unit type $1$,
$\Sigma$-types, restricted $\Pi$-types, and a universe $\Jdg$ closed under $1$,
$\Sigma$, $\Pi$, and extensional equality types $a = b$.
\end{definition}

We will say `an LF-category $\tuple{\CatC, U}$' where $U$ is the interpretation
of the universe. 
Strictly speaking, the interpretation of other type formers
is also part of the structure, but they are determined uniquely (up to
isomorphisms) by their respective universal properties.

For example,
the category $\Set$ can be made an LF-category, with $\Jdg$ being interpreted as some set-theoretic universe $U$.
A trivial choice of $U$ is just the one-element set $\aset{1}$, which is closed under $1$, $\Sigma$, $\Pi$, $=$.
More generally, the presheaf category $\PrC$ over a (small) 
locally cartesian closed category (LCCC) $\CatC$
can be made an LF-category $\tuple{\PrC, \TyC}$ where the universe $\TyC$ classifies
exactly (the Yoneda embedding of) the objects and morphisms of $\CatC$; see
\citet[\S IV]{Yang2025} for details.

\begin{definition}\label{def:fha:model}
A \emph{model} $\MM$ of \Fha{} in a (small) LCCC $\CatC$ is a morphism $\MM : 1 \to
\ensuremath{\llbracket\Fha\rrbracket_{\CatC}}$ in the presheaf category $\PrC$, where the object \ensuremath{\llbracket\Fha\rrbracket_{\CatC}} is the interpretation of the record type that has all
the declarations of \Fha{} (\ref{eq:fha:kinds} to \ref{eq:ttff}, \ref{eq:fha:co} to \ref{eq:fha:evalop}) as fields, with $\Jdg$ interpreted as $\TyC$.
\begin{equation}\label{eq:fha:model}
\begin{hscode}\SaveRestoreHook
\column{B}{@{}>{\hspre}l<{\hspost}@{}}%
\column{3}{@{}>{\hspre}l<{\hspost}@{}}%
\column{61}{@{}>{\hspre}l<{\hspost}@{}}%
\column{75}{@{}>{\hspre}c<{\hspost}@{}}%
\column{75E}{@{}l@{}}%
\column{E}{@{}>{\hspre}l<{\hspost}@{}}%
\>[B]{}\lhskeyword{record}\;\Fha\;\lhskeyword{where}{}\<[E]%
\\
\>[B]{}\hsindent{3}{}\<[3]%
\>[3]{}\Varid{ki}\mathbin{:}\Jdg;\ \ \Varid{el}\mathbin{:}\Varid{ki}\to \Jdg;\ \ \Varid{ty}\mathbin{:}\Varid{ki};{}\<[61]%
\>[61]{}\ \ {}\<[75]%
\>[75]{}\mathbin{...}{}\<[75E]%
\ColumnHook
\end{hscode}\resethooks
\end{equation}
\end{definition}

\citet{Yang2025} also defined a notion of \emph{model isomorphisms}, so
we have a groupoid $\LFMod{\Fha}{\CatC}$ of \Fha-models and isomorphisms in an
LCCC $\CatC$.
In this paper, model isomorphisms will not play an important role so we omit
their definition here.

\begin{remark}\label{rem:define:mod}
\Cref{def:lf:cat,def:fha:model} may appear as rather opaque to the reader, but
we need not worry about them too much.
For our purposes in this paper, what we need to know about them is that,
for every LCCC $\CatC$, we can define a model of \Fha{} in $\CatC$ by defining
a closed element of the record type \ensuremath{\Fha} \Cref{eq:fha:model} in an internal
language of $\PrC$ with $\Jdg$ replaced by the universe $\TyC$. 

In fact, our semantic domain $\CatC$ usually already has a universe $U$ that can interpret
$\Jdg$.
In this case, we do not have to move to the bigger category $\PrC$. 
To construct a model of \Fha{} in $\CatC$, it is
sufficient to construct a closed element of the record \ensuremath{\Fha} in an internal
language of $\CatC$ itself, with $\Jdg$ replaced by the universe $U$.
We will see several examples of this in the following sections.
\end{remark}

\subsubsection{Category of Judgements}\label{sec:cat:jdg}
In categorical logic, we usually have the category of \emph{types} or 
\emph{contexts} that organises the syntactic entities of a
language as a category.
For languages defined using \LCCLF{}, the category containing the syntactic
entities is the \emph{category of judgements}.

\begin{definition}
The \emph{category of judgements} $\CatJdgOf{\Fha}$ for the language
\Fha{} has (1) \LCCLF-terms $\hypo{\Fha}{A : \Jdg}$ as objects, and (2)
terms $\hypo{\Fha}{f : A \to B}$ as morphisms.
Identities and composition are the evident identity function and function composition in \LCCLF{}.
\end{definition}

The category of judgements is locally cartesian closed.
For every object $A : \Jdg$ of $\CatJdgOf{\Fha}$ (here we omit the
context `$\Fha \vdash$'), the terminal object of the slice category
$\CatJdgOf{\Fha}/A$ is $(\lambda x.\;x) : A \to A$.
Given two objects
$f : B \to A$ and $g : C \to A$ in $\CatJdgOf{\Fha}/A$, their
product is $\lambda p.\;f\;(\pi_1 \; p) : P \to A$ where $P \defeq \Sigma (b : B).\;\Sigma (c : C).\; f\;b = g \; c$, and their
exponential is $\pi_1 : E \to A$ where
\[
\spread*{
E \defeq \Sigma(a : A).\;B_a \to C_a
\also
B_a \defeq \Sigma(b : B).\;f\;b = a
\also
C_a \defeq \Sigma(c:C).\;g\;c = a
}
\]
Among all LCCCs, 
the category $\CatJdgOf{\Fha}$ has the following universal property:

\begin{theorem}[\citet{Yang2025}]\label{thm:fha:functor}
Let $\CatC$ be an LCCC.
The groupoid $\LCCC_{\cong}(\CatJdgOf{\Fha}, \CatC)$ of LCC-functors  and 
natural isomorphisms is equivalent to the groupoid 
of $\Fha$-models in $\CatC$.
\end{theorem}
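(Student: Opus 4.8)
The statement is the instance for the signature $\Fha$ of the general functorial-semantics theorem for $\LCCLF$-signatures proved by \citet{Yang2025}, and I would prove it by exhibiting the mechanism underlying that theorem. The crux is a \emph{generic model} $\MM_0$ of $\Fha$ living in $\CatJdgOf{\Fha}$ itself. Concretely, I would equip $\Pr{\CatJdgOf{\Fha}}$ with its canonical universe classifying the objects and morphisms of $\CatJdgOf{\Fha}$, and read each declaration of $\Fha$ off tautologically: the declaration $\Varid{ki} : \Jdg$ becomes the object named $\Varid{ki}$, the declaration $\Varid{ty} : \Varid{ki}$ becomes the evident point of that object, and so on through the entire signature (\ref{eq:fha:kinds} to \ref{eq:fha:evalop}). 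This is possible precisely because the objects of $\CatJdgOf{\Fha}$ are by definition the $\Fha$-contexted judgements $\Fha \vdash A : \Jdg$, so every piece of structure the signature demands is already present as a free variable.

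With $\MM_0$ in hand I would define the forward functor $\Phi : \LCCC_{\cong}(\CatJdgOf{\Fha}, \CatC) \to \LFMod{\Fha}{\CatC}$ by $\Phi(F) = F_* \MM_0$, the pushforward of the generic model along $F$. The key lemma is that every $\LCCLF$ type former is interpreted through a universal property that an LCC-functor preserves: $1$ as the terminal object, $\Sigma$-types as composites of display maps, the restricted $\Pi$-types as dependent products along display maps, and the extensional equality types as diagonals — exactly the constructions the excerpt already spells out for the LCC structure of $\CatJdgOf{\Fha}$. Hence $F$ carries an $\Fha$-model to an $\Fha$-model, and a natural isomorphism $F \iso F'$ induces a model isomorphism, so $\Phi$ is a functor of groupoids. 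For the inverse I would use interpretation: given a model $\MM$ in $\CatC$, define $\Psi(\MM) : \CatJdgOf{\Fha} \to \CatC$ by sending each object $A : \Jdg$ to $\llbracket A \rrbracket_{\MM}$, obtained by substituting the components of $\MM$ for the variables of the context $\Fha$ and naming the resulting closed $\TyC$-element as an object of $\CatC$ (as in \Cref{rem:define:mod}); morphisms are interpreted likewise. Functoriality is immediate from the compositional nature of interpretation, and $\Psi(\MM)$ preserves the LCC structure because substitution commutes with every $\LCCLF$ type former and each former is interpreted by the matching universally-characterised construction in $\CatC$.

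The remaining work — and the main obstacle — is to exhibit the two natural isomorphisms witnessing $\Phi$ and $\Psi$ as mutually inverse. The direction $\Phi\Psi(\MM) \iso \MM$ is the easier one: applying the interpretation functor $\Psi(\MM)$ to the tautological generic model $\MM_0$ just substitutes $\MM$ back in, recovering it up to the canonical isomorphisms of the record type \Cref{eq:fha:model}. The hard direction is $\Psi\Phi(F) \iso F$: one must show that an arbitrary LCC-functor is pinned down, up to isomorphism, by its value on $\MM_0$ alone. I would prove this by structural induction on the objects and morphisms of $\CatJdgOf{\Fha}$ — that is, on the syntax of the judgements $\Fha \vdash A : \Jdg$ — using that both $\Psi\Phi(F)$ and $F$ send each $\LCCLF$ connective to the same universally-characterised LCC construction in $\CatC$. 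The delicate point is the treatment of the universe $\Jdg$: because the objects of $\CatJdgOf{\Fha}$ are themselves \emph{classified} by the universe, verifying that $F$ respects this classification is exactly the content that makes $\CatJdgOf{\Fha}$ the \emph{free} LCCC on the signature $\Fha$, and this is where the detailed coherence bookkeeping of \citet{Yang2025} — which I would invoke rather than reprove — does the real work.
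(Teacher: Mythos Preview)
The paper does not give its own proof of this theorem: it is stated with attribution to \citet{Yang2025} and used as a black box (the surrounding text says ``These results are established by \citet{Yang2025} in detail, and below we record the special cases for the signature \Fha{}''). So there is nothing in the paper to compare your argument against beyond the citation itself.

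Your proposal correctly identifies the statement as an instance of the general functorial-semantics theorem for \LCCLF{}-signatures and sketches the expected mechanism: a generic model in $\CatJdgOf{\Fha}$, pushforward along LCC-functors for one direction, interpretation for the other, and an inductive coherence argument for the round-trip. This is the standard shape of such proofs and is consistent with how the paper uses the result (e.g.\ the syntactic model $\MM$ in $\Pr{\CatJdgOf{\Fha}}$ invoked in \Cref{sec:fha:lr:model} is exactly your generic model $\MM_0$ under Yoneda). Your honest flagging of the hard direction --- that $\Psi\Phi(F) \cong F$ requires the detailed coherence bookkeeping you defer to \citet{Yang2025} --- is appropriate; that is indeed where the substance lies, and the paper makes the same deferral.
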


The practical relevance of this theorem is twofold.
Firstly, it gives us a functor $\sem{\blank}_M : \CatJdgOf{\Fha} \to \CatC$
after we define a model $\MM$ of \Fha{} in $\CatC$.
This functor assigns a meaning of every \Fha-judgement and derivation (not just
the generating ones declared in the signature \Fha{}) 
in the category $\CatC$.
Secondly, it provides a connection between \Fha{} as a \emph{syntactic
signature} and as a \emph{category} $\CatJdgOf{\Fha}$.  
This connection enables us to use categorical tools to study the theory \Fha{}.
For example, in \Cref{sec:fha:lr:model}, we will use a categorical tool known
as \emph{Artin gluing} to prove properties of \Fha. 

\section{The Realizability Model}\label{sec:fha:real:model}

In this section, we present a model of \Fha{} in the category
$\Asm(\bA)$ of \emph{assemblies} over an arbitrary \emph{partial combinatory
algebra} (PCA) $\bA$.
This model serves two purposes: 
(1)~it shows that \Fha{} is consistent in the sense
that \ensuremath{\Varid{tt}} and \ensuremath{\Varid{ff}} are not equal terms of \ensuremath{\Varid{bool}} in the equational theory
of \Fha{},
and (2)~it provides a way to extract executable programs, e.g.\ terms of
$\lambda$-calculus, from well typed terms of \Fha{}, giving us a way to run
terms of \Fha{} without an explicit operational semantics.

\subsection{Assemblies and Their Language}
We will only use the category of assemblies as a black box via a type-theoretic
internal language \Cref{lang:asm}, 
so in principle the reader does not even need to know what an assembly is to
read this section and treat the model presented in this section as a syntactic translation from \Fha{} to another type theory.
We refer readers who are interested in how assemblies work `under the hood' to
tutorials on realizability by \citet{Bauer2022}, \citet{deJong2024},
\citet{Streicher2017}, and the comprehensive book account by
\citet{Oosten_2008}.

A partial combinatory algebra $\tuple{\bA, \cdot}$ is an abstraction for an
untyped model of computation: $\bA$ is a set whose elements serve the dual
purpose of programs and data,
and $\cdot : \bA \times \bA \rightharpoonup \bA$ is a partial binary
operation subject to certain conditions.
The intuition for $n \cdot m$ is applying the program $n$ to the input data $m$.
Notable examples of PCAs include 
(1) $\beta$-equivalence classes of closed $\lambda$-terms together with
$\lambda$-term application and 
(2) the set of natural numbers together with $n \cdot m$ being the result of
running the $n$-th Turing machine with input $m$ (if the $n$-th Turing machine does not halt on $m$, $n \cdot m$ is undefined).
The second example is called \emph{Kleene's first algebra} $\mathbb{K}$.

For every PCA $\tuple{\bA, \cdot}$, the category $\Asm(\bA)$ of assemblies over
$\bA$ (also known as \emph{$\omega$-sets}) is
roughly a category of `computable sets and functions':
an object $\tuple{X, \avert{\blank}_X}$ of $\Asm(\bA)$ is a set $X$ with
a function $\avert{\blank}_X : X \to \Pow(\bA)$ mapping every 
$x \in X$ to a \emph{non-empty} subset $\avert{x}_X$ of $\bA$.
The intuition is that $\avert{x}_X$ is the set of $\bA$-elements that 
\emph{encode} or \emph{realize} $x \in X$.
A morphism $f : \tuple{X, \avert{\blank}_X} \to \tuple{Y, \avert{\blank}_Y}$ 
in $\Asm(\bA)$ is
a set-theoretic function $f : X \to Y$ such that there exists an element 
$r \in \bA$ and for all $x \in X$, $n \in \avert{x}_X$, $r \cdot n$
is defined and $r \cdot n \in \avert{f \; x}_Y$.

The category $\Asm(\bA)$ of assemblies has many pleasant properties, making it a standard
tool for interpreting programming languages, especially those with
impredicative polymorphism.
We will access the nice structure of $\Asm(\bA)$ via the following
type-theoretic internal language.
\begin{typetheory}\label{lang:asm}
The language $\TTASM$ is a dependent type theory with the
following type formers:
\begin{itemize}
\item dependent function types ($\Pi$-types), dependent pair types
($\Sigma$-types), extensional equality types, and inductive types 
(e.g.\ the unit type, the empty type, the natural number type);
\item
three cumulative universes $P : V_1 : V_2$, each closed under the aforementioned type
formers;
\item
$P$ is \emph{impredicative} in the sense that for every type $A$ (not
necessarily in $P$) and every type family $B : A \to P$, the dependent function
type $(x : A) \to B\;x$ is in $P$.
\end{itemize}
\end{typetheory}

The language $\TTASM$ can be interpreted in the category $\Asm(\bA)$
for every non-trivial PCA $\bA$.
Interpretations of similar type theories in the category of assemblies can be
found in \citet[\S 3.4]{Hofmann_1997} and \citet[Chapter 6]{Luo1994}. 
Non-triviality of $\bA$ is needed here for the impredicative universe $P$ to
have types with more than one elements (if $\bA$ is the trivial one-element
PCA, $\Asm(\bA)$ degenerates to the category of sets and $P$ is the set
$\aset{\bot, \top}$ of classical propositions).

\subsection{The Realizability Model of \Fha}\label{sec:real:mod:fha}
As mentioned in \Cref{rem:define:mod},
to define a model of \Fha{} in $\Asm(\bA)$, it is sufficient to 
construct an element of the record type $\sem{\Fha}_{V_2}$ in \TTASM{} that contains all declarations of \Fha{} with $\Jdg$ replaced by the universe $V_2$.
In the following, we construct such a model $\RM : \sem{\Fha}_{V_2}$.

The model of the \Fom-fragment is standard.
Kinds are interpreted as the predicative universe $V_1$ and 
the base kind \ensuremath{\Varid{ty}\mathbin{:}\Varid{ki}} is interpreted as the impredicative universe $P$:
\begin{align*}
&\RM.\ensuremath{\Varid{ki}} : V_2        & &\RM.\ensuremath{\Varid{el}}    : \RM.\ensuremath{\Varid{ki}} \to V_2 &&\RM.\ensuremath{\Varid{ty}} : \RM.\ensuremath{\Varid{ki}}   & & \RM.\ensuremath{\Varid{tm}}: \RM.\ensuremath{\Varid{el}} \; \RM.\ensuremath{\Varid{ty}} \to V_2 \\
&\RM.\ensuremath{\Varid{ki}} = V_1        & &\RM.\ensuremath{\Varid{el}}\;k = k &&\RM.\ensuremath{\Varid{ty}} = P        & & \RM.\ensuremath{\Varid{tm}}\;A = A
\end{align*}
In the definitions of $\RM.\ensuremath{\Varid{el}}$ and $\RM.\ensuremath{\Varid{tm}}$, lifting types along the cumulative 
universes $P : V_1 : V_2$ is implicit:
the argument $k$ of $\RM.\ensuremath{\Varid{el}}$ is in the universe $V_1$, so it can also be used
as an element of the universe $V_2$.

Function kinds $\RM.\ensuremath{\text{\textunderscore}{\Rightarrow_k}\text{\textunderscore}} : \RM.\ensuremath{\Varid{ki}} \to \RM.\ensuremath{\Varid{ki}} \to \RM.\ensuremath{\Varid{ki}}$ are interpreted
by function types in $V_1$, and $\RM.\ensuremath{\Varid{{\Rightarrow_k}\hyp{}iso}}$ is the identity isomorphism.
The unit, Boolean, function types of \ensuremath{\Varid{ty}} in \Fha{} are interpreted as the
corresponding type formers in the universe $P$.
Impredicative polymorphic function types $\ensuremath{\allfor}$ are interpreted as dependent
function types:
\begin{lgather*}
\RM.\ensuremath{\allfor} : (k : \RM.\ensuremath{\Varid{ki}}) \to (\RM.\ensuremath{\Varid{el}} \; k \to \RM.\ensuremath{\Varid{el}}\;\RM.\ensuremath{\Varid{ty}}) \to \RM.\ensuremath{\Varid{el}} \; \RM.\ensuremath{\Varid{ty}} \\
\RM.\ensuremath{\allfor} \; k \; A =  (\alpha : k) \to A\; \alpha
\end{lgather*}
This is well typed because the codomain type $\RM.\ensuremath{\Varid{ty}}$, i.e.\ $P$, is an impredicative universe.

%

The model of the computation judgement $\ensuremath{\Varid{co}}\; H\; A$ is less obvious because of
the mismatch between computations and raw monads in \Fha{}:
computations satisfy the monadic laws strictly (\ensuremath{\Varid{let\hyp{}val}}, \ensuremath{\Varid{val\hyp{}let}}, \ensuremath{\Varid{let\hyp{}assoc}}
from \ref{eq:fha:co:laws}), while raw monads do not.
Consequently, we cannot model \ensuremath{\Varid{co}\;\Conid{H}} as the initial \emph{raw monad} equipped with $H$-operations
because it then would not satisfy the monadic laws.
Conversely, we cannot model it as the initial \emph{monad} equipped
with $H$-operations either because then it cannot be evaluated into raw monads.
Our solution is to model computations by a combination of impredicative
encoding and continuation-passing transformation:
\begin{lgather*}
\RM.\ensuremath{\Varid{co}} : \RM.\ensuremath{\Conid{RawHFunctor}} \to \RM.\ensuremath{\Varid{el}} \; \RM.\ensuremath{\Varid{ty}} \to P\\
\RM.\ensuremath{\Varid{co}} \; H \; A = (h : \RM.\ensuremath{\Conid{Handler}\;\Conid{H}}) \to (B : P) \to (A \to h\;B) \to h \; B  \numberthis\label{el:real:co}
\end{lgather*}
Thunking \ensuremath{\Varid{th}\;\Conid{H}\;\Conid{A}} can be modelled as the identity, because in the model $\RM$,
computations and values live in the same universe $P$.
The computation formers and \ensuremath{\Varid{hdl}} are defined as follows:
\begin{lgather*}
\RM.\ensuremath{\Varid{val}}: \impfun{H, A} A \to \RM.\ensuremath{\Varid{co}}\;H\;A  \\
\RM.\ensuremath{\Varid{val}} \; a = \lambda h\; B \; (r : A \to h\;B).\; r \; a \\[6pt]
\RM.\ensuremath{\Varid{let\hyp{}in}}: \impfun{H, A, B}  \RM.\ensuremath{\Varid{co}}\;H\;A  \to (A \to \RM.\ensuremath{\Varid{co}}\;H\;B) \to \RM.\ensuremath{\Varid{co}}\;H\;B \\
\RM.\ensuremath{\Varid{let\hyp{}in}}\; \imparg{A, B} \; c \; k = \lambda h\; C \; (r : B \to h \; C). \;  
  c \; h \; C\; (\lambda a.\; k \; a \; h \; C \; r)
\end{lgather*}
The model of \ensuremath{\Varid{hdl}} directly follows from the definition of $\RM.\ensuremath{\Varid{co}}$:
\begin{lgather*}
\RM.\ensuremath{\Varid{hdl}} : \impfun{H, A} (h : \RM.\ensuremath{\Conid{Handler}}\; H) \to \RM.\ensuremath{\Varid{co}} \; H \; A \to h \; A \\
\RM.\ensuremath{\Varid{hdl}} \; \imparg{H}\; \imparg{A}\; h \; c = c \; h\; A \; (h.\ensuremath{\Varid{ret}})
\end{lgather*}  
It can be checked the definitions for computations above collectively validate
all the equational laws of \Fha{}. 
Detailed calculations can be found in \Cref{app:real:model:laws}.

\begin{remark}
Our first attempt to model computations \ensuremath{\Varid{co}\;\Conid{H}} was to use the \emph{initial pointed
raw functor with $H$-operations}, which (in the usual lawful setting) can be
equipped with a law-abiding monad structure and can be evaluated into
every raw monad $\MM$ with $H$-operations, since $\MM$ is a pointed functor as
well. 
Unfortunately, it turns out that, in our current lawless setting, this approach
does not validate the equations on computations, unless the raw higher-order 
functor $H$ preserves composition of transformations.
This shows that our usual intuitions about category theory may fail in a subtle
way in `lawless category theory'.
\end{remark}

We have completed the definition of the model $\RM : \sem{\Fha}_{V_2}$ in $\Asm(\bA)$.
An immediate consequence is the \emph{consistency} of the equational theory of System \Fha{}.

\begin{theorem}\label{thm:fha:consistency}
The equational theory of System \Fha{} is \emph{consistent}, in the sense that 
the closed terms \ensuremath{\Varid{tt}} and \ensuremath{\Varid{ff}\mathbin{:}\Varid{bool}} are not judgementally equal.
\end{theorem}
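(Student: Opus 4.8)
The plan is to read off consistency directly from the realizability model $\RM : \sem{\Fha}_{V_2}$ constructed above. The essential idea is that a model interprets every \Fha-judgement \emph{soundly}, so any judgemental equality holding in the syntax must also hold in the model; it therefore suffices to exhibit the two interpretations of \ensuremath{\Varid{tt}} and \ensuremath{\Varid{ff}} and to observe that they are distinct in $\Asm(\bA)$.

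First I would invoke \Cref{thm:fha:functor} to turn the model $\RM$ into an LCC-functor $\sem{\blank}_\RM : \CatJdgOf{\Fha} \to \Asm(\bA)$ (this lands in $\Asm(\bA)$ itself rather than a presheaf category, since, as noted in \Cref{rem:define:mod}, $\Asm(\bA)$ already carries the universe $V_2$ interpreting $\Jdg$). Because the category of judgements $\CatJdgOf{\Fha}$ identifies judgementally equal terms, this functor sends judgementally equal global elements of $\Varid{tm}\;\Varid{bool}$ to equal morphisms. Hence, were $\ensuremath{\Varid{tt}} = \ensuremath{\Varid{ff}}$ derivable in \Fha, we would obtain $\sem{\ensuremath{\Varid{tt}}}_\RM = \sem{\ensuremath{\Varid{ff}}}_\RM$ as elements of $\RM.\ensuremath{\Varid{tm}}\;(\RM.\ensuremath{\Varid{bool}})$. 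By the construction of $\RM$, in which \ensuremath{\Varid{bool}} is interpreted as the Boolean type former of the impredicative universe $P$, these two interpretations are precisely its two constructors $\RM.\ensuremath{\Varid{tt}}$ and $\RM.\ensuremath{\Varid{ff}}$.

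It then remains to check that $\RM.\ensuremath{\Varid{tt}} \neq \RM.\ensuremath{\Varid{ff}}$ in $P$, i.e.\ that the interpretation of \ensuremath{\Varid{bool}} genuinely has two distinct elements. This is where the hypotheses on the PCA enter: as remarked after \Cref{lang:asm}, the universe $P$ contains types with more than one element exactly when $\bA$ is non-trivial, and for such $\bA$ the two-element type has two distinct global points (for instance, over Kleene's first algebra $\mathbb{K}$ the two Booleans are realized by distinct natural numbers). Distinctness in the model contradicts the assumed judgemental equality, yielding $\ensuremath{\Varid{tt}} \neq \ensuremath{\Varid{ff}}$ and hence consistency.

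The only genuine content, and the step I expect to require the most care, is this final separation of $\RM.\ensuremath{\Varid{tt}}$ from $\RM.\ensuremath{\Varid{ff}}$ in $\Asm(\bA)$; everything upstream (the soundness of the interpretation and the identification of $\sem{\ensuremath{\Varid{tt}}}_\RM$ with the model datum $\RM.\ensuremath{\Varid{tt}}$) is packaged by \Cref{thm:fha:functor} once we know $\RM$ is a bona fide model, the validation of the equational laws being exactly the calculation deferred to the appendix. Notably, no eliminator for \ensuremath{\Varid{bool}} is needed: distinctness of the two constructors in a single non-degenerate model already separates them, which is why the minimal signature of \Fom{} suffices for this argument.
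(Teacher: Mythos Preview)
Your proposal is correct and follows essentially the same approach as the paper: fix a non-trivial PCA, observe that the realizability model $\RM$ interprets \ensuremath{\Varid{tt}} and \ensuremath{\Varid{ff}} as the two distinct injections $1 \to 1+1$ in $\Asm(\bA)$, and conclude by soundness of the interpretation functor. The paper's proof is simply a terser version of what you wrote.
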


\begin{proof}
Let $\bA$ be any non-trivial PCA, such as Kleene's first algebra.
The interpretations of \ensuremath{\Varid{tt}} and \ensuremath{\Varid{ff}\mathbin{:}\Varid{bool}} in the realizability model $\RM$ are
different morphisms $1 \to 1 + 1$ in $\Asm(\bA)$.
Because interpretation respects judgemental equalities, \ensuremath{\Varid{tt}} and \ensuremath{\Varid{ff}} 
cannot be judgementally equal.
\end{proof}

The realizability model also gives us a way to do \emph{program
extraction}:
\begin{theorem}\label{thm:fha:extraction}
For every closed term \ensuremath{\Varid{t}\mathbin{:}\Varid{tm}\;\Varid{bool}} of \Fha{}, there exists a
$\lambda$-term $\avert{t}$ that normalises to a Church Boolean value.
Moreover, if $t = t'$, $\avert{t}$ and $\avert{t'}$ normalise
to the same value.
\end{theorem}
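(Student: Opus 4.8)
The plan is to instantiate the realizability model $\RM$ at the PCA $\bA$ of $\beta$-equivalence classes of closed $\lambda$-terms under application, and then to read off $\avert{t}$ as a \emph{realizer} of the interpretation of $t$. First I would invoke \Cref{thm:fha:functor}: the model $\RM$ induces an interpretation functor $\sem{\blank}_{\RM}\colon \CatJdgOf{\Fha} \to \Asm(\bA)$, under which a closed term $t\colon \mathit{tm}\,\mathit{bool}$ --- that is, a morphism $1 \to \mathit{tm}\,\mathit{bool}$ in $\CatJdgOf{\Fha}$ --- is sent to a global element $\sem{t}\colon 1 \to \sem{\mathit{tm}\,\mathit{bool}}$ in $\Asm(\bA)$. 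Unfolding the definition of $\RM$ gives $\sem{\mathit{tm}\,\mathit{bool}} = \RM.\mathit{tm}\,(\RM.\mathit{bool}) = \RM.\mathit{bool}$, which is the two-element type of the universe $P$ whose interpretation in $\Asm(\bA)$ is the coproduct $1+1$; its two elements are precisely the interpretations of $\mathit{tt}$ and $\mathit{ff}$, distinct by (the proof of) \Cref{thm:fha:consistency}.

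Next I would make the extraction concrete. Let $\mathbf{2}$ denote the Church-Boolean assembly on $\{0,1\}$ with $\avert{0}_{\mathbf 2} = \{\lambda xy.\,x\}$ and $\avert{1}_{\mathbf 2} = \{\lambda xy.\,y\}$, and let $\iota\colon 1+1 \xrightarrow{\cong} \mathbf{2}$ be the evident (tracked) isomorphism relabelling the coproduct tags as Church Booleans. Then $\iota \circ \sem{t}\colon 1 \to \mathbf{2}$ is tracked by some closed $\lambda$-term $r$; fixing a realizer $e$ of the point of $1$, I would set $\avert{t} \defeq r \cdot e$, a (representative of a) closed $\lambda$-term. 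By the definition of a morphism in $\Asm(\bA)$, $\avert{t}$ lies in $\avert{(\iota\circ\sem{t})(\unitel)}_{\mathbf 2}$, so $\avert{t} =_{\beta} \lambda xy.\,x$ or $\avert{t} =_{\beta} \lambda xy.\,y$. As each of these is already in normal form, the Church--Rosser theorem gives $\avert{t} \twoheadrightarrow_{\beta}$ the corresponding Church Boolean, whence $\avert{t}$ normalises to a Church Boolean value.

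For the \emph{moreover}, suppose $t = t'$ judgementally. Since $\sem{\blank}_{\RM}$ is a well-defined functor (\Cref{thm:fha:functor}), interpretation respects judgemental equality, so $\sem{t} = \sem{t'}$ and hence $(\iota\circ\sem{t})(\unitel) = (\iota\circ\sem{t'})(\unitel)$ is the same element of $\mathbf 2$; thus $\avert{t}$ and $\avert{t'}$ are $\beta$-equal to the same Church Boolean and normalise to the same value. The step I expect to be the main obstacle is the passage from the \emph{abstract} realizer of $\sem{t}$ to an \emph{actually normalising} $\lambda$-term: this is exactly where the choice of $\bA$ as the $\lambda$-calculus PCA, the concrete realizability structure of $1+1 \cong \mathbf 2$, and confluence all have to be pinned down. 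Everything else is bookkeeping supplied by \Cref{thm:fha:functor} and the already-established interpretation of \TTASM{} in $\Asm(\bA)$.
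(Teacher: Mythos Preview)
Your proposal is correct and follows essentially the same approach as the paper: instantiate the realizability model at the $\lambda$-calculus PCA, interpret a closed $\mathit{tm}\;\mathit{bool}$ term as a morphism $1 \to 1+1$ in $\Asm(\Lambda)$, and read off a realizer as the extracted $\lambda$-term. You in fact supply more detail than the paper does---the explicit isomorphism to the Church-Boolean assembly and the appeal to Church--Rosser make the ``normalises to a Church Boolean'' claim precise, whereas the paper leaves this implicit.
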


\begin{proof}
Terms of \Fha{} are interpreted as morphisms in the category $\Asm(\bA)$ of
assemblies, which are realized by elements of the underlying PCA $\mathbb{A}$, in particular, the PCA $\Lambda$ of $\lambda$-terms.
Therefore every closed term \ensuremath{\Varid{t}\mathbin{:}\Varid{tm}\;\Varid{bool}}
can be interpreted as a morphism
$1 \to 1 + 1$ in $\Asm(\Lambda)$, which by the definition of $\Asm$ 
is realised by a $\lambda$-term.
\end{proof}

The interpretation of \Fha{} in the realizability model is clearly constructive, so based on the realizability
model we can implement a compiler that takes in 
well typed \Fha-terms and outputs $\lambda$-terms following
the definition of the model $\RM$ (by simply erasing the type information in $\RM$).
In the next section, we will further see that the realizability model is in fact
\emph{adequate} with respect to the equational theory of \Fha{}: if the
interpretation of a closed Boolean term ${p : \ensuremath{\Varid{tm}\;\Varid{bool}}}$ in the
realizability model is true (resp.\ false), then $p = \ensuremath{\Varid{tt}}$ (resp.\ $p =
\ensuremath{\Varid{ff}}$) in the equational theory of \Fha{}.


\subsubsection*{Modelling General Recursion}
The realizability model of \Fha{} can be extended to a model of \Fhar{} from
\Cref{sec:gen:rec} using \emph{synthetic domain theory}
\citep{rosolini1986continuity,Hyland_first_1991,PhoaThesis}.
We will not go into this here for the lack of space, but interested readers can
see how it is done in \Cref{sec:fhar:real:model}, which also provides a
mini introduction to synthetic domain theory.
Consequences of this model are analogues of
\Cref{thm:fha:consistency} and \Cref{thm:fha:extraction}.

\begin{theorem}
The equational theory of \Fhar{} is consistent: for every \ensuremath{\Conid{H}\mathbin{:}\Conid{RawHFunctor}},
\ensuremath{(\Varid{val}\;\Varid{tt})} and \ensuremath{(\Varid{val}\;\Varid{ff})\mathbin{:}\Varid{pco}\;\Conid{H}\;\Varid{bool}} are not judgementally equal in \Fhar{}.
\end{theorem}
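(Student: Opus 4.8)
The plan is to run the same argument as in the proof of \Cref{thm:fha:consistency}, but with the realizability model $\RM$ replaced by its extension to \Fhar{} constructed with synthetic domain theory in \Cref{sec:fhar:real:model}. That construction interprets every declaration of \Fhar{} in $\Asm(\bA)$ over a non-trivial PCA $\bA$, additionally equipped with a dominance and its lift monad $L$, which is what lets \ensuremath{\Varid{fix}} be solved and the recursive rules \Cref{eq:fhar:y} be validated. Concretely it yields a closed element of the record of all \Fhar-declarations with $\Jdg$ read as the top universe, hence a model. As a model, the interpretation it induces respects judgemental equalities, so it suffices to show that \ensuremath{(\Varid{val}\;\Varid{tt})} and \ensuremath{(\Varid{val}\;\Varid{ff})} receive \emph{distinct} interpretations as elements of (the interpretation of) \ensuremath{\Varid{pco}\;\Conid{H}\;\Varid{bool}}.

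To separate them I would fix any non-trivial $\bA$ (e.g.\ Kleene's first algebra), so that, exactly as in \Cref{thm:fha:consistency}, \ensuremath{\Varid{bool}} is interpreted as a two-element object with \ensuremath{\Varid{tt}} and \ensuremath{\Varid{ff}} distinct. Then, for the given \ensuremath{\Conid{H}}, I would use the canonical \emph{identity} recursive handler $h_0 : \ensuremath{\Conid{HandlerRec}\;\Conid{H}}$ whose carrier is the monad of partial-computation thunks (the \ensuremath{\Varid{pco}}-analogue of \ensuremath{\Varid{th\hyp{}mnd}}, so $F = \Id$ and the field \ensuremath{\Varid{eq}} holds definitionally), with $h_0.\ensuremath{\Varid{malg}}\;\alpha\;o = {\Downarrow}(\ensuremath{\Varid{op}}\;o\;\ensuremath{\Varid{val}})$ and hence $h_0.\ensuremath{\Varid{ret}}\;a = {\Downarrow}(\ensuremath{\Varid{val}}\;a)$. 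The recursive analogue of \ensuremath{\Varid{hdl\hyp{}val}} \Cref{eq:fha:evalval} holds in the model, giving $\ensuremath{\Varid{hdl}}\;h_0\;(\ensuremath{\Varid{val}}\;a) = h_0.\ensuremath{\Varid{ret}}\;a$, which in the synthetic-domain-theoretic semantics factors through the unit $\eta$ of $L$. Since $\eta$ is monic and \ensuremath{\Varid{tt}}, \ensuremath{\Varid{ff}} are distinct, the elements $\sem{\ensuremath{\Varid{hdl}}\;h_0\;(\ensuremath{\Varid{val}}\;\ensuremath{\Varid{tt}})}$ and $\sem{\ensuremath{\Varid{hdl}}\;h_0\;(\ensuremath{\Varid{val}}\;\ensuremath{\Varid{ff}})}$ are distinct. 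Were \ensuremath{(\Varid{val}\;\Varid{tt})} and \ensuremath{(\Varid{val}\;\Varid{ff})} judgementally equal, applying $\ensuremath{\Varid{hdl}}\;h_0\;(\blank)$ would force these two interpretations to coincide; so they cannot be judgementally equal, as required.

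The genuinely hard work is not in this statement but in the model of \Cref{sec:fhar:real:model}: solving the fixed-point equations for \ensuremath{\Varid{fix}} inside $\Asm(\bA)$ via $L$ and checking that the extended model validates \emph{all} equations of \Fhar{}, in particular \ensuremath{\Varid{fix\hyp{}eq}} and the recursive forms of \ensuremath{\Varid{hdl\hyp{}op}} \Cref{eq:fha:evalop}. Granting that model, the only point special to the present theorem is that partiality must not collapse the two \emph{defined} computations onto one another; this is precisely the monicity of the lift unit $\eta$ in synthetic domain theory, ensuring that the domain interpreting \ensuremath{\Varid{pco}\;\Conid{H}\;\Varid{bool}} carries at least two distinct total points rather than degenerating under the recursive structure. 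Everything else is the same transport-of-judgemental-equality argument as in \Cref{thm:fha:consistency}.
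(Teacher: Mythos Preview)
Your overall strategy matches the paper's (which leaves this proof implicit as an ``analogue of \Cref{thm:fha:consistency}''): appeal to the realizability model $\rM$ of \Fhar{} from \Cref{sec:fhar:real:model} and argue that $\sem{\ensuremath{\Varid{val}\;\Varid{tt}}}\neq\sem{\ensuremath{\Varid{val}\;\Varid{ff}}}$ there. The only real content is how to separate the two interpretations, and here your argument has a gap.

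You choose the thunk handler $h_0$ (the \ensuremath{\Varid{pco}}-analogue of \ensuremath{\Varid{th\hyp{}alg}}). Syntactically $\ensuremath{\Varid{hdl}}\;h_0\;(\ensuremath{\Varid{val}}\;a)=h_0.\ensuremath{\Varid{ret}}\;a={\Downarrow}(\ensuremath{\Varid{val}}\;a)$, and in the model ${\Downarrow}$ is the identity, so you are back to $\sem{\ensuremath{\Varid{val}}\;a}$ itself: the detour through $h_0$ is circular. Your escape is the claim that $h_0.\ensuremath{\Varid{ret}}$ ``factors through the unit $\eta$ of $L$'' and that $\eta$ is monic; but a factorisation $f=g\circ\eta^L$ with $\eta^L$ monic does \emph{not} make $f$ injective unless $g$ is also monic, and you have not shown that (nor is it obvious). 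So the inference from ``$\eta$ monic'' to ``$\sem{\ensuremath{\Varid{hdl}}\;h_0\;(\ensuremath{\Varid{val}\;\Varid{tt}})}\neq\sem{\ensuremath{\Varid{hdl}}\;h_0\;(\ensuremath{\Varid{val}\;\Varid{ff}})}$'' is a non-sequitur.

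The fix is to work directly with the semantic interpretation rather than through the syntactic \ensuremath{\Varid{hdl}} (which is constrained to \ensuremath{\Conid{HandlerRec}}). In $\rM$ one has $\rM.\ensuremath{\Varid{val}}\;a=\lambda h\;B\;r.\;r\;a$ with $h$ ranging over \ensuremath{\Conid{HandlerL}\;\Conid{H}}. Instantiate with the \ensuremath{\Conid{HandlerL}\;\Conid{H}} whose underlying monad is the lifting monad $L$ itself: $L\;B$ is a domain for every predomain $B$, and since every domain has a bottom element one may take $\ensuremath{\Varid{malg}}\;\alpha\;o=\bot_{L\alpha}$, which is available for \emph{arbitrary} $H$. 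Evaluating with $B=\ensuremath{\Varid{bool}}$ and $r=\eta^L$ yields $\eta^L\;\ensuremath{\Varid{tt}}\neq\eta^L\;\ensuremath{\Varid{ff}}$ in $L\,2$, hence $\sem{\ensuremath{\Varid{val}\;\Varid{tt}}}\neq\sem{\ensuremath{\Varid{val}\;\Varid{ff}}}$. Note that this handler is a legitimate semantic \ensuremath{\Conid{HandlerL}} but is not of the shape \ensuremath{\Varid{pth}\;\Conid{H}\;(\Conid{F}\;\blank)} demanded by \ensuremath{\Conid{HandlerRec}}; this is exactly why routing through the syntactic \ensuremath{\Varid{hdl}} with $h_0$ does not reach it.
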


\begin{theorem}\label{thm:fhar:extraction}
For every closed term \ensuremath{\Varid{t}\mathbin{:}\Varid{pco}\;\Conid{VoidH}\;\Varid{bool}} of \Fha{}, there exists a
$\lambda$-term $\avert{t}$ that either diverges or normalises to a Church Boolean value.
Moreover, if $t = t'$, then
$\avert{t}$ and $\avert{t'}$ are Kleene equal.
\end{theorem}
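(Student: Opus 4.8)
The plan is to mirror the proof of \Cref{thm:fha:extraction}, but now over the synthetic-domain-theoretic realizability model of \Fhar{} from \Cref{sec:fhar:real:model}, taking the PCA to be $\Lambda$, the PCA of $\lambda$-terms. That model interprets all of \Fhar{} inside $\Asm(\Lambda)$, using the synthetic lift $L$ and the dominance $\Sigma$ of semidecidable (computational) propositions. Consequently a closed term $\ensuremath{\Varid{t}\mathbin{:}\Varid{pco}\;\Conid{VoidH}\;\Varid{bool}}$ is interpreted as a global element $\sem{t} : 1 \to \sem{\Varid{pco}\;\Conid{VoidH}\;\Varid{bool}}$ in $\Asm(\Lambda)$, and the extracted program $\avert{t}$ will be read off from a realizer of this element.

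First I would make the identification $\sem{\Varid{pco}\;\Conid{VoidH}\;\Varid{bool}} \cong L(1+1)$ precise, where $1+1$ interprets \ensuremath{\Varid{bool}}. Since $\Conid{VoidH}$ carries no operations (its input type is \ensuremath{\Varid{empty}}), handling is inert and no \ensuremath{\Varid{op}}-calls can occur; unfolding the CPS-style model of \ensuremath{\Varid{pco}} from \Cref{sec:real:mod:fha} over the synthetic universe of `domains' then collapses to the free recursion structure on the value type, i.e. its lift. Next I would spell out the realizability structure of $L(1+1)$ over $\Lambda$: by the interpretation of $\Sigma$ in $\Asm(\Lambda)$, an element of $L(1+1)$ is realized by a $\lambda$-term whose evaluation converges exactly when the element is defined, and which then normalises to a realizer of the corresponding element of $1+1$, namely a Church Boolean. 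Taking $\avert{t}$ to be such a realizer of $\sem{t}$ (post-composed with the evident projection extracting the Boolean on convergence) yields a $\lambda$-term that either diverges or normalises to a Church Boolean value, as required.

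For the \emph{moreover} clause I would use that interpretation respects judgemental equality, so $t = t'$ forces $\sem{t} = \sem{t'}$ as elements of $L(1+1)$; in particular they agree in definedness and, when defined, in Boolean value. By the realizability structure just described, any two realizers of one and the same element of $L(1+1)$ share their convergence behaviour and share their normal form upon convergence — which is precisely Kleene equality of $\avert{t}$ and $\avert{t'}$.

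The step I expect to be the main obstacle is the first one: faithfully connecting the \emph{synthetic} notion of divergence (the undefined point of $L$, governed by the dominance $\Sigma$) with the \emph{operational} divergence of the extracted $\lambda$-term. This hinges on the concrete interpretation of $\Sigma$ in $\Asm(\Lambda)$, under which a semidecidable proposition is realized by a term that converges iff the proposition holds. Granting this (which is part of the SDT model of \Cref{sec:fhar:real:model}), the correspondence between the synthetic $L(1+1)$ and `diverge-or-Church-Boolean' behaviour becomes essentially definitional, and the remaining verifications — that the collapse of the CPS encoding at $\Conid{VoidH}$ is an isomorphism of assemblies, and that realizers transport along it without altering convergence — are routine.
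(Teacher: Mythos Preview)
Your overall strategy is right: interpret \Fhar{} in the SDT realizability model over the PCA $\Lambda$, push a closed $t : \ensuremath{\Varid{pco}\;\Conid{VoidH}\;\Varid{bool}}$ to a global element whose realizer is the extracted $\lambda$-term, and use that interpretation respects judgemental equality for the Kleene-equality clause. That is exactly the shape of the paper's argument.

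The gap is your claimed identification $\sem{\ensuremath{\Varid{pco}\;\Conid{VoidH}\;\Varid{bool}}} \cong L(1+1)$. This is neither proved nor needed, and it is not ``routine''. Recall the model of partial computations,
\[
  \rM.\ensuremath{\Varid{pco}}\;H\;A \;=\; (h : \ensuremath{\Conid{HandlerL}}\;H) \to (B : \PDom) \to (A \to h\;B) \to h\;B,
\]
so even at $H = \Conid{VoidH}$ you are still quantifying over \emph{all} raw monads with domain structure; the absence of operations does not make this collapse to $L$. Claiming that this impredicative encoding is isomorphic to $L A$ amounts to a parametricity/initiality statement for lawless monads that the paper nowhere establishes. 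What the paper actually does is simpler: since $L$ itself is a $\ensuremath{\Conid{HandlerL}}\;\Conid{VoidH}$, one \emph{handles} into it via
\[
  \ensuremath{\Varid{toL}} : \rM.\ensuremath{\Varid{pco}}\;\Conid{VoidH}\;A \to L\,A,\qquad \ensuremath{\Varid{toL}}\;c \;=\; c\;L'\;A\;\eta^L,
\]
and then reads off the realizer of $\ensuremath{\Varid{toL}}\circ\sem{t} : 1 \to L\,2$. No isomorphism is required: the one-directional map $\ensuremath{\Varid{toL}}$ already lands you in $L\,2$, whose external description (realizers diverge iff the element is $\bot$, and otherwise converge to a realizer of the Boolean) gives the ``diverges or normalises to a Church Boolean'' conclusion directly. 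Your ``moreover'' argument then goes through unchanged with $\ensuremath{\Varid{toL}}\circ\sem{t}$ in place of $\sem{t}$. So the fix is small: replace the isomorphism claim by the handling map $\ensuremath{\Varid{toL}}$.
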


\section{The Synthetic Logical Relation Model}\label{sec:fha:lr:model}

The equational theory of \Fha{} provides the programmer with a set of reasoning
principles to understand the behaviour of \Fha-programs, and also provides
compiler writers with a set of program transformations for optimisation.
One natural question is -- \emph{how complete is the equational theory}?
We answer this question by proving the following theorem in this section.

\begin{theorem}[Canonicity]\label{thm:canonicity}
For every closed Boolean term $b : bool$ of System \Fha{}, either $b = \ensuremath{\Varid{tt}}$ or $b =
\ensuremath{\Varid{ff}}$ holds (but not both) in the equational theory of \Fha.
\end{theorem}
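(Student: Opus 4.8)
The plan is to prove canonicity by \emph{Artin gluing} over the category of judgements $\CatJdgOf{\Fha}$, organised in the style of \emph{synthetic Tait computability}, exactly as anticipated at the end of \Cref{sec:cat:jdg}. First I would form the gluing category $\CatGlueOf{\Fha}$ by gluing the presheaf topos $\Pr{(\CatJdgOf{\Fha})}$ along the global-points functor $\Hom_{\CatJdgOf{\Fha}}(1, \blank)$ into $\Set$. As a glued topos, $\CatGlueOf{\Fha}$ is an LCCC, comes equipped with a projection LCC-functor down to $\CatJdgOf{\Fha}$, and carries an open modality $\Omod$ reflecting each glued object onto its purely \emph{syntactic} part together with the complementary closed modality $\Cmod$ recording the \emph{computability predicate}. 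By \Cref{thm:fha:functor} and \Cref{rem:define:mod}, the whole proof then reduces to constructing a single element of the record type $\sem{\Fha}$ in the internal language of $\CatGlueOf{\Fha}$ whose projection to $\CatJdgOf{\Fha}$ is the tautological syntactic model; producing such an element simultaneously defines the computability interpretation of every judgement and discharges the fundamental lemma.

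Second, I would interpret the $\Fom$-fragment by the standard reducibility-candidate logical relation. The universe $\Jdg$ is interpreted by a universe of proof-relevant predicates lying over the syntactic $\Jdg$, and $\ensuremath{\Varid{ki}}$, $\ensuremath{\Varid{el}}$, $\ensuremath{\Varid{ty}}$ receive Girard-style interpretations, with the computability predicate on $\ensuremath{\Varid{tm}}\;\ensuremath{\Varid{bool}}$ taken to be the subobject $\{\, b \mid b = \ensuremath{\Varid{tt}} \ \text{or}\ b = \ensuremath{\Varid{ff}} \,\}$ that exactly encodes the canonicity conclusion. The one subtlety is the impredicative quantifier $\allfor$: its logical interpretation requires an impredicative universe inside the gluing, mirroring the role of the impredicative $P$ in the realizability model of \Cref{sec:real:mod:fha}, so that the predicate on a polymorphic term quantifies over all computability structures at the instantiating kind.

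Third --- and this is where the main work lies --- I would interpret the computation judgement $\ensuremath{\Varid{co}\;\Conid{H}\;\Conid{A}}$ and its eliminator $\ensuremath{\Varid{hdl}}$ by a $\top\top$-lifting \citep{Lindley_Stark_2005}. The obstruction is precisely the tension of \Cref{question} and \Cref{rem:no:evallet}: computations validate the monad laws \Cref{eq:fha:co:laws} strictly, whereas $\ensuremath{\Varid{hdl}}$ evaluates them into \emph{lawless} raw monads, so no naive inductive predicate built from the $\ensuremath{\Varid{op}}$-structure could be stable under the handling equation \Cref{eq:fha:evalop}. Following the CPS shape of the realizability interpretation \Cref{el:real:co}, I would instead declare $c : \ensuremath{\Varid{co}\;\Conid{H}\;\Conid{A}}$ \emph{computable} precisely when, for every computable handler $h : \ensuremath{\Conid{Handler}\;\Conid{H}}$ and every computable continuation landing in an observable type, the handled result $\ensuremath{\Varid{hdl}}\;h\;c$ is computable --- i.e.\ biorthogonally to the pole of computable handler/continuation pairs. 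Because the predicate is \emph{defined} through the behaviour of $\ensuremath{\Varid{hdl}}$, the equations \Cref{eq:fha:evalval} and \Cref{eq:fha:evalop} hold by construction, and the monad laws are inherited from the CPS target rather than imposed on the raw handlers; this lawless-handler-tolerant variant is the novel form of the technique. The fundamental lemma then amounts to checking that $\ensuremath{\Varid{val}}$, $\ensuremath{\Varid{let\hyp{}in}}$, and $\ensuremath{\Varid{op}}$ each produce biorthogonally computable elements, the last using the functorial action of $H$ on computable thunks exactly as in \Cref{eq:fha:evalop}.

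Finally, applying the resulting section functor to a closed term $b : 1 \to \ensuremath{\Varid{tm}}\;\ensuremath{\Varid{bool}}$ yields, in its $\Cmod$-component, an inhabitant of the $\ensuremath{\Varid{bool}}$-predicate at $b$, which by definition is a proof that $b = \ensuremath{\Varid{tt}}$ or $b = \ensuremath{\Varid{ff}}$ in the equational theory; the exclusivity (``but not both'') is immediate from consistency, \Cref{thm:fha:consistency}, since $\ensuremath{\Varid{tt}} \neq \ensuremath{\Varid{ff}}$. I expect the hard part to be pinning down the $\top\top$-lifting so that the chosen pole is simultaneously closed under $\ensuremath{\Varid{op}}$ and sound for $\ensuremath{\Varid{hdl}}$ into arbitrary lawless handlers; once the pole is fixed correctly, the remaining obligations are routine internal-language verifications.
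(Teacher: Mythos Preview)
Your proposal is correct and follows essentially the same route as the paper: synthetic Tait computability over the Artin gluing of $\Pr{(\CatJdgOf{\Fha})}$ with $\Set$, Girard-style reducibility candidates for the \Fom{} fragment (with types interpreted by proof-irrelevant predicates valued in the closed-modal subobject classifier $\COmega$, which supplies the needed impredicativity), a $\top\top$-lifting for $\ensuremath{\Varid{co}}$, and the appeal to \Cref{thm:fha:consistency} for exclusivity.

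One point where you should be more precise, since it is exactly the crux: your sentence ``the handled result $\ensuremath{\Varid{hdl}}\;h\;c$ is computable'' is, read literally, the paper's \emph{failed} first attempt $P_\ensuremath{\Varid{co}}^{\ensuremath{\Varid{wrong}}}$. The continuation $k$ in the pole must be composed with $c$ \emph{via $\ensuremath{\Varid{let\hyp{}in}}$ before} handling, i.e.\ the predicate is that $\ensuremath{\Varid{hdl}}\;h\;(\ensuremath{\Varid{let\hyp{}in}}\;c\;k)$ lands in the target predicate for every computable pair $(h,k)$ where $k$'s computability witness lies over the syntactic map $a \mapsto \ensuremath{\Varid{hdl}}\;h\;(k\;a)$. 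This is what makes $\ensuremath{\Varid{let\hyp{}in}}^*$ go through despite the absence of an $\ensuremath{\Varid{hdl\hyp{}let}}$ law (\Cref{rem:no:evallet}): you absorb the continuation of the outer $\ensuremath{\Varid{let\hyp{}in}}$ into the pole using $\ensuremath{\Varid{let\hyp{}assoc}}$, rather than trying to commute $\ensuremath{\Varid{hdl}}$ past it. Your surrounding prose (``pole of computable handler/continuation pairs'', the CPS analogy) indicates you intend this, but the formula as written would not close under $\ensuremath{\Varid{let\hyp{}in}}$.
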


The common proof strategy for showing meta-theoretic properties of programming
languages such as canonicity is the method of \emph{logical relations}
\citep{plotkin_lambda_1973,plotkin_lambda_definability_1980}, also known as
\emph{the computability method} or the \emph{reducibility method} in the
literature
\citep{Tait67,martinlof1975models,martinlof1975AnIntui,Girard1972,Statman1985}.
The high-level idea is to construct a model of a language $L$ such that each
judgement $J$ of $L$ is interpreted as a set of $J$-derivations satisfying
certain properties or equipped with certain data.
Inference rules
of $L$ are then shown to preserve those associated properties or data.
Categorically, such a logical-relation model lives in the \emph{glued category}
of the category of types/judgements of $L$ and the category of sets (or some
other presheaf topos where the meta-theoretic information naturally lives)
\citep{freyd1978proving,fiore2022semantic,Thorsten1995}.

A recent development of the method of logical relations is 
Sterling's \emph{synthetic Tait computability} (STC)
\citep{sterling2021LRAT,sterling:2021:thesis,sterling2022naive},
whose idea is to (1) embed the category of types/judgements in the
(presheaf) topos over it by Yoneda embedding, (2) glue this (presheaf) topos containing 
information of the object language
with
the (presheaf) topos where the meta-theoretic information lives,
which always results in a new (presheaf) topos, and then (3) use a
type-theoretic language to describe the constructions in the glued
topos.
Passing to the presheaf category in the first step is needed so that the
resulting glued category is a topos, where we have a very rich internal language
to describe the construction of the logical relation model.
The advantage of this approach is that after the internal language is set up,
many tedious aspects in a typical logical-relation proof are taken care of
automatically, turning a logical-relation proof into a guided programming
puzzle.

\subsection{The Language of STC for \Fha}
To apply the method of STC to prove canonicity of \Fha{}
(\Cref{thm:canonicity}), we first recall that
in \Cref{sec:cat:jdg}
we defined a category of judgements
$\CatJdgOf\Fha$ for \Fha, whose objects are terms  $J : \Jdg$ and morphisms are functions $f : J \to J'$ under the signature \Fha{} in \LCCLF.

\begin{definition}
The \emph{glued category} $\GFha$ of $\Pr(\CatJdgOf\Fha)$ and $\Set$ along 
the global section functor
$\Hom(1, \blank) : \Pr(\CatJdgOf\Fha) \to \CatSet$
is exactly the \emph{comma category} $\Comma{\Set}{\Hom(1, \blank)}$, whose
  objects are tuples $\tuple{A \in \Pr(\CatJdgOf\Fha),\  P \in
\CatSet,\ p : P \to \Hom(1, A)}$, and
  morphisms $\tuple{A, P, p} \to \tuple{B, Q, q}$ are pairs 
  $\tuple{f : A \to B, g : P \to Q}$ making the following diagram in $\Set$ commute:
  \[\begin{tikzcd}[ampersand replacement=\&, column sep = 2cm]
  P \& Q \\
  {\Hom(1, A)} \& {\Hom(1, B)}
  \arrow["g", from=1-1, to=1-2]
  \arrow["p"', from=1-1, to=2-1]
  \arrow["q", from=1-2, to=2-2]
  \arrow["{\Hom(1, f)}"', from=2-1, to=2-2]
\end{tikzcd}\]
\end{definition}

Usually, the presheaf $A$ of an object $\tuple{A, P, p} \in \GFha$ 
will just be the Yoneda embedding of some
\Fha-judgement $J$. 
In this case $\Hom(1, A)$ is the set of 
closed derivations of the judgement $J$, 
and $p : P \to \Hom(1, A)$ is understood as a
\emph{proof-relevant predicate} over $J$-derivations.
For every $a \in \Hom(1, A)$, the set $\aset{e \in P \mid p(e) = a}$ is the
set of proofs that $a$ satisfies the predicate $p : P \to \Hom(1, A)$.
A morphism $\tuple{f, g} : \tuple{A, P, p} \to \tuple{B, Q, q}$ in $\GFha$ is
then a derivation from $A$ to $B$ that preserves the associated predicates $P$ and $Q$ on $A$ and $B$.

The presheaf category $\Pr(\CatJdgOf\Fha)$ contains a \emph{syntactic model}
$\MM$ of \Fha, where every judgement $J$ is interpreted as its Yoneda embedding
$\yo J$.
A (proof-relevant) logical-relation model of \Fha{} is then a model $\MM^*$ of
\Fha{} in the category $\GFha$ such that 
$\MM^*$ under the first projection $\GFha \to \Pr(\CatJdgOf\Fha)$
is exactly the syntactic model $\MM$.
We will construct our logical-relation model $\MM^*$ via
an internal language of $\GFha$.
In the following, we first introduce this language;
for a proper introduction, we refer the reader to the expositions by
\citet{huang2023synthetictaitcomputabilityhard} and \citet{sterling2022naive}.
The most comprehensive account of STC so far is still
\varcitet{sterling:2021:thesis}{'s} thesis.

\begin{typetheory}
The language \TTSTC{} of \emph{synthetic Tait computability} for \Fha{} is a dependent type theory
with the structure of an elementary topos (\Cref{el:ttstc:topos}), 
universes (\Cref{el:ttstc:universe}), 
a distinguished proposition $\sop$ and a model of \Fha{} under $\sop$
(\Cref{el:ttstc:obj}), 
and glue types (\Cref{el:ttstc:glue:ty}).
\end{typetheory}

First of all, \TTSTC{} has the type formers that axiomatise the
structure of an elementary topos.

\begin{axiom}[\TTSTC-ETopos]\label{el:ttstc:topos}
\TTSTC{} has the following type formers:
\begin{itemize}
  \item a unit type $\unitty$, function types $A \to B$, $\Sigma$-types $\Sigma (a
  : A).\; B$, extensional identity types $a = b$;

  \item a universe $\Omega$ such that 
  (1) it \emph{classifies all propositional types}: if a type $P$ satisfies
  $(p, q : P) \to p = q$, then there is $\classify{P} : \Omega$ with an
  isomorphism $\phi_P : \classify{P} \iso P$;
  (2) it is \emph{extensional}: if $P, Q : \Omega$ and $P \iso Q$, then $P = Q$; and
  (3) it is \emph{proof irrelevant}: if $P : \Omega$ and $p, q : P$, then $p = q$.
\end{itemize}
\end{axiom}

The theory of \emph{elementary toposes}
\citep{borceux1994handbookVol3,MacLane1994} tells us that a great deal of well
behaved logical structures can be defined from the type formers in
\Cref{el:ttstc:topos}, including $\Pi$-types $(a : A) \to B$,
(intuitionistic) logical connectives ($\land$, $\lor$, $\to$, $\top$, $\bot$,
$\exists$,
$\forall$) on $\Omega$, the empty type $0$, coproduct types $A + B$,
quotient types $A/R$.

Given a type $A$ and $P : A \to \Omega$, we define $\aset{x : A \mid
P(x)} \defeq \Sigma (a : A).\; P(a)$ and treat $\aset{x : A \mid P(x)}$ as a
subtype of $A$, eliding the proof of the proposition $P(x)$ and the
pairing/projection operations:
\[
\spread*{
  \inferrule{a : A \\ \_ : P(a)}{a : \aset{x : A \mid P(x)}}
  \also
  \inferrule{a : \aset{x : A \mid P(x)}}{a : A}
  \also
  \inferrule{a : \aset{x : A \mid P(x)}}{\_ : P(a)}
}
\]
Of course, when using this notation, we must ensure that we only use
$a : A$ as an element of $\aset{x : A \mid P(x)}$ when a (necessarily unique)
element of $P(a)$ is available.
Such an informal abuse of notation can be formally justified by \citet{Luo_2013}'s
\emph{coercive subtyping}.

An important special case of subtypes is \emph{extension types}: 
given a type $A$, $\phi : \Omega$ and $a : \phi
\to A$, we define
$
  \extty{A}{\phi}{a} \ \defeq\  \aset{x : A \mid (p : \phi) \to x = a(p)}
$
for the type of the $A$-elements that are strictly equal to the partial element $a$
when $\phi$ holds.

\medskip

Similar to the realizability model, we will again need three universes to model
\Fha{}, which exist provided that the ambient set
theory has enough universes \citep{gratzer2022strict}.
\begin{axiom}[\TTSTC-Universe]\label{el:ttstc:universe}
\TTSTC{} has three cumulative predicative universes $U_0 : U_1 : U_2$, 
each closed under $\Pi$-types, $\Sigma$-types, extensional equality types, and
inductive types.
Moreover, the universe of propositions is in $U_0$, i.e.\ $\Omega :
U_0$.
\end{axiom}

The next ingredient of \TTSTC{}, perhaps the most important one, is the following axiom.
\begin{axiom}[\textsc{\TTSTC{}-Obj}]\label{el:ttstc:obj}
\TTSTC{} has $\sop : \Omega$ and 
$\MM : \impfun{\sop} \sem{\Fha}_{U_0}$.
\end{axiom}
The category $\GFha$ embeds the `object space'
$\Pr(\CatJdgOf\Fha)$ and the `meta space' $\CatSet$ as full subcategories.
In the language \TTSTC,
the proposition $\sop$ serves the purpose of accessing the object space
and the meta space:
$\sop$ is interpreted as an object $\sem{\sop}$ in $\GFha$ 
such that the exponential functor $(\blank)^{\sem{\sop}}$ sends every object
$\tuple{B, Q, q} \in \GFha$ to $\tuple{B, \Hom(1, B), \identity}$.
In other words, the function type $\sop \to A$ in \TTSTC{} erases the
meta-space information in $A$.

For every type $A$, we will write $\Omod A \defeq \impfun{\sop} A$, and if the
function $\Oeta_A \defeq (\lambda a. \; \impabs{z : \sop} a) : A \to \Omod A$
is an isomorphism, we say that the type $A$ is $\Omodal$.
These are types containing no meta-space information and are essentially objects
of $\Pr(\CatJdgOf\Fha)$.
\Cref{el:ttstc:obj} asserts that there is a \Fha-model $\MM$ in the
object space;
this is interpreted as the syntactic model of \Fha{} in $\Pr(\CatJdgOf\Fha)$.

\begin{remark}
When working in \TTSTC, the reader should pay special attention to whether
the proposition $\sop$ is assumed in the current context, because we will have 
types $A$ and $B$ that may look different but are \emph{judgementally
equal} when $\sop$ is assumed.
In this way, $\sop$ is more like a modality for `entering the object space', instead of a traditional mathematical proposition about
which we care whether it is true or false. 
The syntax of \TTSTC{} is chosen to highlight when $\sop$ 
is assumed;
for example, in the type $\extty{\hole{?1}}{\sop}{\hole{?2}}$ the hole \hole{?2} has
$\sop$ assumed.
Moreover, when $\sop$ is assumed, the type of implicit functions $\impfun{\sop}
A$ and the type $A$ can be used interchangeably, since the proposition
$\sop$ has at most one element and it is assumed in the context.
\end{remark}


We have also a modality $\Cmod$ for erasing the object-space information in
types, which may be expressed as a quotient inductive type in \TTSTC{}:
\begin{hscode}\SaveRestoreHook
\column{B}{@{}>{\hspre}l<{\hspost}@{}}%
\column{3}{@{}>{\hspre}l<{\hspost}@{}}%
\column{11}{@{}>{\hspre}l<{\hspost}@{}}%
\column{E}{@{}>{\hspre}l<{\hspost}@{}}%
\>[B]{}\lhskeyword{data}\;\Cmod\Conid{A}\;\lhskeyword{where}{}\<[E]%
\\
\>[B]{}\hsindent{3}{}\<[3]%
\>[3]{}\Ceta_A{}\<[11]%
\>[11]{}\mathbin{:}\Conid{A}\to \Cmod\Conid{A}{}\<[E]%
\\
\>[B]{}\hsindent{3}{}\<[3]%
\>[3]{}\Varid{pt}{}\<[11]%
\>[11]{}\mathbin{:}\{\mskip1.5mu \sop\mskip1.5mu\}\to \Cmod\Conid{A}{}\<[E]%
\\
\>[B]{}\hsindent{3}{}\<[3]%
\>[3]{}\Varid{eq}{}\<[11]%
\>[11]{}\mathbin{:}\{\mskip1.5mu \sop\mskip1.5mu\}\to (\Varid{a}\mathbin{:}\Conid{A})\to \Ceta_A\;\Varid{a}\mathrel{=}\Varid{pt}{}\<[E]%
\ColumnHook
\end{hscode}\resethooks
This quotient inductive type $\Cmod A$ can be explicitly constructed using
quotient and coproduct types, in the same way as constructing pushouts in
$\Set$.
A type $A$ is called $\Cmodal$ if the function $\Ceta_A : A \to
\Cmod A$ is an isomorphism.
In this case, we write $\Ceps_A : \Cmod A \to A$ for the inverse
of $\Ceta_A : A \to \Cmod A$.
The following lemma says that $\Cmod$-modal types have no
object-level information. 

\begin{lemma}\label{lem:cmod:if:unit}
A type $A$ in \TTSTC{} is $\Cmod$-modal iff $\Omod A$ is isomorphic to the unit type $1$.
\end{lemma}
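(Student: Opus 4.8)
The plan is to prove both implications by reducing everything to the behaviour of $A$ and $\Cmod A$ \emph{in the extended context where $\sop$ is assumed}. Two pivotal observations drive the argument. First, $\Omod A \iso \unitty$ holds exactly when $A$ is contractible (i.e.\ isomorphic to $\unitty$) under the assumption $\sop$: in a context where some $z : \sop$ is available, the implicit-function type $\impfun{\sop} A$ is interchangeable with $A$ (since $\sop$ is a proposition and is assumed), so $\Omod A$ is inhabited iff $A$ is inhabited under $\sop$, and by function extensionality two elements $f, g : \impfun{\sop} A$ agree iff $f\,z = g\,z$ for all $z : \sop$, i.e.\ iff $A$ has equal elements under $\sop$; contractibility is moreover stable under the relevant context extensions. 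Second, $\Cmod A$ is \emph{always} contractible under $\sop$: running the induction principle of the quotient inductive type $\Cmod A$ in a context assuming $\sop$, the point $\mathit{pt}$ supplies a centre and the path constructor $\mathit{eq}$ gives $\Ceta_A\,a = \mathit{pt}$ for each generator $\Ceta_A\,a$, so every element equals $\mathit{pt}$; the $\mathit{eq}$-case of the induction discharges automatically because we are proving an equality, which is proof-irrelevant in the topos setting.

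For the forward direction I would assume $A$ is $\Cmod$-modal, so $\Ceta_A : A \to \Cmod A$ is an isomorphism. This isomorphism persists under the assumption $\sop$, and transporting the second observation across it shows $A$ is contractible under $\sop$; the first observation then yields $\Omod A \iso \unitty$.

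For the backward direction I would assume $\Omod A \iso \unitty$, so by the first observation $A$ is contractible under $\sop$, with a centre $c : A$ available whenever $\sop$ holds. I would then build the inverse $\Ceps_A : \Cmod A \to A$ by the recursion principle of $\Cmod A$, sending $\Ceta_A\,a \mapsto a$, sending $\mathit{pt} \mapsto c$ (legitimate since $c$ is only ever required under $\sop$), and discharging the $\mathit{eq}$-constraint $a = c$ using contractibility of $A$ under $\sop$. The computation rule gives $\Ceps_A \circ \Ceta_A = \identity_A$ directly, and the opposite composite $\Ceta_A \circ \Ceps_A = \identity_{\Cmod A}$ follows by induction on $\Cmod A$: the $\Ceta_A\,a$ case is reflexivity, the $\mathit{pt}$ case needs $\Ceta_A\,c = \mathit{pt}$ which $\mathit{eq}$ supplies under $\sop$, and the $\mathit{eq}$ case is again automatic by proof-irrelevance of equality. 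This exhibits $\Ceta_A$ as an isomorphism, so $A$ is $\Cmod$-modal.

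The main obstacle I anticipate is bookkeeping rather than conceptual: carefully mediating between the \emph{global} statements ($\Omod A \iso \unitty$, $\Ceta_A$ an isomorphism) and the \emph{local}, $\sop$-assumed statements (contractibility of $A$ and of $\Cmod A$) without conflating the two contexts. In particular I must track precisely where $\sop$ is in scope when invoking the centre $c$ and when appealing to the $\mathit{pt}$ and $\mathit{eq}$ constructors, since these data exist only under $\sop$. The extension-type and modality discipline of \TTSTC{} is designed to make exactly this scoping safe, so the real work is simply getting the context annotations right.
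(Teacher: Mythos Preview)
Your proposal is correct and follows essentially the same approach as the paper's proof. The paper is slightly more direct---in the forward direction it simply writes down the map $f : 1 \to \Omod A$ by $f\;\unitel = \impabs{z:\sop}\Ceps_A\;(\mathit{pt}\impapp{z})$ and verifies it is an isomorphism, and in the backward direction it takes the unique element of $\Omod A$ as the value of $\Ceps_A$ at $\mathit{pt}$---whereas you factor both directions through the intermediate observations that $\Omod A \iso 1$ is equivalent to contractibility of $A$ under $\sop$ and that $\Cmod A$ is always contractible under $\sop$; but this is only a difference in packaging, not in substance.
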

\begin{proof}
Assume $A$ is $\Cmod$-modal.
We define a function $f : 1 \to \Omod A$ by
$f\;\unitel = \impabs{z:\sop} \Ceps_A\;(\ensuremath{\Varid{pt}} \impapp{z})$.
The function $f$ and $\lambda a.\;\unitel : \Omod A \to 1$ form an isomorphism.
To see $f \cdot (\lambda a.\;\unitel) = \identity$, 
for every $a : \Omod A$, we have $f ((\lambda a.\;\unitel)\;a) = f\;\unitel
= \impabs{z:\sop} \Ceps_A\;(\ensuremath{\Varid{pt}} \impapp{z})$, 
which is equal to $\impabs{z:\sop} \Ceps_A\;(\Ceta_A \; a)$
by $\ensuremath{\Varid{eq}\;\Varid{a}\mathbin{:}\Ceta_A\;\Varid{a}\mathrel{=}\Varid{pt}}$, and $\Ceps_A\;(\Ceta_A \; a) = a$ because $\Ceps_A$
is the inverse of $\Ceta_A$,
so we have $f ((\lambda a.\;\unitel)\;a) = \impabs{z : \sop} a = a$.
In the other direction, $(\lambda a.\;\unitel) \cdot f = \identity$ is trivial.

Conversely, assume $\Omod A \iso 1$. 
Let $a$ be the unique element of $\Omod A$.
We can define $\Ceps_A : \Cmod A \to A$
by $\Ceps_A \; (\Ceta_A\; a') = a'$ and $\Ceps_A \; \ensuremath{\Varid{pt}} = a$.
The function $\Ceps_A$ is a mutual inverse of $\Ceta_A$ following the defining
property of the quotient inductive type $\Cmod A$.
\end{proof}

Given an object-space type $A$ and a meta-space type family $B$ indexed by $A$,
we can `glue' them together by $\Sigma(a : A).\;B\;a$.
Under $\sop$, we have $\Sigma(a : A).\;B\;a \cong \Sigma(a : A).1 \cong A$.
The following, and the final, piece of \TTSTC{} allows us to do better,
giving us a type $\gluety{a : A}{B \; a} \cong \Sigma(a : A).\;B\;a$ such that
under $\sop$, $\gluety{a : A}{B\;a}$ is \emph{judgementally equal} in \TTSTC,
not just isomorphic, to $A$.
\begin{axiom}[\textsc{\TTSTC-Glue}]\label{el:ttstc:glue:ty}
\TTSTC{} has \emph{strict glue types} in its universes $U_i$:
\[
  \inferrule{A : \Omod {U_i} \\ B : (\impfun{z : \sop} A \impapp{z}) \to \aset{ X : U_i \mid \Cmodal \; X} }
            { \gluety{a : A}{B \; a} : \extty{U_i}{\sop}{A}}
\]
and isomorphisms between $\gluety{a : A}{B \; a}$
and $\Sigma \; (a : \impfun{\sop} A).\; B\;a$
\[
  \glue : \extty{ \; \Sigma (a : \impfun{\sop}A).\; B\;a \ \ \iso\ \  \gluety{a : A}{B \; a} \; }{\sop}{ \ensuremath{\pi_1\Varid{\hyp{}iso}} }
\]
where $\ensuremath{\pi_1\Varid{\hyp{}iso}} : \impfun{\sop} \Sigma (a : \impfun{\sop}A).\; B\;a \cong A$
is the isomorphism that has the projection $\pi_1 : (\Sigma (a : A).\; B\;a) \to A$
as its forward direction (under $\sop$, $B\;a \cong 1$ so $\pi_1$ is an isomorphism).
\end{axiom}

For all $a : \impfun{\sop}A$ and $b : B\;a$, we will use the following notation
in place of $\glue.\ensuremath{\Varid{fwd}}\;(a,b)$ to signal that the value is strictly equal to $a$
when $\sop$ is assumed:
\[
\gluetm{b}{a} \ \defeq\ \glue.\ensuremath{\Varid{fwd}}\; \tuple{a, b} : \gluety{a : A}{B \; a}.
\]
Given an element $g : \gluety{a : A}{B \; a}$, when
$\sop$ holds $\gluety{a : A}{B \; a}$ is equal to $A$, so we can
directly use $g$ as an element of $A$.
To access the second component of a glued element conveniently, we define
\begin{lgather*}
\unglue : (g : \gluety{a : A}{B \; a}) \to B \; (\impabs{\_:\sop} g) \\
\unglue g = \pi_2 \; (\glue.\ensuremath{\Varid{bwd}} \; g) 
\end{lgather*}


\subsection{Constructing the Logical Relation Model}
Now we construct our logical-relation model to prove \Cref{thm:canonicity}.
Our goal is to define 
\begin{equation}\label{eq:lr:model}
\MM^* : \extty{\sem{\Fha}_{U_2}}{\sop}{\MM}
\end{equation}
such that $\MM^*.\ensuremath{\Varid{bool}}$ encodes the property that we need.
Due to space constraints, this section can only be a digest of the full proof,
focusing on the novel part pertaining to the computation judgements of \Fha{}.
\Cref{app:lr:model} contains a full proof that explains all the definitions 
slowly.

\begin{notation}
For every declaration \ensuremath{\Varid{dec}} in the signature of \Fha,
we will write $\ensuremath{\Varid{dec}}^*$ for $\MM^*.\ensuremath{\Varid{dec}}$ and just \ensuremath{\Varid{dec}} for $\MM.\ensuremath{\Varid{dec}}$.
For example, $\ensuremath{\Varid{ki}}^* : \extty{U_2}{\sop}{\ensuremath{\Varid{ki}}}$
means $\MM^*.\ensuremath{\Varid{ki}} : \extty{U_2}{\sop}{\MM.\ensuremath{\Varid{ki}}}$.
\end{notation}

\begin{remark}
In \Cref{eq:lr:model}, $\MM^*$ is an element of the type
$\sem{\Fha}_{U_2}$ that equals to $S$ under $\sop$, but $S$ has type
$\sem{\Fha}_{U_0}$ under $\sop$ according to \Cref{el:ttstc:obj},
so how does this type check?
This is because the universes $U_i$ are cumulative (\Cref{el:ttstc:universe}) and 
in every LF-signature the universe $\Jdg$ only occurs in strictly positive
positions, so an element $S : \sem{\Fha}_{U_0}$ can be lifted as an element of
$\sem{\Fha}_{U_2}$.
\end{remark}

\subsubsection{Kinds and Types}
Every \Fha-type $A$ is 
interpreted as a \emph{proof-irrelevant} predicate $A^*$ over closed terms of
$A$.
Every kind $k$ is interpreted as a \emph{proof-relevant} predicate $k^*$ over
closed elements of $k$. 
In particular, for the base kind \ensuremath{\Varid{ty}} and every closed \Fha-type $A$,
$\ensuremath{\Varid{ty}}^*(A)$ is the set of all proof-irrelevant predicates over $A$-terms. 
The idea here is essentially the same as \varcitet{Girard_1989}{'s} technique
of \emph{reducibility candidates} in his proof of normalisation for System F,
except that here (1) we employ \emph{proof-relevant} predicates to deal with
higher kinds, and (2) the object language is formulated as an equational theory
rather than a reduction system.

In the language \TTSTC{}, the idea above for kinds is precisely expressed as follows:
\begin{align*}
&\ensuremath{\Varid{ki}}^* : \extty{U_2}{\sop}{\ensuremath{\Varid{ki}}} && \ensuremath{\Varid{el}}^* : \extty{\ensuremath{\Varid{ki}}^* \to U_1}{\sop}{\ensuremath{\Varid{el}}}  \\
& \ensuremath{\Varid{ki}}^* = \gluety{\alpha : \ensuremath{\Varid{ki}}}{\extty{U_1}{\sop}{\ensuremath{\Varid{el}} \; \alpha}} && \ensuremath{\Varid{el}}^* \; g = \unglue g 
\end{align*}
Informally,
an element of $\ensuremath{\Varid{ki}}^*$ is a syntactic kind $\alpha$ together with a `candidate'
$A : \extty{U_1}{\sop}{\ensuremath{\Varid{el}} \; \alpha}$
of the proof-relevant logical predicate  over the kind $\alpha$, where an
element $a : A$ encodes both a syntactic element $(\impabs{\_ : \sop} a) :
\impfun{\sop} \ensuremath{\Varid{el}} \; \alpha$ of the kind $\alpha$
and evidence that $\impabs{\_ : \sop}a$ satisfies the logical predicate $A$.
The definition of $\ensuremath{\Varid{ki}}^*$ uses the glue type (\Cref{el:ttstc:glue:ty}) correctly because 
the generic model $\MM$ (\Cref{el:ttstc:obj}) has type $\impfun{\sop}
\sem{\Fha}_{U_0}$, so the type of $\ensuremath{\Varid{ki}}$, or more explicitly $\impabs{z : \sop} (\MM
\impapp{z}).\ensuremath{\Varid{ki}}$, is $\impfun{\sop} U_0$, i.e.\ $\Omod U_0$.
The type $\extty{U_1}{\sop}{\ensuremath{\Varid{el}} \; \alpha}$ is $\Cmod$-modal because 
when $\sop$ holds, all elements of the extension type $\extty{U_1}{\sop}{\ensuremath{\Varid{el}} \; \alpha}$ are equal to 
$\ensuremath{\Varid{el}} \; \alpha$. 
Hence the type $\extty{U_1}{\sop}{\ensuremath{\Varid{el}} \; \alpha}$ has exactly
one element when $\sop$ holds, and by \Cref{lem:cmod:if:unit}, it is $\Cmod$-modal.
Note also that we have given $\ensuremath{\Varid{el}}^*$ a type
$\extty{\ensuremath{\Varid{ki}}^* \to U_1}{\sop}{\ensuremath{\Varid{el}}}$ more precise than necessary -- it only needs to land in the
larger universe $U_2$, but the universes $U_i$ in \Cref{el:ttstc:universe} are cumulative.

Define the universe of meta-space propositions by $\COmega \defeq \aset{p :
\Omega \mid \Cmodal\;p}$, which corresponds to simply the propositions in the
ambient set theory.
The interpretation of types and terms are
\begin{align*}
&\ensuremath{\Varid{ty}}^* : \extty{\ensuremath{\Varid{ki}}^*}{\sop}{\ensuremath{\Varid{ty}}}  && \ensuremath{\Varid{tm}}^* : \extty{\ensuremath{\Varid{el}}^* \; \ensuremath{\Varid{ty}}^* \to U_0}{\sop}{\ensuremath{\Varid{tm}}}  \\
&\ensuremath{\Varid{ty}}^* = \gluetm{ \gluety{A : \ensuremath{\Varid{el}} \; \ensuremath{\Varid{ty}}}{{(\impfun{\sop} \ensuremath{\Varid{tm}} \; A) \to \COmega}} }{\ensuremath{\Varid{ty}}} & &\ensuremath{\Varid{tm}}^* \; \gluetm{P}{A} = \gluety{t : \ensuremath{\Varid{tm}} \; A}{P \; t}
\end{align*}

The interpretation for the base type \ensuremath{\Varid{bool}} is the predicate $P_\ensuremath{\Varid{can}}(b)$ that 
the closed term $b$ is either equal to \ensuremath{\Varid{tt}\mathbin{:}\Varid{bool}} or \ensuremath{\Varid{ff}\mathbin{:}\Varid{bool}} as follows.
The predicate $P_\ensuremath{\Varid{can}}$ is the only place in the
logical-relation model $\MM^*$ that is specific to canonicity, so we can in fact
replace $P_\ensuremath{\Varid{can}}(b)$ with any other predicates that hold for \ensuremath{\Varid{tt}} and \ensuremath{\Varid{ff}} and
get other logical-relation models.
\begin{align*}
&\ensuremath{\Varid{bool}}^* : \extty{\ensuremath{\Varid{el}}^*\;\ensuremath{\Varid{ty}}^*}{\sop}{\ensuremath{\Varid{bool}}}  &&  \ensuremath{\Varid{tt}}^* : \extty{\ensuremath{\Varid{tm}}^* \; \ensuremath{\Varid{bool}}^*}{\sop}{\ensuremath{\Varid{tt}}}  \\
&\ensuremath{\Varid{bool}}^* = \gluetm{P_\ensuremath{\Varid{can}}}{\ensuremath{\Varid{bool}}} && \ensuremath{\Varid{tt}}^* = \gluetm{\Ceta\; ( \inl\;{\refl} )}{\ensuremath{\Varid{tt}}} \\
& P_\ensuremath{\Varid{can}} : (\impfun{\sop}\ensuremath{\Varid{tm}}\;\ensuremath{\Varid{bool}}) \to \COmega &&\ensuremath{\Varid{ff}}^* : \extty{\ensuremath{\Varid{tm}}^* \; \ensuremath{\Varid{bool}}^*}{\sop}{\ensuremath{\Varid{ff}}} \\
&P_\ensuremath{\Varid{can}}\; b = \Cmod (\impfun{\sop} (b = \ensuremath{\Varid{tt}} \lor b = \ensuremath{\Varid{ff}})) &&\ensuremath{\Varid{ff}}^* = \gluetm{\Ceta\; ( \inr\;{\refl} )}{\ensuremath{\Varid{ff}}}
\end{align*}

The other type/kind formers are interpreted in the standard way, which is
explained in great detail in \Cref{app:lr:model}.
Here, let us only show the interpretation for the polymorphic function type:  
\[
\begin{array}{l}
\ensuremath{\allfor}^* : \extty{(k^* : \ensuremath{\Varid{ki}}^*) \to (\ensuremath{\Varid{el}}^* \; k^* \to \ensuremath{\Varid{el}}^* \; \ensuremath{\Varid{ty}}^*) \to \ensuremath{\Varid{el}}^* \; \ensuremath{\Varid{ty}}^*}{\sop}{\ensuremath{\allfor}} \\
\ensuremath{\allfor}^* \; k^* \; F = 
  \gluetm{ \lambda t.\; 
              \forall (\alpha^* : \ensuremath{\Varid{el}}^* \; k^*).\ \unglue (F \; \alpha^*) \; (\lambda \imparg{\_ : \sop}.\ (t\; \alpha^*))}{\ensuremath{\allfor} \; k^* \; F}
\end{array}
\]
The logical predicate here on closed terms $t : \impfun{\sop}\ensuremath{\Varid{tm}}\;(\ensuremath{\allfor}\;k\;F)$ of the polymorphic function type
is a universal quantification
$\forall (\alpha^* : \ensuremath{\Varid{el}}^*\;k^*)$
over all the `logical predicate candidates' of the kind $k^*$, and we demand 
the result $t \; \alpha^*$ of applying the polymorphic function $t$ to
$\alpha^*$  to satisfy the logical predicate $F\;\alpha^*$.  
It is explained in detail in \Cref{sec:lr:fun:ty} how this definition type
checks.

With the \Fom-fragment of $\MM^*$ defined, we automatically have the
interpretation for the judgements that are derived in \Fom{} (\Cref{fig:derived:concepts}), such as \ensuremath{\Conid{RawHFuncor}} and \ensuremath{\Conid{RawMonad}}. 
We will also denote their interpretation by superscripting an asterisk. 
For example, for the judgement \ensuremath{\Varid{efty}\mathrel{=}(\Varid{ty}\;{\Rightarrow_t}\;\Varid{ty})} from \Cref{fig:derived:concepts},
its interpretation $\ensuremath{\Varid{efty}}^*$ is then $\ensuremath{\Varid{ty}}^*\; \ensuremath{{\Rightarrow_k}}^*\; \ensuremath{\Varid{ty}}^*$.

\subsubsection{Computations}
What remains is to define the computation fragment of \Fha{}
(\Cref{sec:fha:co}) for our logical-relation model $\MM^*$.
Because in \Fha{} the only way to `observe' a computation is to evaluate it
with a raw monad using \ensuremath{\Varid{hdl}}
\Cref{eq:fha:eval}, a natural attempt to define the logical predicate
$P_\ensuremath{\Varid{co}}(c)$ for computations is to assert that the computation $c : \ensuremath{\Varid{co}\;\Conid{H}\;\Conid{A}}$
evaluated with every semantic handler $h^*$ (i.e.\ a syntactic handler $h$
with a proof of $h$ satisfying the proof-relevant logical predicate)
yields a term that satisfies the logical predicate
$h^*.0\;A^*$:
\begin{lgather*}
P_\ensuremath{\Varid{co}}^{\ensuremath{\Varid{wrong}}} : \impfun{H^*, A^*}  (\impfun{\sop} \ensuremath{\Varid{co}} \; H^* \; A^*) \to \COmega \\
P_\ensuremath{\Varid{co}}^{\ensuremath{\Varid{wrong}}} \; c = \forall (h^* : \ensuremath{\Conid{Handler}}^* \; H^*).\ \unglue (h^*.0 \; A^*) \; (\impabs{\_:\sop}\ensuremath{\Varid{hdl}} \; h^* \; c)
\end{lgather*}
However, this definition will not work later when showing that the term constructor 
\ensuremath{\Varid{let\hyp{}in}} \Cref{eq:fha:co:letin} satisfies its logical predicate.
That is, we need to prove
\[
\forall (h^* : \ensuremath{\Conid{Handler}}^* \; H^*).\ \unglue (h^*\ensuremath{\Varid{.0}} \; B^*) \; 
  (\impabs{\_:\sop} \hole{\ensuremath{\Varid{hdl}} \; h^* \; (\ensuremath{\Varid{let\hyp{}in}} \; c \; f)}),
\]
but as mentioned in \Cref{rem:no:evallet}, \Fha{} does not have
the law saying that \ensuremath{\Varid{hdl}} commutes with \ensuremath{\Varid{let\hyp{}in}}, so we have no way to further
simplify the shaded part above to use the assumptions that \ensuremath{\Varid{c}} and \ensuremath{\Varid{f}}
satisfy their logical predicates.

The way to fix this problem is to use the idea of \emph{$\top\top$-lifting}
\citep{Lindley_Stark_2005}:
we strengthen $P_\ensuremath{\Varid{co}}^{\ensuremath{\Varid{wrong}}}$ above to quantify over all `good
continuations' $k$ of the computation $c$, and we demand \ensuremath{\Varid{hdl}\;\Varid{h}\;(\Varid{let\hyp{}in}\;\Varid{c}\;\Varid{k})} to satisfy its logical predicate.
Here a continuation $k$ is `good' if $k$ followed by \ensuremath{\Varid{hdl}} sends input
satisfying its logical predicate to output satisfying its logical predicate,
which can be succinctly expressed by a function $k^* : \ensuremath{\Varid{tm}}^*
\; A^* \to \ensuremath{\Varid{tm}}^* \; (h^* \; R^*)$.
Precisely, the type of `good' continuations accepting $A^*$-values is the
following record:
\begin{lgather*}
\ensuremath{\lhskeyword{record}} \; \ensuremath{\Conid{Con}} \; (H^* : \ensuremath{\Conid{RawHFunctor}}^*) \; (A^* : \ensuremath{\Varid{el}}^* \; \ensuremath{\Varid{ty}}^*) : U_1 \; \ensuremath{\lhskeyword{where}} \\
\hspace{1em}
  h^* : \ensuremath{\Conid{Handler}}^* \; H^* \\
\hspace{1em}
  R^* : \ensuremath{\Varid{el}}^* \; \ensuremath{\Varid{ty}}^* \\
\hspace{1em}
  k : \impfun{\sop} A^* \to \ensuremath{\Varid{co}} \; H^* \; R^* \\
\hspace{1em}
  k^* : \extty{\ensuremath{\Varid{tm}}^* \; A^* \to \ensuremath{\Varid{tm}}^* \; (h^* \; R^*)}{\sop}%
        {\lambda a.\ \ensuremath{\Varid{hdl}} \; h^* \; (k \; a) }
\end{lgather*}
and the correct definition of $P_\ensuremath{\Varid{co}}$ and the interpretation of computation
judgements $\ensuremath{\Varid{co}}^*$ is
\begin{lgather*}
P_\ensuremath{\Varid{co}} : \impfun{H^*, A^*} (\impfun{\sop} \ensuremath{\Varid{co}} \; H^* \; A^*) \to \COmega \\
P_\ensuremath{\Varid{co}} \; c = \forall (K : \ensuremath{\Conid{Con}} \; H^* \; A^*).\
   \unglue (K.h^* \; K.R^*) \; (\impabs{\_ : \sop} 
  \ensuremath{\Varid{hdl}} \; K.h^* \; (\ensuremath{\Varid{let\hyp{}in}} \; c \; K.\ensuremath{\Varid{k}})) \\[4pt]
\ensuremath{\Varid{co}}^* : \extty{\ensuremath{\Conid{HFunctor}}^* \to \ensuremath{\Varid{el}}^* \; \ensuremath{\Varid{ty}}^* \to U_0}{\sop}{\ensuremath{\Varid{co}}} \\
\ensuremath{\Varid{co}}^* \; H^* \; A^* = \gluety{c : \ensuremath{\Varid{co}} \; H^* \; A^* }{P_\ensuremath{\Varid{co}} \; c} 
\end{lgather*}

Based on this definition of $\ensuremath{\Varid{co}}^*$, the interpretation of all constructs of
\Fha{} pertaining to computations can be defined and is shown in detail in
\Cref{sec:fha:co:star}.
Here we only show the case for the term former \ensuremath{\Varid{let\hyp{}in}}, which was previously
problematic:
\begin{lgather*}
\ensuremath{\Varid{let\hyp{}in}}^* : \extty{\impfun{H^*, A^*, B^*} \ensuremath{\Varid{co}}^* \; H^* \; A^*
\to (\ensuremath{\Varid{co}}^* \; H^* \; A^* \to \ensuremath{\Varid{co}}^* \; H^* \; B^*) 
\\ \hspace{5em} \to  \ensuremath{\Varid{co}}^* \; H^* \; B}{\sop}{\ensuremath{\Varid{let\hyp{}in}}} \\
\ensuremath{\Varid{let\hyp{}in}}^* \; c \; f = \gluetm{\lambda (K : \ensuremath{\Conid{Con}} \; H^* \; A^*).\ \unglue c \; K'}{\ensuremath{\Varid{let\hyp{}in}} \; c \; f}
\end{lgather*}
where each field of $K' : \ensuremath{\Conid{Con}} \; H^* \; B^*$ is defined as follows:
\begin{align*}
&K'.h^* = K.h^*  &&  K'.k = \lambda \imparg{\_:\sop}\; a.\ \ensuremath{\Varid{let\hyp{}in}} \; (f \; a) \; K.k \\
&K'.R^* = K.R^*  && K'.k^*  = \lambda a.\ \gluetm{\unglue (f \; a) \; K}{\ensuremath{\Varid{hdl}} \; K'.h^* \; (\ensuremath{\Varid{let\hyp{}in}} \; (f \; a) \; K.\ensuremath{\Varid{k}})}
\end{align*}
The definition of $K'.k^*$ type checks because $f \; a : \ensuremath{\Varid{co}}^*\;H^*\;B^*$, so
$\unglue (f \; a) : P_\ensuremath{\Varid{co}} \; (f \; a)$, so by the definition of $P_\ensuremath{\Varid{co}}$, the type of $\unglue (f \; a) \; K$ is
\begin{lgather*}
\unglue (K.h^* \;\; K.R^*) \; (\lambda \imparg{\_ : \sop}.\ \ensuremath{\Varid{hdl}} \; K.h^* \; (\ensuremath{\Varid{let\hyp{}in}} \; (f \; a) \; K.\ensuremath{\Varid{k}}))
\end{lgather*}
which is indeed the type of proofs that the syntactic component of $K'.k^*$
satisfies the logical predicate of the type $k.h^* \; K.R^*$.

To summarise, we have established the `fundamental lemma'
for the logical relation of \Fha{}.
\begin{lemma}[Fundamental]\label{lem:closed:fundamental}
In the language {\normalfont\TTSTC{}}, 
given any $P : (\impfun{\sop}\MM.\ensuremath{\Varid{tm}}\;\MM.\ensuremath{\Varid{bool}}) \to \COmega$ with
$t : P (\MM.\ensuremath{\Varid{tt}})$ and $f : P (\MM.\ensuremath{\Varid{ff}})$,
there is an
$
\MM^* : \extty{\sem{\Fha}_{U_2}}{\sop}{\MM}
$
such that \[\MM^*.\ensuremath{\Varid{tm}} \; \MM^*.\ensuremath{\Varid{bool}} = \gluety{b : \MM.\ensuremath{\Varid{tm}} \; \MM.\ensuremath{\Varid{bool}}}{P\;b}.\]
\end{lemma}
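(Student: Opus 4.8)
The plan is to construct $\MM^*$ declaration by declaration, following the signature of \Fha{} in the order presented in this section, maintaining the single invariant that each interpreted component $\ensuremath{\Varid{dec}}^*$ inhabits the extension type $\extty{\cdots}{\sop}{\MM.\ensuremath{\Varid{dec}}}$. This invariant simultaneously delivers the ambient constraint $\MM^* : \extty{\sem{\Fha}_{U_2}}{\sop}{\MM}$ of \Cref{eq:lr:model} and forces every interpretation to agree on the nose with the syntactic model $\MM$ under $\sop$. The given data $P$, $t$, $f$ enter in exactly one place, the Boolean type: I would set $\ensuremath{\Varid{bool}}^* = \gluetm{P}{\ensuremath{\Varid{bool}}}$ and then $\ensuremath{\Varid{tt}}^* = \gluetm{t}{\ensuremath{\Varid{tt}}}$, $\ensuremath{\Varid{ff}}^* = \gluetm{f}{\ensuremath{\Varid{ff}}}$, which type check precisely because $t : P(\MM.\ensuremath{\Varid{tt}})$ and $f : P(\MM.\ensuremath{\Varid{ff}})$ are assumed. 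The canonicity predicate $P_\ensuremath{\Varid{can}}\;b = \Cmod(\impfun{\sop}(b = \ensuremath{\Varid{tt}} \lor b = \ensuremath{\Varid{ff}}))$ of \Cref{thm:canonicity} is then recovered as the special case $P = P_\ensuremath{\Varid{can}}$.

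First I would discharge the \Fom-fragment using the definitions already recorded above: $\ensuremath{\Varid{ki}}^*$ and $\ensuremath{\Varid{el}}^*$ as glue types carrying proof-relevant reducibility candidates, $\ensuremath{\Varid{ty}}^*$ and $\ensuremath{\Varid{tm}}^*$ glueing in proof-irrelevant $\COmega$-valued predicates, and the function, unit, and polymorphic formers interpreted in the style of Girard's reducibility candidates (with $\ensuremath{\allfor}^*$ as displayed). Because the logical-relation interpretation is compositional, this immediately yields interpretations $\ensuremath{\Conid{RawFunctor}}^*$, $\ensuremath{\Conid{RawMonad}}^*$, $\ensuremath{\Conid{RawHFunctor}}^*$, and $\ensuremath{\Conid{Handler}}^*$ of the derived concepts of \Cref{fig:derived:concepts} with no extra work, which are the ingredients needed for the computation fragment.

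The heart of the argument is the computation fragment. Here I would take $\ensuremath{\Varid{co}}^* \; H^* \; A^* = \gluety{c : \ensuremath{\Varid{co}}\;H^*\;A^*}{P_\ensuremath{\Varid{co}}\;c}$ with the $\top\top$-lifted predicate $P_\ensuremath{\Varid{co}}$ quantifying over the record $\ensuremath{\Conid{Con}}$ of \emph{good continuations}, exactly as motivated by the failure of the naive $P_\ensuremath{\Varid{co}}^{\ensuremath{\Varid{wrong}}}$. With $\ensuremath{\Varid{co}}^*$ in place I would define $\ensuremath{\Varid{val}}^*$, $\ensuremath{\Varid{let\hyp{}in}}^*$, $\ensuremath{\Varid{th}}^*$, $\ensuremath{\Varid{op}}^*$, and $\ensuremath{\Varid{hdl}}^*$, verifying in each case that the displayed syntactic component paired with a proof of the appropriate instance of $P_\ensuremath{\Varid{co}}$ inhabits the target extension type. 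The $\ensuremath{\Varid{let\hyp{}in}}^*$ case is the one worked out above, where a continuation $K : \ensuremath{\Conid{Con}}\;H^*\;B^*$ is rebuilt into $K' : \ensuremath{\Conid{Con}}\;H^*\;A^*$ and the hypothesis on $f$ supplies the $K'.k^*$ field; for $\ensuremath{\Varid{hdl}}^*$ I would instantiate $P_\ensuremath{\Varid{co}}$ at the identity-like continuation $K.\ensuremath{\Varid{k}} = \ensuremath{\Varid{val}}$, $K.R^* = A^*$ and invoke the $\ensuremath{\Varid{let\hyp{}val}}$ law $\ensuremath{\Varid{let\hyp{}in}}\;c\;\ensuremath{\Varid{val}} = c$ so that the obtained proof speaks about $\ensuremath{\Varid{hdl}}\;h\;c$ directly.

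I expect the main obstacle to be the operation-call case $\ensuremath{\Varid{op}}^*$: establishing $P_\ensuremath{\Varid{co}}\;(\ensuremath{\Varid{op}}\;o\;k)$ against an arbitrary good continuation requires rewriting with the algebraicity law $\ensuremath{\Varid{let\hyp{}op}}$ \Cref{eq:fha:letop} and the handling rule $\ensuremath{\Varid{hdl\hyp{}op}}$ \Cref{eq:fha:evalop}, and then threading the logical predicates through the recursively-handled argument \ensuremath{\Varid{p'}} — this is where $\ensuremath{\Conid{RawHFunctor}}^*$, specifically the logical predicate on its $\ensuremath{\Varid{hmap}}$ action, must cooperate with the handler's monadic $\ensuremath{\Varid{bind}}$ to reassemble a proof in $\unglue(K.h^*\;K.R^*)$. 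Once type-checking of all constructs is complete, the judgemental equalities of \Fha{} split into two easy families: the isomorphism round-trip laws (for $\kfun$, $\tfun$, $\ensuremath{\allfor}$, $\ensuremath{\Varid{unit}}$, $\ensuremath{\Varid{th}}$) hold because each interpretation is supplied as a genuine isomorphism in \TTSTC{} and extensionality of the identity types discharges the round-trips, while the computation equations \Cref{eq:fha:co:laws} and $\ensuremath{\Varid{let\hyp{}op}}$ together with the $\ensuremath{\Varid{hdl}}$ rules are equations in glue types whose fibers ($P_\ensuremath{\Varid{co}}$, $P$) are $\COmega$-valued and hence proof-irrelevant, so they reduce to the corresponding equations between syntactic components, which $\MM$ validates as a genuine model. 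The recurring delicate point is bookkeeping $\sop$-strictness — ensuring every glued definition is judgementally, not merely isomorphically, equal to its syntactic part under $\sop$ — which the extension-type discipline and the $\gluetm{\blank}{\blank}$ notation are designed to manage.
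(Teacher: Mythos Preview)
Your proposal is correct and follows essentially the same approach as the paper: a declaration-by-declaration construction of $\MM^*$ in \TTSTC{}, with the \Fom-fragment handled by glued reducibility candidates, the computation fragment by the $\top\top$-lifted predicate $P_{\ensuremath{\Varid{co}}}$ over the record \ensuremath{\Conid{Con}} of good continuations, and the equational axioms discharged by proof-irrelevance of the $\COmega$-valued fibers. The one small detail you glossed over is that in the $\ensuremath{\Varid{hdl}}^*$ case you must also supply $K.k^* = h^*.\ensuremath{\Varid{ret}}$, whose type-checking requires the law \ensuremath{\Varid{hdl\hyp{}val}} (in addition to the \ensuremath{\Varid{let\hyp{}val}} law you mentioned), but this is exactly what the paper does and poses no obstacle.
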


\subsection{External Closed Term Canonicity}

Finally, we can now prove canonicity (\Cref{thm:canonicity}) using our
logical-relation model $\MM^*$.
\begin{proof}[Proof of \Cref{thm:canonicity}]
The denotation of the \TTSTC-type $\ensuremath{\Varid{tm}}^*\;\ensuremath{\Varid{bool}}^*$ in $\GFha$ is the 
following object (strictly speaking the denotation could be an object
only \emph{isomorphic} to this object, but pretending they are strictly
equal will not cause problems in this proof):
\[
B^* \defeq
\tuple{\yo(\ensuremath{\Varid{tm}\;\Varid{bool}}) \in \Pr(\CatJdgOf\Fha), \;  
      \aset{t : 1 \to \yo(\ensuremath{\Varid{tm}\;\Varid{bool}})) \mid  (t = \yo \; \ensuremath{\Varid{tt}}) \lor (t = \yo \; \ensuremath{\Varid{ff}})},\;
      j }
\]
where $j$ is the inclusion function into $\aset{t : 1 \to \yo(\ensuremath{\Varid{tm}\;\Varid{bool}}))}$.
For every closed term ${b : \ensuremath{\Varid{tm}\;\Varid{bool}}}$ in $\Fha$, its interpretation 
in the model $\MM^*$ is a morphism $1 \to B^*$ in $\GFha$ as follows:
\[
  \begin{tikzcd}[ampersand replacement=\&]
\aset{\unitel} \& {\aset{t \mid  (t = \yo\; \ensuremath{\Varid{tt}}) \lor (t = \yo\; \ensuremath{\Varid{ff}})}} \\
{\aset{\unitel}} \& {\aset{t : 1 \to \yo(\ensuremath{\Varid{tm}\;\Varid{bool}})}}
\arrow[from=1-1, to=2-1]
\arrow["{\lambda \unitel.\; \yo b }"', from=2-1, to=2-2]
\arrow["", from=1-1, to=1-2]
\arrow["{j}", from=1-2, to=2-2]
\end{tikzcd}
\]
The commutativity of this diagram entails $\yo b = \yo \ensuremath{\Varid{tt}}$ or $\yo b = \yo
\ensuremath{\Varid{ff}}$, so $b = \ensuremath{\Varid{tt}}$ or $b = \ensuremath{\Varid{ff}}$ since Yoneda embedding is fully faithful.
Moreover, $b = \ensuremath{\Varid{tt}}$ and $b = \ensuremath{\Varid{ff}}$ cannot be true at the same time because
\ensuremath{\Varid{tt}} and \ensuremath{\Varid{ff}} have different interpretations in the realizability model in
\Cref{sec:real:mod:fha}.
\end{proof}

\begin{corollary}\label{cor:adequacy}
An immediate consequence of canonicity of \Fha{} is that the realizability model in
\Cref{sec:fha:real:model} is \emph{adequate} in the sense that if two closed Boolean
terms $b_1$ and $b_2$ have the same denotation in the realizability model, they
must be judgementally equal.
This is because canonicity says that both $b_1$ and $b_2$ are either equal to $\ensuremath{\Varid{tt}}$ or
$\ensuremath{\Varid{ff}}$, which have different interpretations in the realizability model, so 
$b_1 = b_2$ must be true if their realizability interpretation is the same.
\end{corollary}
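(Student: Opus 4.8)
The plan is to obtain adequacy as a direct corollary of canonicity (\Cref{thm:canonicity}), using only the soundness of the realizability model of \Cref{sec:real:mod:fha} and the fact---already exploited in the proof of \Cref{thm:fha:consistency}---that \ensuremath{\Varid{tt}} and \ensuremath{\Varid{ff}} receive distinct denotations in it. First I would apply canonicity separately to each closed Boolean term, obtaining that $b_1 = \ensuremath{\Varid{tt}}$ or $b_1 = \ensuremath{\Varid{ff}}$, and likewise for $b_2$, as judgemental equalities in \Fha.

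Next I would transport these judgemental equalities across the realizability interpretation. Since the interpretation of \Fha{} in $\Asm(\bA)$ respects judgemental equality (this is exactly the soundness property invoked in \Cref{thm:fha:consistency}), the denotation of each $b_i$ coincides with the denotation of whichever of \ensuremath{\Varid{tt}}, \ensuremath{\Varid{ff}} it is judgementally equal to. Now I would suppose $b_1$ and $b_2$ have the same denotation and proceed by a case split over the four possibilities. In the two ``crossed'' cases---say $b_1 = \ensuremath{\Varid{tt}}$ and $b_2 = \ensuremath{\Varid{ff}}$---soundness would force the denotations of \ensuremath{\Varid{tt}} and \ensuremath{\Varid{ff}} to be equal morphisms $1 \to 1 + 1$ in $\Asm(\bA)$, contradicting their distinctness as established in \Cref{thm:fha:consistency}. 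Hence only the two ``diagonal'' cases survive, and in each of them $b_1$ and $b_2$ are judgementally equal to the same canonical value, so $b_1 = b_2$ by transitivity of judgemental equality.

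There is no real obstacle here: the argument is a short case analysis, and the two ingredients it rests on---canonicity and the distinctness of the interpretations of \ensuremath{\Varid{tt}} and \ensuremath{\Varid{ff}}---are already available. The only point needing care is to use soundness in the correct direction, deducing equality of denotations from judgemental equality rather than the converse; notably we do not need the exclusivity half of canonicity (that $b = \ensuremath{\Varid{tt}}$ and $b = \ensuremath{\Varid{ff}}$ cannot both hold), only its disjunction.
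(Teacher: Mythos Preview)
Your proposal is correct and follows exactly the argument sketched in the paper's statement of the corollary: apply canonicity to reduce each $b_i$ to \ensuremath{\Varid{tt}} or \ensuremath{\Varid{ff}}, then use the distinctness of their realizability denotations (from \Cref{thm:fha:consistency}) to rule out the crossed cases. Your additional remark that only the disjunctive half of canonicity is needed is a nice observation the paper does not make explicit.
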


\begin{remark}\label{rem:parametricity}
Apart from canonicity, \Cref{lem:closed:fundamental} can be also used to show
other \emph{parametricity} results about terms of \Fha; for example, for a
closed term $t : \ensuremath{\Varid{tm}} \; (\ensuremath{\allfor}\; \ensuremath{\Varid{ty}} \; (\lambda \alpha.\; \alpha
\mathbin{\ensuremath{{\Rightarrow_t}}} \alpha))$, $t$ applied to every closed type $A$ and closed
term $a : \ensuremath{\Varid{tm}\;\Conid{A}}$ is equal to $a$.
Even more pleasantly, we can obtain a binary (or $n$-ary) logical-relation
model of \Fha{} from the seemingly unary logical-predicate model
\Cref{lem:closed:fundamental} by interpreting $\TTSTC$ in a different glued topos,
without modifying the definition of $\MM^*$ at all.
These results are elaborated in \Cref{sec:parametricity}.
\end{remark}

\begin{remark}
We based \Fha{} on fine-grain call-by-value (FGCBV) rather than call-by-push-value (CBPV) since the
theory of higher-order algebraic effects \citep{YangWu23} is based on monads rather
than adjunctions, but CBPV is also possible and 
we sketch the judgements for a CBPV variant of \Fha{} without stack
judgements here.
Again, starting with \Fom{}, instead of \ensuremath{\Varid{co}\mathbin{:}\Conid{RawHFunctor}\to\Varid{el}\;\Varid{ty}\to\Jdg},
we add to \Fom{} a new kind for 
\emph{computation types} and a judgement for \emph{computation terms}:
\[
\spread*{
\ensuremath{\Varid{cty}\mathbin{:}\Conid{RawHFunctor}\to \Varid{ki}}
\also
\ensuremath{\Varid{ctm}\mathbin{:}\{\mskip1.5mu \Conid{H}\mskip1.5mu\}\to \Varid{el}\;(\Varid{cty}\;\Conid{H})\to \Jdg}
}
\]
We then add two new type formers for value-returning computations 
and thunk values:
\[
\spread*{
\ensuremath{\Conid{F}\mathbin{:}(\Conid{H}\mathbin{:}\Conid{RawHFunctor})\to \Varid{el}\;\Varid{ty}\to \Varid{el}\;(\Varid{cty}\;\Conid{H})}
\also
\ensuremath{\Conid{U}\mathbin{:}\{\mskip1.5mu \Conid{H}\mskip1.5mu\}\to \Varid{el}\;(\Varid{cty}\;\Conid{H})\to \Varid{el}\;\Varid{ty}}
}
\]
We have value returning and sequential composition as usual:
\begin{hscode}\SaveRestoreHook
\column{B}{@{}>{\hspre}l<{\hspost}@{}}%
\column{9}{@{}>{\hspre}l<{\hspost}@{}}%
\column{E}{@{}>{\hspre}l<{\hspost}@{}}%
\>[B]{}\Varid{val}{}\<[9]%
\>[9]{}\mathbin{:}\{\mskip1.5mu \Conid{H},\Conid{A}\mskip1.5mu\}\to\Varid{tm}\;\Conid{A}\to\Varid{ctm}\;(\Conid{F}\;\Conid{H}\;\Conid{A}){}\<[E]%
\\
\>[B]{}\Varid{let\hyp{}in}{}\<[9]%
\>[9]{}\mathbin{:}\abrace{\Conid{H},\Conid{A},\Conid{X}}\to\Varid{ctm}\;(\Conid{F}\;\Conid{H}\;\Conid{A})\to(\Varid{tm}\;\Conid{A}\to\Varid{ctm}\;\Conid{X})\to\Varid{ctm}\;\Conid{X}{}\<[E]%
\ColumnHook
\end{hscode}\resethooks
Note that the second argument of \ensuremath{\Varid{let\hyp{}in}} can be an arbitrary computation
type \ensuremath{\Conid{X}\mathbin{:}\Varid{el}\;(\Varid{cty}\;\Conid{H})}
rather than just value-returning computations \ensuremath{\Conid{F}\;\Conid{H}\;\Conid{A}\mathbin{:}\Varid{el}\;(\Varid{cty}\;\Conid{H})}.
Terms of a thunk type are in bijection with the terms of the computation type:
\[
\ensuremath{\Varid{U\hyp{}iso}\mathbin{:}\{\mskip1.5mu \Conid{H}\mskip1.5mu\}\;\{\mskip1.5mu \Conid{X}\mathbin{:}\Varid{el}\;(\Varid{cty}\;\Conid{H})\mskip1.5mu\}\to \Varid{tm}\;(\Conid{U}\;\Conid{X})\;{\cong}\;\Varid{ctm}\;\Conid{X}}
\]
The judgement \ensuremath{\Varid{co}\;\Conid{H}\;\Conid{A}} in \Fha{} then corresponds to \ensuremath{\Varid{ctm}\;(\Conid{F}\;\Conid{H}\;\Conid{A})}
in CBPV.
What we have in CBPV but not in \Fha{} are \emph{function computations}
from a value type \ensuremath{\Conid{A}} to a computation type \ensuremath{\Conid{X}}:
\begin{hscode}\SaveRestoreHook
\column{B}{@{}>{\hspre}l<{\hspost}@{}}%
\column{11}{@{}>{\hspre}l<{\hspost}@{}}%
\column{E}{@{}>{\hspre}l<{\hspost}@{}}%
\>[B]{}\text{\textunderscore}{\Rightarrow_c}\text{\textunderscore}{}\<[11]%
\>[11]{}\mathbin{:}\{\mskip1.5mu \Conid{H}\mskip1.5mu\}\to (\Conid{A}\mathbin{:}\Varid{el}\;\Varid{ty})\to (\Conid{X}\mathbin{:}\Varid{el}\;(\Varid{cty}\;\Conid{H}))\to \Varid{el}\;(\Varid{cty}\;\Conid{H}){}\<[E]%
\\
\>[B]{}\Varid{{\Rightarrow_c}\hyp{}iso}{}\<[11]%
\>[11]{}\mathbin{:}\{\mskip1.5mu \Conid{H},\Conid{A},\Conid{X}\mskip1.5mu\}\to \Varid{ctm}\;(\Conid{A}\;{\Rightarrow_c}\;\Conid{X})\cong(\Varid{tm}\;\Conid{A}\to \Varid{ctm}\;\Conid{X}){}\<[E]%
\ColumnHook
\end{hscode}\resethooks
Finally, we have $H$-operations and evaluation by raw monads with
$H$-operations:
\begin{hscode}\SaveRestoreHook
\column{B}{@{}>{\hspre}l<{\hspost}@{}}%
\column{E}{@{}>{\hspre}l<{\hspost}@{}}%
\>[B]{}\Varid{th}\mathbin{:}\Conid{RawHFunctor}\to \Varid{el}\;\Varid{ty}\to \Varid{el}\;\Varid{ty}{}\<[E]%
\\
\>[B]{}\Varid{th}\;\Conid{H}\;\Conid{A}\mathrel{=}\Conid{U}\;(\Conid{F}\;\Conid{H}\;\Conid{A}){}\<[E]%
\\
\>[B]{}\Varid{op}\mathbin{:}\{\mskip1.5mu \Conid{H},\Conid{A},\Conid{X}\mskip1.5mu\}\to\Varid{tm}\;(\Conid{H}\;(\Varid{th}\;\Conid{H})\;\Conid{A})\to(\Varid{tm}\;\Conid{A}\to\Varid{ctm}\;\Conid{X})\to\Varid{ctm}\;\Conid{X}{}\<[E]%
\\
\>[B]{}\Varid{hdl}\mathbin{:}\{\mskip1.5mu \Conid{H},\Conid{A}\mskip1.5mu\}\to(\Varid{h}\mathbin{:}\Conid{Handler}\;\Conid{H})\to\Varid{ctm}\;(\Conid{F}\;\Conid{H}\;\Conid{A})\to\Varid{tm}\;(\Varid{h}\;\Varid{.0}\;\Conid{A}){}\<[E]%
\ColumnHook
\end{hscode}\resethooks
We see no obvious difficulties in adapting our results
for \Fha{} to this CBPV variant by interpreting $F$ and
$U$ using the Eilenberg-Moore adjunction of the monads that we used to model
\Fha{}. 
\end{remark}

\section{Related Work}
The most related work is the line of research on (higher-order) algebraic
effects and handlers, and we have discussed the position of this paper
within this line of research in \Cref{sec:intro}.
In this section, we discuss some more aspects of related work that were not
discussed in \Cref{sec:intro}.

\smallskip

$\bullet{}$
In the context of handlers of algebraic effects, the paper by \citet{Wu2014}
seems to be the first to consider higher-order operations.
Although the examples of (higher-order) operations considered in this paper
are all what are later called scoped operations, the framework in this paper 
is actually designed for general higher-order effects that can be given as
higher-order functors, similar to \citet{YangWu23}, \citet{vS24}, and the
present paper here, except that \citet{Wu2014} demand the signature
(higher-order) functors to come with a \emph{weave} operation, which
is used for modularly combining handlers of different effects. 
This design of \ensuremath{\Varid{weave}} seems inherently tied to effects similar to
mutable state and has not been further developed since then.

\smallskip

$\bullet{}$
The paper by \citet{Wu2014} is a practically-minded paper presented in Haskell,
and the underlying mathematics for higher-order effects was not clear at the
time.
Therefore in the following years, several authors studied the categorical
foundation (and practical applications) of several \emph{special cases} of
\varcitet{Wu2014}{'s} framework, including \emph{scoped effects} by
\citet{PirogSWJ18} and \citet{YPWvS2021}, \emph{latent effects} by
\citet{vSPW21}, \emph{hefty algebras} by \citet{Bach_Poulsen_vd_Rest_2023}.
All these families can be implemented in our calculus \Fha.

\smallskip

$\bullet{}$
After this trend of diversification, the families of higher-order algebraic
effects are re-unified by \citet{YangWu23} and \citet{vS24}.
\citet{YangWu23} presented (1) a general categorical framework for defining
higher-order algebraic effects (with equations) as algebraic theories of
operations on monoids, and (2) constructions for combining handlers in a
modular way.
In contrast, the present paper studies a \emph{programming language} \Fha{} for
(equation-less) higher-order algebraic effects.
\varcitet{YangWu23}{'s} constructions of modular handlers can be readily
used in \Fha{} but they are not baked in the language.
Another difference is that \citet{YangWu23} worked at the level of abstraction
of \emph{monoids in monoidal categories}, encompassing not just monads but also
applicative functors, graded monads, etc. 

\varcitet{vS24}{'s} work is also a general framework for (equation-less)
higher-order algebraic effects (following \hyperref[approach:kont]{approach
ii}), presented as a Haskell library.
Their paper provides a plethora of interesting concrete examples of
higher-order effects and handlers, which we did not explore in this paper but
in principle can be programmed in \Fha{} too.

\smallskip

\newcommand{\LamSC}{\lambda_{\mathit{sc}}}

$\bullet{}$
The papers discussed above are all about category theory or programming libraries
for higher-order effects.
To our knowledge, the only account of standalone programming languages for
higher-order effect handlers so far is \varcitet{BvTT2024}{'s} work on
$\lambda_{\mathit{sc}}$, \emph{calculus for scoped effects and handlers}.
The key differences between their $\lambda_{\mathit{sc}}$ and our $\Fha$
are that 
(1) $\LamSC$ is designed for algebraic and scoped operations, while $\Fha$ supports arbitrary higher-order operations that can be
given as higher-order functor;
(2) $\LamSC$ uses operations with a continuation argument together with mere type constructors
as handlers (\hyperref[approach:kont]{approach (ii)} in \Cref{sec:intro}), while we use raw monads as handlers;
(3) $\LamSC$ has a baked-in type-and-effect system, while \Fha{} 
leaves it as a user-level construct (\Cref{el:mod:hdl:in:fha}).
Also, the work on $\LamSC$ focused more on the user-facing aspects of the language,
such as type inference, whereas 
we have focused on the meta-theoretic properties of \Fha{} -- an equational
theory validated by the `compiler' (the realizability model), canonicity, 
and parametricity.

\section{Future Prospects}

In this paper, we defined System \Fha{}, an extension
of System \Fom{} with (equation-less) higher-order algebraic effects. 
We gave a denotational model of it using realizability and proved the canonicity
of closed terms using synthetic Tait computability.
A further extension with general recursion was introduced and was modelled
using synthetic domain theory. 
Future work abounds:
\begin{enumerate}[wide=0.5\parindent,listparindent=0pt,topsep=0pt,itemsep=0pt]
\item
We should be able to prove normalisation of open \Fha-terms following the
lines of \citet{sterling:2021:thesis}.
A subtlety is that, in order to interpret \Fha-types, we will need in
\TTSTC{} a universe $U$ closed under both (1) impredicative $\Pi$-types and (2)
the type of normal forms.  
The universe $\COmega$ used in the canonicity proof will not do anymore because
it does not contain the type of normal forms.
To have such a universe $U$, we shall replace the category of sets used in the
gluing proof with the category of assemblies or realizability toposes suitably.

\item
The language \Fha{} is a core calculus. 
Although important features such as effect systems and
modular models may be implemented as libraries,
for practical use they should be supported in a more seamless way, such as
by elaboration or by directly baking it into the language.

\item
Only monadic computations are considered in \Fha{} for simplicity.
Generalising \Fha{} from monads to arbitrary user-defined
monoidal structures should be very useful. 
There does not seem to be a theoretical obstacle, but designing a
user-friendly syntax may be challenging.

\item
Efficiency of implementations is also an interesting aspect.
Note that a continuation-passing style translation for \Fha-computations can be
readily extracted from the realizability model of \Fha, 
but it should be possible to further optimise out all the overhead of effect
handlers for statically known computations and handlers by
\emph{meta-programming}.

\item 
The biggest limitation of \Fha{} is probably that equational axioms on effects
are not formally supported, since we did not include dependent types, in particular
equality/identity types.
Adding intensional identity types \ensuremath{\Conid{Id}} to \Fha{} is straightforward.
With \ensuremath{\Conid{Id}} we can then define in \Fha{} \emph{law-abiding} functors, monads,
higher-order functors, equational systems, etc.  
These allow us to ensure that a user-defined monad to be used with \ensuremath{\Varid{hdl}} must
satisfy the equations associated to the operations.
However, what is challenging is adding equalities from the algebraic theory of
effectful operations to the computation judgements without breaking the
canonicity of the type theory.
This difficulty is the same as that of adding \emph{quotient types} to type
theories without breaking canonicity, which is possible in
\emph{observational type theory} \citep{Pujet2022} and \emph{cubical type theory} \citep{Coquand2018}.
Apart from the difficulty with quotients, having general recursion,
impredicative polymorphism, and dependent types all together is not
straightforward either.

\item
In the design of \Fha{}, monadic laws of computations are retained by
sacrificing the commutativity of \ensuremath{\lhskeyword{let}}-binding and handlers.
We do not think this is the only reasonable choice.
For example, if our programming language has dependent types and we ask for
users to supply monads satisfying the monadic laws \emph{propositionally}, can
we make the computational judgements satisfy all the laws \emph{judgementally}? 
We hope that this paper at least encourages further discussion on this design space.
\end{enumerate}

\bibliography{../references}


\begin{thebibliography}{95}


\ifx \showCODEN    \undefined \def \showCODEN     #1{\unskip}     \fi
\ifx \showISBNx    \undefined \def \showISBNx     #1{\unskip}     \fi
\ifx \showISBNxiii \undefined \def \showISBNxiii  #1{\unskip}     \fi
\ifx \showISSN     \undefined \def \showISSN      #1{\unskip}     \fi
\ifx \showLCCN     \undefined \def \showLCCN      #1{\unskip}     \fi
\ifx \shownote     \undefined \def \shownote      #1{#1}          \fi
\ifx \showarticletitle \undefined \def \showarticletitle #1{#1}   \fi
\ifx \showURL      \undefined \def \showURL       {\relax}        \fi
\providecommand\bibfield[2]{#2}
\providecommand\bibinfo[2]{#2}
\providecommand\natexlab[1]{#1}
\providecommand\showeprint[2][]{arXiv:#2}

\bibitem[{Agda Developers}(2025)]%
        {Agda_Developers_Agda}
\bibfield{author}{\bibinfo{person}{{Agda Developers}}.}
  \bibinfo{year}{2025}\natexlab{}.
\newblock \bibinfo{booktitle}{\emph{{Agda}}}.
\newblock
\urldef\tempurl%
\url{https://agda.readthedocs.io/}
\showURL{%
\tempurl}


\bibitem[Altenkirch et~al\mbox{.}(1995)]%
        {Thorsten1995}
\bibfield{author}{\bibinfo{person}{Thorsten Altenkirch},
  \bibinfo{person}{Martin Hofmann}, {and} \bibinfo{person}{Thomas Streicher}.}
  \bibinfo{year}{1995}\natexlab{}.
\newblock \showarticletitle{Categorical reconstruction of a reduction free
  normalization proof}. In \bibinfo{booktitle}{\emph{Category Theory and
  Computer Science}}, \bibfield{editor}{\bibinfo{person}{David Pitt},
  \bibinfo{person}{David~E. Rydeheard}, {and} \bibinfo{person}{Peter
  Johnstone}} (Eds.). \bibinfo{publisher}{Springer Berlin Heidelberg},
  \bibinfo{pages}{182--199}.
\newblock


\bibitem[Awodey et~al\mbox{.}(2018)]%
        {Awodey_2018}
\bibfield{author}{\bibinfo{person}{Steve Awodey}, \bibinfo{person}{Jonas Frey},
  {and} \bibinfo{person}{Sam Speight}.} \bibinfo{year}{2018}\natexlab{}.
\newblock \showarticletitle{Impredicative Encodings of (Higher) Inductive
  Types}. In \bibinfo{booktitle}{\emph{Proceedings of the 33rd Annual ACM/IEEE
  Symposium on Logic in Computer Science}} (Oxford, United Kingdom)
  \emph{(\bibinfo{series}{LICS '18})}. \bibinfo{publisher}{Association for
  Computing Machinery}, \bibinfo{pages}{76–85}.
\newblock
\href{https://doi.org/10.1145/3209108.3209130}{doi:\nolinkurl{10.1145/3209108.3209130}}


\bibitem[Bach~Poulsen and van~der Rest(2023)]%
        {Bach_Poulsen_vd_Rest_2023}
\bibfield{author}{\bibinfo{person}{Casper Bach~Poulsen} {and}
  \bibinfo{person}{Cas van~der Rest}.} \bibinfo{year}{2023}\natexlab{}.
\newblock \showarticletitle{Hefty Algebras: Modular Elaboration of Higher-Order
  Algebraic Effects}.
\newblock \bibinfo{journal}{\emph{Proc. ACM Program. Lang.}}
  \bibinfo{volume}{7}, \bibinfo{number}{POPL} (\bibinfo{year}{2023}).
\newblock
\href{https://doi.org/10.1145/3571255}{doi:\nolinkurl{10.1145/3571255}}


\bibitem[Bauer(2022)]%
        {Bauer2022}
\bibfield{author}{\bibinfo{person}{Andrej Bauer}.}
  \bibinfo{year}{2022}\natexlab{}.
\newblock \bibinfo{title}{Notes on Realizability}.
\newblock
\urldef\tempurl%
\url{https://github.com/andrejbauer/notes-on-realizability}
\showURL{%
\tempurl}
\newblock
\shownote{Lecture notes for the Midlands Graduate School 2022 lecture series on
  realizability}.


\bibitem[Bauer and Pretnar(2014)]%
        {BauerPretnar2014}
\bibfield{author}{\bibinfo{person}{Andrej Bauer} {and} \bibinfo{person}{Matija
  Pretnar}.} \bibinfo{year}{2014}\natexlab{}.
\newblock \showarticletitle{An Effect System for Algebraic Effects and
  Handlers}.
\newblock \bibinfo{journal}{\emph{{Logical Methods in Computer Science}}}
  \bibinfo{volume}{{Volume 10, Issue 4}} (\bibinfo{year}{2014}).
\newblock
\href{https://doi.org/10.2168/LMCS-10(4:9)2014}{doi:\nolinkurl{10.2168/LMCS-10(4:9)2014}}


\bibitem[Bergstra and Klop(1985)]%
        {Bergstra_1985}
\bibfield{author}{\bibinfo{person}{J.A. Bergstra} {and} \bibinfo{person}{J.W.
  Klop}.} \bibinfo{year}{1985}\natexlab{}.
\newblock \showarticletitle{Algebra of communicating processes with
  abstraction}.
\newblock \bibinfo{journal}{\emph{Theoretical Computer Science}}
  \bibinfo{volume}{37} (\bibinfo{year}{1985}), \bibinfo{pages}{77–121}.
\newblock
\href{https://doi.org/10.1016/0304-3975(85)90088-x}{doi:\nolinkurl{10.1016/0304-3975(85)90088-x}}


\bibitem[Borceux(1994)]%
        {borceux1994handbookVol3}
\bibfield{author}{\bibinfo{person}{Francis Borceux}.}
  \bibinfo{year}{1994}\natexlab{}.
\newblock \bibinfo{booktitle}{\emph{Handbook of Categorical Algebra: Volume 3,
  Sheaf Theory}}. Vol.~\bibinfo{volume}{3}.
\newblock \bibinfo{publisher}{Cambridge University Press}.
\newblock


\bibitem[Bosman et~al\mbox{.}(2024)]%
        {BvTT2024}
\bibfield{author}{\bibinfo{person}{Roger Bosman}, \bibinfo{person}{Birthe {van
  den Berg}}, \bibinfo{person}{Wenhao Tang}, {and} \bibinfo{person}{Tom
  Schrijvers}.} \bibinfo{year}{2024}\natexlab{}.
\newblock \showarticletitle{A Calculus for Scoped Effects \& Handlers}.
\newblock \bibinfo{journal}{\emph{Logical Methods in Computer Science}}
  \bibinfo{volume}{Volume 20, Issue 4}, Article \bibinfo{articleno}{17}
  (\bibinfo{year}{2024}).
\newblock
\showISSN{1860-5974}
\href{https://doi.org/10.46298/lmcs-20(4:17)2024}{doi:\nolinkurl{10.46298/lmcs-20(4:17)2024}}


\bibitem[Coquand et~al\mbox{.}(1989)]%
        {Coquand1989}
\bibfield{author}{\bibinfo{person}{Thierry Coquand}, \bibinfo{person}{Carl
  Gunter}, {and} \bibinfo{person}{Glynn Winskel}.}
  \bibinfo{year}{1989}\natexlab{}.
\newblock \showarticletitle{Domain theoretic models of polymorphism}.
\newblock \bibinfo{journal}{\emph{Information and Computation}}
  \bibinfo{volume}{81}, \bibinfo{number}{2} (\bibinfo{year}{1989}),
  \bibinfo{pages}{123--167}.
\newblock
\href{https://doi.org/10.1016/0890-5401(89)90068-0}{doi:\nolinkurl{10.1016/0890-5401(89)90068-0}}


\bibitem[Coquand et~al\mbox{.}(2018)]%
        {Coquand2018}
\bibfield{author}{\bibinfo{person}{Thierry Coquand}, \bibinfo{person}{Simon
  Huber}, {and} \bibinfo{person}{Anders M\"{o}rtberg}.}
  \bibinfo{year}{2018}\natexlab{}.
\newblock \showarticletitle{On Higher Inductive Types in Cubical Type Theory}.
  In \bibinfo{booktitle}{\emph{Proceedings of the 33rd Annual ACM/IEEE
  Symposium on Logic in Computer Science}} (Oxford, United Kingdom)
  \emph{(\bibinfo{series}{LICS '18})}. \bibinfo{publisher}{Association for
  Computing Machinery}, \bibinfo{address}{New York, NY, USA},
  \bibinfo{pages}{255–264}.
\newblock
\showISBNx{9781450355834}
\href{https://doi.org/10.1145/3209108.3209197}{doi:\nolinkurl{10.1145/3209108.3209197}}


\bibitem[Crole(1994)]%
        {Crole_1994}
\bibfield{author}{\bibinfo{person}{Roy~L. Crole}.}
  \bibinfo{year}{1994}\natexlab{}.
\newblock \bibinfo{booktitle}{\emph{Categories for Types}}.
\newblock \bibinfo{publisher}{Cambridge University Press}.
\newblock


\bibitem[de~Jong(2024)]%
        {deJong2024}
\bibfield{author}{\bibinfo{person}{Tom de Jong}.}
  \bibinfo{year}{2024}\natexlab{}.
\newblock \bibinfo{title}{Categorical Realizability}.
\newblock
\urldef\tempurl%
\url{https://github.com/andrejbauer/notes-on-realizability}
\showURL{%
\tempurl}
\newblock
\shownote{Lecture notes for the course on Categorical Realizability at the
  Midlands Graduate School 2024}.


\bibitem[Dunfield and Krishnaswami(2013)]%
        {Dunfield2013}
\bibfield{author}{\bibinfo{person}{Jana Dunfield} {and}
  \bibinfo{person}{Neelakantan~R. Krishnaswami}.}
  \bibinfo{year}{2013}\natexlab{}.
\newblock \showarticletitle{Complete and easy bidirectional typechecking for
  higher-rank polymorphism}. In \bibinfo{booktitle}{\emph{Proceedings of the
  18th ACM SIGPLAN International Conference on Functional Programming}}
  (Boston, Massachusetts, USA) \emph{(\bibinfo{series}{ICFP '13})}.
  \bibinfo{publisher}{Association for Computing Machinery},
  \bibinfo{address}{New York, NY, USA}, \bibinfo{pages}{429–442}.
\newblock
\showISBNx{9781450323260}
\href{https://doi.org/10.1145/2500365.2500582}{doi:\nolinkurl{10.1145/2500365.2500582}}


\bibitem[Fiore(2022)]%
        {fiore2022semantic}
\bibfield{author}{\bibinfo{person}{Marcelo Fiore}.}
  \bibinfo{year}{2022}\natexlab{}.
\newblock \bibinfo{title}{Semantic Analysis of Normalisation by Evaluation for
  Typed Lambda Calculus}.
\newblock
\showeprint[arxiv]{2207.08777}~[cs.LO]
\newblock
\shownote{\doi{10.48550/arXiv.2207.08777}}.


\bibitem[Fiore and Plotkin(1997)]%
        {Fiore_extension_1997}
\bibfield{author}{\bibinfo{person}{Marcelo Fiore} {and} \bibinfo{person}{Gordon
  Plotkin}.} \bibinfo{year}{1997}\natexlab{}.
\newblock \showarticletitle{An extension of models of {Axiomatic} {Domain}
  {Theory} to models of {Synthetic} {Domain} {Theory}}.
\newblock In \bibinfo{booktitle}{\emph{Computer {Science} {Logic}}},
  \bibfield{editor}{\bibinfo{person}{Gerhard Goos}, \bibinfo{person}{Juris
  Hartmanis}, \bibinfo{person}{Jan Leeuwen}, \bibinfo{person}{Dirk Dalen},
  {and} \bibinfo{person}{Marc Bezem}} (Eds.). Vol.~\bibinfo{volume}{1258}.
  \bibinfo{publisher}{Springer Berlin Heidelberg}, \bibinfo{pages}{129--149}.
\newblock
\href{https://doi.org/10.1007/3-540-63172-0_36}{doi:\nolinkurl{10.1007/3-540-63172-0_36}}


\bibitem[Fiore and Rosolini(1997)]%
        {Fiore_Rosolini_1997}
\bibfield{author}{\bibinfo{person}{Marcelo Fiore} {and}
  \bibinfo{person}{Giuseppe Rosolini}.} \bibinfo{year}{1997}\natexlab{}.
\newblock \showarticletitle{Two models of synthetic domain theory}.
\newblock \bibinfo{journal}{\emph{Journal of Pure and Applied Algebra}}
  \bibinfo{volume}{116}, \bibinfo{number}{1–3} (\bibinfo{year}{1997}),
  \bibinfo{pages}{151–162}.
\newblock
\href{https://doi.org/10.1016/S0022-4049(96)00164-8}{doi:\nolinkurl{10.1016/S0022-4049(96)00164-8}}


\bibitem[Freyd(1978)]%
        {freyd1978proving}
\bibfield{author}{\bibinfo{person}{Peter Freyd}.}
  \bibinfo{year}{1978}\natexlab{}.
\newblock \showarticletitle{On proving that 1 is an indecomposable projective
  in various free categories}.
\newblock \bibinfo{journal}{\emph{Manuscript}} (\bibinfo{year}{1978}).
\newblock


\bibitem[Frumin et~al\mbox{.}(2024)]%
        {Frumin2024}
\bibfield{author}{\bibinfo{person}{Dan Frumin}, \bibinfo{person}{Amin Timany},
  {and} \bibinfo{person}{Lars Birkedal}.} \bibinfo{year}{2024}\natexlab{}.
\newblock \showarticletitle{Modular Denotational Semantics for Effects with
  Guarded Interaction Trees}.
\newblock \bibinfo{journal}{\emph{Proc. ACM Program. Lang.}}
  \bibinfo{volume}{8}, \bibinfo{number}{POPL} (\bibinfo{year}{2024}).
\newblock
\href{https://doi.org/10.1145/3632854}{doi:\nolinkurl{10.1145/3632854}}


\bibitem[Gibbons and Hinze(2011)]%
        {Gibbons_Hinze_2011}
\bibfield{author}{\bibinfo{person}{Jeremy Gibbons} {and} \bibinfo{person}{Ralf
  Hinze}.} \bibinfo{year}{2011}\natexlab{}.
\newblock \showarticletitle{Just do it: simple monadic equational reasoning}.
  In \bibinfo{booktitle}{\emph{Proceedings of the 16th ACM SIGPLAN
  International Conference on Functional Programming}} (Tokyo, Japan)
  \emph{(\bibinfo{series}{ICFP '11})}. \bibinfo{publisher}{Association for
  Computing Machinery}, \bibinfo{address}{New York, NY, USA},
  \bibinfo{pages}{2–14}.
\newblock
\href{https://doi.org/10.1145/2034773.2034777}{doi:\nolinkurl{10.1145/2034773.2034777}}


\bibitem[Girard(1972)]%
        {Girard1972}
\bibfield{author}{\bibinfo{person}{Jean-Yves Girard}.}
  \bibinfo{year}{1972}\natexlab{}.
\newblock \emph{\bibinfo{title}{Interpr\'{e}tation fonctionelle et élimination
  des coupures de l'arithm\'{e}tique d'ordre sup\'{e}rieur}}.
\newblock Th\`{e}se d'\'{E}tat. \bibinfo{school}{Universit\'{e} Paris VII}.
\newblock


\bibitem[Girard(1986)]%
        {Girard1986}
\bibfield{author}{\bibinfo{person}{Jean-Yves Girard}.}
  \bibinfo{year}{1986}\natexlab{}.
\newblock \showarticletitle{The {System F} of variable types, fifteen years
  later}.
\newblock \bibinfo{journal}{\emph{Theoretical Computer Science}}
  \bibinfo{volume}{45} (\bibinfo{year}{1986}), \bibinfo{pages}{159--192}.
\newblock
\href{https://doi.org/10.1016/0304-3975(86)90044-7}{doi:\nolinkurl{10.1016/0304-3975(86)90044-7}}


\bibitem[Girard(1989)]%
        {Girard_1989}
\bibfield{author}{\bibinfo{person}{Jean-Yves Girard}.}
  \bibinfo{year}{1989}\natexlab{}.
\newblock \bibinfo{booktitle}{\emph{Proofs and types}}.
\newblock \bibinfo{publisher}{Cambridge University Press}.
\newblock


\bibitem[Gratzer et~al\mbox{.}(2022)]%
        {gratzer2022strict}
\bibfield{author}{\bibinfo{person}{Daniel Gratzer}, \bibinfo{person}{Michael
  Shulman}, {and} \bibinfo{person}{Jonathan Sterling}.}
  \bibinfo{year}{2022}\natexlab{}.
\newblock \bibinfo{title}{Strict universes for {Grothendieck} topoi}.
\newblock
\showeprint{2202.12012}
\urldef\tempurl%
\url{https://arxiv.org/abs/2202.12012}
\showURL{%
\tempurl}


\bibitem[Grodin et~al\mbox{.}(2024)]%
        {Grodin_Niu_Sterling_Harper_2024}
\bibfield{author}{\bibinfo{person}{Harrison Grodin}, \bibinfo{person}{Yue Niu},
  \bibinfo{person}{Jonathan Sterling}, {and} \bibinfo{person}{Robert Harper}.}
  \bibinfo{year}{2024}\natexlab{}.
\newblock \showarticletitle{Decalf: A Directed, Effectful Cost-Aware Logical
  Framework}.
\newblock \bibinfo{journal}{\emph{Proceedings of the ACM on Programming
  Languages}} \bibinfo{volume}{8}, \bibinfo{number}{POPL}
  (\bibinfo{year}{2024}), \bibinfo{pages}{273–301}.
\newblock
\href{https://doi.org/10.1145/3632852}{doi:\nolinkurl{10.1145/3632852}}


\bibitem[Harper(2016)]%
        {Harper2016}
\bibfield{author}{\bibinfo{person}{Robert Harper}.}
  \bibinfo{year}{2016}\natexlab{}.
\newblock \bibinfo{booktitle}{\emph{Practical Foundations for Programming
  Languages} (\bibinfo{edition}{2nd} ed.)}.
\newblock \bibinfo{publisher}{Cambridge University Press},
  \bibinfo{address}{Cambridge}.
\newblock


\bibitem[Harper et~al\mbox{.}(1993)]%
        {Harper1993}
\bibfield{author}{\bibinfo{person}{Robert Harper}, \bibinfo{person}{Furio
  Honsell}, {and} \bibinfo{person}{Gordon Plotkin}.}
  \bibinfo{year}{1993}\natexlab{}.
\newblock \showarticletitle{A framework for defining logics}.
\newblock \bibinfo{journal}{\emph{J. ACM}} \bibinfo{volume}{40},
  \bibinfo{number}{1} (\bibinfo{year}{1993}), \bibinfo{pages}{143–184}.
\newblock
\href{https://doi.org/10.1145/138027.138060}{doi:\nolinkurl{10.1145/138027.138060}}


\bibitem[Hofmann(1997)]%
        {Hofmann_1997}
\bibfield{author}{\bibinfo{person}{Martin Hofmann}.}
  \bibinfo{year}{1997}\natexlab{}.
\newblock \showarticletitle{Syntax and Semantics of Dependent Types}. In
  \bibinfo{booktitle}{\emph{Semantics and Logics of Computation}}
  \emph{(\bibinfo{series}{Publications of the Newton Institute})},
  \bibfield{editor}{\bibinfo{person}{{Andrew M.} Pitts} {and}
  \bibinfo{person}{P.Editors Dybjer}} (Eds.). \bibinfo{publisher}{Cambridge
  University Press}, \bibinfo{pages}{79–130}.
\newblock
\href{https://doi.org/10.1017/CBO9780511526619.004}{doi:\nolinkurl{10.1017/CBO9780511526619.004}}


\bibitem[Huang(2023)]%
        {huang2023synthetictaitcomputabilityhard}
\bibfield{author}{\bibinfo{person}{Xu Huang}.} \bibinfo{year}{2023}\natexlab{}.
\newblock \bibinfo{title}{Synthetic Tait Computability the Hard Way}.
\newblock
\showeprint[arxiv]{2310.02051}~[cs.LO]
\urldef\tempurl%
\url{https://arxiv.org/abs/2310.02051}
\showURL{%
\tempurl}


\bibitem[Hyland(1991)]%
        {Hyland_first_1991}
\bibfield{author}{\bibinfo{person}{J.M.E. Hyland}.}
  \bibinfo{year}{1991}\natexlab{}.
\newblock \showarticletitle{First steps in synthetic domain theory}.
\newblock In \bibinfo{booktitle}{\emph{Category Theory}},
  \bibfield{editor}{\bibinfo{person}{Aurelio Carboni},
  \bibinfo{person}{Maria~Cristina Pedicchio}, {and} \bibinfo{person}{Guiseppe
  Rosolini}} (Eds.). Vol.~\bibinfo{volume}{1488}. \bibinfo{publisher}{Springer
  Berlin Heidelberg}, \bibinfo{pages}{131--156}.
\newblock
\href{https://doi.org/10.1007/BFb0084217}{doi:\nolinkurl{10.1007/BFb0084217}}


\bibitem[Jaskelioff(2009)]%
        {Jaskelioff09}
\bibfield{author}{\bibinfo{person}{Mauro Jaskelioff}.}
  \bibinfo{year}{2009}\natexlab{}.
\newblock \showarticletitle{Modular Monad Transformers}. In
  \bibinfo{booktitle}{\emph{Programming Languages and Systems}},
  \bibfield{editor}{\bibinfo{person}{Giuseppe Castagna}} (Ed.).
  \bibinfo{publisher}{Springer Berlin Heidelberg}, \bibinfo{address}{Berlin,
  Heidelberg}, \bibinfo{pages}{64--79}.
\newblock
\href{https://doi.org/10.1007/978-3-642-00590-9\_6}{doi:\nolinkurl{10.1007/978-3-642-00590-9\_6}}


\bibitem[Jaskelioff and Moggi(2010)]%
        {MM10Mon}
\bibfield{author}{\bibinfo{person}{Mauro Jaskelioff} {and}
  \bibinfo{person}{Eugenio Moggi}.} \bibinfo{year}{2010}\natexlab{}.
\newblock \showarticletitle{Monad Transformers as Monoid Transformers}.
\newblock \bibinfo{journal}{\emph{Theoretical Computer Science}}
  \bibinfo{volume}{411} (\bibinfo{year}{2010}), \bibinfo{pages}{4441--4466}.
\newblock
\href{https://doi.org/10.1016/j.tcs.2010.09.011}{doi:\nolinkurl{10.1016/j.tcs.2010.09.011}}


\bibitem[Jibladze(1997)]%
        {Jibladze_1997}
\bibfield{author}{\bibinfo{person}{Mamuka Jibladze}.}
  \bibinfo{year}{1997}\natexlab{}.
\newblock \showarticletitle{A presentation of the initial lift-algebra}.
\newblock \bibinfo{journal}{\emph{Journal of Pure and Applied Algebra}}
  \bibinfo{volume}{116}, \bibinfo{number}{1–3} (\bibinfo{year}{1997}),
  \bibinfo{pages}{185–198}.
\newblock
\href{https://doi.org/10.1016/S0022-4049(96)00108-9}{doi:\nolinkurl{10.1016/S0022-4049(96)00108-9}}


\bibitem[Jones et~al\mbox{.}(2007)]%
        {Jones_2007}
\bibfield{author}{\bibinfo{person}{Simon~Peyton Jones},
  \bibinfo{person}{Dimitrios Vytiniotis}, \bibinfo{person}{Stephanie Weirich},
  {and} \bibinfo{person}{Mark Shields}.} \bibinfo{year}{2007}\natexlab{}.
\newblock \showarticletitle{Practical type inference for arbitrary-rank types}.
\newblock \bibinfo{journal}{\emph{Journal of Functional Programming}}
  \bibinfo{volume}{17}, \bibinfo{number}{1} (\bibinfo{date}{Jan.}
  \bibinfo{year}{2007}), \bibinfo{pages}{1–82}.
\newblock
\showISSN{1469-7653}
\href{https://doi.org/10.1017/s0956796806006034}{doi:\nolinkurl{10.1017/s0956796806006034}}


\bibitem[Kammar and Plotkin(2012)]%
        {KammarPlotkin2012}
\bibfield{author}{\bibinfo{person}{Ohad Kammar} {and} \bibinfo{person}{Gordon
  Plotkin}.} \bibinfo{year}{2012}\natexlab{}.
\newblock \showarticletitle{Algebraic foundations for effect-dependent
  optimisations}. In \bibinfo{booktitle}{\emph{Proceedings of the 39th annual
  ACM SIGPLAN-SIGACT symposium on Principles of programming languages}}.
  \bibinfo{publisher}{ACM}, \bibinfo{pages}{349--360}.
\newblock
\href{https://doi.org/10.1145/2103656.2103698}{doi:\nolinkurl{10.1145/2103656.2103698}}


\bibitem[Katsumata et~al\mbox{.}(2018)]%
        {Katsumata_Sato_Uustalu_2018}
\bibfield{author}{\bibinfo{person}{{Shin-ya} Katsumata},
  \bibinfo{person}{Tetsuya Sato}, {and} \bibinfo{person}{Tarmo Uustalu}.}
  \bibinfo{year}{2018}\natexlab{}.
\newblock \showarticletitle{Codensity Lifting of Monads and its Dual}.
\newblock \bibinfo{journal}{\emph{{Logical Methods in Computer Science}}}
  \bibinfo{volume}{{Volume 14, Issue 4}} (\bibinfo{year}{2018}).
\newblock
\href{https://doi.org/10.23638/LMCS-14(4:6)2018}{doi:\nolinkurl{10.23638/LMCS-14(4:6)2018}}


\bibitem[Katsumata(2005)]%
        {Katsumata_2005}
\bibfield{author}{\bibinfo{person}{Shin-ya Katsumata}.}
  \bibinfo{year}{2005}\natexlab{}.
\newblock \bibinfo{booktitle}{\emph{A Semantic Formulation of
  $\top\top$-Lifting and Logical Predicates for Computational Metalanguage}}.
  \bibinfo{series}{Lecture Notes in Computer Science},
  Vol.~\bibinfo{volume}{3634}.
\newblock \bibinfo{publisher}{Springer Berlin Heidelberg},
  \bibinfo{pages}{87–102}.
\newblock
\href{https://doi.org/10.1007/11538363_8}{doi:\nolinkurl{10.1007/11538363_8}}


\bibitem[Kiselyov et~al\mbox{.}(2021)]%
        {Kiselyov_Mu_Sabry_2021}
\bibfield{author}{\bibinfo{person}{Oleg Kiselyov}, \bibinfo{person}{Shin-Cheng
  Mu}, {and} \bibinfo{person}{Amr Sabry}.} \bibinfo{year}{2021}\natexlab{}.
\newblock \showarticletitle{Not by equations alone}.
\newblock \bibinfo{journal}{\emph{Journal of Functional Programming}}
  \bibinfo{volume}{31} (\bibinfo{year}{2021}).
\newblock
\href{https://doi.org/10.1017/S0956796820000271}{doi:\nolinkurl{10.1017/S0956796820000271}}


\bibitem[Lassen(1998)]%
        {Lassen_1998}
\bibfield{author}{\bibinfo{person}{Søren~Bøgh Lassen}.}
  \bibinfo{year}{1998}\natexlab{}.
\newblock \emph{\bibinfo{title}{Relational {Reasoning} about {Functions} and
  {Nondeterminism}}}.
\newblock \bibinfo{thesistype}{Ph.\,D. Dissertation}. \bibinfo{school}{Aarhus
  University}.
\newblock
\urldef\tempurl%
\url{https://www.brics.dk/DS/98/2/}
\showURL{%
\tempurl}
\newblock
\shownote{Series: BRICS Dissertation Series}.


\bibitem[Leijen(2008)]%
        {Daan2008}
\bibfield{author}{\bibinfo{person}{Daan Leijen}.}
  \bibinfo{year}{2008}\natexlab{}.
\newblock \showarticletitle{HMF: simple type inference for first-class
  polymorphism}. In \bibinfo{booktitle}{\emph{Proceedings of the 13th ACM
  SIGPLAN International Conference on Functional Programming}} (Victoria, BC,
  Canada) \emph{(\bibinfo{series}{ICFP '08})}. \bibinfo{publisher}{Association
  for Computing Machinery}, \bibinfo{address}{New York, NY, USA},
  \bibinfo{pages}{283–294}.
\newblock
\showISBNx{9781595939197}
\href{https://doi.org/10.1145/1411204.1411245}{doi:\nolinkurl{10.1145/1411204.1411245}}


\bibitem[Levy et~al\mbox{.}(2003)]%
        {Levy2003}
\bibfield{author}{\bibinfo{person}{Paul~Blain Levy}, \bibinfo{person}{John
  Power}, {and} \bibinfo{person}{Hayo Thielecke}.}
  \bibinfo{year}{2003}\natexlab{}.
\newblock \showarticletitle{Modelling environments in call-by-value programming
  languages}.
\newblock \bibinfo{journal}{\emph{Information and Computation}}
  \bibinfo{volume}{185}, \bibinfo{number}{2} (\bibinfo{year}{2003}),
  \bibinfo{pages}{182--210}.
\newblock
\href{https://doi.org/10.1016/S0890-5401(03)00088-9}{doi:\nolinkurl{10.1016/S0890-5401(03)00088-9}}


\bibitem[Lindley and Stark(2005)]%
        {Lindley_Stark_2005}
\bibfield{author}{\bibinfo{person}{Sam Lindley} {and} \bibinfo{person}{Ian
  Stark}.} \bibinfo{year}{2005}\natexlab{}.
\newblock \showarticletitle{Reducibility and $\top\top$-Lifting for Computation
  Types}. In \bibinfo{booktitle}{\emph{Typed Lambda Calculi and Applications}}
  \emph{(\bibinfo{series}{Lecture Notes in Computer Science},
  Vol.~\bibinfo{volume}{3461})}, \bibfield{editor}{\bibinfo{person}{Paweł
  Urzyczyn}} (Ed.). \bibinfo{publisher}{Springer Berlin Heidelberg},
  \bibinfo{pages}{262–277}.
\newblock
\href{https://doi.org/10.1007/11417170_20}{doi:\nolinkurl{10.1007/11417170_20}}


\bibitem[Longley and Simpson(1997)]%
        {Longley_Simpson_1997}
\bibfield{author}{\bibinfo{person}{John Longley} {and} \bibinfo{person}{Alex
  Simpson}.} \bibinfo{year}{1997}\natexlab{}.
\newblock \showarticletitle{A uniform approach to domain theory in
  realizability models}.
\newblock \bibinfo{journal}{\emph{Mathematical Structures in Computer Science}}
  \bibinfo{volume}{7}, \bibinfo{number}{5} (\bibinfo{year}{1997}),
  \bibinfo{pages}{469–505}.
\newblock
\href{https://doi.org/10.1017/S0960129597002387}{doi:\nolinkurl{10.1017/S0960129597002387}}


\bibitem[Longley(1995)]%
        {Longley_1995}
\bibfield{author}{\bibinfo{person}{John~R. Longley}.}
  \bibinfo{year}{1995}\natexlab{}.
\newblock \emph{\bibinfo{title}{Realizability Toposes and Language Semantics}}.
\newblock \bibinfo{thesistype}{Ph.\,D. Dissertation}.
  \bibinfo{school}{University of Edinburgh}.
\newblock
\urldef\tempurl%
\url{https://era.ed.ac.uk/handle/1842/402}
\showURL{%
\tempurl}


\bibitem[Lucassen and Gifford(1988)]%
        {Lucassen1988}
\bibfield{author}{\bibinfo{person}{J.~M. Lucassen} {and} \bibinfo{person}{D.~K.
  Gifford}.} \bibinfo{year}{1988}\natexlab{}.
\newblock \showarticletitle{Polymorphic effect systems}. In
  \bibinfo{booktitle}{\emph{Proceedings of the 15th ACM SIGPLAN-SIGACT
  Symposium on Principles of Programming Languages}} (San Diego, California,
  USA) \emph{(\bibinfo{series}{POPL '88})}. \bibinfo{publisher}{Association for
  Computing Machinery}, \bibinfo{pages}{47–57}.
\newblock
\href{https://doi.org/10.1145/73560.73564}{doi:\nolinkurl{10.1145/73560.73564}}


\bibitem[Luo(1994)]%
        {Luo1994}
\bibfield{author}{\bibinfo{person}{Zhaohui Luo}.}
  \bibinfo{year}{1994}\natexlab{}.
\newblock \bibinfo{booktitle}{\emph{Computation and reasoning: a type theory
  for computer science}}.
\newblock \bibinfo{publisher}{Oxford University Press}.
\newblock


\bibitem[Luo et~al\mbox{.}(2013)]%
        {Luo_2013}
\bibfield{author}{\bibinfo{person}{Zhaohui Luo}, \bibinfo{person}{Sergei
  Soloviev}, {and} \bibinfo{person}{Tao Xue}.} \bibinfo{year}{2013}\natexlab{}.
\newblock \showarticletitle{Coercive subtyping: Theory and implementation}.
\newblock \bibinfo{journal}{\emph{Information and Computation}}
  \bibinfo{volume}{223} (\bibinfo{year}{2013}), \bibinfo{pages}{18–42}.
\newblock
\showISSN{0890-5401}
\href{https://doi.org/10.1016/j.ic.2012.10.020}{doi:\nolinkurl{10.1016/j.ic.2012.10.020}}


\bibitem[Mac~Lane and Moerdijk(1994)]%
        {MacLane1994}
\bibfield{author}{\bibinfo{person}{Saunders Mac~Lane} {and}
  \bibinfo{person}{Ieke Moerdijk}.} \bibinfo{year}{1994}\natexlab{}.
\newblock \bibinfo{booktitle}{\emph{Sheaves in Geometry and Logic: A First
  Introduction to Topos Theory}}.
\newblock \bibinfo{publisher}{Springer New York}.
\newblock
\href{https://doi.org/10.1007/978-1-4612-0927-0}{doi:\nolinkurl{10.1007/978-1-4612-0927-0}}


\bibitem[Martin{-}L\"{o}f(1975a)]%
        {martinlof1975models}
\bibfield{author}{\bibinfo{person}{Per Martin{-}L\"{o}f}.}
  \bibinfo{year}{1975}\natexlab{a}.
\newblock \showarticletitle{About models for intuitionistic type theories and
  the notion of definitional equality}.
\newblock In \bibinfo{booktitle}{\emph{Studies in Logic and the Foundations of
  Mathematics}}. Vol.~\bibinfo{volume}{82}. \bibinfo{publisher}{Elsevier},
  \bibinfo{pages}{81--109}.
\newblock


\bibitem[Martin{-}L\"{o}f(1975b)]%
        {martinlof1975AnIntui}
\bibfield{author}{\bibinfo{person}{Per Martin{-}L\"{o}f}.}
  \bibinfo{year}{1975}\natexlab{b}.
\newblock \showarticletitle{An Intuitionistic Theory of Types: Predicative
  Part}.
\newblock In \bibinfo{booktitle}{\emph{Logic Colloquium 73 Proceedings of the
  Logic Colloquium}}, \bibfield{editor}{\bibinfo{person}{H.~E. Rose} {and}
  \bibinfo{person}{J.~C. Shepherdson}} (Eds.). \bibinfo{publisher}{Elsevier},
  \bibinfo{pages}{73--118}.
\newblock


\bibitem[Martin-L{\"o}f(1987)]%
        {MartinLof1987}
\bibfield{author}{\bibinfo{person}{Per Martin-L{\"o}f}.}
  \bibinfo{year}{1987}\natexlab{}.
\newblock \bibinfo{title}{The Logic of Judgements}.
\newblock
\urldef\tempurl%
\url{https://raw.githubusercontent.com/michaelt/martin-lof/master/pdfs/The-logic-of-judgements-typeset-1987.pdf}
\showURL{%
\tempurl}
\newblock
\shownote{Talk at Workshop on General Logic, Laboratory for Foundations of
  Computer Science, University of Edinburgh, 23-27 February 1987}.


\bibitem[Matache et~al\mbox{.}(2025)]%
        {Matache2025}
\bibfield{author}{\bibinfo{person}{Cristina Matache}, \bibinfo{person}{Sam
  Lindley}, \bibinfo{person}{Sean Moss}, \bibinfo{person}{Sam Staton},
  \bibinfo{person}{Nicolas Wu}, {and} \bibinfo{person}{Zhixuan Yang}.}
  \bibinfo{year}{2025}\natexlab{}.
\newblock \showarticletitle{Scoped Effects, Scoped Operations, and
  Parameterized Algebraic Theories}.
\newblock \bibinfo{journal}{\emph{ACM Trans. Program. Lang. Syst.}}
  \bibinfo{volume}{47}, \bibinfo{number}{2}, Article \bibinfo{articleno}{8}
  (\bibinfo{year}{2025}), \bibinfo{numpages}{33}~pages.
\newblock
\showISSN{0164-0925}
\href{https://doi.org/10.1145/3731678}{doi:\nolinkurl{10.1145/3731678}}


\bibitem[Moggi(1989)]%
        {Moggi89a}
\bibfield{author}{\bibinfo{person}{Eugenio Moggi}.}
  \bibinfo{year}{1989}\natexlab{}.
\newblock \bibinfo{booktitle}{\emph{An Abstract View of Programming
  Languages}}.
\newblock \bibinfo{type}{{T}echnical {R}eport} ECS-LFCS-90-113.
  \bibinfo{institution}{Edinburgh University, Department of Computer Science}.
\newblock


\bibitem[Niu et~al\mbox{.}(2022)]%
        {Niu2022}
\bibfield{author}{\bibinfo{person}{Yue Niu}, \bibinfo{person}{Jonathan
  Sterling}, \bibinfo{person}{Harrison Grodin}, {and} \bibinfo{person}{Robert
  Harper}.} \bibinfo{year}{2022}\natexlab{}.
\newblock \showarticletitle{A cost-aware logical framework}.
\newblock \bibinfo{journal}{\emph{Proc. ACM Program. Lang.}}
  \bibinfo{volume}{6}, \bibinfo{number}{POPL}, Article \bibinfo{articleno}{9}
  (\bibinfo{year}{2022}), \bibinfo{numpages}{31}~pages.
\newblock
\href{https://doi.org/10.1145/3498670}{doi:\nolinkurl{10.1145/3498670}}


\bibitem[Phoa(1991)]%
        {PhoaThesis}
\bibfield{author}{\bibinfo{person}{Wesley Phoa}.}
  \bibinfo{year}{1991}\natexlab{}.
\newblock \emph{\bibinfo{title}{Domain Theory in Realizability Toposes}}.
\newblock \bibinfo{thesistype}{Ph.\,D. Dissertation}.
  \bibinfo{school}{University of Edinburgh}.
\newblock
\urldef\tempurl%
\url{https://www.lfcs.inf.ed.ac.uk/reports/91/ECS-LFCS-91-171/}
\showURL{%
\tempurl}


\bibitem[Pir\'{o}g et~al\mbox{.}(2018)]%
        {PirogSWJ18}
\bibfield{author}{\bibinfo{person}{Maciej Pir\'{o}g}, \bibinfo{person}{Tom
  Schrijvers}, \bibinfo{person}{Nicolas Wu}, {and} \bibinfo{person}{Mauro
  Jaskelioff}.} \bibinfo{year}{2018}\natexlab{}.
\newblock \showarticletitle{Syntax and Semantics for Operations with Scopes}.
  In \bibinfo{booktitle}{\emph{Proceedings of the 33rd Annual ACM/IEEE
  Symposium on Logic in Computer Science}} \emph{(\bibinfo{series}{LICS '18})}.
  \bibinfo{publisher}{Association for Computing Machinery},
  \bibinfo{pages}{809–818}.
\newblock
\href{https://doi.org/10.1145/3209108.3209166}{doi:\nolinkurl{10.1145/3209108.3209166}}


\bibitem[Plotkin(1973)]%
        {plotkin_lambda_1973}
\bibfield{author}{\bibinfo{person}{Gordon Plotkin}.}
  \bibinfo{year}{1973}\natexlab{}.
\newblock \bibinfo{booktitle}{\emph{Lambda Definability and Logical
  Relations}}.
\newblock \bibinfo{type}{Memorandum} SAI-RM-4. \bibinfo{institution}{University
  of Edinburgh}.
\newblock
\urldef\tempurl%
\url{https://homepages.inf.ed.ac.uk/gdp/publications/logical_relations_1973.pdf}
\showURL{%
\tempurl}


\bibitem[Plotkin(1977)]%
        {plotkin_lcf_1977}
\bibfield{author}{\bibinfo{person}{Gordon Plotkin}.}
  \bibinfo{year}{1977}\natexlab{}.
\newblock \showarticletitle{{LCF} considered as a programming language}.
\newblock \bibinfo{journal}{\emph{Theoretical Computer Science}}
  \bibinfo{volume}{5}, \bibinfo{number}{3} (\bibinfo{year}{1977}),
  \bibinfo{pages}{223--255}.
\newblock
\href{https://doi.org/10.1016/0304-3975(77)90044-5}{doi:\nolinkurl{10.1016/0304-3975(77)90044-5}}


\bibitem[Plotkin(1980)]%
        {plotkin_lambda_definability_1980}
\bibfield{author}{\bibinfo{person}{Gordon Plotkin}.}
  \bibinfo{year}{1980}\natexlab{}.
\newblock \showarticletitle{Lambda-Definability in the Full Type Hierarchy}.
\newblock In \bibinfo{booktitle}{\emph{To {H}. {B}. {Curry}: {Essays} in
  {Combinatory} {Logic}, {Lambda} {Calculus}, and {Formalism}}},
  \bibfield{editor}{\bibinfo{person}{J.~P. Seldin} {and} \bibinfo{person}{J.~R.
  Hindley}} (Eds.). \bibinfo{publisher}{Academic Press},
  \bibinfo{pages}{363--373}.
\newblock
\urldef\tempurl%
\url{https://homepages.inf.ed.ac.uk/gdp/publications/Lambda_Definability.pdf}
\showURL{%
\tempurl}


\bibitem[Plotkin and Power(2001)]%
        {Plotkin_Power_2001}
\bibfield{author}{\bibinfo{person}{Gordon Plotkin} {and} \bibinfo{person}{John
  Power}.} \bibinfo{year}{2001}\natexlab{}.
\newblock \showarticletitle{Semantics for algebraic operations}.
\newblock \bibinfo{journal}{\emph{Electronic Notes in Theoretical Computer
  Science}}  \bibinfo{volume}{45} (\bibinfo{year}{2001}),
  \bibinfo{pages}{332–345}.
\newblock
\href{https://doi.org/10.1016/S1571-0661(04)80970-8}{doi:\nolinkurl{10.1016/S1571-0661(04)80970-8}}


\bibitem[Plotkin and Power(2002)]%
        {PlotkinP02}
\bibfield{author}{\bibinfo{person}{Gordon Plotkin} {and} \bibinfo{person}{John
  Power}.} \bibinfo{year}{2002}\natexlab{}.
\newblock \showarticletitle{Notions of Computation Determine Monads}. In
  \bibinfo{booktitle}{\emph{Foundations of Software Science and Computation
  Structures, 5th International Conference}} \emph{(\bibinfo{series}{{FOSSACS}
  2002})}, \bibfield{editor}{\bibinfo{person}{Mogens Nielsen} {and}
  \bibinfo{person}{Uffe Engberg}} (Eds.). \bibinfo{publisher}{Springer},
  \bibinfo{pages}{342--356}.
\newblock
\href{https://doi.org/10.1007/3-540-45931-6\_24}{doi:\nolinkurl{10.1007/3-540-45931-6\_24}}


\bibitem[Plotkin and Power(2003)]%
        {PP03Alg}
\bibfield{author}{\bibinfo{person}{Gordon Plotkin} {and} \bibinfo{person}{John
  Power}.} \bibinfo{year}{2003}\natexlab{}.
\newblock \showarticletitle{Algebraic {Operations} and {Generic} {Effects}}.
\newblock \bibinfo{journal}{\emph{Applied Categorical Structures}}
  \bibinfo{volume}{11}, \bibinfo{number}{1} (\bibinfo{year}{2003}),
  \bibinfo{pages}{69--94}.
\newblock
\href{https://doi.org/10.1023/A:1023064908962}{doi:\nolinkurl{10.1023/A:1023064908962}}


\bibitem[Plotkin and Pretnar(2009)]%
        {PlotPret09Hand}
\bibfield{author}{\bibinfo{person}{Gordon Plotkin} {and}
  \bibinfo{person}{Matija Pretnar}.} \bibinfo{year}{2009}\natexlab{}.
\newblock \showarticletitle{Handlers of Algebraic Effects}. In
  \bibinfo{booktitle}{\emph{Programming Languages and Systems}},
  \bibfield{editor}{\bibinfo{person}{Giuseppe Castagna}} (Ed.).
  \bibinfo{publisher}{Springer Berlin Heidelberg}, \bibinfo{pages}{80--94}.
\newblock
\href{https://doi.org/10.1007/978-3-642-00590-9\_7}{doi:\nolinkurl{10.1007/978-3-642-00590-9\_7}}


\bibitem[Plotkin and Pretnar(2013)]%
        {PlotPret13Hand}
\bibfield{author}{\bibinfo{person}{Gordon Plotkin} {and}
  \bibinfo{person}{Matija Pretnar}.} \bibinfo{year}{2013}\natexlab{}.
\newblock \showarticletitle{Handling Algebraic Effects}.
\newblock \bibinfo{journal}{\emph{Logical Methods in Computer Science}}
  \bibinfo{volume}{9}, \bibinfo{number}{4} (\bibinfo{year}{2013}).
\newblock
\href{https://doi.org/10.2168/lmcs-9(4:23)2013}{doi:\nolinkurl{10.2168/lmcs-9(4:23)2013}}


\bibitem[Pujet and Tabareau(2022)]%
        {Pujet2022}
\bibfield{author}{\bibinfo{person}{Lo\"{\i}c Pujet} {and}
  \bibinfo{person}{Nicolas Tabareau}.} \bibinfo{year}{2022}\natexlab{}.
\newblock \showarticletitle{Observational equality: now for good}.
\newblock \bibinfo{journal}{\emph{Proc. ACM Program. Lang.}}
  \bibinfo{volume}{6}, \bibinfo{number}{POPL}, Article \bibinfo{articleno}{32}
  (\bibinfo{year}{2022}), \bibinfo{numpages}{27}~pages.
\newblock
\href{https://doi.org/10.1145/3498693}{doi:\nolinkurl{10.1145/3498693}}


\bibitem[Reus(1996)]%
        {reus1996program}
\bibfield{author}{\bibinfo{person}{Bernhard Reus}.}
  \bibinfo{year}{1996}\natexlab{}.
\newblock \emph{\bibinfo{title}{Program verification in synthetic domain
  theory}}.
\newblock \bibinfo{thesistype}{Ph.\,D. Dissertation}. \bibinfo{school}{Ludwig
  Maximilian University of Munich}.
\newblock
\urldef\tempurl%
\url{https://www2.mathematik.tu-darmstadt.de/~streicher/THESES/reus.pdf}
\showURL{%
\tempurl}


\bibitem[Reus(1999)]%
        {reus_formalizing_1999}
\bibfield{author}{\bibinfo{person}{Bernhard Reus}.}
  \bibinfo{year}{1999}\natexlab{}.
\newblock \showarticletitle{Formalizing {Synthetic} {Domain} {Theory}}.
\newblock \bibinfo{journal}{\emph{Journal of Automated Reasoning}}
  \bibinfo{volume}{23}, \bibinfo{number}{3} (\bibinfo{year}{1999}),
  \bibinfo{pages}{411--444}.
\newblock
\href{https://doi.org/10.1023/A:1006258506401}{doi:\nolinkurl{10.1023/A:1006258506401}}


\bibitem[Reus and Streicher(1999)]%
        {reus_general_1999}
\bibfield{author}{\bibinfo{person}{Bernhard Reus} {and} \bibinfo{person}{Thomas
  Streicher}.} \bibinfo{year}{1999}\natexlab{}.
\newblock \showarticletitle{General synthetic domain theory – a logical
  approach}.
\newblock \bibinfo{journal}{\emph{Mathematical Structures in Computer Science}}
  \bibinfo{volume}{9}, \bibinfo{number}{2} (\bibinfo{year}{1999}),
  \bibinfo{pages}{177--223}.
\newblock
\href{https://doi.org/10.1017/S096012959900273X}{doi:\nolinkurl{10.1017/S096012959900273X}}


\bibitem[Reynolds(1983)]%
        {Reynolds1983types}
\bibfield{author}{\bibinfo{person}{John~C. Reynolds}.}
  \bibinfo{year}{1983}\natexlab{}.
\newblock \showarticletitle{Types, Abstraction and Parametric Polymorphism}. In
  \bibinfo{booktitle}{\emph{IFIP Congress}}.
\newblock


\bibitem[Rosolini(1986)]%
        {rosolini1986continuity}
\bibfield{author}{\bibinfo{person}{Giuseppe Rosolini}.}
  \bibinfo{year}{1986}\natexlab{}.
\newblock \emph{\bibinfo{title}{Continuity and effectiveness in topoi}}.
\newblock \bibinfo{thesistype}{Ph.\,D. Dissertation}.
  \bibinfo{school}{University of Oxford}.
\newblock


\bibitem[Schrijvers et~al\mbox{.}(2019)]%
        {SPWJ19}
\bibfield{author}{\bibinfo{person}{Tom Schrijvers}, \bibinfo{person}{Maciej
  Pir{\'{o}}g}, \bibinfo{person}{Nicolas Wu}, {and} \bibinfo{person}{Mauro
  Jaskelioff}.} \bibinfo{year}{2019}\natexlab{}.
\newblock \showarticletitle{Monad Transformers and Modular Algebraic Effects:
  What Binds Them Together}. In \bibinfo{booktitle}{\emph{Proceedings of the
  12th {ACM} {SIGPLAN} International Symposium on Haskell, Haskell@ICFP 2019,
  Berlin, Germany, August 18-23, 2019}}. \bibinfo{pages}{98--113}.
\newblock
\href{https://doi.org/10.1145/3331545.3342595}{doi:\nolinkurl{10.1145/3331545.3342595}}


\bibitem[Scott(1993)]%
        {Scott1993}
\bibfield{author}{\bibinfo{person}{Dana~S. Scott}.}
  \bibinfo{year}{1993}\natexlab{}.
\newblock \showarticletitle{A type-theoretical alternative to {ISWIM}, {CUCH},
  {OWHY}}.
\newblock \bibinfo{journal}{\emph{Theoretical Computer Science}}
  \bibinfo{volume}{121}, \bibinfo{number}{1} (\bibinfo{year}{1993}),
  \bibinfo{pages}{411--440}.
\newblock
\href{https://doi.org/10.1016/0304-3975(93)90095-B}{doi:\nolinkurl{10.1016/0304-3975(93)90095-B}}


\bibitem[Simpson(2004)]%
        {simpson_computational_2004}
\bibfield{author}{\bibinfo{person}{Alex Simpson}.}
  \bibinfo{year}{2004}\natexlab{}.
\newblock \showarticletitle{Computational adequacy for recursive types in
  models of intuitionistic set theory}.
\newblock \bibinfo{journal}{\emph{Annals of Pure and Applied Logic}}
  \bibinfo{volume}{130}, \bibinfo{number}{1-3} (\bibinfo{year}{2004}),
  \bibinfo{pages}{207--275}.
\newblock
\href{https://doi.org/10.1016/j.apal.2003.12.005}{doi:\nolinkurl{10.1016/j.apal.2003.12.005}}


\bibitem[Simpson(1999)]%
        {Simpson_computational_1999}
\bibfield{author}{\bibinfo{person}{Alex~K. Simpson}.}
  \bibinfo{year}{1999}\natexlab{}.
\newblock \showarticletitle{Computational {Adequacy} in an {Elementary}
  {Topos}}.
\newblock In \bibinfo{booktitle}{\emph{Computer {Science} {Logic}}}.
  Vol.~\bibinfo{volume}{1584}. \bibinfo{publisher}{Springer Berlin Heidelberg},
  \bibinfo{pages}{323--342}.
\newblock
\href{https://doi.org/10.1007/10703163_22}{doi:\nolinkurl{10.1007/10703163_22}}
\newblock
\shownote{Series Title: Lecture Notes in Computer Science}.


\bibitem[Spivey(1990)]%
        {Spivey1990}
\bibfield{author}{\bibinfo{person}{Mike Spivey}.}
  \bibinfo{year}{1990}\natexlab{}.
\newblock \showarticletitle{A functional theory of exceptions}.
\newblock \bibinfo{journal}{\emph{Science of Computer Programming}}
  \bibinfo{volume}{14}, \bibinfo{number}{1} (\bibinfo{year}{1990}),
  \bibinfo{pages}{25--42}.
\newblock
\href{https://doi.org/10.1016/0167-6423(90)90056-J}{doi:\nolinkurl{10.1016/0167-6423(90)90056-J}}


\bibitem[Statman(1985)]%
        {Statman1985}
\bibfield{author}{\bibinfo{person}{R. Statman}.}
  \bibinfo{year}{1985}\natexlab{}.
\newblock \showarticletitle{Logical relations and the typed
  $\lambda$-calculus}.
\newblock \bibinfo{journal}{\emph{Information and Control}}
  \bibinfo{volume}{65}, \bibinfo{number}{2} (\bibinfo{year}{1985}),
  \bibinfo{pages}{85--97}.
\newblock
\href{https://doi.org/10.1016/S0019-9958(85)80001-2}{doi:\nolinkurl{10.1016/S0019-9958(85)80001-2}}


\bibitem[Sterling(2021)]%
        {sterling:2021:thesis}
\bibfield{author}{\bibinfo{person}{Jonathan Sterling}.}
  \bibinfo{year}{2021}\natexlab{}.
\newblock \emph{\bibinfo{title}{First Steps in Synthetic {Tait} Computability:
  The Objective Metatheory of Cubical Type Theory}}.
\newblock \bibinfo{thesistype}{Ph.\,D. Dissertation}. \bibinfo{school}{Carnegie
  Mellon University}.
\newblock
\href{https://doi.org/10.5281/zenodo.6990769}{doi:\nolinkurl{10.5281/zenodo.6990769}}
\newblock
\shownote{Version 1.1, revised May 2022}.


\bibitem[Sterling(2022)]%
        {sterling2022naive}
\bibfield{author}{\bibinfo{person}{Jonathan Sterling}.}
  \bibinfo{year}{2022}\natexlab{}.
\newblock \bibinfo{title}{Na\"{i}ve logical relations in synthetic {Tait}
  computability}.  (\bibinfo{date}{June} \bibinfo{year}{2022}).
\newblock
\newblock
\shownote{Unpublished manuscript}.


\bibitem[Sterling(2023)]%
        {sterling2023_erratum}
\bibfield{author}{\bibinfo{person}{Jonathan Sterling}.}
  \bibinfo{year}{2023}\natexlab{}.
\newblock \bibinfo{title}{Adequacy of sheaf semantics of noninterference}.
\newblock
\urldef\tempurl%
\url{https://www.jonmsterling.com/jms-005Z.xml}
\showURL{%
\tempurl}
\newblock
\shownote{Erratum}.


\bibitem[Sterling and Angiuli(2021)]%
        {Sterling_Angiuli_2021}
\bibfield{author}{\bibinfo{person}{Jonathan Sterling} {and}
  \bibinfo{person}{Carlo Angiuli}.} \bibinfo{year}{2021}\natexlab{}.
\newblock \showarticletitle{Normalization for Cubical Type Theory}.
\newblock \bibinfo{journal}{\emph{Proceedings - Symposium on Logic in Computer
  Science}}  \bibinfo{volume}{2021-June} (\bibinfo{year}{2021}),
  \bibinfo{pages}{1–22}.
\newblock
\href{https://doi.org/10.1109/LICS52264.2021.9470719}{doi:\nolinkurl{10.1109/LICS52264.2021.9470719}}
\newblock
\shownote{arXiv: 2101.11479}.


\bibitem[Sterling and Harper(2021)]%
        {sterling2021LRAT}
\bibfield{author}{\bibinfo{person}{Jonathan Sterling} {and}
  \bibinfo{person}{Robert Harper}.} \bibinfo{year}{2021}\natexlab{}.
\newblock \showarticletitle{Logical Relations as Types: Proof-Relevant
  Parametricity for Program Modules}.
\newblock \bibinfo{journal}{\emph{J. ACM}} \bibinfo{volume}{68},
  \bibinfo{number}{6}, Article \bibinfo{articleno}{41} (\bibinfo{year}{2021}).
\newblock
\href{https://doi.org/10.1145/3474834}{doi:\nolinkurl{10.1145/3474834}}


\bibitem[Sterling and Harper(2022)]%
        {Sterling_et_al_2022}
\bibfield{author}{\bibinfo{person}{Jonathan Sterling} {and}
  \bibinfo{person}{Robert Harper}.} \bibinfo{year}{2022}\natexlab{}.
\newblock \showarticletitle{Sheaf Semantics of Termination-Insensitive
  Noninterference}. In \bibinfo{booktitle}{\emph{7th International Conference
  on Formal Structures for Computation and Deduction (FSCD 2022)}}
  \emph{(\bibinfo{series}{Leibniz International Proceedings in Informatics
  (LIPIcs)}, Vol.~\bibinfo{volume}{228})},
  \bibfield{editor}{\bibinfo{person}{Amy~P. Felty}} (Ed.).
  \bibinfo{publisher}{Schloss Dagstuhl -- Leibniz-Zentrum f{\"u}r Informatik},
  \bibinfo{pages}{5:1--5:19}.
\newblock
\href{https://doi.org/10.4230/LIPIcs.FSCD.2022.5}{doi:\nolinkurl{10.4230/LIPIcs.FSCD.2022.5}}


\bibitem[Streicher(2006)]%
        {streicher2006domain}
\bibfield{author}{\bibinfo{person}{Thomas Streicher}.}
  \bibinfo{year}{2006}\natexlab{}.
\newblock \bibinfo{booktitle}{\emph{Domain-theoretic foundations of functional
  programming}}.
\newblock \bibinfo{publisher}{World Scientific Publishing Company}.
\newblock


\bibitem[Streicher(2017)]%
        {Streicher2017}
\bibfield{author}{\bibinfo{person}{Thomas Streicher}.}
  \bibinfo{year}{2017}\natexlab{}.
\newblock \bibinfo{title}{Realizability}.
\newblock
\urldef\tempurl%
\url{https://www2.mathematik.tu-darmstadt.de/~streicher/REAL/REAL.pdf}
\showURL{%
\tempurl}
\newblock
\shownote{Lecture notes}.


\bibitem[Tait(1967)]%
        {Tait67}
\bibfield{author}{\bibinfo{person}{{William W.} Tait}.}
  \bibinfo{year}{1967}\natexlab{}.
\newblock \showarticletitle{Intensional Interpretations of Functionals of
  Finite Type I}.
\newblock \bibinfo{journal}{\emph{The Journal of Symbolic Logic}}
  \bibinfo{volume}{32}, \bibinfo{number}{2} (\bibinfo{year}{1967}),
  \bibinfo{pages}{198--212}.
\newblock
\urldef\tempurl%
\url{http://www.jstor.org/stable/2271658}
\showURL{%
\tempurl}


\bibitem[{Univalent Foundations Program}(2013)]%
        {hottbook}
\bibfield{author}{\bibinfo{person}{The {Univalent Foundations Program}}.}
  \bibinfo{year}{2013}\natexlab{}.
\newblock \bibinfo{booktitle}{\emph{Homotopy Type Theory: Univalent Foundations
  of Mathematics}}.
\newblock \bibinfo{publisher}{\url{https://homotopytypetheory.org/book}}.
\newblock


\bibitem[van~den Berg and Schrijvers(2024)]%
        {vS24}
\bibfield{author}{\bibinfo{person}{Birthe van~den Berg} {and}
  \bibinfo{person}{Tom Schrijvers}.} \bibinfo{year}{2024}\natexlab{}.
\newblock \showarticletitle{A framework for higher-order effects \& handlers}.
\newblock \bibinfo{journal}{\emph{Sci. Comput. Program.}}
  \bibinfo{volume}{234}, \bibinfo{number}{C} (\bibinfo{year}{2024}),
  \bibinfo{numpages}{32}~pages.
\newblock
\href{https://doi.org/10.1016/j.scico.2024.103086}{doi:\nolinkurl{10.1016/j.scico.2024.103086}}


\bibitem[van~den Berg et~al\mbox{.}(2021)]%
        {vSPW21}
\bibfield{author}{\bibinfo{person}{Birthe van~den Berg}, \bibinfo{person}{Tom
  Schrijvers}, \bibinfo{person}{Casper~Bach Poulsen}, {and}
  \bibinfo{person}{Nicolas Wu}.} \bibinfo{year}{2021}\natexlab{}.
\newblock \showarticletitle{Latent Effects for Reusable Language Components}.
  In \bibinfo{booktitle}{\emph{Programming Languages and Systems}},
  \bibfield{editor}{\bibinfo{person}{Hakjoo Oh}} (Ed.).
  \bibinfo{publisher}{Springer International Publishing},
  \bibinfo{address}{Cham}, \bibinfo{pages}{182--201}.
\newblock


\bibitem[van Oosten(2008)]%
        {Oosten_2008}
\bibfield{author}{\bibinfo{person}{Jaap van Oosten}.}
  \bibinfo{year}{2008}\natexlab{}.
\newblock \bibinfo{booktitle}{\emph{Realizability: an introduction to its
  categorical side} (\bibinfo{edition}{1st} ed.)}.
\newblock \bibinfo{publisher}{Elsevier}, \bibinfo{address}{Oxford}.
\newblock


\bibitem[Wadler(1989)]%
        {Wadler1989}
\bibfield{author}{\bibinfo{person}{Philip Wadler}.}
  \bibinfo{year}{1989}\natexlab{}.
\newblock \showarticletitle{Theorems for free!}. In
  \bibinfo{booktitle}{\emph{Proceedings of the fourth international conference
  on Functional programming languages and computer architecture - FPCA ’89}},
  Vol.~\bibinfo{volume}{19}. \bibinfo{publisher}{ACM Press},
  \bibinfo{pages}{347–359}.
\newblock
\href{https://doi.org/10.1145/99370.99404}{doi:\nolinkurl{10.1145/99370.99404}}


\bibitem[Wu et~al\mbox{.}(2014)]%
        {Wu2014}
\bibfield{author}{\bibinfo{person}{Nicolas Wu}, \bibinfo{person}{Tom
  Schrijvers}, {and} \bibinfo{person}{Ralf Hinze}.}
  \bibinfo{year}{2014}\natexlab{}.
\newblock \showarticletitle{Effect Handlers in Scope}.
\newblock \bibinfo{journal}{\emph{Proceedings of the 2014 ACM SIGPLAN Symposium
  on Haskell - Haskell '14}} (\bibinfo{year}{2014}), \bibinfo{pages}{1--12}.
\newblock
\href{https://doi.org/10.1145/2633357.2633358}{doi:\nolinkurl{10.1145/2633357.2633358}}


\bibitem[Yang(2025)]%
        {Yang2025}
\bibfield{author}{\bibinfo{person}{Zhixuan Yang}.}
  \bibinfo{year}{2025}\natexlab{}.
\newblock \bibinfo{title}{Revisiting the Logical Framework for Locally
  Cartesian Closed Categories}.  (\bibinfo{year}{2025}).
\newblock
\urldef\tempurl%
\url{https://yangzhixuan.github.io/pdf/lcclf.pdf}
\showURL{%
\tempurl}
\newblock
\shownote{Manuscript}.


\bibitem[Yang et~al\mbox{.}(2022)]%
        {YPWvS2021}
\bibfield{author}{\bibinfo{person}{Zhixuan Yang}, \bibinfo{person}{Marco
  Paviotti}, \bibinfo{person}{Nicolas Wu}, \bibinfo{person}{Birthe van~den
  Berg}, {and} \bibinfo{person}{Tom Schrijvers}.}
  \bibinfo{year}{2022}\natexlab{}.
\newblock \bibinfo{booktitle}{\emph{Structured Handling of Scoped Effects}}.
\newblock \bibinfo{publisher}{Springer International Publishing},
  \bibinfo{pages}{462–491}.
\newblock
\href{https://doi.org/10.1007/978-3-030-99336-8_17}{doi:\nolinkurl{10.1007/978-3-030-99336-8_17}}


\bibitem[Yang and Wu(2021)]%
        {YangWu21}
\bibfield{author}{\bibinfo{person}{Zhixuan Yang} {and} \bibinfo{person}{Nicolas
  Wu}.} \bibinfo{year}{2021}\natexlab{}.
\newblock \showarticletitle{Reasoning about Effect Interaction by Fusion}.
\newblock \bibinfo{journal}{\emph{Proc. ACM Program. Lang.}}
  \bibinfo{volume}{5}, \bibinfo{number}{ICFP}, Article \bibinfo{articleno}{73}
  (\bibinfo{year}{2021}), \bibinfo{numpages}{29}~pages.
\newblock
\href{https://doi.org/10.1145/3473578}{doi:\nolinkurl{10.1145/3473578}}


\bibitem[Yang and Wu(2023)]%
        {YangWu23}
\bibfield{author}{\bibinfo{person}{Zhixuan Yang} {and} \bibinfo{person}{Nicolas
  Wu}.} \bibinfo{year}{2023}\natexlab{}.
\newblock \showarticletitle{Modular Models of Monoids with Operations}.
\newblock \bibinfo{journal}{\emph{Proc. ACM Program. Lang.}}
  \bibinfo{volume}{7}, \bibinfo{number}{ICFP}, Article \bibinfo{articleno}{208}
  (\bibinfo{year}{2023}), \bibinfo{numpages}{38}~pages.
\newblock
\href{https://doi.org/10.1145/3607850}{doi:\nolinkurl{10.1145/3607850}}


\end{thebibliography}

\appendix

\section{Complete Signatures of the Languages}\label{app:sig}

\allowdisplaybreaks

This appendix collects the full signatures of the languages in the paper.

\subsection{Signature of System \Fha{}}\label{app:sig:fha}

The following is the signature of System \Fha{} from \Cref{sec:fha} for 
easy reference.

\sethscode{plainhscode}
\renewcommand{\mathindent}{10pt}

\begin{itemize}
\item
Kinds
\begin{hscode}\SaveRestoreHook
\column{B}{@{}>{\hspre}l<{\hspost}@{}}%
\column{9}{@{}>{\hspre}l<{\hspost}@{}}%
\column{E}{@{}>{\hspre}l<{\hspost}@{}}%
\>[B]{}\Varid{ki}{}\<[9]%
\>[9]{}\mathbin{:}\Jdg{}\<[E]%
\\
\>[B]{}\Varid{el}{}\<[9]%
\>[9]{}\mathbin{:}\Varid{ki}\to\Jdg{}\<[E]%
\\
\>[B]{}\Varid{ty}{}\<[9]%
\>[9]{}\mathbin{:}\Varid{ki}{}\<[E]%
\\
\>[B]{}\text{\textunderscore}{\Rightarrow_k}\text{\textunderscore}{}\<[9]%
\>[9]{}\mathbin{:}\Varid{ki}\to\Varid{ki}\to\Varid{ki}{}\<[E]%
\ColumnHook
\end{hscode}\resethooks

\item
Elements of the kind of types
\begin{hscode}\SaveRestoreHook
\column{B}{@{}>{\hspre}l<{\hspost}@{}}%
\column{10}{@{}>{\hspre}l<{\hspost}@{}}%
\column{E}{@{}>{\hspre}l<{\hspost}@{}}%
\>[B]{}\Varid{unit}{}\<[10]%
\>[10]{}\mathbin{:}\Varid{el}\;\Varid{ty}{}\<[E]%
\\
\>[B]{}\Varid{bool}{}\<[10]%
\>[10]{}\mathbin{:}\Varid{el}\;\Varid{ty}{}\<[E]%
\\
\>[B]{}\text{\textunderscore}{\Rightarrow_t}\text{\textunderscore}{}\<[10]%
\>[10]{}\mathbin{:}\Varid{el}\;\Varid{ty}\to\Varid{el}\;\Varid{ty}\to\Varid{el}\;\Varid{ty}{}\<[E]%
\\
\>[B]{}\allfor{}\<[10]%
\>[10]{}\mathbin{:}(\Varid{k}\mathbin{:}\Varid{ki})\to(\Varid{el}\;\Varid{k}\to\Varid{el}\;\Varid{ty})\to\Varid{el}\;\Varid{ty}{}\<[E]%
\ColumnHook
\end{hscode}\resethooks

\item
Elements of function kinds
\begin{hscode}\SaveRestoreHook
\column{B}{@{}>{\hspre}l<{\hspost}@{}}%
\column{3}{@{}>{\hspre}l<{\hspost}@{}}%
\column{9}{@{}>{\hspre}l<{\hspost}@{}}%
\column{10}{@{}>{\hspre}l<{\hspost}@{}}%
\column{25}{@{}>{\hspre}l<{\hspost}@{}}%
\column{E}{@{}>{\hspre}l<{\hspost}@{}}%
\>[B]{}\lhskeyword{record}\;\Conid{A}\;{\cong}\;\Conid{B}\mathbin{:}\Jdg{}\<[25]%
\>[25]{}\;\lhskeyword{where}{}\<[E]%
\\
\>[B]{}\hsindent{3}{}\<[3]%
\>[3]{}\Varid{fwd}{}\<[9]%
\>[9]{}\mathbin{:}\Conid{A}\to \Conid{B}{}\<[E]%
\\
\>[B]{}\hsindent{3}{}\<[3]%
\>[3]{}\Varid{bwd}{}\<[9]%
\>[9]{}\mathbin{:}\Conid{B}\to \Conid{A}{}\<[E]%
\\
\>[B]{}\hsindent{3}{}\<[3]%
\>[3]{}\anonymous {}\<[9]%
\>[9]{}\mathbin{:}(\Varid{a}\mathbin{:}\Conid{A})\to \Varid{bwd}\;(\Varid{fwd}\;\Varid{a})\mathrel{=}\Varid{a}{}\<[E]%
\\
\>[B]{}\hsindent{3}{}\<[3]%
\>[3]{}\anonymous {}\<[9]%
\>[9]{}\mathbin{:}(\Varid{b}\mathbin{:}\Conid{B})\to \Varid{fwd}\;(\Varid{bwd}\;\Varid{b})\mathrel{=}\Varid{b}{}\<[E]%
\\[\blanklineskip]%
\>[B]{}\Varid{{\Rightarrow_k}\hyp{}iso}{}\<[10]%
\>[10]{}\mathbin{:}\{\mskip1.5mu \Conid{A},\Conid{B}\mathbin{:}\Varid{ki}\mskip1.5mu\}\to\Varid{el}\;(\Conid{A}\;{\Rightarrow_k}\;\Conid{B})\cong(\Varid{el}\;\Conid{A}\to\Varid{el}\;\Conid{B}){}\<[E]%
\ColumnHook
\end{hscode}\resethooks

\item
Terms
\begin{hscode}\SaveRestoreHook
\column{B}{@{}>{\hspre}l<{\hspost}@{}}%
\column{10}{@{}>{\hspre}l<{\hspost}@{}}%
\column{E}{@{}>{\hspre}l<{\hspost}@{}}%
\>[B]{}\Varid{tm}{}\<[10]%
\>[10]{}\mathbin{:}\Varid{el}\;\Varid{ty}\to \Jdg{}\<[E]%
\\
\>[B]{}\Varid{unit\hyp{}iso}{}\<[10]%
\>[10]{}\mathbin{:}\Varid{tm}\;\Varid{unit}\cong\unitty{}\<[E]%
\\
\>[B]{}\Varid{{\Rightarrow_t}\hyp{}iso}{}\<[10]%
\>[10]{}\mathbin{:}\{\mskip1.5mu \Conid{A},\Conid{B}\mathbin{:}\Varid{el}\;\Varid{ty}\mskip1.5mu\}\to\Varid{tm}\;(\Conid{A}\;{\Rightarrow_t}\;\Conid{B})\cong(\Varid{tm}\;\Conid{A}\to\Varid{tm}\;\Conid{B}){}\<[E]%
\\
\>[B]{}\Varid{\allfor\hyp{}iso}{}\<[10]%
\>[10]{}\mathbin{:}\{\mskip1.5mu \Varid{k}\mathbin{:}\anonymous \mskip1.5mu\}\;\{\mskip1.5mu \Conid{A}\mathbin{:}\anonymous \mskip1.5mu\}\to\Varid{tm}\;(\allfor\;\Varid{k}\;\Conid{A})\cong((\Varid{α}\mathbin{:}\Varid{el}\;\Varid{k})\to\Varid{tm}\;(\Conid{A}\;\Varid{α})){}\<[E]%
\\
\>[B]{}\Varid{tt}{}\<[10]%
\>[10]{}\mathbin{:}\Varid{tm}\;\Varid{bool}{}\<[E]%
\\
\>[B]{}\Varid{ff}{}\<[10]%
\>[10]{}\mathbin{:}\Varid{tm}\;\Varid{bool}{}\<[E]%
\ColumnHook
\end{hscode}\resethooks

\item
Functors
\begin{hscode}\SaveRestoreHook
\column{B}{@{}>{\hspre}l<{\hspost}@{}}%
\column{3}{@{}>{\hspre}l<{\hspost}@{}}%
\column{E}{@{}>{\hspre}l<{\hspost}@{}}%
\>[B]{}\Varid{efty}\mathbin{:}\Varid{ki}{}\<[E]%
\\
\>[B]{}\Varid{efty}\mathrel{=}(\Varid{ty}\;{\Rightarrow_k}\;\Varid{ty}){}\<[E]%
\\[\blanklineskip]%
\>[B]{}\Varid{fmap\hyp{}ty}\mathbin{:}(\Conid{F}\mathbin{:}\Varid{el}\;\Varid{efty})\to\Varid{el}\;\Varid{ty}{}\<[E]%
\\
\>[B]{}\Varid{fmap\hyp{}ty}\;\Conid{F}\mathrel{=}\allfor\;\Varid{ty}\;(\lambda \Varid{α}.\;\allfor\;\Varid{ty}\;(\lambda \Varid{β}.\;(\Varid{α}\;{\Rightarrow_t}\;\Varid{β})\;{\Rightarrow_t}\;(\Conid{F}\;\Varid{α}\;{\Rightarrow_t}\;\Conid{F}\;\Varid{β}))){}\<[E]%
\\[\blanklineskip]%
\>[B]{}\lhskeyword{record}\;\Conid{RawFunctor}\mathbin{:}\Jdg\;\lhskeyword{where}{}\<[E]%
\\
\>[B]{}\hsindent{3}{}\<[3]%
\>[3]{}\Varid{0}\mathbin{:}\Varid{el}\;\Varid{efty}{}\<[E]%
\\
\>[B]{}\hsindent{3}{}\<[3]%
\>[3]{}\Varid{fmap}\mathbin{:}\Varid{tm}\;(\,\Varid{fmap\hyp{}ty}\;\Varid{0}){}\<[E]%
\ColumnHook
\end{hscode}\resethooks

\item
Monads
\begin{hscode}\SaveRestoreHook
\column{B}{@{}>{\hspre}l<{\hspost}@{}}%
\column{3}{@{}>{\hspre}l<{\hspost}@{}}%
\column{9}{@{}>{\hspre}l<{\hspost}@{}}%
\column{E}{@{}>{\hspre}l<{\hspost}@{}}%
\>[B]{}\lhskeyword{record}\;\Conid{RawMonad}\mathbin{:}\Jdg\;\lhskeyword{where}{}\<[E]%
\\
\>[B]{}\hsindent{3}{}\<[3]%
\>[3]{}\Varid{0}{}\<[9]%
\>[9]{}\mathbin{:}\Varid{el}\;\Varid{efty}{}\<[E]%
\\
\>[B]{}\hsindent{3}{}\<[3]%
\>[3]{}\Varid{ret}{}\<[9]%
\>[9]{}\mathbin{:}\Varid{tm}\;(\allfor\;\Varid{ty}\;(\lambda \Varid{α}.\;\Varid{α}\;{\Rightarrow_t}\;\Varid{0}\;\Varid{α})){}\<[E]%
\\
\>[B]{}\hsindent{3}{}\<[3]%
\>[3]{}\Varid{bind}{}\<[9]%
\>[9]{}\mathbin{:}\Varid{tm}\;(\allfor\;\Varid{ty}\;(\lambda \Varid{α}.\;\allfor\;\Varid{ty}\;(\lambda \Varid{β}.\;\Varid{0}\;\Varid{α}\;{\Rightarrow_t}\;(\Varid{α}\;{\Rightarrow_t}\;\Varid{0}\;\Varid{β})\;{\Rightarrow_t}\;\Varid{0}\;\Varid{β}))){}\<[E]%
\ColumnHook
\end{hscode}\resethooks

\item
Higher-order functors
\begin{hscode}\SaveRestoreHook
\column{B}{@{}>{\hspre}l<{\hspost}@{}}%
\column{3}{@{}>{\hspre}l<{\hspost}@{}}%
\column{10}{@{}>{\hspre}l<{\hspost}@{}}%
\column{E}{@{}>{\hspre}l<{\hspost}@{}}%
\>[B]{}\Varid{htyco}\mathbin{:}\Varid{ki}{}\<[E]%
\\
\>[B]{}\Varid{htyco}\mathrel{=}\Varid{efty}\;{\Rightarrow_k}\;\Varid{efty}{}\<[E]%
\\[\blanklineskip]%
\>[B]{}\Varid{trans}\mathbin{:}(\Conid{F},\Conid{G}\mathbin{:}\Varid{el}\;\Varid{efty})\to\Varid{el}\;\Varid{ty}{}\<[E]%
\\
\>[B]{}\Varid{trans}\;\Conid{F}\;\Conid{G}\mathrel{=}\allfor\;\Varid{ty}\;(\lambda \Varid{α}.\;\Conid{F}\;\Varid{α}\;{\Rightarrow_t}\;\Conid{G}\;\Varid{α}){}\<[E]%
\\[\blanklineskip]%
\>[B]{}\lhskeyword{record}\;\Conid{RawHFunctor}\mathbin{:}\Jdg\;\lhskeyword{where}{}\<[E]%
\\
\>[B]{}\hsindent{3}{}\<[3]%
\>[3]{}\Varid{0}{}\<[10]%
\>[10]{}\mathbin{:}\Varid{el}\;\Varid{htyco}{}\<[E]%
\\
\>[B]{}\hsindent{3}{}\<[3]%
\>[3]{}\Varid{hfmap}{}\<[10]%
\>[10]{}\mathbin{:}(\Conid{F}\mathbin{:}\Conid{RawFunctor})\to\Varid{tm}\;(\,\Varid{fmap\hyp{}ty}\;(\Varid{0}\;(\Conid{F}\;.\Varid{0}))){}\<[E]%
\\
\>[B]{}\hsindent{3}{}\<[3]%
\>[3]{}\Varid{hmap}{}\<[10]%
\>[10]{}\mathbin{:}(\Conid{F},\Conid{G}\mathbin{:}\Conid{RawFunctor})\to\Varid{tm}\;(\Varid{trans}\;(\Conid{F}\;.\Varid{0})\;(\Conid{G}\;.\Varid{0})){}\<[E]%
\\
\>[10]{}\to\Varid{tm}\;(\Varid{trans}\;(\Varid{0}\;(\Conid{F}\;.\Varid{0}))\;(\Varid{0}\;(\Conid{G}\;.\Varid{0}))){}\<[E]%
\ColumnHook
\end{hscode}\resethooks

\item
Computations
\begin{hscode}\SaveRestoreHook
\column{B}{@{}>{\hspre}l<{\hspost}@{}}%
\column{9}{@{}>{\hspre}l<{\hspost}@{}}%
\column{E}{@{}>{\hspre}l<{\hspost}@{}}%
\>[B]{}\Varid{co}{}\<[9]%
\>[9]{}\mathbin{:}(\Conid{H}\mathbin{:}\Conid{RawHFunctor})\to(\Conid{A}\mathbin{:}\Varid{el}\;\Varid{ty})\to\Jdg{}\<[E]%
\\
\>[B]{}\Varid{val}{}\<[9]%
\>[9]{}\mathbin{:}\{\mskip1.5mu \Conid{H},\Conid{A}\mskip1.5mu\}\to\Varid{tm}\;\Conid{A}\to\Varid{co}\;\Conid{H}\;\Conid{A}{}\<[E]%
\\
\>[B]{}\Varid{let\hyp{}in}{}\<[9]%
\>[9]{}\mathbin{:}\abrace{\Conid{H},\Conid{A},\Conid{B}}\to\Varid{co}\;\Conid{H}\;\Conid{A}\to(\Varid{tm}\;\Conid{A}\to\Varid{co}\;\Conid{H}\;\Conid{B})\to\Varid{co}\;\Conid{H}\;\Conid{B}{}\<[E]%
\ColumnHook
\end{hscode}\resethooks

\item
Laws of computations 
\begin{hscode}\SaveRestoreHook
\column{B}{@{}>{\hspre}l<{\hspost}@{}}%
\column{11}{@{}>{\hspre}l<{\hspost}@{}}%
\column{E}{@{}>{\hspre}l<{\hspost}@{}}%
\>[B]{}\Varid{val\hyp{}let}{}\<[11]%
\>[11]{}\mathbin{:}\{\mskip1.5mu \Conid{H},\Conid{A},\Conid{B}\mskip1.5mu\}\to(\Varid{a}\mathbin{:}\Varid{tm}\;\Conid{A})\to(\Varid{k}\mathbin{:}\Varid{tm}\;\Conid{A}\to\Varid{co}\;\Conid{H}\;\Conid{B}){}\<[E]%
\\
\>[11]{}\to\Varid{let\hyp{}in}\;(\Varid{val}\;\Varid{a})\;\Varid{k}=\Varid{k}\;\Varid{a}{}\<[E]%
\\
\>[B]{}\Varid{let\hyp{}val}{}\<[11]%
\>[11]{}\mathbin{:}\{\mskip1.5mu \Conid{H},\Conid{A}\mskip1.5mu\}\to(\Varid{m}\mathbin{:}\Varid{co}\;\Conid{H}\;\Conid{A})\to\Varid{let\hyp{}in}\;\Varid{m}\;\Varid{val}=\Varid{m}{}\<[E]%
\\
\>[B]{}\Varid{let\hyp{}assoc}{}\<[11]%
\>[11]{}\mathbin{:}\{\mskip1.5mu \Conid{H},\Conid{A},\Conid{B},\Conid{C}\mskip1.5mu\}\to(\Varid{m₁}\mathbin{:}\Varid{co}\;\Conid{H}\;\Conid{A}){}\<[E]%
\\
\>[11]{}\to(\Varid{m₂}\mathbin{:}\Varid{tm}\;\Conid{A}\to\Varid{co}\;\Conid{H}\;\Conid{B})\to(\Varid{m₃}\mathbin{:}\Varid{tm}\;\Conid{B}\to\Varid{co}\;\Conid{H}\;\Conid{C}){}\<[E]%
\\
\>[11]{}\to\Varid{let\hyp{}in}\;(\Varid{let\hyp{}in}\;\Varid{m₁}\;\Varid{m₂})\;\Varid{m₃}=\Varid{let\hyp{}in}\;\Varid{m₁}\;(\lambda \Varid{a}.\;\Varid{let\hyp{}in}\;(\Varid{m₂}\;\Varid{a})\;\Varid{m₃}){}\<[E]%
\ColumnHook
\end{hscode}\resethooks

\item
Thunks
\begin{hscode}\SaveRestoreHook
\column{B}{@{}>{\hspre}l<{\hspost}@{}}%
\column{15}{@{}>{\hspre}l<{\hspost}@{}}%
\column{E}{@{}>{\hspre}l<{\hspost}@{}}%
\>[B]{}\Varid{th}\mathbin{:}\Conid{RawHFunctor}\to \Varid{el}\;\Varid{ty}\to \Varid{el}\;\Varid{ty}{}\<[E]%
\\
\>[B]{}\Varid{th\hyp{}iso}\mathbin{:}\{\mskip1.5mu \Conid{H},\Conid{A}\mskip1.5mu\}\to \Varid{tm}\;(\Varid{th}\;\Conid{H}\;\Conid{A})\;{\cong}\;\Varid{co}\;\Conid{H}\;\Conid{A}{}\<[E]%
\\[\blanklineskip]%
\>[B]{}{\Uparrow}\mathbin{:}\{\mskip1.5mu \Conid{H},\Conid{A}\mskip1.5mu\}\to \Varid{tm}\;(\Varid{th}\;\Conid{H}\;\Conid{A})\to \Varid{co}\;\Conid{H}\;\Conid{A}{}\<[E]%
\\
\>[B]{}{\Uparrow}\mathrel{=}\Varid{th\hyp{}iso}\;\Varid{.fwd}{}\<[E]%
\\[\blanklineskip]%
\>[B]{}{\Downarrow}\mathbin{:}\{\mskip1.5mu \Conid{H},\Conid{A}\mskip1.5mu\}\to \Varid{co}\;\Conid{H}\;\Conid{A}\to \Varid{tm}\;(\Varid{th}\;\Conid{H}\;\Conid{A}){}\<[E]%
\\
\>[B]{}{\Downarrow}\mathrel{=}\Varid{th\hyp{}iso}\;\Varid{.bwd}{}\<[E]%
\\[\blanklineskip]%
\>[B]{}\Varid{th\hyp{}mnd}\mathbin{:}\Conid{RawHFunctor}\to\Conid{RawMonad}{}\<[E]%
\\
\>[B]{}\Varid{th\hyp{}mnd}\;\Conid{H}\;\Varid{.0}{}\<[15]%
\>[15]{}\mathrel{=}\Varid{th}\;\Conid{H}{}\<[E]%
\\
\>[B]{}\Varid{th\hyp{}mnd}\;\Conid{H}\;\Varid{.ret}{}\<[15]%
\>[15]{}\mathrel{=}\lambda \Conid{A}\;\Varid{x}.\;{\Downarrow}\;(\Varid{val}\;\Varid{x}){}\<[E]%
\\
\>[B]{}\Varid{th\hyp{}mnd}\;\Conid{H}\;\Varid{.bind}{}\<[15]%
\>[15]{}\mathrel{=}\lambda \Conid{A}\;\Conid{B}\;\Varid{m}\;\Varid{k}.\;{\Downarrow}\;(\Varid{let\hyp{}in}\;(\Varid{force}\;\Varid{m})\;(\lambda \Varid{a}.\;{\Uparrow}\;(\Varid{k}\;\Varid{a}))){}\<[E]%
\ColumnHook
\end{hscode}\resethooks

\item Operations
\begin{hscode}\SaveRestoreHook
\column{B}{@{}>{\hspre}l<{\hspost}@{}}%
\column{8}{@{}>{\hspre}l<{\hspost}@{}}%
\column{E}{@{}>{\hspre}l<{\hspost}@{}}%
\>[B]{}\Varid{op}\mathbin{:}\{\mskip1.5mu \Conid{H},\Conid{A},\Conid{B}\mskip1.5mu\}\to\Varid{tm}\;(\Conid{H}\;(\Varid{th}\;\Conid{H})\;\Conid{A})\to(\Varid{tm}\;\Conid{A}\to\Varid{co}\;\Conid{H}\;\Conid{B})\to\Varid{co}\;\Conid{H}\;\Conid{B}{}\<[E]%
\\[\blanklineskip]%
\>[B]{}\Varid{let\hyp{}op}{}\<[8]%
\>[8]{}\mathbin{:}\{\mskip1.5mu \Conid{H},\Conid{A},\Conid{B},\Conid{C}\mskip1.5mu\}\to(\Varid{p}\mathbin{:}\Varid{tm}\;(\Conid{H}\;(\Varid{th}\;\Conid{H})\;\Conid{A})){}\<[E]%
\\
\>[8]{}\to(\Varid{k}\mathbin{:}\Varid{tm}\;\Conid{A}\to\Varid{co}\;\Conid{H}\;\Conid{B})\to(\Varid{k'}\mathbin{:}\Varid{tm}\;\Conid{B}\to\Varid{co}\;\Conid{H}\;\Conid{C}){}\<[E]%
\\
\>[8]{}\to\Varid{let\hyp{}in}\;(\Varid{op}\;\Varid{p}\;\Varid{k})\;\Varid{k'}=\Varid{op}\;\Varid{p}\;(\lambda \Varid{a}.\;\Varid{let\hyp{}in}\;(\Varid{k}\;\Varid{a})\;\Varid{k'}){}\<[E]%
\ColumnHook
\end{hscode}\resethooks

\item
Monads with algebras
\begin{hscode}\SaveRestoreHook
\column{B}{@{}>{\hspre}l<{\hspost}@{}}%
\column{3}{@{}>{\hspre}l<{\hspost}@{}}%
\column{7}{@{}>{\hspre}l<{\hspost}@{}}%
\column{9}{@{}>{\hspre}l<{\hspost}@{}}%
\column{10}{@{}>{\hspre}l<{\hspost}@{}}%
\column{16}{@{}>{\hspre}l<{\hspost}@{}}%
\column{E}{@{}>{\hspre}l<{\hspost}@{}}%
\>[B]{}\lhskeyword{record}\;\Conid{Handler}\;(\Conid{H}\mathbin{:}\Conid{RawHFunctor})\mathbin{:}\Jdg\,\;\lhskeyword{where}{}\<[E]%
\\
\>[B]{}\hsindent{3}{}\<[3]%
\>[3]{}\lhskeyword{include}\;\Conid{RawMonad}\;\lhskeyword{as}\;\Conid{M}{}\<[E]%
\\
\>[B]{}\hsindent{3}{}\<[3]%
\>[3]{}\Varid{malg}{}\<[9]%
\>[9]{}\mathbin{:}\Varid{tm}\;(\Varid{trans}\;(\Conid{H}\;\Varid{.0}\;\Varid{0})\;\Varid{0}){}\<[E]%
\\[\blanklineskip]%
\>[B]{}\Varid{th\hyp{}alg}\mathbin{:}(\Conid{H}\mathbin{:}\Conid{RawHFunctor})\to\Conid{Handler}\;\Conid{H}{}\<[E]%
\\
\>[B]{}\Varid{th\hyp{}alg}\;{}\<[7]%
\>[7]{}\Conid{H}\;{}\<[10]%
\>[10]{}\Varid{.M}{}\<[16]%
\>[16]{}\mathrel{=}\Varid{th\hyp{}mnd}\;\Conid{H}{}\<[E]%
\\
\>[B]{}\Varid{th\hyp{}alg}\;{}\<[7]%
\>[7]{}\Conid{H}\;\Varid{.malg}{}\<[16]%
\>[16]{}\mathrel{=}\lambda \Varid{α}\;\Varid{o}.\;{\Downarrow}\;(\Varid{op}\;\Varid{o}\;\Varid{val}){}\<[E]%
\ColumnHook
\end{hscode}\resethooks

\item
Evaluation of computations
\begin{hscode}\SaveRestoreHook
\column{B}{@{}>{\hspre}l<{\hspost}@{}}%
\column{9}{@{}>{\hspre}l<{\hspost}@{}}%
\column{10}{@{}>{\hspre}l<{\hspost}@{}}%
\column{12}{@{}>{\hspre}l<{\hspost}@{}}%
\column{16}{@{}>{\hspre}l<{\hspost}@{}}%
\column{17}{@{}>{\hspre}l<{\hspost}@{}}%
\column{23}{@{}>{\hspre}l<{\hspost}@{}}%
\column{E}{@{}>{\hspre}l<{\hspost}@{}}%
\>[B]{}\Varid{hdl}\mathbin{:}\{\mskip1.5mu \Conid{H},\Conid{A}\mskip1.5mu\}\to(\Varid{h}\mathbin{:}\Conid{Handler}\;\Conid{H})\to\Varid{co}\;\Conid{H}\;\Conid{A}\to\Varid{tm}\;(\Varid{h}\;\Varid{.0}\;\Conid{A}){}\<[E]%
\\[\blanklineskip]%
\>[B]{}\Varid{hdl\hyp{}val}{}\<[10]%
\>[10]{}\mathbin{:}\{\mskip1.5mu \Conid{H},\Conid{A}\mskip1.5mu\}\to(\Varid{h}\mathbin{:}\Conid{Handler}\;\Conid{H})\to(\Varid{a}\mathbin{:}\Varid{tm}\;\Conid{A}){}\<[E]%
\\
\>[10]{}\to\Varid{hdl}\;\Varid{h}\;(\Varid{val}\;\Varid{a})=\Varid{h}\;\Varid{.ret}\;\Conid{A}\;\Varid{a}{}\<[E]%
\\[\blanklineskip]%
\>[B]{}\Varid{hdl\hyp{}op}{}\<[9]%
\>[9]{}\mathbin{:}\{\mskip1.5mu \Conid{H},\Conid{A},\Conid{B}\mskip1.5mu\}\to(\Varid{h}\mathbin{:}\Conid{Handler}\;\Conid{H}){}\<[E]%
\\
\>[9]{}\to(\Varid{p}\mathbin{:}\Varid{tm}\;(\Conid{H}\;(\Varid{th}\;\Conid{H})\;\Conid{A}))\to(\Varid{k}\mathbin{:}\Varid{tm}\;\Conid{A}\to\Varid{co}\;\Conid{H}\;\Conid{B}){}\<[E]%
\\
\>[9]{}\to{}\<[12]%
\>[12]{}\lhskeyword{let}\;{}\<[17]%
\>[17]{}\Conid{T}{}\<[23]%
\>[23]{}\mathrel{=}\Varid{fct\hyp{}of\hyp{}mnd}\;(\Varid{th\hyp{}mnd}\;\Conid{H}){}\<[E]%
\\
\>[17]{}\Conid{M}{}\<[23]%
\>[23]{}\mathrel{=}\Varid{fct\hyp{}of\hyp{}mnd}\;(\Varid{h}\;\Varid{.M}){}\<[E]%
\\
\>[17]{}\Varid{p'}{}\<[23]%
\>[23]{}\mathrel{=}\Conid{H}\;\Varid{.hmap}\;\Conid{T}\;\Conid{M}\;(\lambda \Varid{α}\;\Varid{c}.\;\Varid{hdl}\;\Varid{h}\;({\Uparrow}\;\!\Varid{c}))\;\anonymous \;\Varid{p}{}\<[E]%
\\
\>[12]{}\lhskeyword{in}\;{}\<[16]%
\>[16]{}\Varid{hdl}\;\Varid{h}\;(\Varid{op}\;\Varid{p}\;\Varid{k})=\Varid{h}\;\Varid{.bind}\;\anonymous \;\anonymous \;(\Varid{h}\;\Varid{.malg}\;\anonymous \;\Varid{p'})\;(\lambda \Varid{a}.\;\Varid{hdl}\;\Varid{h}\;(\Varid{k}\;\Varid{a})){}\<[E]%
\\[\blanklineskip]%
\>[B]{}\Varid{fct\hyp{}of\hyp{}mnd}\mathbin{:}\Conid{RawMonad}\to\Conid{RawFunctor}{}\<[E]%
\\
\>[B]{}\Varid{fct\hyp{}of\hyp{}mnd}\;\Varid{m}\;.\Varid{0}\mathrel{=}\Varid{m}\;\Varid{.0}{}\<[E]%
\\
\>[B]{}\Varid{fct\hyp{}of\hyp{}mnd}\;\Varid{m}\;.\Varid{fmap}\;\Varid{α}\;\Varid{β}\;\Varid{f}\;\Varid{ma}\mathrel{=}\Varid{m}\;\Varid{.bind}\;\Varid{α}\;\Varid{β}\;\Varid{ma}\;(\lambda \Varid{a}.\;\Varid{m}\;\Varid{.ret}\;\anonymous \;(\,\Varid{f}\;\Varid{a}\,)){}\<[E]%
\ColumnHook
\end{hscode}\resethooks
\end{itemize}

\subsection{Effect Families in \Fha{}}\label{app:eff:fam}
We did not include in System \Fha{} any judgements for \emph{modular
handlers} \citep{YangWu21,YangWu23} or \emph{effect systems}
\citep{BauerPretnar2014,KammarPlotkin2012,Lucassen1988} 
that track the effect operations that a computation may perform,
because both of them can be \emph{derived concepts} in \Fha{}.

\sethscode{autohscode}
Firstly, the judgement for \emph{effect families} \citep{YangWu23} is the following record in
\Fha{}:
\begin{hscode}\SaveRestoreHook
\column{B}{@{}>{\hspre}l<{\hspost}@{}}%
\column{3}{@{}>{\hspre}l<{\hspost}@{}}%
\column{8}{@{}>{\hspre}l<{\hspost}@{}}%
\column{E}{@{}>{\hspre}l<{\hspost}@{}}%
\>[B]{}\lhskeyword{record}\;\Conid{Fam}\mathbin{:}\Jdg\;\lhskeyword{where}{}\<[E]%
\\
\>[B]{}\hsindent{3}{}\<[3]%
\>[3]{}\Varid{eff}{}\<[8]%
\>[8]{}\mathbin{:}\Varid{ki}{}\<[E]%
\\
\>[B]{}\hsindent{3}{}\<[3]%
\>[3]{}\Varid{sig}{}\<[8]%
\>[8]{}\mathbin{:}\Varid{el}\;\Varid{eff}\to\Conid{RawHFunctor}{}\<[E]%
\\
\>[B]{}\hsindent{3}{}\<[3]%
\>[3]{}\Varid{add}{}\<[8]%
\>[8]{}\mathbin{:}\Varid{el}\;\Varid{eff}\to\Varid{el}\;\Varid{eff}\to\Varid{el}\;\Varid{eff}{}\<[E]%
\ColumnHook
\end{hscode}\resethooks
The elements of the kind \ensuremath{\Varid{eff}\mathbin{:}\Varid{ki}} are effect signatures in
this family, each of them determining a higher-order functor via \ensuremath{\Varid{sig}}.
Additionally, there is a way \ensuremath{\Varid{add}} to combine two effects in a family.

Then we have the following definitions for monads and
computations for an effect \ensuremath{\Varid{e}} in a family \ensuremath{\Conid{F}}, which can be viewed as
a generic effect system parameterised by an effect family \ensuremath{\Conid{F}}:
\begin{hscode}\SaveRestoreHook
\column{B}{@{}>{\hspre}l<{\hspost}@{}}%
\column{11}{@{}>{\hspre}l<{\hspost}@{}}%
\column{E}{@{}>{\hspre}l<{\hspost}@{}}%
\>[B]{}\Varid{HandlerEff}\mathbin{:}(\Conid{F}\mathbin{:}\Conid{Fam})\to(\Varid{e}\mathbin{:}\Varid{el}\;(\Conid{F}\;\Varid{.eff}))\to\Jdg{}\<[E]%
\\
\>[B]{}\Varid{HandlerEff}\;\Conid{F}\;\Varid{e}\mathrel{=}\Conid{Handler}\;(\Conid{F}\;\Varid{.sig}\;\Varid{e}){}\<[E]%
\\[\blanklineskip]%
\>[B]{}\Varid{co}[  \text{\textunderscore}{\ni}\text{\textunderscore} ]\,\mathbin{:}{}\<[11]%
\>[11]{}(\Conid{F}\mathbin{:}\Conid{Fam})\to(\Varid{e}\mathbin{:}\Varid{el}\;(\Conid{F}\;\Varid{.eff}))\to\Varid{el}\;\Varid{ty}\to\Jdg{}\<[E]%
\\
\>[B]{}\Varid{co}[\Conid{F}\ni\Varid{e}]\mathrel{=}\Varid{co}\;(\Conid{F}\;\Varid{.sig}\;\Varid{e}){}\<[E]%
\ColumnHook
\end{hscode}\resethooks

A \emph{modular handler} processing the effect \ensuremath{\Varid{e}} in a family \ensuremath{\Conid{F}} and outputting the
effect \ensuremath{\Varid{o}} is the structure
\begin{hscode}\SaveRestoreHook
\column{B}{@{}>{\hspre}l<{\hspost}@{}}%
\column{3}{@{}>{\hspre}l<{\hspost}@{}}%
\column{8}{@{}>{\hspre}l<{\hspost}@{}}%
\column{E}{@{}>{\hspre}l<{\hspost}@{}}%
\>[B]{}\lhskeyword{record}\;\Varid{ModHdl}\;(\Conid{F}\mathbin{:}\Conid{Fam})\;(\Varid{e}\;\Varid{o}\mathbin{:}\Varid{el}\;(\Conid{F}\;\Varid{.eff}))\mathbin{:}\Jdg\;\lhskeyword{where}{}\<[E]%
\\
\>[B]{}\hsindent{3}{}\<[3]%
\>[3]{}\Varid{alg}{}\<[8]%
\>[8]{}\mathbin{:}(\Varid{μ}\mathbin{:}\Varid{el}\;(\Conid{F}\;\Varid{.eff}))\to\Varid{HandlerEff}\;\Conid{F}\;(\Conid{F}\;\Varid{.add}\;\Varid{o}\;\Varid{μ}){}\<[E]%
\\
\>[8]{}\to\Varid{HandlerEff}\;\Conid{F}\;(\Conid{F}\;\Varid{.add}\;\Varid{e}\;\Varid{μ}){}\<[E]%
\\
\>[B]{}\hsindent{3}{}\<[3]%
\>[3]{}\Varid{res}{}\<[8]%
\>[8]{}\mathbin{:}\Varid{el}\;(\Varid{ty}\;{\Rightarrow_k}\;\Varid{ty}){}\<[E]%
\\
\>[B]{}\hsindent{3}{}\<[3]%
\>[3]{}\Varid{run}{}\<[8]%
\>[8]{}\mathbin{:}(\Varid{μ}\mathbin{:}\Varid{el}\;(\Conid{F}\;\Varid{.eff}))\to(\Conid{Mo}\mathbin{:}\Varid{HandlerEff}\;\Conid{F}\;(\Conid{F}\;\Varid{.add}\;\Varid{o}\;\Varid{μ})){}\<[E]%
\\
\>[8]{}\to\Varid{tm}\;(\Varid{trans}\;(\Varid{alg}\;\Varid{μ}\;\Conid{Mo})\;(\lambda \Conid{A}.\;\Conid{Mo}\;(\Varid{res}\;\Conid{A}))){}\<[E]%
\ColumnHook
\end{hscode}\resethooks
Modular handlers as such can be applied to computations \ensuremath{\Varid{co}[\Conid{F}\ni(\Conid{F}\;\Varid{.add}\;\Varid{e}\;\Varid{μ})]\;\Conid{A}}, for all `ambient' effects $\mu$, removing the effect \ensuremath{\Varid{e}} and generating the
effect \ensuremath{\Varid{o}}, yielding computations \ensuremath{\Varid{co}[\Conid{F}\ni(\Conid{F}\;\Varid{.add}\;\Varid{o}\;\Varid{μ})]\;(\Varid{h}\;\Varid{.res}\;\Conid{A})}:
\begin{hscode}\SaveRestoreHook
\column{B}{@{}>{\hspre}l<{\hspost}@{}}%
\column{3}{@{}>{\hspre}l<{\hspost}@{}}%
\column{9}{@{}>{\hspre}l<{\hspost}@{}}%
\column{24}{@{}>{\hspre}l<{\hspost}@{}}%
\column{E}{@{}>{\hspre}l<{\hspost}@{}}%
\>[B]{}\Varid{handle}{}\<[9]%
\>[9]{}\mathbin{:}\{\mskip1.5mu \Conid{F},\Varid{e},\Varid{o},\Varid{μ},\Conid{A}\mskip1.5mu\}\to(\Varid{h}\mathbin{:}\Varid{ModHdl}\;\Conid{F}\;\Varid{e}\;\Varid{o})\to\Varid{co}[\Conid{F}\ni(\Conid{F}\;\Varid{.add}\;\Varid{e}\;\Varid{μ})]\;\Conid{A}{}\<[E]%
\\
\>[9]{}\to\Varid{co}[\Conid{F}\ni(\Conid{F}\;\Varid{.add}\;\Varid{o}\;\Varid{μ})]\;(\Varid{h}\;\Varid{.res}\;\Conid{A}){}\<[E]%
\\[\blanklineskip]%
\>[B]{}\Varid{handle}\;\Varid{h}\;\Varid{c}\mathrel{=}{\Uparrow}\;(\Varid{h}\;\Varid{.run}\;\Varid{μ}\;\Conid{T}\;\Conid{A}\;\Varid{c'})\;\lhskeyword{where}{}\<[E]%
\\[\blanklineskip]%
\>[B]{}\hsindent{3}{}\<[3]%
\>[3]{}\Conid{T}\mathbin{:}\Varid{HandlerEff}\;\Conid{F}\;(\Conid{F}\;\Varid{.add}\;\Varid{o}\;\Varid{μ}){}\<[E]%
\\
\>[B]{}\hsindent{3}{}\<[3]%
\>[3]{}\Conid{T}\mathrel{=}\Varid{th\hyp{}alg}\;(\Conid{F}\;\Varid{.sig}\;(\Conid{F}\;\Varid{.add}\;\Varid{o}\;\Varid{μ})){}\<[E]%
\\[\blanklineskip]%
\>[B]{}\hsindent{3}{}\<[3]%
\>[3]{}\Varid{c'}\mathbin{:}\Varid{tm}\;(\Varid{h}\;\Varid{.alg}\;\Varid{μ}\;\Conid{T}\;{}\<[24]%
\>[24]{}\Conid{A}){}\<[E]%
\\
\>[B]{}\hsindent{3}{}\<[3]%
\>[3]{}\Varid{c'}\mathrel{=}\Varid{hdl}\;(\Varid{h}\;\Varid{.alg}\;\Varid{μ}\;(\Varid{th\hyp{}alg}\;(\Conid{F}\;\Varid{.sig}\;(\Conid{F}\;\Varid{.add}\;\Varid{o}\;\Varid{μ}))))\;\Varid{c}{}\<[E]%
\ColumnHook
\end{hscode}\resethooks

\sethscode{plainhscode}

The complete definition of the effect family \ensuremath{\Varid{algFam}}, together with the needed
standard type connectives, is collected below. 

\begin{itemize}
\item Kind-level and type-level products
\begin{hscode}\SaveRestoreHook
\column{B}{@{}>{\hspre}l<{\hspost}@{}}%
\column{E}{@{}>{\hspre}l<{\hspost}@{}}%
\>[B]{}{\_\times_k\_}\mathbin{:}\Varid{ki}\to\Varid{ki}\to\Varid{ki}{}\<[E]%
\\
\>[B]{}\Varid{\times_k\hyp{}iso}\mathbin{:}\{\mskip1.5mu \Varid{k}\;\Varid{k'}\mathbin{:}\Varid{ki}\mskip1.5mu\}\to\Varid{el}\;(\Varid{k}\;\mathbin{\times_k}\;\Varid{k'})=\Sigma\;(\Varid{el}\;\Varid{k})\;(\lambda \anonymous .\;\Varid{el}\;\Varid{k'}){}\<[E]%
\\[\blanklineskip]%
\>[B]{}{\_\times_t\_}\mathbin{:}\Varid{el}\;\Varid{ty}\to\Varid{el}\;\Varid{ty}\to\Varid{el}\;\Varid{ty}{}\<[E]%
\\
\>[B]{}\Varid{\times_t\hyp{}iso}\mathbin{:}\{\mskip1.5mu \Conid{A}\;\Conid{B}\mathbin{:}\Varid{el}\;\Varid{ty}\mskip1.5mu\}\to\Varid{tm}\;(\Conid{A}\;\mathbin{\times_t}\;\Conid{B})=\Sigma\;(\Varid{tm}\;\Conid{A})\;(\lambda \anonymous .\;\Varid{tm}\;\Conid{B}){}\<[E]%
\ColumnHook
\end{hscode}\resethooks
  
\item The empty type
\begin{hscode}\SaveRestoreHook
\column{B}{@{}>{\hspre}l<{\hspost}@{}}%
\column{E}{@{}>{\hspre}l<{\hspost}@{}}%
\>[B]{}\Varid{empty}\mathbin{:}\Varid{el}\;\Varid{ty}{}\<[E]%
\\[\blanklineskip]%
\>[B]{}\Varid{absurd}\mathbin{:}(\Conid{A}\mathbin{:}\Varid{el}\;\Varid{ty})\to\Varid{tm}\;\Varid{empty}\to\Varid{tm}\;\Conid{A}{}\<[E]%
\\
\>[B]{}\Varid{absurd\hyp{}uniq}\mathbin{:}\{\mskip1.5mu \Conid{A}\mathbin{:}\Varid{el}\;\Varid{ty}\mskip1.5mu\}\to(\Varid{f}\mathbin{:}\Varid{tm}\;\Varid{empty}\to\Varid{tm}\;\Conid{A})\to\Varid{f}=\Varid{absurd}\;\Conid{A}{}\<[E]%
\ColumnHook
\end{hscode}\resethooks

\item Coproducts

We only need type-level coproducts, but for generality
we define coproducts parameterised by judgements \ensuremath{\Conid{U}\mathbin{:}\Jdg} and \ensuremath{\Conid{T}\mathbin{:}\Conid{U}\to \Jdg}:

\begin{hscode}\SaveRestoreHook
\column{B}{@{}>{\hspre}l<{\hspost}@{}}%
\column{3}{@{}>{\hspre}l<{\hspost}@{}}%
\column{10}{@{}>{\hspre}l<{\hspost}@{}}%
\column{11}{@{}>{\hspre}l<{\hspost}@{}}%
\column{12}{@{}>{\hspre}l<{\hspost}@{}}%
\column{13}{@{}>{\hspre}l<{\hspost}@{}}%
\column{E}{@{}>{\hspre}l<{\hspost}@{}}%
\>[B]{}\lhskeyword{record}\;\Varid{coprod\char95 intro}\;(\Conid{U}\mathbin{:}\Jdg)\;(\Conid{T}\mathbin{:}\Conid{U}\to\Jdg)\mathbin{:}\Jdg\;\lhskeyword{where}{}\<[E]%
\\
\>[B]{}\hsindent{3}{}\<[3]%
\>[3]{}{\_{+}\_}{}\<[10]%
\>[10]{}\mathbin{:}\Conid{U}\to\Conid{U}\to\Conid{U}{}\<[E]%
\\
\>[B]{}\hsindent{3}{}\<[3]%
\>[3]{}\Varid{inl}{}\<[10]%
\>[10]{}\mathbin{:}\{\mskip1.5mu \Varid{a},\Varid{b}\mskip1.5mu\}\to\Conid{T}\;\Varid{a}\to\Conid{T}\;(\Varid{a}\mathbin{+}\Varid{b}){}\<[E]%
\\
\>[B]{}\hsindent{3}{}\<[3]%
\>[3]{}\Varid{inr}{}\<[10]%
\>[10]{}\mathbin{:}\{\mskip1.5mu \Varid{a},\Varid{b}\mskip1.5mu\}\to\Conid{T}\;\Varid{b}\to\Conid{T}\;(\Varid{a}\mathbin{+}\Varid{b}){}\<[E]%
\\[\blanklineskip]%
\>[B]{}\lhskeyword{record}\;\Varid{coprod\char95 elim}\;(\Conid{U}\mathbin{:}\Jdg)\;(\Conid{T}\mathbin{:}\Conid{U}\to\Jdg)\;(\Conid{V}\mathbin{:}\Jdg)\;(\Conid{S}\mathbin{:}\Conid{U}\to\Jdg)\;{}\<[E]%
\\
\>[B]{}\hsindent{3}{}\<[3]%
\>[3]{}(\Varid{intr}\mathbin{:}\Varid{coprod\char95 intro}\;\Conid{U}\;\Conid{T})\mathbin{:}\Jdg\;\lhskeyword{where}{}\<[E]%
\\
\>[B]{}\hsindent{3}{}\<[3]%
\>[3]{}\lhskeyword{open}\;\Varid{coprod\char95 intro}\;\Varid{intr}{}\<[E]%
\\[\blanklineskip]%
\>[B]{}\hsindent{3}{}\<[3]%
\>[3]{}\Varid{case}\mathbin{:}\{\mskip1.5mu \Varid{a},\Varid{b},\Varid{c}\mskip1.5mu\}\to(\Conid{T}\;\Varid{a}\to\Conid{S}\;\Varid{c})\to(\Conid{T}\;\Varid{b}\to\Conid{S}\;\Varid{c})\to(\Conid{T}\;(\Varid{a}\mathbin{+}\Varid{b})\to\Conid{S}\;\Varid{c}){}\<[E]%
\\[\blanklineskip]%
\>[B]{}\hsindent{3}{}\<[3]%
\>[3]{}\Varid{caseβl}{}\<[11]%
\>[11]{}\mathbin{:}\{\mskip1.5mu \Varid{a},\Varid{b},\Varid{c}\mskip1.5mu\}\to(\Varid{l}\mathbin{:}\Conid{T}\;\Varid{a}\to\Conid{S}\;\Varid{c})\to(\Varid{r}\mathbin{:}\Conid{T}\;\Varid{b}\to\Conid{S}\;\Varid{c})\to(\Varid{x}\mathbin{:}\Conid{T}\;\Varid{a}){}\<[E]%
\\
\>[11]{}\hsindent{2}{}\<[13]%
\>[13]{}\to\Varid{case}\;\Varid{l}\;\Varid{r}\;(\Varid{inl}\;\Varid{x})=\Varid{l}\;\Varid{x}{}\<[E]%
\\[\blanklineskip]%
\>[B]{}\hsindent{3}{}\<[3]%
\>[3]{}\Varid{caseβr}{}\<[11]%
\>[11]{}\mathbin{:}\{\mskip1.5mu \Varid{a},\Varid{b},\Varid{c}\mskip1.5mu\}\to(\Varid{l}\mathbin{:}\Conid{T}\;\Varid{a}\to\Conid{S}\;\Varid{c})\to(\Varid{r}\mathbin{:}\Conid{T}\;\Varid{b}\to\Conid{S}\;\Varid{c})\to(\Varid{x}\mathbin{:}\Conid{T}\;\Varid{b}){}\<[E]%
\\
\>[11]{}\hsindent{2}{}\<[13]%
\>[13]{}\to\Varid{case}\;\Varid{l}\;\Varid{r}\;(\Varid{inr}\;\Varid{x})=\Varid{r}\;\Varid{x}{}\<[E]%
\\[\blanklineskip]%
\>[B]{}\hsindent{3}{}\<[3]%
\>[3]{}\Varid{caseη}{}\<[10]%
\>[10]{}\mathbin{:}\{\mskip1.5mu \Varid{a},\Varid{b},\Varid{c}\mskip1.5mu\}\to(\Varid{f}\mathbin{:}\Conid{T}\;(\Varid{a}\mathbin{+}\Varid{b})\to\Conid{S}\;\Varid{c}){}\<[E]%
\\
\>[10]{}\hsindent{2}{}\<[12]%
\>[12]{}\to\Varid{case}\;(\lambda \Varid{x}.\;\Varid{f}\;(\Varid{inl}\;\Varid{x}))\;(\lambda \Varid{x}.\;\Varid{f}\;(\Varid{inr}\;\Varid{x}))=\Varid{f}{}\<[E]%
\\[\blanklineskip]%
\>[B]{}\lhskeyword{record}\;\Varid{coprod}\;(\Conid{U}\mathbin{:}\Jdg)\;(\Conid{T}\mathbin{:}\Conid{U}\to\Jdg)\mathbin{:}\Jdg\;\lhskeyword{where}{}\<[E]%
\\
\>[B]{}\hsindent{3}{}\<[3]%
\>[3]{}\Varid{cpintr}\mathbin{:}\Varid{coprod\char95 intro}\;\Conid{U}\;\Conid{T}{}\<[E]%
\\
\>[B]{}\hsindent{3}{}\<[3]%
\>[3]{}\Varid{cpelim}\mathbin{:}\Varid{coprod\char95 elim}\;\Conid{U}\;\Conid{T}\;\Conid{U}\;\Conid{T}\;\Varid{cpintr}{}\<[E]%
\ColumnHook
\end{hscode}\resethooks
We then instantiate with \ensuremath{\Conid{U}\mathrel{=}\Varid{el}\;\Varid{ty}} and \ensuremath{\Conid{T}\mathrel{=}\Varid{tm}} to get type-level coproducts:
\begin{hscode}\SaveRestoreHook
\column{B}{@{}>{\hspre}l<{\hspost}@{}}%
\column{E}{@{}>{\hspre}l<{\hspost}@{}}%
\>[B]{}\Varid{coprodTy}\mathbin{:}\Varid{coprod}\;(\Varid{el}\;\Varid{ty})\;\Varid{tm}{}\<[E]%
\ColumnHook
\end{hscode}\resethooks
In this way, if kind-level coproducts are also needed, they can be easily added
by a declaration \ensuremath{\Varid{coprodKi}\mathbin{:}\Varid{coprod}\;\Varid{ki}\;\Varid{el}}.

\item Booleans

As a special case of coproducts, we have the Boolean type (for which we did not
include an eliminator in the main text for simplicity).
\begin{hscode}\SaveRestoreHook
\column{B}{@{}>{\hspre}l<{\hspost}@{}}%
\column{E}{@{}>{\hspre}l<{\hspost}@{}}%
\>[B]{}\Varid{bool}\mathbin{:}\Varid{el}\;\Varid{ty}{}\<[E]%
\\[\blanklineskip]%
\>[B]{}\Varid{tt}\;\Varid{ff}\mathbin{:}\Varid{tm}\;\Varid{bool}{}\<[E]%
\\[\blanklineskip]%
\>[B]{}\Varid{ite}\mathbin{:}\{\mskip1.5mu \Conid{A}\mathbin{:}\Varid{el}\;\Varid{ty}\mskip1.5mu\}\to \Varid{tm}\;\Varid{bool}\to \Varid{tm}\;\Conid{A}\to \Varid{tm}\;\Conid{A}\to \Varid{tm}\;\Conid{A}{}\<[E]%
\\[\blanklineskip]%
\>[B]{}\Varid{ite\beta{}tt}\mathbin{:}\{\mskip1.5mu \Conid{A}\mathbin{:}\Varid{el}\;\Varid{ty}\mskip1.5mu\}\to (\Varid{a}\;\Varid{b}\mathbin{:}\Varid{tm}\;\Conid{A})\to \Varid{ite}\;\Varid{tt}\;\Varid{a}\;\Varid{b}\mathrel{=}\Varid{a}{}\<[E]%
\\[\blanklineskip]%
\>[B]{}\Varid{ite\beta{}ff}\mathbin{:}\{\mskip1.5mu \Conid{A}\mathbin{:}\Varid{el}\;\Varid{ty}\mskip1.5mu\}\to (\Varid{a}\;\Varid{b}\mathbin{:}\Varid{tm}\;\Conid{A})\to \Varid{ite}\;\Varid{ff}\;\Varid{a}\;\Varid{b}\mathrel{=}\Varid{b}{}\<[E]%
\\[\blanklineskip]%
\>[B]{}\Varid{ite\eta}\mathbin{:}\{\mskip1.5mu \Conid{A}\mathbin{:}\Varid{el}\;\Varid{ty}\mskip1.5mu\}\to (\Varid{f}\mathbin{:}\Varid{tm}\;\Varid{bool}\to \Varid{tm}\;\Conid{A})\to (\lambda \Varid{b}.\;\Varid{ite}\;\Varid{b}\;(\Varid{f}\;\Varid{tt})\;(\Varid{f}\;\Varid{ff}))\mathrel{=}\Varid{f}{}\<[E]%
\ColumnHook
\end{hscode}\resethooks

\item
Kind-level lists with elimination to ML-style signatures

We first define the judgements of lists parameterised by 
the universe $(U, T)$ that the lists live in and the universe $(V, S)$
that the lists can eliminate into:
\begin{hscode}\SaveRestoreHook
\column{B}{@{}>{\hspre}l<{\hspost}@{}}%
\column{3}{@{}>{\hspre}l<{\hspost}@{}}%
\column{9}{@{}>{\hspre}l<{\hspost}@{}}%
\column{10}{@{}>{\hspre}l<{\hspost}@{}}%
\column{13}{@{}>{\hspre}l<{\hspost}@{}}%
\column{14}{@{}>{\hspre}l<{\hspost}@{}}%
\column{58}{@{}>{\hspre}l<{\hspost}@{}}%
\column{E}{@{}>{\hspre}l<{\hspost}@{}}%
\>[B]{}\lhskeyword{record}\;\Conid{ListAlg}\;\{\mskip1.5mu \Conid{U}\mathbin{:}\Jdg\mskip1.5mu\}\;\{\mskip1.5mu \Conid{V}\mathbin{:}\Jdg\mskip1.5mu\}\;(\Conid{T}\mathbin{:}\Conid{U}\to\Jdg)\;(\Conid{S}\mathbin{:}\Conid{V}\to\Jdg)\;{}\<[E]%
\\
\>[B]{}\hsindent{3}{}\<[3]%
\>[3]{}(\Varid{k}\mathbin{:}\Conid{U})\;(\Varid{a}\mathbin{:}\Conid{V})\mathbin{:}\Jdg{}\<[E]%
\\
\>[B]{}\hsindent{3}{}\<[3]%
\>[3]{}\lhskeyword{where}{}\<[E]%
\\
\>[B]{}\hsindent{3}{}\<[3]%
\>[3]{}\Varid{fst}\mathbin{:}\Conid{S}\;\Varid{a}{}\<[E]%
\\
\>[B]{}\hsindent{3}{}\<[3]%
\>[3]{}\Varid{snd}\mathbin{:}\Conid{T}\;\Varid{k}\to\Conid{S}\;\Varid{a}\to\Conid{S}\;\Varid{a}{}\<[E]%
\\[\blanklineskip]%
\>[B]{}\lhskeyword{record}\;\Conid{ListHom}\;\{\mskip1.5mu \Conid{U}\mathbin{:}\Jdg\mskip1.5mu\}\;\{\mskip1.5mu \Conid{V}\mathbin{:}\Jdg\mskip1.5mu\}\;\{\mskip1.5mu \Conid{W}\mathbin{:}\Jdg\mskip1.5mu\}\;{}\<[E]%
\\
\>[B]{}\hsindent{3}{}\<[3]%
\>[3]{}\{\mskip1.5mu \Conid{T}\mathbin{:}\Conid{U}\to\Jdg\mskip1.5mu\}\;\{\mskip1.5mu \Conid{S}\mathbin{:}\Conid{V}\to\Jdg\mskip1.5mu\}\;\{\mskip1.5mu \Conid{R}\mathbin{:}\Conid{W}\to\Jdg\mskip1.5mu\}\;{}\<[E]%
\\
\>[B]{}\hsindent{3}{}\<[3]%
\>[3]{}\{\mskip1.5mu \Varid{k}\mathbin{:}\Conid{U}\mskip1.5mu\}\;\{\mskip1.5mu \Varid{a}\mathbin{:}\Conid{V}\mskip1.5mu\}\;\{\mskip1.5mu \Varid{b}\mathbin{:}\Conid{W}\mskip1.5mu\}\;{}\<[E]%
\\
\>[B]{}\hsindent{3}{}\<[3]%
\>[3]{}(\Varid{alga}\mathbin{:}\Conid{ListAlg}\;\Conid{T}\;\Conid{S}\;\Varid{k}\;\Varid{a})\;(\Varid{algb}\mathbin{:}\Conid{ListAlg}\;\Conid{T}\;\Conid{R}\;\Varid{k}\;\Varid{b})\mathbin{:}\Jdg{}\<[E]%
\\
\>[B]{}\hsindent{3}{}\<[3]%
\>[3]{}\lhskeyword{where}{}\<[E]%
\\[\blanklineskip]%
\>[B]{}\hsindent{3}{}\<[3]%
\>[3]{}\Varid{f}\mathbin{:}\Conid{S}\;\Varid{a}\to\Conid{R}\;\Varid{b}{}\<[E]%
\\
\>[B]{}\hsindent{3}{}\<[3]%
\>[3]{}\Varid{homnil}\mathbin{:}\Varid{f}\;(\Varid{alga}\;\Varid{.fst})=\Varid{algb}\;\Varid{.fst}{}\<[E]%
\\
\>[B]{}\hsindent{3}{}\<[3]%
\>[3]{}\Varid{homcons}\mathbin{:}(\Varid{x}\mathbin{:}\Conid{T}\;\Varid{k})\to(\Varid{a}\mathbin{:}\Conid{S}\;\Varid{a})\to\Varid{f}\;(\Varid{alga}\;\Varid{.snd}\;\Varid{x}\;\Varid{a})=\Varid{algb}\;\Varid{.snd}\;\Varid{x}\;(\Varid{f}\;\Varid{a}){}\<[E]%
\\[\blanklineskip]%
\>[B]{}\lhskeyword{record}\;\Conid{ListIntro}\;(\Conid{U}\mathbin{:}\Jdg)\;(\Conid{T}\mathbin{:}\Conid{U}\to\Jdg)\mathbin{:}\Jdg\;\lhskeyword{where}{}\<[E]%
\\
\>[B]{}\hsindent{3}{}\<[3]%
\>[3]{}\Varid{listc}\mathbin{:}\Conid{U}\to\Conid{U}{}\<[E]%
\\
\>[B]{}\hsindent{3}{}\<[3]%
\>[3]{}\Varid{listcalg}\mathbin{:}\{\mskip1.5mu \Varid{k}\mathbin{:}\Conid{U}\mskip1.5mu\}\to\Conid{ListAlg}\;\Conid{T}\;\Conid{T}\;\Varid{k}\;(\Varid{listc}\;\Varid{k}){}\<[E]%
\\[\blanklineskip]%
\>[B]{}\hsindent{3}{}\<[3]%
\>[3]{}\Varid{nil}\mathbin{:}(\Varid{k}\mathbin{:}\Conid{U})\to\Conid{T}\;(\Varid{listc}\;\Varid{k}){}\<[E]%
\\
\>[B]{}\hsindent{3}{}\<[3]%
\>[3]{}\Varid{nil}\;\Varid{k}\mathrel{=}\Varid{listcalg}\;\Varid{.fst}{}\<[E]%
\\[\blanklineskip]%
\>[B]{}\hsindent{3}{}\<[3]%
\>[3]{}\Varid{cons}\mathbin{:}\{\mskip1.5mu \Varid{k}\mathbin{:}\Conid{U}\mskip1.5mu\}\to\Conid{T}\;\Varid{k}\to\Conid{T}\;(\Varid{listc}\;\Varid{k})\to\Conid{T}\;(\Varid{listc}\;\Varid{k}){}\<[E]%
\\
\>[B]{}\hsindent{3}{}\<[3]%
\>[3]{}\Varid{cons}\;\Varid{x}\;\Varid{xs}\mathrel{=}\Varid{listcalg}\;\Varid{.snd}\;\Varid{x}\;\Varid{xs}{}\<[E]%
\\[\blanklineskip]%
\>[B]{}\lhskeyword{record}\;\Conid{ListElim}\;(\Conid{U}\mathbin{:}\Jdg)\;(\Conid{T}\mathbin{:}\Conid{U}\to\Jdg)\;(\Conid{V}\mathbin{:}\Jdg)\;(\Conid{S}\mathbin{:}\Conid{V}\to\Jdg)\;{}\<[E]%
\\
\>[B]{}\hsindent{3}{}\<[3]%
\>[3]{}(\Varid{intr}\mathbin{:}\Conid{ListIntro}\;\Conid{U}\;\Conid{T})\mathbin{:}\Jdg{}\<[E]%
\\
\>[B]{}\hsindent{3}{}\<[3]%
\>[3]{}\lhskeyword{where}{}\<[E]%
\\[\blanklineskip]%
\>[B]{}\hsindent{3}{}\<[3]%
\>[3]{}\lhskeyword{open}\;\Conid{ListIntro}\;\Varid{intr}{}\<[E]%
\\
\>[B]{}\hsindent{3}{}\<[3]%
\>[3]{}\Varid{fold}{}\<[9]%
\>[9]{}\mathbin{:}\{\mskip1.5mu \Varid{k}\mathbin{:}\Conid{U}\mskip1.5mu\}\to\{\mskip1.5mu \Varid{a}\mathbin{:}\Conid{V}\mskip1.5mu\}\to(\Varid{alga}\mathbin{:}\Conid{ListAlg}\;\Conid{T}\;\Conid{S}\;\Varid{k}\;\Varid{a}){}\<[E]%
\\
\>[9]{}\to\Conid{T}\;(\Varid{listc}\;\Varid{k})\to\Conid{S}\;\Varid{a}{}\<[E]%
\\[\blanklineskip]%
\>[B]{}\hsindent{3}{}\<[3]%
\>[3]{}\Varid{foldβnil}{}\<[13]%
\>[13]{}\mathbin{:}\{\mskip1.5mu \Varid{k},\Varid{a}\mskip1.5mu\}\to(\Varid{alga}\mathbin{:}\Conid{ListAlg}\;\Conid{T}\;\Conid{S}\;\Varid{k}\;\Varid{a}){}\<[E]%
\\
\>[13]{}\to\Varid{fold}\;\Varid{alga}\;(\Varid{nil}\;\Varid{k})=\Varid{alga}\;\Varid{.fst}{}\<[E]%
\\[\blanklineskip]%
\>[B]{}\hsindent{3}{}\<[3]%
\>[3]{}\Varid{foldβcons}{}\<[14]%
\>[14]{}\mathbin{:}\{\mskip1.5mu \Varid{k},\Varid{a}\mskip1.5mu\}\to(\Varid{alga}\mathbin{:}\Conid{ListAlg}\;\Conid{T}\;\Conid{S}\;\Varid{k}\;\Varid{a}){}\<[E]%
\\
\>[14]{}\to(\Varid{x}\mathbin{:}\Conid{T}\;\Varid{k})\;(\Varid{xs}\mathbin{:}\Conid{T}\;(\Varid{listc}\;\Varid{k})){}\<[E]%
\\
\>[14]{}\to\Varid{fold}\;\Varid{alga}\;(\Varid{cons}\;\Varid{x}\;\Varid{xs})=\Varid{alga}\;\Varid{.snd}\;\Varid{x}\;(\Varid{fold}\;\Varid{alga}\;\Varid{xs}){}\<[E]%
\\[\blanklineskip]%
\>[B]{}\hsindent{3}{}\<[3]%
\>[3]{}\Varid{foldη}{}\<[10]%
\>[10]{}\mathbin{:}\{\mskip1.5mu \Varid{k},\Varid{a}\mskip1.5mu\}\to(\Varid{alga}\mathbin{:}\Conid{ListAlg}\;\Conid{T}\;\Conid{S}\;\Varid{k}\;\Varid{a}){}\<[E]%
\\
\>[10]{}\to(\Varid{h}\mathbin{:}\Conid{ListHom}\;\Varid{listcalg}\;\Varid{alga}){}\<[E]%
\\
\>[10]{}\to\Varid{fold}\;\Varid{alga}=\Varid{h}\;\Varid{.f}{}\<[E]%
\\[\blanklineskip]%
\>[B]{}\lhskeyword{record}\;\Conid{List}\;(\Conid{U}\mathbin{:}\Jdg)\;(\Conid{T}\mathbin{:}\Conid{U}\to\Jdg)\;(\Conid{V}\mathbin{:}\Jdg)\;(\Conid{S}\mathbin{:}\Conid{V}\to\Jdg)\mathbin{:}\Jdg{}\<[58]%
\>[58]{}\;\lhskeyword{where}{}\<[E]%
\\
\>[B]{}\hsindent{3}{}\<[3]%
\>[3]{}\Varid{intr}\mathbin{:}\Conid{ListIntro}\;\Conid{U}\;\Conid{T}{}\<[E]%
\\
\>[B]{}\hsindent{3}{}\<[3]%
\>[3]{}\Varid{elim}\mathbin{:}\Conid{ListElim}\;\Conid{U}\;\Conid{T}\;\Conid{V}\;\Conid{S}\;\Varid{intr}{}\<[E]%
\ColumnHook
\end{hscode}\resethooks

We have kind-level lists with declarations
\begin{hscode}\SaveRestoreHook
\column{B}{@{}>{\hspre}l<{\hspost}@{}}%
\column{E}{@{}>{\hspre}l<{\hspost}@{}}%
\>[B]{}\Conid{ListKi}\mathbin{:}\Conid{List}\;\Varid{ki}\;\Varid{el}\;\Varid{ki}\;\Varid{el}{}\<[E]%
\ColumnHook
\end{hscode}\resethooks

We additionally have elimination of kind-level lists to ML-style signatures
\begin{hscode}\SaveRestoreHook
\column{B}{@{}>{\hspre}l<{\hspost}@{}}%
\column{E}{@{}>{\hspre}l<{\hspost}@{}}%
\>[B]{}\Varid{si}\mathbin{:}\Jdg{}\<[E]%
\\
\>[B]{}\Varid{si}\mathrel{=}\Sigma\;\Varid{ki}\;(\lambda \Varid{k}.\;(\Varid{el}\;\Varid{k}\to\Varid{el}\;\Varid{ty})){}\<[E]%
\\[\blanklineskip]%
\>[B]{}\Varid{mo}\mathbin{:}\Varid{si}\to\Jdg{}\<[E]%
\\
\>[B]{}\Varid{mo}\;(\Varid{k},\Varid{t})\mathrel{=}\Sigma\;(\Varid{el}\;\Varid{k})\;(\lambda \Varid{α}.\;\Varid{tm}\;(\Varid{t}\;\Varid{α})){}\<[E]%
\\[\blanklineskip]%
\>[B]{}\Conid{ListKiTyElim}\mathbin{:}\Conid{ListElim}\;\Varid{ki}\;\Varid{el}\;\Varid{si}\;\Varid{mo}\;(\Conid{ListKi}\;\Varid{.intr}){}\<[E]%
\ColumnHook
\end{hscode}\resethooks

In the following we will rename the components of lists as follows:
\begin{hscode}\SaveRestoreHook
\column{B}{@{}>{\hspre}l<{\hspost}@{}}%
\column{3}{@{}>{\hspre}l<{\hspost}@{}}%
\column{5}{@{}>{\hspre}l<{\hspost}@{}}%
\column{6}{@{}>{\hspre}l<{\hspost}@{}}%
\column{16}{@{}>{\hspre}l<{\hspost}@{}}%
\column{17}{@{}>{\hspre}l<{\hspost}@{}}%
\column{28}{@{}>{\hspre}l<{\hspost}@{}}%
\column{33}{@{}>{\hspre}l<{\hspost}@{}}%
\column{39}{@{}>{\hspre}l<{\hspost}@{}}%
\column{44}{@{}>{\hspre}l<{\hspost}@{}}%
\column{E}{@{}>{\hspre}l<{\hspost}@{}}%
\>[B]{}\lhskeyword{open}\;\Conid{List}\;\Conid{ListKi}\;\lhskeyword{renaming}\;{}\<[E]%
\\
\>[B]{}\hsindent{3}{}\<[3]%
\>[3]{}({}\<[6]%
\>[6]{}\Varid{listc}\;{}\<[17]%
\>[17]{}\mapsto\;\Varid{list_k};{}\<[28]%
\>[28]{}\Varid{nil}\;{}\<[39]%
\>[39]{}\mapsto\;\Varid{nil_k};\hspace{1em}\Varid{cons}\;\mapsto\;\Varid{cons_k};{}\<[E]%
\\
\>[6]{}\Varid{fold}\;{}\<[17]%
\>[17]{}\mapsto\;\Varid{fold_k};{}\<[28]%
\>[28]{}\Varid{foldβnil}\;{}\<[39]%
\>[39]{}\mapsto\;\Varid{foldkβni};{}\<[E]%
\\
\>[6]{}\Varid{foldβcons}\;{}\<[17]%
\>[17]{}\mapsto\;\Varid{fold_kβcons};\Varid{foldη}\;{}\<[39]%
\>[39]{}\mapsto\;\Varid{fold_kη}){}\<[E]%
\\[\blanklineskip]%
\>[B]{}\lhskeyword{open}\;\Conid{ListElim}\;\Conid{ListKiTyElim}\;\lhskeyword{renaming}\;{}\<[E]%
\\
\>[B]{}\hsindent{3}{}\<[3]%
\>[3]{}(\Varid{fold}\;{}\<[16]%
\>[16]{}\mapsto\;\Varid{fold_{kt}};{}\<[33]%
\>[33]{}\Varid{foldβnil}\;{}\<[44]%
\>[44]{}\mapsto\;\Varid{fold_{kt}βnil};{}\<[E]%
\\
\>[3]{}\hsindent{2}{}\<[5]%
\>[5]{}\Varid{foldβcons}\;{}\<[16]%
\>[16]{}\mapsto\;\Varid{fold_{kt}βcons};{}\<[33]%
\>[33]{}\Varid{foldη}\;{}\<[44]%
\>[44]{}\mapsto\;\Varid{fold_{kt}η}){}\<[E]%
\ColumnHook
\end{hscode}\resethooks

\item The constantly empty higher-order functor
\begin{hscode}\SaveRestoreHook
\column{B}{@{}>{\hspre}l<{\hspost}@{}}%
\column{3}{@{}>{\hspre}l<{\hspost}@{}}%
\column{12}{@{}>{\hspre}l<{\hspost}@{}}%
\column{17}{@{}>{\hspre}l<{\hspost}@{}}%
\column{23}{@{}>{\hspre}l<{\hspost}@{}}%
\column{E}{@{}>{\hspre}l<{\hspost}@{}}%
\>[B]{}\Conid{VoidH}\mathbin{:}\Conid{RawHFunctor}{}\<[E]%
\\
\>[B]{}\Conid{VoidH}\mathrel{=}\lhskeyword{record}\;{}\<[17]%
\>[17]{}\{\mskip1.5mu \Varid{0}\mathrel{=}\Varid{voidH0}{}\<[E]%
\\
\>[17]{};\Varid{hfmap}\mathrel{=}\Varid{hfmapVoid}{}\<[E]%
\\
\>[17]{};\Varid{hmap}\mathrel{=}\Varid{hmapVoid}\mskip1.5mu\}{}\<[E]%
\\
\>[B]{}\hsindent{3}{}\<[3]%
\>[3]{}\lhskeyword{where}{}\<[E]%
\\[\blanklineskip]%
\>[B]{}\hsindent{3}{}\<[3]%
\>[3]{}\Varid{voidH0}\mathbin{:}\Varid{el}\;\Varid{htyco}{}\<[E]%
\\
\>[B]{}\hsindent{3}{}\<[3]%
\>[3]{}\Varid{voidH0}\;\anonymous \;\anonymous \mathrel{=}\Varid{empty}{}\<[E]%
\\[\blanklineskip]%
\>[B]{}\hsindent{3}{}\<[3]%
\>[3]{}\Varid{hfmapVoid}\mathbin{:}(\Conid{F}\mathbin{:}\Conid{RawFunctor})\to\Varid{tm}\;(\Varid{fmap\hyp{}ty}\;(\Varid{voidH₀}\;(\Varid{0}\;\Conid{F}))){}\<[E]%
\\
\>[B]{}\hsindent{3}{}\<[3]%
\>[3]{}\Varid{hfmapVoid}\;\Conid{F}\;\Varid{α}\;\Varid{β}\;\Varid{f}\;\Varid{x}\mathrel{=}\Varid{x}{}\<[E]%
\\[\blanklineskip]%
\>[B]{}\hsindent{3}{}\<[3]%
\>[3]{}\Varid{hmapVoid}\mathbin{:}(\Conid{F}\;\Conid{G}\mathbin{:}\Conid{RawFunctor})\to\Varid{tm}\;(\Varid{trans}\;(\Varid{0}\;\Conid{F})\;(\Varid{0}\;\Conid{G})){}\<[E]%
\\
\>[3]{}\hsindent{9}{}\<[12]%
\>[12]{}\to\Varid{tm}\;(\Varid{nat}\mathbin{-}\Varid{ty}\;(\Varid{voidH₀}\;(\Varid{0}\;\Conid{F}))\;(\Varid{voidH₀}\;(\Varid{0}\;\Conid{G}))){}\<[E]%
\\
\>[B]{}\hsindent{3}{}\<[3]%
\>[3]{}\Varid{hmapVoid}\;\Conid{F}\;\Conid{G}\;\anonymous \;\Varid{α}\;\Varid{x}{}\<[23]%
\>[23]{}\mathrel{=}\Varid{x}{}\<[E]%
\ColumnHook
\end{hscode}\resethooks

\item Coproduct of higher-order functors
\begin{hscode}\SaveRestoreHook
\column{B}{@{}>{\hspre}l<{\hspost}@{}}%
\column{3}{@{}>{\hspre}l<{\hspost}@{}}%
\column{10}{@{}>{\hspre}l<{\hspost}@{}}%
\column{37}{@{}>{\hspre}l<{\hspost}@{}}%
\column{E}{@{}>{\hspre}l<{\hspost}@{}}%
\>[B]{}\Varid{coprodHF}\mathbin{:}\Conid{RawHFunctor}\to\Conid{RawHFunctor}\to\Conid{RawHFunctor}{}\<[E]%
\\[\blanklineskip]%
\>[B]{}\Varid{coprodHF}\;\Varid{H_1}\;\Varid{H_2}\;\Varid{.0}\mathrel{=}\lambda \Conid{F}\;\Conid{A}.\;{}\<[37]%
\>[37]{}(\Varid{H_1}\;\Varid{.0}\;\Conid{F}\;\Conid{A})\mathbin{+}(\Varid{H_2}\;\Varid{.0}\;\Conid{F}\;\Conid{A}){}\<[E]%
\\[\blanklineskip]%
\>[B]{}\Varid{coprodHF}\;\Varid{H_1}\;\Varid{H_2}\;\Varid{.hfmap}\mathrel{=}\lambda \Conid{F}\;\Varid{α}\;\Varid{β}\;\Varid{f}\;\Varid{x}.{}\<[E]%
\\
\>[B]{}\hsindent{3}{}\<[3]%
\>[3]{}\Varid{case}\;{}\<[10]%
\>[10]{}\{\mskip1.5mu \Varid{c}\mathrel{=}\Varid{H_1}\;\Varid{.0}\;(\Varid{0}\;\Conid{F})\;\Varid{β}\mathbin{+}\Varid{H_2}\;\Varid{.0}\;(\Varid{0}\;\Conid{F})\;\Varid{β}\mskip1.5mu\}\;{}\<[E]%
\\
\>[10]{}(\lambda \Varid{l}.\;\Varid{inl}\;(\Varid{H_1}\;\Varid{.hfmap}\;\Conid{F}\;\Varid{α}\;\Varid{β}\;\Varid{f}\;\Varid{l}))\;{}\<[E]%
\\
\>[10]{}(\lambda \Varid{r}.\;\Varid{inr}\;(\Varid{H_2}\;\Varid{.hfmap}\;\Conid{F}\;\Varid{α}\;\Varid{β}\;\Varid{f}\;\Varid{r}))\;{}\<[E]%
\\
\>[10]{}\Varid{x}{}\<[E]%
\\[\blanklineskip]%
\>[B]{}\Varid{coprodHF}\;\Varid{H_1}\;\Varid{H_2}\;\Varid{.hmap}\mathrel{=}\lambda \Conid{F}\;\Conid{G}\;\Varid{σ}\;\Varid{α}\;\Varid{x}.{}\<[E]%
\\
\>[B]{}\hsindent{3}{}\<[3]%
\>[3]{}\Varid{case}\;{}\<[10]%
\>[10]{}\{\mskip1.5mu \Varid{c}\mathrel{=}\Varid{H_1}\;\Varid{.0}\;(\Varid{0}\;\Conid{G})\;\Varid{α}\mathbin{+}\Varid{H_2}\;\Varid{.0}\;(\Varid{0}\;\Conid{G})\;\Varid{α}\mskip1.5mu\}\;{}\<[E]%
\\
\>[10]{}(\lambda \Varid{l}.\;\Varid{inl}\;(\Varid{H_1}\;\Varid{.hmap}\;\Conid{F}\;\Conid{G}\;\Varid{σ}\;\Varid{α}\;\Varid{l}))\;{}\<[E]%
\\
\>[10]{}(\lambda \Varid{r}.\;\Varid{inr}\;(\Varid{H_2}\;\Varid{.hmap}\;\Conid{F}\;\Conid{G}\;\Varid{σ}\;\Varid{α}\;\Varid{r}))\;{}\<[E]%
\\
\>[10]{}\Varid{x}{}\<[E]%
\ColumnHook
\end{hscode}\resethooks

\item Higher-order functor for an algebraic operation 
\begin{hscode}\SaveRestoreHook
\column{B}{@{}>{\hspre}l<{\hspost}@{}}%
\column{3}{@{}>{\hspre}l<{\hspost}@{}}%
\column{11}{@{}>{\hspre}l<{\hspost}@{}}%
\column{33}{@{}>{\hspre}l<{\hspost}@{}}%
\column{E}{@{}>{\hspre}l<{\hspost}@{}}%
\>[B]{}\Conid{AlgOpHFun}\mathbin{:}\Varid{el}\;\Varid{ty}\to\Varid{el}\;\Varid{ty}\to\Conid{RawHFunctor}{}\<[E]%
\\
\>[B]{}\Conid{AlgOpHFun}\;\Conid{P}\;\Conid{A}\mathrel{=}{}\<[E]%
\\
\>[B]{}\hsindent{3}{}\<[3]%
\>[3]{}\lhskeyword{record}\;{}\<[11]%
\>[11]{}\{\mskip1.5mu \Varid{0}\mathrel{=}\lambda \anonymous \;\Conid{X}.\;{}\<[33]%
\>[33]{}\Conid{P}\;\mathbin{\times_t}\;(\Conid{A}\;{\Rightarrow_t}\;\Conid{X}){}\<[E]%
\\
\>[11]{};\Varid{hfmap}\mathrel{=}\lambda \Conid{F}\;\Varid{α}\;\Varid{β}\;\Varid{f}\;(\Varid{p},\Varid{k}).\;(\Varid{p},(\lambda \Varid{x}.\;\Varid{f}\;(\Varid{k}\;\Varid{x}))){}\<[E]%
\\
\>[11]{};\Varid{hmap}\mathrel{=}\lambda \Conid{F}\;\Conid{G}\;\Varid{σ}\;\Varid{α}\;\Varid{pk}.\;\Varid{pk}\mskip1.5mu\}{}\<[E]%
\ColumnHook
\end{hscode}\resethooks

\item The ML-style signature corresponding to functors
\begin{hscode}\SaveRestoreHook
\column{B}{@{}>{\hspre}l<{\hspost}@{}}%
\column{E}{@{}>{\hspre}l<{\hspost}@{}}%
\>[B]{}\Conid{FctSig}\mathbin{:}\Varid{si}{}\<[E]%
\\
\>[B]{}\Conid{FctSig}\mathrel{=}\Varid{efty},\Varid{fmap\hyp{}ty}{}\<[E]%
\\[\blanklineskip]%
\>[B]{}\Conid{FctToMod}\mathbin{:}\Conid{RawFunctor}\to\Varid{mo}\;\Conid{FctSig}{}\<[E]%
\\
\>[B]{}\Conid{FctToMod}\;\Conid{F}\mathrel{=}(\Varid{0}\;\Conid{F}),(\Varid{fmap}\;\Conid{F}){}\<[E]%
\\[\blanklineskip]%
\>[B]{}\Conid{FctFromMod}\mathbin{:}\Varid{mo}\;\Conid{FctSig}\to\Conid{RawFunctor}{}\<[E]%
\\
\>[B]{}\Conid{FctFromMod}\;(\Conid{F},\Varid{fmap})\mathrel{=}\lhskeyword{record}\;\{\mskip1.5mu \Varid{0}\mathrel{=}\Conid{F};\Varid{fmap}\mathrel{=}\Varid{fmap}\mskip1.5mu\}{}\<[E]%
\ColumnHook
\end{hscode}\resethooks

\item The ML-style signature corresponding of higher-order functors
\begin{hscode}\SaveRestoreHook
\column{B}{@{}>{\hspre}l<{\hspost}@{}}%
\column{3}{@{}>{\hspre}l<{\hspost}@{}}%
\column{7}{@{}>{\hspre}l<{\hspost}@{}}%
\column{8}{@{}>{\hspre}l<{\hspost}@{}}%
\column{12}{@{}>{\hspre}l<{\hspost}@{}}%
\column{13}{@{}>{\hspre}l<{\hspost}@{}}%
\column{17}{@{}>{\hspre}l<{\hspost}@{}}%
\column{26}{@{}>{\hspre}l<{\hspost}@{}}%
\column{E}{@{}>{\hspre}l<{\hspost}@{}}%
\>[B]{}\Conid{HFctSig}\mathbin{:}\Varid{si}{}\<[E]%
\\
\>[B]{}\Conid{HFctSig}\mathrel{=}\Varid{htyco},(\lambda \Conid{H}.\;\Varid{hfmapTy}\;\Conid{H}\;\mathbin{\times_t}\;\Varid{hmapTy}\;\Conid{H})\;\lhskeyword{where}{}\<[E]%
\\
\>[B]{}\hsindent{3}{}\<[3]%
\>[3]{}\Varid{hfmapTy}\mathbin{:}(\Conid{H}\mathbin{:}\Varid{el}\;\Varid{htyco})\to\Varid{el}\;\Varid{ty}{}\<[E]%
\\
\>[B]{}\hsindent{3}{}\<[3]%
\>[3]{}\Varid{hfmapTy}\;\Conid{H}\mathrel{=}\allfor\;\Varid{efty}\;(\lambda \Conid{F}.\;\Varid{fmap\hyp{}ty}\;\Conid{F}\;{\Rightarrow_t}\;\Varid{fmap\hyp{}ty}\;(\Conid{H}\;\Conid{F})){}\<[E]%
\\[\blanklineskip]%
\>[B]{}\hsindent{3}{}\<[3]%
\>[3]{}\Varid{hmapTy}\mathbin{:}(\Conid{H}\mathbin{:}\Varid{el}\;\Varid{htyco})\to\Varid{el}\;\Varid{ty}{}\<[E]%
\\
\>[B]{}\hsindent{3}{}\<[3]%
\>[3]{}\Varid{hmapTy}\;\Conid{H}\mathrel{=}\allfor\;\Varid{efty}\;(\lambda \Conid{F}.\;\Varid{fmap\hyp{}ty}\;\Conid{F}\;{\Rightarrow_t}{}\<[E]%
\\
\>[3]{}\hsindent{4}{}\<[7]%
\>[7]{}\allfor\;\Varid{efty}\;(\lambda \Conid{G}.\;{}\<[26]%
\>[26]{}\Varid{fmap\hyp{}ty}\;\Conid{G}\;{\Rightarrow_t}\;{}\<[E]%
\\
\>[26]{}(\Varid{trans}\;\Conid{F}\;\Conid{G}\;{\Rightarrow_t}\;\Varid{nat\char95 ty}\;(\Conid{H}\;\Conid{F})\;(\Conid{H}\;\Conid{G})))){}\<[E]%
\\[\blanklineskip]%
\>[B]{}\Conid{HFctToMod}\mathbin{:}\Conid{RawHFunctor}\to\Varid{mo}\;\Conid{HFctSig}{}\<[E]%
\\
\>[B]{}\Conid{HFctToMod}\;\Conid{H}\mathrel{=}(\Conid{H}\;\Varid{.0}),{}\<[E]%
\\
\>[B]{}\hsindent{3}{}\<[3]%
\>[3]{}((\lambda \Conid{F}\;\Varid{fmap}.\;\Conid{H}\;\Varid{.hfmap}\;(\Conid{FctFromMod}\;(\Conid{F},\Varid{fmap}))){}\<[E]%
\\
\>[B]{}\hsindent{3}{}\<[3]%
\>[3]{},\lambda {}\<[13]%
\>[13]{}\Conid{F}\;\Varid{fmap₁}\;\Conid{G}\;\Varid{fmap₂}.\;{}\<[E]%
\\
\>[3]{}\hsindent{5}{}\<[8]%
\>[8]{}\Conid{H}\;\Varid{.hmap}\;{}\<[17]%
\>[17]{}(\Conid{FctFromMod}\;(\Conid{F},\Varid{fmap₁}))\;{}\<[E]%
\\
\>[17]{}(\Conid{FctFromMod}\;(\Conid{G},\Varid{fmap₂}))){}\<[E]%
\\[\blanklineskip]%
\>[B]{}\Conid{HFctFromMod}\mathbin{:}\Varid{mo}\;\Conid{HFctSig}\to\Conid{RawHFunctor}{}\<[E]%
\\
\>[B]{}\Conid{HFctFromMod}\;(\Conid{H},(\Varid{hfmap},\Varid{hmap}))\mathrel{=}{}\<[E]%
\\
\>[B]{}\hsindent{3}{}\<[3]%
\>[3]{}\lhskeyword{record}\;{}\<[12]%
\>[12]{}\{\mskip1.5mu \Varid{0}\mathrel{=}\Conid{H}{}\<[E]%
\\
\>[12]{};\Varid{hfmap}\mathrel{=}\lambda \Conid{F}.\;\Varid{hfmap}\;(\Conid{F}\;.\Varid{0})\;(\Conid{F}\;.\Varid{fmap}){}\<[E]%
\\
\>[12]{};\Varid{hmap}\mathrel{=}\lambda \Conid{F}\;\Conid{G}.\;\Varid{hmap}\;(\Conid{F}\;.\Varid{0})\;(\Conid{F}\;.\Varid{fmap})\;(\Conid{G}\;.\Varid{0})\;(\Conid{G}\;.\Varid{fmap})\mskip1.5mu\}{}\<[E]%
\ColumnHook
\end{hscode}\resethooks

\item The family of algebraic operations
\begin{hscode}\SaveRestoreHook
\column{B}{@{}>{\hspre}l<{\hspost}@{}}%
\column{3}{@{}>{\hspre}l<{\hspost}@{}}%
\column{10}{@{}>{\hspre}l<{\hspost}@{}}%
\column{11}{@{}>{\hspre}l<{\hspost}@{}}%
\column{18}{@{}>{\hspre}l<{\hspost}@{}}%
\column{20}{@{}>{\hspre}l<{\hspost}@{}}%
\column{24}{@{}>{\hspre}l<{\hspost}@{}}%
\column{45}{@{}>{\hspre}l<{\hspost}@{}}%
\column{E}{@{}>{\hspre}l<{\hspost}@{}}%
\>[B]{}\Conid{AlgSig}\mathbin{:}\Varid{el}\;(\Varid{list_k}\;(\Varid{ty}\;{\Rightarrow_k}\;\Varid{ty}))\to\Conid{RawHFunctor}{}\<[E]%
\\
\>[B]{}\Conid{AlgSig}\;\Varid{es}\mathrel{=}\Conid{HFctFromMod}{}\<[E]%
\\
\>[B]{}\hsindent{3}{}\<[3]%
\>[3]{}(\Varid{fold_{kt}}\;\{\mskip1.5mu \Varid{a}\mathrel{=}\Conid{HFctSig}\mskip1.5mu\}\;{}\<[E]%
\\
\>[3]{}\hsindent{8}{}\<[11]%
\>[11]{}(\lhskeyword{record}\;{}\<[20]%
\>[20]{}\{\mskip1.5mu \Varid{fst}\mathrel{=}\Conid{HFctToMod}\;\Conid{VoidH}{}\<[E]%
\\
\>[20]{};\Varid{snd}\mathrel{=}\lambda (\Conid{P},\Conid{A})\;\Conid{H}.{}\<[E]%
\\
\>[20]{}\hsindent{4}{}\<[24]%
\>[24]{}\Conid{HFctToMod}\;(\Varid{coprodHF}\;{}\<[45]%
\>[45]{}(\Conid{AlgOpHFun}\;\Conid{P}\;\Conid{A})\;{}\<[E]%
\\
\>[45]{}(\Conid{HFctFromMod}\;\Conid{H}))\mskip1.5mu\})\;{}\<[E]%
\\
\>[3]{}\hsindent{8}{}\<[11]%
\>[11]{}\Varid{es}){}\<[E]%
\\[\blanklineskip]%
\>[B]{}\Varid{ListApp_k}\mathbin{:}\{\mskip1.5mu \Varid{k}\mathbin{:}\Varid{ki}\mskip1.5mu\}\to\Varid{el}\;(\Varid{list_k}\;\Varid{k})\to\Varid{el}\;(\Varid{list_k}\;\Varid{k})\to\Varid{el}\;(\Varid{list_k}\;\Varid{k}){}\<[E]%
\\
\>[B]{}\Varid{ListApp_k}\;\{\mskip1.5mu \Varid{k}\mskip1.5mu\}\;\Varid{x}\;\Varid{y}\mathrel{=}{}\<[E]%
\\
\>[B]{}\hsindent{3}{}\<[3]%
\>[3]{}\Varid{fold_k}\;{}\<[10]%
\>[10]{}\{\mskip1.5mu \Varid{a}\mathrel{=}\Varid{list_k}\;\Varid{k}\mskip1.5mu\}\;{}\<[E]%
\\
\>[10]{}(\lhskeyword{record}\;\{\mskip1.5mu \Varid{fst}\mathrel{=}\Varid{y};\Varid{snd}\mathrel{=}\Varid{cons_k}\mskip1.5mu\})\;\Varid{x}{}\<[E]%
\\[\blanklineskip]%
\>[B]{}\Varid{algFam}\mathbin{:}\Conid{Fam}{}\<[E]%
\\
\>[B]{}\Varid{algFam}\mathrel{=}\lhskeyword{record}\;{}\<[18]%
\>[18]{}\{\mskip1.5mu \Varid{eff}\mathrel{=}\Varid{list_k}\;(\Varid{ty}\;\mathbin{\times_k}\;\Varid{ty}){}\<[E]%
\\
\>[18]{};\Varid{sig}\mathrel{=}\Conid{AlgSig}{}\<[E]%
\\
\>[18]{};\Varid{add}\mathrel{=}\Varid{ListApp_k}\mskip1.5mu\}{}\<[E]%
\ColumnHook
\end{hscode}\resethooks
\end{itemize}

\sethscode{autohscode}

\section{Equations of Computations for the Realizability Model}
\label{app:real:model:laws}

In \Cref{sec:real:mod:fha} we defined a model of \Fha{} in the language of
assemblies (\Cref{lang:asm}).
This appendix shows that the equational laws of \Fha{} are validated by the
definitions in \Cref{sec:real:mod:fha}.

The monadic laws of computations \Cref{eq:fha:co:laws} are satisfied:
\begin{itemize}[itemsep=0pt]
\item
For \ensuremath{\Varid{val\hyp{}let}}, given any $a : \RM.\ensuremath{\Varid{tm}\;\Conid{A}}$ and $k : \RM.\ensuremath{\Varid{tm}\;\Conid{A}} \to \RM.\ensuremath{\Varid{co}\;\Conid{H}\;\Conid{B}}$,
\begin{align*}
    & \ensuremath{\Varid{let\hyp{}in}\;(\Varid{val}\;\Varid{a})\;\Varid{k}} \\
={} & \reason{by definition of $\RM.\ensuremath{\Varid{let\hyp{}in}}$} \\
    & \lambda h \; C \; r.\; \ensuremath{\Varid{val}\;\Varid{a}\;\Varid{h}\;\Conid{C}\;(\lambda \Varid{a}.\;\Varid{k}\;\Varid{a}\;\Varid{h}\;\Conid{C}\;\Varid{r})} \\
={} & \reason{by definition of $\RM.\ensuremath{\Varid{val}}$} \\
    & \lambda h \; C \; r.\; \ensuremath{\Varid{k}\;\Varid{a}\;\Varid{h}\;\Conid{C}\;\Varid{r}} \\
={} & \reason{$\eta$-rule for functions} \\
    & \ensuremath{\Varid{k}\;\Varid{a}}
\end{align*}

\item
The case for \ensuremath{\Varid{let\hyp{}val}} is very similar. 
Given any $c : \RM.\ensuremath{\Varid{co}\;\Conid{H}\;\Conid{A}}$,
\begin{align*}
    & \ensuremath{\Varid{let\hyp{}in}\;\Varid{c}\;\Varid{val}} \\
={} & \reason{by definition of $\RM.\ensuremath{\Varid{let\hyp{}in}}$} \\
    & \lambda h \; C \; r.\; \ensuremath{\Varid{c}\;\Varid{h}\;\Conid{C}\;(\lambda \Varid{a}.\;\Varid{val}\;\Varid{a}\;\Varid{h}\;\Conid{C}\;\Varid{r})} \\
={} & \reason{by definition of $\RM.\ensuremath{\Varid{val}}$} \\
    & \lambda h \; C \; r.\; \ensuremath{\Varid{c}\;\Varid{h}\;\Conid{C}\;(\lambda \Varid{a}.\;\Varid{r}\;\Varid{a})} \\
={} & \reason{$\eta$-rule for functions} \\
    & c
\end{align*}

\item
For \ensuremath{\Varid{let\hyp{}assoc}}, given any $c_1$, $c_2$, and $c_3$, we have
\begin{align*}
    & \ensuremath{\Varid{let\hyp{}in}\;(\Varid{let\hyp{}in}\;\Varid{c}_{\mathrm{1}}\;\Varid{c}_{\mathrm{2}})\;\Varid{c}_{\mathrm{3}}}  \\
={} & \ensuremath{\lambda \Varid{h}\;\Conid{C}\;\Varid{r}.\;(\Varid{let\hyp{}in}\;\Varid{c}_{\mathrm{1}}\;\Varid{c}_{\mathrm{2}})\;\Varid{h}\;\Conid{C}\;(\lambda \Varid{b}.\;\Varid{c}_{\mathrm{3}}\;\Varid{b}\;\Varid{h}\;\Conid{C}\;\Varid{r})} \\
={} & \ensuremath{\lambda \Varid{h}\;\Conid{C}\;\Varid{r}.\;\Varid{c}_{\mathrm{1}}\;\Varid{h}\;\Conid{C}\;(\lambda \Varid{a}.\;\Varid{c}_{\mathrm{2}}\;\Varid{a}\;\Varid{h}\;\Conid{C}\;(\lambda \Varid{b}.\;\Varid{c}_{\mathrm{3}}\;\Varid{b}\;\Varid{h}\;\Conid{C}\;\Varid{r}))} \\
={} & \ensuremath{\lambda \Varid{h}\;\Conid{C}\;\Varid{r}.\;\Varid{c}_{\mathrm{1}}\;\Varid{h}\;\Conid{C}\;(\lambda \Varid{a}.\;\Varid{let\hyp{}in}\;(\Varid{c}_{\mathrm{2}}\;\Varid{a})\;\Varid{c}_{\mathrm{3}})} \\
={} & \ensuremath{\Varid{let\hyp{}in}\;\Varid{c}_{\mathrm{1}}\;(\lambda \Varid{a}.\;\Varid{let\hyp{}in}\;(\Varid{c}_{\mathrm{2}}\;\Varid{a})\;\Varid{c}_{\mathrm{3}})} 
\end{align*}
\end{itemize}

Now we check that the equation \ensuremath{\Varid{hdl\hyp{}val}} \Cref{eq:fha:evalval} is
satisfied: for all $H : \RM.\ensuremath{\Conid{RawHFunctor}}$, $A : \RM.\ensuremath{\Varid{el}} \; \RM.\ensuremath{\Varid{ty}}$, $h :
\RM.\ensuremath{\Conid{Handler}\;\Conid{H}}$ and $a : A$,
\begin{align*}
    & \RM.\ensuremath{\Varid{hdl}} \; h \; (\RM.\ensuremath{\Varid{val}} \; a) \\
={} & \reason{by definition of $\RM.\ensuremath{\Varid{hdl}}$} \\
    & \RM.\ensuremath{\Varid{val}} \; a \; h\; A\; h.\ensuremath{\Varid{ret}} \\
={} & \reason{by definition of $\RM.\ensuremath{\Varid{val}}$} \\
    & (\lambda h\; B\; r.\; r \; a) \; h \; A \; h.\ensuremath{\Varid{ret}}  \\
={} & h.\ensuremath{\Varid{ret}} \; a
\end{align*}

The model of operations is defined as follows:
\begin{lgather*}
\RM.\ensuremath{\Varid{op}} : \impfun{H, A, B}  H \;(\ensuremath{\Varid{th}\;\Conid{H}})\; A \to (A \to \RM.\ensuremath{\Varid{co}} \; H \; B) \to \RM.\ensuremath{\Varid{co}}\;H \; B \\
\RM.\ensuremath{\Varid{op}} \; o \; k = \lambda h \; C \; r.\;\\
\quad h.\ensuremath{\Varid{bind}} \; A \; C \; \\
\Spc (h.\ensuremath{\Varid{malg}} \; A \; (H.\ensuremath{\Varid{hmap}} \; (\ensuremath{\Varid{th}\;\Conid{H}})\; h \; (\RM.\ensuremath{\Varid{hdl}}\;h) \; A \; o)) \\
\Spc (\lambda a.\; k \; a \; h \; C \; r)
\end{lgather*}  
It remains to check that the equations \ensuremath{\Varid{let\hyp{}op}}
\Cref{eq:fha:letop} and \ensuremath{\Varid{hdl\hyp{}op}} \Cref{eq:fha:evalop} are satisfied.

For \ensuremath{\Varid{let\hyp{}op}}, given arbitrary $o : \ensuremath{\Conid{H}\;(\Varid{th}\;\Conid{H})\;\Conid{A}}$, $\ensuremath{\Varid{k}\mathbin{:}\Conid{A}\to \Varid{co}\;\Conid{H}\;\Conid{B}}$, $\ensuremath{\Varid{k'}\mathbin{:}\Conid{B}\to \Varid{co}\;\Conid{H}\;\Conid{C}}$,
\begin{align*}
    & \ensuremath{\Varid{let\hyp{}in}\;(\Varid{op}\;\Varid{o}\;\Varid{k})\;\Varid{k'}} \\
={} & \reason{by definition of $\RM.\ensuremath{\Varid{let\hyp{}in}}$} \\
    & \lambda h \; C\; r.\  \ensuremath{(\Varid{op}\;\Varid{o}\;\Varid{k})\;\Varid{h}\;\Conid{C}} \; (\lambda b.\; \ensuremath{\Varid{k'}}\; b \; h \; C \; r)\\ 
={} & \reasonMult{by definition of $\RM.\ensuremath{\Varid{op}}$ and let \ensuremath{\Varid{o'}} be \\
       $\quad h.\ensuremath{\Varid{malg}} \; \_ \; (H.\ensuremath{\Varid{hmap}} \; \_ \; \_ \; (\RM.\ensuremath{\Varid{hdl}}\;h) \; \_ \; o)$} \numberthis\label{eq:oprime:def} \\
    & h.\ensuremath{\Varid{bind}} \; \_ \; \_ \; \ensuremath{\Varid{o'}} \; (\lambda a.\; k \; a \; h \; \_ \; (\lambda b.\; \ensuremath{\Varid{k'}\;\Varid{b}\;\Varid{h}\;\anonymous \;\Varid{r}})) \\
={} & \reason{by definition of $\RM.\ensuremath{\Varid{let\hyp{}in}\;(\Varid{k}\;\Varid{a})\;\Varid{k'}}$} \\
    & \ensuremath{\Varid{op}\;\Varid{o}\;(\lambda \Varid{a}.\;\Varid{let\hyp{}in}\;(\Varid{k}\;\Varid{a})\;\Varid{k'})}
\end{align*}

For \ensuremath{\Varid{hdl\hyp{}op}}, given any $h : \ensuremath{\Conid{Handler}\;\Conid{H}}$, $o : \ensuremath{\Conid{H}\;(\Varid{th}\;\Conid{H})\;\Conid{A}}$ and $\ensuremath{\Varid{k}\mathbin{:}\Conid{A}\to \Varid{co}\;\Conid{H}\;\Conid{B}}$,
\begin{align*}
    &  \ensuremath{\Varid{hdl}\;\Varid{h}\;(\Varid{op}\;\Varid{o}\;\Varid{k})} \\
={} &  \reason{by definition of $\RM.\ensuremath{\Varid{hdl}}$} \\ 
    & \ensuremath{(\Varid{op}\;\Varid{o}\;\Varid{k})\;\Varid{h}\;\anonymous } \; h.\ensuremath{\Varid{ret}}\\ 
={} & \reason{by definition of $\RM.\ensuremath{\Varid{op}}$ and let \ensuremath{\Varid{o'}} be the same as in \Cref{eq:oprime:def}} \\
    & h.\ensuremath{\Varid{bind}} \; \_ \; \_ \; \ensuremath{\Varid{o'}} \; (\lambda a.\; \ensuremath{\Varid{k}\;\Varid{a}\;\Varid{h}\;\anonymous } \; h.\ensuremath{\Varid{ret}}) \\
={} &  \reason{by definition of $\RM.\ensuremath{\Varid{hdl}} \; h \; \_ \; \ensuremath{\Varid{k}\;\Varid{a}}$} \\ 
    & h.\ensuremath{\Varid{bind}} \; \_ \; \_ \; \ensuremath{\Varid{o'}} \; 
       (\lambda a.\; \ensuremath{\Varid{hdl}\;\Varid{h}\;\anonymous \;(\Varid{k}\;\Varid{a})})
\end{align*}

\section{The Synthetic Logical Relation Model of \Fha}
\label{app:lr:model}

In this appendix, we define the logical relation model of \Fha{} in detail.
Let us start with two useful lemmas that we did not include in the main text.

The first of them says that we can not only glue but also tear types apart.
Given any type $A : U$ in \TTSTC{}, we can tear it to an
object-space fragment $A^\circ$ and a meta-space fragment $A^\bullet$:
\begin{align*}
&A^\circ : \Omod U      &   & A^\bullet : (\impfun{\sop} A) \to \CU \\
&A^\circ = \Oeta_U \; A &   & A^\bullet = \lambda o.\; \extty{A}{\sop}{o} 
\end{align*}
where $\CU \defeq \aset{A : U \mid \Cmodal \; A}$ is the subuniverse of
$\Cmod$-modal types.
The type $\extty{A}{\sop}{a}$ is $\Cmod$-modal because it is a singleton under
$\sop$ (\Cref{lem:cmod:if:unit}).

\begin{lemma}
For every type $A : U$ in \TTSTC{}, there is an isomorphism $A \cong \gluety{o : A^\circ}{A^\bullet\;o}$.
\end{lemma}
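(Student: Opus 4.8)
The plan is to reduce the claim to the isomorphism already packaged with the strict glue type in \Cref{el:ttstc:glue:ty}. Instantiating that axiom with the data $A^\circ : \Omod U$ and $A^\bullet : (\impfun{\sop} A) \to \CU$ of this lemma, its component $\glue$ supplies an isomorphism
\[
\glue : \Sigma\,(o : \impfun{\sop}A^\circ).\; A^\bullet\;o \ \iso\ \gluety{o : A^\circ}{A^\bullet\;o}.
\]
Since $A^\circ \impapp{z} = A$ for $z : \sop$ and $A^\bullet\;o = \extty{A}{\sop}{o}$ by the definitions above, the domain of $\glue$ is definitionally $\Sigma\,(o : \impfun{\sop}A).\;\extty{A}{\sop}{o}$. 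Hence it suffices to construct an isomorphism $A \iso \Sigma\,(o : \impfun{\sop}A).\;\extty{A}{\sop}{o}$ and postcompose with $\glue$.

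For that isomorphism I would take the forward map $\phi\;a = \tuple{\Oeta_A\;a,\ a}$ and the backward map $\psi\,\tuple{o, a} = a$. The forward map type checks because for every $z : \sop$ we have $(\Oeta_A\;a)\impapp{z} = a$, so $a$ inhabits $\extty{A}{\sop}{\Oeta_A\;a}$; the backward map is just the coercion forgetting the extension-type witness. One round-trip is immediate: $\psi\,(\phi\;a) = a$ holds definitionally.

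The only nontrivial verification — and the step I expect to be the crux — is the other round-trip $\phi\,(\psi\,\tuple{o,a}) = \tuple{o,a}$, i.e.\ $\tuple{\Oeta_A\;a,\ a} = \tuple{o,\ a}$. Here I would use that the second component $a$ belongs to $\extty{A}{\sop}{o}$, which unfolds to a proof that $a = o\impapp{z}$ for every $z : \sop$; combined with $(\Oeta_A\;a)\impapp{z} = a$, this gives $o\impapp{z} = (\Oeta_A\;a)\impapp{z}$ for all $z : \sop$, and function extensionality for $\sop$-indexed (implicit) functions yields $o = \Oeta_A\;a$. Transporting along this equality identifies the two second components (both equal to the underlying $a$), and since membership in an extension type is a subsingleton the dependent pairs are equal. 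This is exactly the expected ``fracture'' fact that a type is recovered by gluing its object-space and meta-space fragments; the content lies entirely in the extensionality of $\sop$ and the proof-irrelevance of extension types, both available in \TTSTC{}.

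Composing $\phi$ with $\glue.\ensuremath{\Varid{fwd}}$ and $\glue.\ensuremath{\Varid{bwd}}$ with $\psi$ then delivers the desired isomorphism $A \iso \gluety{o : A^\circ}{A^\bullet\;o}$, with the two fresh round-trips following from those of $\glue$ together with the two verifications above. I do not anticipate any obstruction beyond the single funext/proof-irrelevance argument, so the bulk of the write-up will be the routine bookkeeping of unfolding $A^\circ$ and $A^\bullet$ and checking that the composite maps are mutually inverse.
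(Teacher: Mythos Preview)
Your proposal is correct and essentially identical to the paper's proof: the paper writes the maps directly in the $\gluetm{-}{-}$ notation (forward $a \mapsto \gluetm{a}{\impabs{\_:\sop} a}$, backward $\gluetm{c}{o} \mapsto c$), which unfolds to exactly your $\glue.\Varid{fwd}\circ\phi$ and $\psi\circ\glue.\Varid{bwd}$, and its round-trip argument ``$c : \extty{A}{\sop}{o}$, hence $\gluetm{c}{c} = \gluetm{c}{o}$'' is your funext step rephrased. The only difference is that you factor explicitly through the $\glue$ isomorphism of \Cref{el:ttstc:glue:ty}, which the paper leaves implicit in its notation.
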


\begin{proof}
The two directions of the isomorphism are
\begin{align*}
&\ensuremath{\Varid{fwd}} : A \to \gluety{o : A^\circ}{A^\bullet\;o}  && \ensuremath{\Varid{bwd}} : (\gluety{o : A^\circ}{A^\bullet\;o}) \to A  \\
&\ensuremath{\Varid{fwd}} \; a = \gluetm{a}{\impabs{\_ : \sop} a} && \ensuremath{\Varid{bwd}} \; \gluetm{c}{o} = c
\end{align*}
These two functions are indeed mutual inverses:
for all $a : A$,
\[\ensuremath{\Varid{bwd}}\; (\,\ensuremath{\Varid{fwd}\;\Varid{a}}\,) = \ensuremath{\Varid{bwd}}\;\gluetm{a}{\impabs{\_ : \sop} a} = a;\]
for all $\gluetm{c}{o}$, by definition
$\ensuremath{\Varid{fwd}} \; (\ensuremath{\Varid{bwd}} \; \gluetm{c}{o}) = \gluetm{c}{c}$,
but $c$ has type $A^\bullet \defeq \extty{c}{\sop}{o}$, so $\gluetm{c}{c}
= \gluetm{c}{o}$.
\end{proof}

Since every type of \TTSTC{} is isomorphic to a glue type, 
we can characterise function types of \TTSTC{} more extrinsically,
which explicitises the idea that a map between logical predicates
sends (proofs for) related input to (proofs for) related output.

\begin{lemma}\label{lem:fun:between:glue:tys}
For all universes $U$ of \TTSTC{}, there is an isomorphism \ensuremath{\Varid{\gluetysymbol\hyp{}fun\hyp{}iso}}:
\begin{gather*}
\big(\gluety{a : A}{P \; a}\big) \to \big(\gluety{b : B}{Q \; b}\big) 
\ \cong \ 
\gluety{f : A \to B}{\big( (a : \impfun{\sop} A) \to P \; a \to Q \; (f\; a)\big)}
\end{gather*}
for all $A, B : \Omod{U}$, $P : (\impfun{\sop}A) \to \CU$, and
$Q : (\impfun{\sop} B) \to \CU$, where $\CU$
is the subuniverse $\aset{A : U \mid \Cmodal \; A}$ of $\Cmod$-modal types.
\end{lemma}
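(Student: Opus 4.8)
The plan is to exhibit the isomorphism concretely, defining forward and backward maps by hand and verifying they are mutual inverses using the $\beta$- and $\eta$-laws of glue types from \Cref{el:ttstc:glue:ty}. Before that I would record that the right-hand side is a legal glue type: for $f : \impfun{\sop}(A \to B)$ the fibre $(a : \impfun{\sop}A) \to P\;a \to Q\;(f\;a)$ must land in $\CU$, i.e.\ be $\Cmod$-modal. By \Cref{lem:cmod:if:unit} it suffices that $\Omod$ of it is a singleton, and under $\sop$ each $P\;a$ and each $Q\;(f\;a)$ is a singleton (being $\Cmod$-modal), so the whole dependent product is a singleton under $\sop$. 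Hence $\gluety{f : A \to B}{(a : \impfun{\sop}A) \to P\;a \to Q\;(f\;a)}$ is well-formed in $U$.

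For the forward map, take $g : (\gluety{a : A}{P\;a}) \to (\gluety{b : B}{Q\;b})$. Its object component exploits that the two glue types live in $\extty{U}{\sop}{A}$ and $\extty{U}{\sop}{B}$, so that $\impfun{\sop}(\gluety{a:A}{P\;a}) = \impfun{\sop}A$ and likewise for $B$; concretely I set $f \defeq \impabs{z : \sop} g$, reading $g$ under $\sop$ as a function $A \to B$. The meta component is $r\;a\;p \defeq \unglue(g\;\gluetm{p}{a})$, which inhabits $Q\;(f\;a)$ because under $\sop$ we have $\gluetm{p}{a} = a$ and $g\;a = f\;a$, so $\impabs{z:\sop}(g\;\gluetm{p}{a})$ and $f\;a$ are equal elements of $\impfun{\sop}B$ and $Q$ of them coincide.

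For the backward map, given $(f, r)$ I define $g$ pointwise by
\[
g\;x \;\defeq\; \gluetm{r\;(\impabs{\_:\sop}x)\;(\unglue x)}{f\;(\impabs{\_:\sop}x)},
\]
feeding $\unglue x : P\;(\impabs{\_:\sop}x)$ into $r$. Checking the round-trips is then pure glue-type computation: $\unglue\,\gluetm{b}{a} = b$ supplies one $\beta$-reduction, the $\eta$-law $x = \gluetm{\unglue x}{\impabs{\_:\sop}x}$ for glue elements supplies the other, and function extensionality assembles these into $G \circ F = \identity$ and $F \circ G = \identity$. I would also note that under $\sop$ the forward map is literally the identity on $A \to B$, so the isomorphism is $\sop$-compatible and thus usable inside subsequent glue-type constructions.

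The main obstacle I anticipate is the $\sop$-modal bookkeeping rather than any deep idea: at each step I must keep straight which subterms are used under the assumption $\sop$ — where the glue types collapse to $A$ and $B$ and all proof components become transparent — and which are not, and in particular verify that the object component $f$ genuinely depends only on $a : \impfun{\sop}A$ and not on the proof $p : P\;a$. This independence is exactly what the \emph{strictness} of glue types buys us (equality to $A$ under $\sop$, not mere isomorphism), and lining up the extension-type annotations so that each displayed term inhabits its claimed type is the fiddly part of the argument.
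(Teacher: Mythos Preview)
Your proposal is correct and matches the paper's proof essentially verbatim: the paper defines $\ensuremath{\Varid{fwd}}\;g = \gluetm{\lambda a\,p.\;\unglue(g\;\gluetm{p}{a})}{\lambda a.\;g\;a}$ and $\ensuremath{\Varid{bwd}}\;\gluetm{h}{f}\;\gluetm{p}{a} = \gluetm{h\;a\;p}{f\;a}$, which is exactly your forward and backward maps once you pattern-match $x$ as $\gluetm{p}{a}$. Your additional remarks on well-formedness of the right-hand glue type and on $\sop$-compatibility are not in the paper's proof but are sound and helpful.
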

\begin{proof}
The two directions of the isomorphism are
\begin{lgather*}
\ensuremath{\Varid{fwd}} \  g = \gluetm{\lambda a\; p.\; \unglue (g \; \gluetm{p}{a}) }{\lambda a.\; g \; a} \\
\ensuremath{\Varid{bwd}} \  \gluetm{h}{f} \  \gluetm{p}{a} = \gluetm{h\;a\;p}{f\; a}
\end{lgather*}
It is routine calculation to check that these two directions are mutual inverses.
\begin{align*}
   & \ensuremath{\Varid{bwd}}\;(\ensuremath{\Varid{fwd}}\;g) \\
=\ & \ensuremath{\Varid{bwd}}\;(\gluetm{\lambda a\; p.\; \unglue (g \; \gluetm{p}{a}) }{\lambda a.\; g \; a}) \\
=\ & \lambda \gluetm{p}{a}.\; \gluetm{\unglue (g \; \gluetm{p}{a})}{ g\;a }\\
=\ & \lambda a^*.\; g\;a^* \\
=\ & g \\[8pt]
   & \ensuremath{\Varid{fwd}} \; (\ensuremath{\Varid{bwd}}\; \gluetm{h}{f}) \\
=\ & \ensuremath{\Varid{fwd}} \; (\lambda \gluetm{p}{a}.\; \gluetm{h\;a\;p}{f\; a}) \\
=\ & \gluetm{ \lambda a\;p.\; \unglue \gluetm{h\;a\;p}{f\; a} }{\lambda a.\; f \; a} \\
=\ & \gluetm{h}{f} \qedhere
\end{align*}
\end{proof}

Now we come back to define the logical relation model:
\begin{equation}
\MM^* : \extty{\sem{\Fha}_{U_2}}{\sop}{\MM}
\end{equation}

\begin{notation}
In the rest of this section, for every declaration \ensuremath{\Varid{dec}} in the signature of \Fha,
we will write $\ensuremath{\Varid{dec}}^*$ for $\MM^*.\ensuremath{\Varid{dec}}$ and just \ensuremath{\Varid{dec}} for $\MM.\ensuremath{\Varid{dec}}$.
For example, $\ensuremath{\Varid{ki}}^* : \extty{U_2}{\sop}{\ensuremath{\Varid{ki}}}$
means $\MM^*.\ensuremath{\Varid{ki}} : \extty{U_2}{\sop}{\ensuremath{\Varid{ki}}}$.
\end{notation}

The logical predicate model of the judgement of kinds \Cref{eq:fha:kinds} is
\begin{lgather}{eq:m:star:ki}
\ensuremath{\Varid{ki}}^* : \extty{U_2}{\sop}{\ensuremath{\Varid{ki}}} \\
\ensuremath{\Varid{ki}}^* = \gluety{\alpha : \ensuremath{\Varid{ki}}}{\extty{U_1}{\sop}{\ensuremath{\Varid{el}} \; \alpha}} 
\end{lgather}
This uses the glue type (\ref{el:ttstc:glue:ty}) correctly because 
the generic model $\MM$ (\ref{el:ttstc:obj}) has type $\impfun{\sop}
\sem{\Fha}_{U_0}$, so the type of $\ensuremath{\Varid{ki}}$, or more explicitly $\impabs{z : \sop} (\MM
\impapp{z}).\ensuremath{\Varid{ki}}$, is $\impfun{\sop} U_0$, i.e.\ $\Omod U_0$.
The type $\extty{U_1}{\sop}{\ensuremath{\Varid{el}} \; \alpha}$ is $\Cmod$-modal because 
when $\sop$ holds, all elements of $\extty{U_1}{\sop}{\ensuremath{\Varid{el}} \; \alpha}$ are equal to 
$\ensuremath{\Varid{el}} \; \alpha$, so the type $\extty{U_1}{\sop}{\ensuremath{\Varid{el}} \; \alpha}$ has exactly
one element so isomorphic to the unit $\unitty$.
By \Cref{lem:cmod:if:unit}, the type $\extty{U_1}{\sop}{\ensuremath{\Varid{el}} \; \alpha}$ is $\Cmod$-modal.

More intuitively,
the definition \Cref{eq:m:star:ki} is the \emph{proof-relevant} logical predicate
for kinds.
A proof for a kind $\alpha : \ensuremath{\Varid{ki}}$ satisfying the
predicate is a type $A : U_1$ that restrict to $\ensuremath{\Varid{el}} \; \alpha$ in the object space.
Such a type $A$ is a `candidate' for the logical predicate for the kind
$\alpha$.
This is the same idea as \emph{reducibility candidates} in \citeauthor{Girard_1989}{'s}
proof of strong normalisation of System F \citep{Girard_1989}.

In accordance, the corresponding $\MM^*.\ensuremath{\Varid{el}}$ is as follows:
\begin{lgather}{eq:m:star:el}
\ensuremath{\Varid{el}}^* : \extty{\ensuremath{\Varid{ki}}^* \to U_1}{\sop}{\ensuremath{\Varid{el}}}  \\
\ensuremath{\Varid{el}}^* \; g = \unglue g
\end{lgather}
Let us more carefully examine how this definition type checks:
the argument $g$ has type $\ensuremath{\Varid{ki}}^* = \gluety{\alpha : \ensuremath{\Varid{ki}}}{\extty{U_1}{\sop}{\ensuremath{\Varid{el}} \; \alpha}}$.
Therefore $\unglue g$ has type $\extty{U_1}{\sop}{\ensuremath{\Varid{el}} \; g}$ (note that
under $\sop$, the type of $g$ is strictly equal to the type $\ensuremath{\Varid{ki}}$, thus
it makes sense to write $\ensuremath{\Varid{el}} \; g$ in a context where $\sop$ holds).
Thus $\ensuremath{\Varid{el}}^*$ is indeed a function $\ensuremath{\Varid{ki}}^* \to U_1$ that strictly restricts
to $\ensuremath{\Varid{el}}$ under $\sop$.

For kind-level functions, we need to define
\[
\ensuremath{\text{\textunderscore}{\Rightarrow_k}\text{\textunderscore}}^* : \extty{\ensuremath{\Varid{ki}}^* \to \ensuremath{\Varid{ki}}^* \to \ensuremath{\Varid{ki}}^*}{\sop}{\ensuremath{\text{\textunderscore}{\Rightarrow_k}\text{\textunderscore}}}
\]
Let us derive the definition step-by-step.
Our goal is to fill the hole $\hole{?0}$ in
\begin{equation*}
\gluetm{A}{\alpha} \;\mathbin{\ensuremath{{\Rightarrow^*_k}}}\; \gluetm{B}{\beta}
 \; = \; \hole{?0 : \extty{\ensuremath{\Varid{ki}}^*}{\sop}{\alpha \mathbin{\ensuremath{{\Rightarrow_k}}} \beta}},
\end{equation*}
where the variables in context have the following types
\begin{equation}\label{eq:ki:ab:types}
\spread{
\alpha, \beta : \impfun{\sop}\ensuremath{\Varid{ki}}
\also
A : \extty{U_1}{\sop}{\ensuremath{\Varid{el}} \; \alpha}
\also
B : \extty{U_1}{\sop}{\ensuremath{\Varid{el}} \; \beta}.
}
\end{equation}
Since $\ensuremath{\Varid{ki}}^*$ is a glue type \Cref{eq:m:star:ki}, we can use the term former of glue types:
\[
\gluetm{A}{\alpha} \mathbin{\ensuremath{{\Rightarrow^*_k}}} \gluetm{B}{\beta}
 \; = \; \gluetm{\hole{?2}}{\hole{?1}}
\]
Since \hole{?0} must restrict to $\alpha \mathbin{\ensuremath{{\Rightarrow_k}}} \beta$ under $\sop$,
\hole{?1} has to be $\alpha \mathbin{\ensuremath{{\Rightarrow_k}}} \beta$:
\[
\gluetm{A}{\alpha} \mathbin{\ensuremath{{\Rightarrow^*_k}}} \gluetm{B}{\beta}
 \; = \; \gluetm{\hole{?2}}{ \alpha \mathbin{\ensuremath{{\Rightarrow_k}}} \beta }
\]
The hole \hole{?2} now has type $\extty{U_1}{\sop}{\ensuremath{\Varid{el}} \; (\alpha \mathbin{\ensuremath{{\Rightarrow_k}}} \beta)}$;
in other words, \hole{?2} is a type in $U_1$ such that it restricts to $\ensuremath{\Varid{el}} \; (\alpha \mathbin{\ensuremath{{\Rightarrow_k}}} \beta)$
when $\sop$ holds.
We again use the glue type to satisfy the restriction:
\[
\hole{?2} \defeq \gluety{f : \ensuremath{\Varid{el}}\;(\alpha \mathbin{\ensuremath{{\Rightarrow_k}}} \beta)}{ \hole{?3} }.
\]
Conceptually, \hole{?2} is the logical predicate for the function kind $\alpha \mathbin{\ensuremath{{\Rightarrow_k}}} \beta$.
Readers experienced with traditional logical relations might expect $\hole{?3}$
to be the proposition asserting that $f$ sends input $a : \ensuremath{\Varid{el}} \; \alpha$
satisfying the logical predicate $A$ to output $f\;a : \ensuremath{\Varid{el}} \; \beta$
satisfying logical predicate $B$.
However, here the predicates $A$ and $B$ are proof-relevant, so the correct definition
of $\hole{?3}$ should be the \emph{type} of functions sending proofs for $a : \ensuremath{\Varid{el}}\;\alpha$ satisfying $A$
to proofs for $f \; a : \ensuremath{\Varid{el}}\;\beta$ satisfying $B$.
This can be concisely expressed in \TTSTC{} as 
\[
\hole{?3} \defeq \extty{A \to B}{\sop}{\ensuremath{\Varid{{\Rightarrow_k}\hyp{}iso}}.\ensuremath{\Varid{fwd}} \; f}
\]
where $\ensuremath{\Varid{{\Rightarrow_k}\hyp{}iso}}$ is the isomorphism in \Fha{} specifying function kinds:
\[\ensuremath{\Varid{{\Rightarrow_k}\hyp{}iso}} : \ensuremath{\Varid{el}} \; (\alpha \mathbin{\ensuremath{{\Rightarrow_k}}} \beta) \cong (\ensuremath{\Varid{el}} \; \alpha \to \ensuremath{\Varid{el}} \; \beta).\]
The function type $A \to B$ in
\TTSTC{} is translated to exponentials in the glued topos $\CatG$, which
takes care of `sending related input to related output' by construction.

For the record, we have completed our initial goal $\ensuremath{\text{\textunderscore}{\Rightarrow_k}\text{\textunderscore}}^*$:
\begin{lgather}{eq:m:star:kfun:def}
\ensuremath{\text{\textunderscore}{\Rightarrow_k}\text{\textunderscore}}^* : \extty{\ensuremath{\Varid{ki}}^* \to \ensuremath{\Varid{ki}}^* \to \ensuremath{\Varid{ki}}^*}{\sop}{\ensuremath{\text{\textunderscore}{\Rightarrow_k}\text{\textunderscore}}}\\
\gluetm{A}{\alpha} \mathbin{\ensuremath{{\Rightarrow^*_k}}} \gluetm{B}{\beta} = \gluetm{ F }{\alpha \mathbin{\ensuremath{{\Rightarrow_k}}} \beta}
\end{lgather}
where $F$ is the logical predicate for the function kind $\alpha \mathbin{\ensuremath{{\Rightarrow_k}}} \beta$:
\begin{equation}\label{eq:kfun:pred}
F \defeq \gluety{f : \ensuremath{\Varid{el}}\;(\alpha \mathbin{\ensuremath{{\Rightarrow_k}}} \beta)}{\extty{A \to B}{\sop}{\ensuremath{\Varid{{\Rightarrow_k}\hyp{}iso}}.\ensuremath{\Varid{fwd}} \; f}}.
\end{equation}

We also need to exhibit the isomorphism $\ensuremath{\Varid{{\Rightarrow_k}\hyp{}iso}}$ $\Cref{eq:fha:el:kfun}$ for $\MM^*$:
\begin{lgather*}
\ensuremath{\Varid{{\Rightarrow_k}\hyp{}iso}}^* : \impfun{\alpha^*, \beta^* : \ensuremath{\Varid{ki}}^*} 
  \extty{\ensuremath{\Varid{el}}^* \; (\alpha^* \mathbin{\ensuremath{{\Rightarrow^*_k}}} \beta^*) \cong (\ensuremath{\Varid{el}}^* \; \alpha^* \to \ensuremath{\Varid{el}}^* \; \beta^*)}{\sop}{\ensuremath{\Varid{{\Rightarrow_k}\hyp{}iso}}}. 
\end{lgather*}
Again by pattern matching the input $\alpha^*$ and $\beta^*$ as $\gluetm{A}{\alpha}$ and $\gluetm{B}{\beta}$ as in \Cref{eq:ki:ab:types},
after expanding out the definition of $\ensuremath{\Varid{el}}^*$, what we need to construct is an isomorphism
$
F \cong A \to B
$
that restricts to $\ensuremath{\Varid{{\Rightarrow_k}\hyp{}iso}}$ under $\sop$, where $F$ is defined as in \Cref{eq:kfun:pred}.
We let the two directions of this isomorphism be
\[
\spread*{
\ensuremath{\Varid{fwd}} \; \gluetm{g}{f} = g
\also
\ensuremath{\Varid{bwd}} \; h = \gluetm{h}{\ensuremath{\Varid{{\Rightarrow_k}\hyp{}iso}}.\ensuremath{\Varid{bwd}} \; h}
}
\]
where $h : A \to B$, $f : \impfun{\sop} \ensuremath{\Varid{el}}\;(\alpha \mathbin{\ensuremath{{\Rightarrow_k}}} \beta)$,
and 
\[g : \extty{A \to B}{\sop}{\ensuremath{\Varid{{\Rightarrow_k}\hyp{}iso}}.\ensuremath{\Varid{fwd}} \; f}.\]
These two functions are mutual inverses because
\[
\ensuremath{\Varid{fwd}}\;(\ensuremath{\Varid{bwd}}\;h) = \ensuremath{\Varid{fwd}} \; (\gluetm{h}{\ensuremath{\Varid{{\Rightarrow_k}\hyp{}iso}}.\ensuremath{\Varid{bwd}} \; h}) = h
\]
and from the other direction,
\[
  \ensuremath{\Varid{bwd}}\;(\,\ensuremath{\Varid{fwd}}\; \gluetm{g}{f}\,) 
= \ensuremath{\Varid{bwd}}\;g 
= \gluetm{g}{\ensuremath{\Varid{{\Rightarrow_k}\hyp{}iso}}.\ensuremath{\Varid{bwd}} \; g};
\]
now by the type of $g$, $g = (\ensuremath{\Varid{{\Rightarrow_k}\hyp{}iso}}.\ensuremath{\Varid{fwd}} \; f)$ under $\sop$, so the 
above further equals
\[
 \gluetm{g}{\ensuremath{\Varid{{\Rightarrow_k}\hyp{}iso}}.\ensuremath{\Varid{bwd}} \; (\ensuremath{\Varid{{\Rightarrow_k}\hyp{}iso}}.\ensuremath{\Varid{fwd}} \; f)} 
=  \gluetm{g}{f}.
\]

The definition \Cref{eq:kfun:pred} of the logical predicate $F$ for function
kinds may look complicated at first, but it has a very intuitive
explanation:
$F$ is basically the same as the type $A \to B$, except that its component in 
the object space, which is equal to $\ensuremath{\Varid{el}} \; \alpha \to \ensuremath{\Varid{el}}\;\beta$,
is swapped for $\ensuremath{\Varid{el}}\;(\alpha \mathbin{\ensuremath{{\Rightarrow_k}}} \beta)$ along the isomorphism
$\ensuremath{\Varid{{\Rightarrow_k}\hyp{}iso}}$, just like in the old days when a component of a personal computer could be
replaced by a compatible part.
This will be a recurring construction in the future, so for every universe $U$
we define
\begin{lgather*}
\ensuremath{\Varid{realign}} : (A : U) \to (B : \impfun{\sop} U) \to (\impfun{\sop} B \cong A)
  \to \extty{U}{\sop}{B} \\
\ensuremath{\Varid{realign}} \; A \; B \; \phi = \gluety{b : B}{\extty{A}{\sop}{\phi.\ensuremath{\Varid{fwd}}\;b}} \\[8pt]
\ensuremath{\Varid{realign\hyp{}iso}} : (A : U) \to (B : \impfun{\sop} U) \to (\phi : \impfun{\sop} B \cong A) \\
\Spc\Spc  \to \extty{\ensuremath{\Varid{realign}} \; A \; B \; \phi \cong A}{\sop}{\phi} \\
(\ensuremath{\Varid{realign\hyp{}iso}} \; A \; B \; \phi).\ensuremath{\Varid{fwd}}\;\gluetm{a}{b} = a \\
(\ensuremath{\Varid{realign\hyp{}iso}} \; A \; B \; \phi).\ensuremath{\Varid{bwd}}\;a = \gluetm{a}{\phi.\ensuremath{\Varid{bwd}}\;a}
\end{lgather*}

This construction is called \emph{realignment} \citep[\S 3.3]{sterling:2021:thesis}
on the universe $U$. In fact, realignment and strict glue types (\Cref{el:ttstc:glue:ty}) are inter-definable:
if we take \ensuremath{\Varid{realign}} and \ensuremath{\Varid{realign\hyp{}iso}} as axioms, we can define strict glue types $\gluety{a : A}{B}$
by realigning the dependent pair type $\Sigma (a : A).\; B$.

Using realignment, the definition \Cref{eq:m:star:kfun:def} can be succinctly expressed as
\[
\gluetm{A}{\alpha} \mathbin{\ensuremath{{\Rightarrow^*_k}}} \gluetm{B}{\beta} = 
\gluetm{\ensuremath{\Varid{realign}}\;(A \to B)\;\ensuremath{\Varid{{\Rightarrow_k}\hyp{}iso}}}{\alpha \mathbin{\ensuremath{{\Rightarrow_k}}} \beta}
\]
and $\ensuremath{\Varid{{\Rightarrow_k}\hyp{}iso}}^*$ is simply $\ensuremath{\Varid{realign\hyp{}iso}} \; (A \to B)\;\ensuremath{\Varid{{\Rightarrow_k}\hyp{}iso}}$.

\begin{element}
We move on to the logical predicates for types and terms.
Similar to function kinds, $\ensuremath{\Varid{ty}}^*$ is $\ensuremath{\Varid{ty}}$ glued together with some additional data:
\begin{lgather*}
\ensuremath{\Varid{ty}}^* : \extty{\ensuremath{\Varid{ki}}^*}{\sop}{\ensuremath{\Varid{ty}}}  \\
\ensuremath{\Varid{ty}}^* = \gluetm{\hole{?0 : \extty{U_1}{\sop}{\ensuremath{\Varid{el}}\ \ensuremath{\Varid{ty}}}}}{\ensuremath{\Varid{ty}}} 
\end{lgather*}%
Since \hole{?0} is a type in $U_1$ that is equal to $\ensuremath{\Varid{el}} \; \ensuremath{\Varid{ty}}$ under
$\sop$, it can be a glue type:
\begin{equation}\label{eq:m:star:ty:goal1}
\ensuremath{\Varid{ty}}^* = \gluetm{ \gluety{A : \ensuremath{\Varid{el}} \; \ensuremath{\Varid{ty}}}{\hole{?1}}}{\ensuremath{\Varid{ty}}} 
\end{equation}
which means that an element of the kind \ensuremath{\Varid{ty}} in the model
$\MM^*$ is a syntactic type $A$ together with the data \hole{?1}.  
It is natural to expect that the data \hole{?1} associated to a type
$A$ is a (candidate of) logical predicate for the type $A$, 
which is just any type that restricts to $\ensuremath{\Varid{tm}} \; A$ under $\sop$:
\begin{lgather*}
\ensuremath{\Varid{ty}}^* = \gluetm{ \gluety{A : \ensuremath{\Varid{el}} \; \ensuremath{\Varid{ty}}}
                           \hole{{\extty{U_0}{\sop}{\ensuremath{\Varid{tm}} \; A}}}}
                  {\ensuremath{\Varid{ty}}} 
                  \tag{$*$}
\end{lgather*}%
mimicking the kind structure \Cref{eq:m:star:ki} that we have seen earlier.
However, this definition will not work when we come to \emph{impredicative}
polymorphic types $\forall \alpha. A$ later, because $U_0$ is not
impredicative in the sense of being closed under $\Pi$-types $\Pi \; A \; B$ for 
\emph{arbitrary} types $A$ that are not necessarily in $U_0$.
\end{element}

\begin{element}
In every topos, we do have an impredicative universe -- the
universe $\Omega$ of \emph{propositions}.
Unfortunately, this universe is `too small' for interpreting \Fha{}-types.
If we have an element $A^* : \extty{\Omega}{\sop}{\ensuremath{\Varid{tm}} \; A}$ for some 
object-space type $A : \ensuremath{\Varid{el}}\;\ensuremath{\Varid{ty}}$,
when $\sop$ holds, $A^*$ is equal to $\ensuremath{\Varid{tm}} \; A$, but $A^*$ is in the universe $\Omega$,
so we have $\impfun{\sop} (a, b : \ensuremath{\Varid{tm}} \; A) \to a = b$, which means that 
the object-space type $A$ has at most one element, and this is clearly not true in general.
\end{element}

\begin{element}
To find a way out, let us recall how traditional logical predicates/relations 
of System F work in, for example, \varcitet{Girard_1989}{'s} normalisation proof.
For every type $A$ of System F, its logical predicate
is a proof-irrelevant predicate on the set of terms of $A$, or equivalently,
a function from terms of $A$ to the set of classical propositions.
Moreover, the logical predicate $P(t)$ of the impredicative polymorphic type $\forall \alpha. \; A$
is defined by `for all types $X$ and all candidate logical predicates $Q$ over
terms of $X$, the term $t \; [X]$ is related by the logical predicate of $A$ with
$\alpha$ replaced by $(X, Q)$'.
This works because classical propositions are impredicative, so we can quantify
over all $X$ and $Q$.
\end{element}

\begin{element}
Mimicking the traditional approach, we first define a universe of \emph{meta-space propositions}
(which are just classical
propositions $\aset{\top, \bot}$ when \TTSTC{} is interpreted in the Artin
gluing of the syntactic category and the category of sets):
\[ 
\COmega \defeq \aset{p : \Omega \mid \Cmodal\;p}.
\]
The universe $\COmega$ inherits all the
connectives that $\Omega$ has, including impredicative quantification.
For example, if $A$ is an arbitrary type and $B : A \to \COmega$, the type
$\forall (x : A). B \; x$ is in $\Omega$, and when $\sop$ holds, $B \; x \cong \unitty$ because a type is $\Cmod$-modal iff it is isomorphic to $\unitty$ under $\sop$
(\Cref{lem:cmod:if:unit}),
so $\forall (x : A). B \; x = \forall (x : A). 1 \cong 1$.
\end{element}

\begin{element}
Using $\COmega$, we fill out the hole \hole{?1} in $\ensuremath{\Varid{ty}}^*$ \Cref{eq:m:star:ty:goal1} by
\begin{lgather}{eq:m:star:ty}
\ensuremath{\Varid{ty}}^* : \extty{\ensuremath{\Varid{ki}}^*}{\sop}{\ensuremath{\Varid{ty}}}  \\
\ensuremath{\Varid{ty}}^* = \gluetm{ \gluety{A : \ensuremath{\Varid{el}} \; \ensuremath{\Varid{ty}}}{\hole{(\impfun{\sop} \; \ensuremath{\Varid{tm}} \; A) \to \COmega}} }{\ensuremath{\Varid{ty}}}
\end{lgather}
That is to say, the candidate of a logical predicate for a type $A$ is given as
a meta-space predicate $P : (\impfun{\sop} \; \ensuremath{\Varid{tm}} \; A) \to \COmega$.

Then $\ensuremath{\Varid{tm}}^*$ glues terms $\ensuremath{\Varid{tm}} \; A$ of an object-space type $A$ with the
predicate $P$:
\begin{lgather}{eq:m:star:tm}
\ensuremath{\Varid{tm}}^* : \extty{\ensuremath{\Varid{el}}^* \; \ensuremath{\Varid{ty}}^* \to U_0}{\sop}{\ensuremath{\Varid{tm}}}  \\
\ensuremath{\Varid{tm}}^* \; \gluetm{P}{A} = \gluety{t : \ensuremath{\Varid{tm}} \; A}{P \; t}
\end{lgather}
That is to say, in the model $\MM^*$, a term of the semantic type $\gluetm{P}{A} : \ensuremath{\Varid{ty}}^*$
is a term $t$ of the syntactic type $A$ that satisfies the meta-space predicate $P$.
\end{element}

\begin{notation}\label{notation:lpre:lprf}
For every $A^* : \ensuremath{\Varid{el}}^* \; \ensuremath{\Varid{ty}}^*$, we define
\begin{lgather*}
\lpre A^* : (\impfun{\sop}\ \ensuremath{\Varid{tm}} \; A^*) \to \COmega  \\
\lpre A^* = \unglue A^* 
\end{lgather*}
to remind us that ungluing a semantic type gives its underlying logical predicate.
Similarly, for every $a^* : \ensuremath{\Varid{tm}}^* \; A^*$, we define
\begin{lgather*}
\lprf a^* :  \lpre A^* \; (\lambda \imparg{\_ : \sop}.\ a^*)  \\
\lprf a^* = \unglue a^*
\end{lgather*}
to remind us that ungluing a semantic term is the proof that the underlying syntactic
term satisfies the corresponding logical predicate. 
\end{notation}

\begin{remark}
For every $A^* : \ensuremath{\Varid{el}}^* \; \ensuremath{\Varid{ty}}^*$,
the type $\ensuremath{\Varid{tm}}^* \; A^*$ satisfies
the property that for every $a : \impfun{\sop} A^*$, there is at most one element
$a^* : \ensuremath{\Varid{tm}}^*\;A^*$ that restricts to $a$ under $\sop$, because the meta-space
component of $\ensuremath{\Varid{tm}}^* \; A^*$ is a (fiberwise) meta-space proposition.
Based on this observation, there is a more intrinsic alternative definition of
$\ensuremath{\Varid{ty}}^*$:
for every universe $U$ of \TTSTC, we can define its \emph{proof-irrelevant} subuniverse $\Uir$ to be
\[
\Uir \defeq \aset{A : U \mid \forall (a : \impfun{\sop} A).\; (x, y : \extty{A}{\sop}{a}) \to (x = y)}.
\]
Then we can define $\ensuremath{\Varid{ty}}^*$ and $\ensuremath{\Varid{tm}}^*$ as simply
\begin{lgather*}
\ensuremath{\Varid{ty}}^* = \gluetm{\gluety{A : \ensuremath{\Varid{el}}\;\ensuremath{\Varid{ty}}}{\hole{\extty{\Uir_0}{\sop}{\ensuremath{\Varid{tm}}\;A}}}}{\ensuremath{\Varid{ty}}} \\[8pt]
\ensuremath{\Varid{tm}}^* \; A^* = \unglue A^*
\end{lgather*}
which directly mirrors the definition of $\ensuremath{\Varid{ki}}^*$ \Cref{eq:m:star:ki} and $\ensuremath{\Varid{el}}^*$ \Cref{eq:m:star:el}.

This alternative definition is in a suitable sense equivalent to the one above (\ref{eq:m:star:ty}, \ref{eq:m:star:tm}) because
for every $A : \impfun{\sop} U$, we have an equivalence
\[
\extty{\Uir_0}{\sop}{A} \ \cong\ 
((\impfun{\sop} \; A) \to \COmega).
\]
when treating them as categories (in fact, preorders) suitably.
We choose to work with $\ensuremath{\Varid{ty}}^*$ \Cref{eq:m:star:ty} in terms
of $\COmega$-valued predicates because it is slightly more convenient for
logical predicates on computation judgements later.
\end{remark}

\subsection{Base Types}

\begin{element}
Since in the theory of \Fha{}, the unit type is specified to be isomorphic
to meta-level unit type \Cref{eq:fha:terms}, we have no choice for the logical predicate for the logical
predicate of the unit type (of \Fha{}) other than the always true predicate:
\begin{lgather*}
\ensuremath{\Varid{unit}}^* : \extty{\ensuremath{\Varid{el}}^* \; \ensuremath{\Varid{ty}}^*}{\sop}{\ensuremath{\Varid{unit}}}  \\
\ensuremath{\Varid{unit}}^* = \gluetm{\lambda(\_ : \impfun{\sop} \ensuremath{\Varid{tm}} \; \ensuremath{\Varid{unit}}). \; \unitty}{\ensuremath{\Varid{unit}}}
\end{lgather*}
Recall that $\ensuremath{\Varid{tm}}^*\;\ensuremath{\Varid{unit}}^*$ computes to $\gluety{t : \ensuremath{\Varid{tm}}\;\ensuremath{\Varid{unit}}}{\unitty}$, we define
\begin{lgather*}
\ensuremath{\Varid{unit\hyp{}iso}}^* : \ensuremath{\Varid{tm}}^*\;\ensuremath{\Varid{unit}}^* \cong 1 \\
\ensuremath{\Varid{unit\hyp{}iso}}^*.\ensuremath{\Varid{fwd}} \; \_ = \unitel \\
\ensuremath{\Varid{unit\hyp{}iso}}^*.\ensuremath{\Varid{bwd}} \; \_= \gluetm{\unitel}{\ensuremath{\Varid{unit\hyp{}iso}}.\ensuremath{\Varid{bwd}}}
\end{lgather*}
This is an isomorphism because $\ensuremath{\Varid{tm}}\;\ensuremath{\Varid{unit}} \cong 1$ by $\ensuremath{\Varid{unit\hyp{}iso}}$.
\end{element}

\begin{element}
The other base type is the weak Boolean type \ensuremath{\Varid{bool}}. 
It is also the type that canonicity is about, so its logical predicate is
specific to canonicity:
\begin{lgather}{eq:m:star:bool}
\ensuremath{\Varid{bool}}^* : \extty{\ensuremath{\Varid{el}}^*\;\ensuremath{\Varid{ty}}^*}{\sop}{\ensuremath{\Varid{bool}}} \\
\ensuremath{\Varid{bool}}^* = \gluetm{P_\ensuremath{\Varid{can}}}{\ensuremath{\Varid{bool}}} \\[8pt]
P_\ensuremath{\Varid{can}} : (\impfun{\sop}\ensuremath{\Varid{tm}}\;\ensuremath{\Varid{bool}}) \to \COmega \\
P_\ensuremath{\Varid{can}}\; b = \Cmod (\impfun{\sop} (b = \ensuremath{\Varid{tt}} \lor b = \ensuremath{\Varid{ff}}))
\end{lgather}
The closed modality $\Cmod$ is needed here to erase the object-space component
of the proposition $\impfun{\sop} (b = \ensuremath{\Varid{tt}} \lor b = \ensuremath{\Varid{ff}})$, turning
it $\Cmod$-modal.
We also need to define the two terms of the weak Boolean types, i.e.\ showing
that the two terms \ensuremath{\Varid{ff}} and \ensuremath{\Varid{tt}} satisfy the logical predicate of \ensuremath{\Varid{bool}}:
\begin{lgather*}
\ensuremath{\Varid{tt}}^* : \extty{\ensuremath{\Varid{tm}}^* \; \ensuremath{\Varid{bool}}^*}{\sop}{\ensuremath{\Varid{tt}}} \\
\ensuremath{\Varid{tt}}^* = \gluetm{\Ceta\; ( \inl\;{\refl} )}{\ensuremath{\Varid{tt}}} \\[8pt]
\ensuremath{\Varid{ff}}^* : \extty{\ensuremath{\Varid{tm}}^* \; \ensuremath{\Varid{bool}}^*}{\sop}{\ensuremath{\Varid{ff}}} \\
\ensuremath{\Varid{ff}}^* = \gluetm{\Ceta\; ( \inr\;{\refl} )}{\ensuremath{\Varid{ff}}}
\end{lgather*}
In the construction of $\MM^*$, the only things that are specific to canonicity are
$P_\ensuremath{\Varid{can}}$, $\ensuremath{\Varid{tt}}^*$ and $\ensuremath{\Varid{ff}}^*$.
They can be changed to anything else without affecting other parts of $\MM^*$
(although there seemingly are not many interesting choices of $P_\ensuremath{\Varid{can}}$).
\end{element}

\subsection{Function Types}\label{sec:lr:fun:ty}

\begin{element}
A function $t : \ensuremath{\Varid{tm}}\ (\ensuremath{\Conid{A}\;{\Rightarrow_t}\;\Conid{B}})$ is related by the logical predicate for
the type $\ensuremath{\Conid{A}\;{\Rightarrow_t}\;\Conid{B}}$ if it maps input $a$ satisfying the logical predicate for $A$ to output $\ensuremath{\Varid{t}\;\Varid{a}}$ satisfying the logical predicate for $B$:
\begin{lgather*}
\ensuremath{\text{\textunderscore}{\Rightarrow_t}\text{\textunderscore}}^* : \extty{\ensuremath{\Varid{el}}^* \; \ensuremath{\Varid{ty}}^* \to \ensuremath{\Varid{el}}^* \; \ensuremath{\Varid{ty}}^* \to \ensuremath{\Varid{el}}^* \; \ensuremath{\Varid{ty}}^*}{\sop}{\ensuremath{\text{\textunderscore}{\Rightarrow_t}\text{\textunderscore}}} \\
\gluetm{P}{A} \mathbin{\ensuremath{{\Rightarrow^*_t}}} \gluetm{Q}{B} = \gluetm{P_\ensuremath{{\Rightarrow_t}}}{A \mathbin{\ensuremath{{\Rightarrow_t}}} B} \numberthis\label{el:m:star:tfun} \\[8pt]
P_\ensuremath{{\Rightarrow_t}} \defeq \lambda t.\;  \forall (a : \impfun{\sop} A).\ P \; a \to Q \; (\impabs{\_ : \sop} t \; a)  
\end{lgather*}
Note that in the expression \ensuremath{\Varid{t}\;\Varid{a}}, we have elided the isomorphism $\ensuremath{\Varid{{\Rightarrow_t}\hyp{}iso}}$ \Cref{eq:fha:terms} between $\ensuremath{\Varid{tm}}\;(A \mathbin{\ensuremath{{\Rightarrow_t}}} B)$ and $\ensuremath{\Varid{tm}}\;A \to \ensuremath{\Varid{tm}}\;B$.

We also need to define an isomorphism,
for all $A^*, B^* : \ensuremath{\Varid{el}}^* \; \ensuremath{\Varid{ty}}^*$,
\[
\ensuremath{\Varid{{\Rightarrow_t}\hyp{}iso}}^* : \ensuremath{\Varid{tm}}^* \; (A^* \mathbin{\ensuremath{{\Rightarrow^*_t}}} B^*) \cong (\ensuremath{\Varid{tm}}^* \; A^*) \to (\ensuremath{\Varid{tm}}^* \; B^*).
\]
Letting $A^* = \gluetm{P}{A}$ and $B^* = \gluetm{Q}{B}$, we compute as follows:
\begin{align*}
   & \ensuremath{\Varid{tm}}^* \; (\gluetm{P}{A} \mathbin{\ensuremath{{\Rightarrow^*_t}}} \gluetm{Q}{B}) \\
=\ & \gluety{t : \ensuremath{\Varid{tm}} \; (A \mathbin{\ensuremath{{\Rightarrow_t}}} B)}{(a : \impfun{\sop} \ensuremath{\Varid{tm}} \; A) \to P \; a \to  Q \; (t \; a)} \\
\cong\ & \reason{by $\ensuremath{\Varid{{\Rightarrow_t}\hyp{}iso}}$ \Cref{eq:fha:terms}} \\
   & \gluety{t : \ensuremath{\Varid{tm}}\; A \to \ensuremath{\Varid{tm}} \; B}{ (a : \impfun{\sop} \ensuremath{\Varid{tm}} \; A) \to P \; a \to  Q \; (t \; a)} \\
\cong\ & \reason{by \ensuremath{\Varid{\gluetysymbol\hyp{}fun\hyp{}iso}} from \Cref{lem:fun:between:glue:tys}}\\
   & \big(\gluety{a : \ensuremath{\Varid{tm}} \; A}{P \; a}\big) \to \big(\gluety{b : \ensuremath{\Varid{tm}} \; B}{Q \; b}\big) \\
=\ & (\ensuremath{\Varid{tm}}^* \; \gluetm{P}{A}) \to (\ensuremath{\Varid{tm}}^* \; \gluetm{Q}{B})
\end{align*}
\end{element}

\begin{element}
The logical predicate for polymorphic functions is 

\begin{equation}\label{el:m:star:p:fun}
\begin{array}{l}
\ensuremath{\allfor}^* : \extty{(k^* : \ensuremath{\Varid{ki}}^*) \to (\ensuremath{\Varid{el}}^* \; k^* \to \ensuremath{\Varid{el}}^* \; \ensuremath{\Varid{ty}}^*) \to \ensuremath{\Varid{el}}^* \; \ensuremath{\Varid{ty}}^*}{\sop}{\ensuremath{\allfor}} \\
\ensuremath{\allfor}^* \; k^* \; F = 
  \gluetm{ \lambda t.\; 
              \forall (\alpha^* : \ensuremath{\Varid{el}}^* \; k^*).\ \lpre (F \; \alpha^*) \; (\lambda \imparg{\_ : \sop}.\ (t\; \alpha^*))}{\ensuremath{\allfor} \; k^* \; F}
\end{array}
\end{equation}
Let us check the type of this definition step-by-step.
The object-space component $\ensuremath{\allfor} \; k^* \; F$ is well typed because
under $\sop$, $\ensuremath{\Varid{ki}}^*$ equals \ensuremath{\Varid{ki}}, so $k^* : \ensuremath{\Varid{ki}}$ under $\sop$, and similarly 
$F : \ensuremath{\Varid{el}\;\Varid{k}\to \Varid{el}\;\Varid{ty}}$ under $\sop$, thus $\ensuremath{\allfor} \; k^* \; F : \ensuremath{\Varid{el}\;\Varid{ty}}$ as expected.

The meta-space component $\lambda t. \cdots$ should be an $\COmega$-valued predicate on $t : \impfun{\sop} \ensuremath{\Varid{tm}} \; (\ensuremath{\allfor} \; k^* \; F)$.
We have $F\;\alpha^* : \ensuremath{\Varid{el}}^* \; \ensuremath{\Varid{ty}}^*$; this type computes to
\[
\gluety{A : \ensuremath{\Varid{el}} \; \ensuremath{\Varid{ty}}}{(\impfun{\sop} \; \ensuremath{\Varid{tm}} \; A) \to \COmega} 
\]
by definitions (\ref{eq:m:star:el}, \ref{eq:m:star:ty}).
Thus $\lpre (F \; \alpha^*)$ has type $(\impfun{\sop} \; \ensuremath{\Varid{tm}} \; (F \; \alpha^*)) \to \COmega$.
On the other hand, $t$ has type $\impfun{\sop} \ensuremath{\Varid{tm}} \; (\ensuremath{\allfor} \; k^* \; F)$,
which is isomorphic to $\impfun{\sop} (\alpha^* : \ensuremath{\Varid{el}}\;k^*)\to \ensuremath{\Varid{tm}} \; (F \; \alpha^*)$ via
\ensuremath{\Varid{\allfor\hyp{}iso}} \Cref{eq:fha:terms}, which we elided in the definition above.
The implicit function 
$\lambda \imparg{\_ : \sop}.\ (t\; \alpha^*)$ then has type 
$\impfun{\sop} \ensuremath{\Varid{tm}}\;(F \; \alpha^*)$, and therefore it can be supplied as an argument
to $\lpre (F \; \alpha^*)$, yielding a proposition in $\COmega$.
The quantification $\forall (\alpha^* : \ensuremath{\Varid{el}}^* \; k^*)$ is allowed because
$\COmega$ is closed under impredicative universal quantification.

We also need to define an isomorphism for $k^* : \ensuremath{\Varid{ki}}^*$ and $F : \ensuremath{\Varid{el}}^* \; k^* \to 
\ensuremath{\Varid{el}}^* \; \ensuremath{\Varid{ty}}^*$
\[
\ensuremath{\Varid{\allfor\hyp{}iso}}^* : \ensuremath{\Varid{tm}}^* \; (\ensuremath{\allfor}^* \; k^* \; F) \cong ((\alpha^* : \ensuremath{\Varid{el}}^* \; k^*) \to \ensuremath{\Varid{tm}}^*\; (F\; \alpha^*)).
\]
This is very similar to $\ensuremath{\Varid{{\Rightarrow_t}\hyp{}iso}}^*$ in \Cref{el:m:star:tfun}, and we give a 
direct definition here:
\begin{lgather*}
\ensuremath{\Varid{fwd}} \; \gluetm{p}{t} = \lambda \alpha^*.\; \gluetm{\ensuremath{\Varid{\allfor\hyp{}iso}}.\ensuremath{\Varid{fwd}}\; t \; \alpha^*}{p \; \alpha^*} \\[8pt]
\ensuremath{\Varid{bwd}} \; h = \gluetm{ \lambda \alpha^*.\; \lprf (h \; \alpha^*) }{\ensuremath{\Varid{\allfor\hyp{}iso}}.\ensuremath{\Varid{bwd}}\;h}
\end{lgather*}
\end{element}

\begin{element}
By this point we have completed the definition of the logical predicates for
the \Fom{}-fragment of \Fha, so the derived concepts in \Cref{fig:derived:concepts}
such as raw functors can be interpreted in $\MM^*$ as well.
For example, we have 
\begin{hscode}\SaveRestoreHook
\column{B}{@{}>{\hspre}l<{\hspost}@{}}%
\column{3}{@{}>{\hspre}l<{\hspost}@{}}%
\column{E}{@{}>{\hspre}l<{\hspost}@{}}%
\>[B]{}\Varid{efty}^*\mathbin{:}\Varid{ki}^*{}\<[E]%
\\
\>[B]{}\Varid{efty}^*\mathrel{=}\Varid{ty}^*\;{\Rightarrow^*_k}\;\Varid{ty}^*{}\<[E]%
\\[\blanklineskip]%
\>[B]{}\Varid{fmap\hyp{}ty}^*\mathbin{:}(\Conid{F}\mathbin{:}\Varid{el}^*\;\Varid{efty}^*)\to\Varid{el}^*\;\Varid{ty}^*{}\<[E]%
\\
\>[B]{}\Varid{fmap\hyp{}ty}^*\;\Conid{F}\mathrel{=}\allfor^*\;\Varid{ty}^*\;(\lambda \Varid{α}.\;\allfor^*\;\Varid{ty}^*\;(\lambda \Varid{β}.\;(\Varid{α}\;{\Rightarrow^*_t}\;\Varid{β})\;{\Rightarrow^*_t}\;(\Conid{F}\;\Varid{α}\;{\Rightarrow^*_t}\;\Conid{F}\;\Varid{β}))){}\<[E]%
\\[\blanklineskip]%
\>[B]{}\lhskeyword{record}\;\Conid{RawFunctor}^*\mathbin{:}U_1\;\lhskeyword{where}{}\<[E]%
\\
\>[B]{}\hsindent{3}{}\<[3]%
\>[3]{}\Varid{0}\mathbin{:}\Varid{el}^*\;\Varid{efty}^*{}\<[E]%
\\
\>[B]{}\hsindent{3}{}\<[3]%
\>[3]{}\Varid{fmap}\mathbin{:}\Varid{tm}^*\;\Varid{fmap\hyp{}ty}^*\;\Varid{0}{}\<[E]%
\ColumnHook
\end{hscode}\resethooks
which is simply the same as the definition of \ensuremath{\Conid{RawFunctor}} in
\Cref{fig:derived:concepts} except that all the judgements of \Fom{} such as \ensuremath{\Varid{ki}}
and \ensuremath{\Varid{ty}} are replaced by their corresponding interpretation in $\MM^*$.
Interpretations of other derived concepts in \Cref{fig:derived:concepts} such as 
$\ensuremath{\Conid{RawMonad}}^*$ and $\ensuremath{\Conid{RawHFunctor}}^*$ can be obtained in this way as well.
\end{element}

\subsection{Computation Judgements}
\begin{element}
What remains is the logical predicates for computation judgements of \Fha{}.
Recall that given \ensuremath{\Conid{H}\mathbin{:}\Conid{RawHFunctor}} and \ensuremath{\Conid{A}\mathbin{:}\Varid{el}\;\Varid{ty}}, the judgements \ensuremath{\Varid{co}\;\Conid{H}\;\Conid{A}}
in \Fha{} roughly axiomatise a monad equipped with \ensuremath{\Conid{H}}-operations -- more precisely, \ensuremath{\Varid{co}\;\Conid{H}\;\Conid{A}} axiomatises the Kleisli category of this monad since
\ensuremath{\Varid{co}\;\Conid{H}\;\Conid{A}} is not an \Fha{}-type but a separate judgement.
\end{element}

\begin{element}
(\emph{First attempt})
%
Since our logical predicates live in an impredicative universe,
a natural idea is to define the logical predicate for \ensuremath{\Varid{co}} by an impredicative
encoding of the initial monad equipped with \ensuremath{\Conid{H}}-operations:
\begin{lgather}{eq:m:star:co:naive}
\ensuremath{\Varid{co}}^* : \extty{\ensuremath{\Conid{HFunctor}}^* \to \ensuremath{\Varid{el}}^* \; \ensuremath{\Varid{ty}}^* \to U_0}{\sop}{\ensuremath{\Varid{co}}} \\
\ensuremath{\Varid{co}}^* \; H^* \; A^* = \gluety{c : \ensuremath{\Varid{co}} \; H^* \; A^* }{P_\ensuremath{\Varid{co}} \; c} \\[4pt]
P_\ensuremath{\Varid{co}} : \impfun{H^*, A^*}  (\impfun{\sop} \ensuremath{\Varid{co}} \; H^* \; A^*) \to \COmega \\
P_\ensuremath{\Varid{co}} \; c = \forall (h : \ensuremath{\Conid{Handler}}^* \; H^*).\ \lpre (h.0 \; A^*) \; (\ensuremath{\Varid{hdl}} \; h \; c)
\end{lgather}
The function $P_\ensuremath{\Varid{co}}$ type-checks as follows:
the type of $h.0$ is $\ensuremath{\Varid{el}}^* \; \ensuremath{\Varid{efty}}^*$, and
\begin{align*}
   & \ensuremath{\Varid{el}}^*\;\ensuremath{\Varid{efty}}^* \\
=\ & \reason{by definition} \\
   & \ensuremath{\Varid{el}}^*\; (\ensuremath{\Varid{ty}}^* \mathbin{\ensuremath{{\Rightarrow^*_k}}} \ensuremath{\Varid{ty}}^*) \\
\cong\ & \reason{by the axiom \Cref{eq:fha:el:kfun}} \\
   & \ensuremath{\Varid{el}}^* \; \ensuremath{\Varid{ty}}^* \to \ensuremath{\Varid{el}}^* \; \ensuremath{\Varid{ty}}^*
\end{align*}
Therefore the type of $h.0 \; A^*$ is $\ensuremath{\Varid{el}}^* \; \ensuremath{\Varid{ty}}^*$, which
is the glue type 
\[
\gluety{A : \ensuremath{\Varid{el}} \; \ensuremath{\Varid{ty}}}{(\impfun{\sop} \; \ensuremath{\Varid{tm}} \; A) \to \COmega}
\]
by (\ref{eq:m:star:el}, \ref{eq:m:star:ty}).
Then $\lpre (h.0 \; A^*)$ has type 
$
(\impfun{\sop} \; \ensuremath{\Varid{tm}} \; (h.0 \; A^*)) \to \COmega
$, i.e.\ it is a meta-space predicate on terms of type $\impfun{\sop} h.0 \; A^*$.
The term $\ensuremath{\Varid{hdl}} \; h \; c$, which in fact is the implicit 
function $\lambda \imparg{\_ : \sop}.\ \ensuremath{\Varid{hdl}} \; h \; c$,
has precisely the type $\impfun{\sop} \; \ensuremath{\Varid{tm}} \; (h.0 \; A^*)$.
Finally, since $\COmega$ is closed under universal quantification $\forall (h :
\ensuremath{\Conid{Handler}}^*\;H^*)$, $P_\ensuremath{\Varid{co}} \; c$ has type $\COmega$, i.e.\ $\extty{\Omega}{\sop}{1}$.
\end{element}

\begin{remark}
Usually, the impredicative encoding of a datatype needs to be `refined' by some
additional equalities to have the correct universal property
\cite{Awodey_2018}.
For example, the impredicative encoding of the coproduct type $A + B$ 
in an impredicative universe $U$ is
\begin{lgather*}
A + B = \sum (\alpha : (X : U) \to (A \to X) \to (B \to X) \to X)\; N \; \alpha \\[4pt]
N \; \alpha = (X, Y : U) \to (f : X \to Y) \to (h : A \to X) \to (k : B \to X)\\
\hspace{5em} \to f \; (\alpha \;X \; h \; k) = \alpha \; Y \; (f \hcomp h) (f \hcomp k)
\end{lgather*}
Without imposing $N$ on $\alpha$, the impredicative encoding would not satisfy
the $\eta$-rule of the coproduct type.
However, our logical predicates land in a universe
\emph{propositions}, where two elements of the same type are automatically equal, so this refinement is unnecessary.
\end{remark}

\begin{element}
However, as we commented in \Cref{rem:no:evallet}, evaluating the sequential
composition
\ensuremath{\Varid{let\hyp{}in}\;\Varid{c}\;\Varid{f}} is \emph{not} compositional because of the discrepancy between computations \ensuremath{\Varid{co}} and
raw monads: \ensuremath{\Varid{co}} satisfies the monadic laws while raw monads do not.
For this reason, with the above definition of $P_\ensuremath{\Varid{co}}$, we will have problems 
with showing the term constructor \ensuremath{\Varid{let\hyp{}in}} satisfying its logical predicate:
\[
\begin{pmboxed}
\defaultcolumn{@{}l@{}}%
\> \ensuremath{\Varid{let\hyp{}in}}^* : \extty{\{H^*, A^*, B^*\} \> \to \ensuremath{\Varid{co}}^* \; H^* \; A^* \to (\ensuremath{\Varid{tm}}^* \; A^* \to \ensuremath{\Varid{co}}^* \; H^* \; B^*) \\ 
\>\> \to \ensuremath{\Varid{co}}^* \; H^* \; B^* }{\sop}{\ensuremath{\Varid{let\hyp{}in}}} \\
\> \ensuremath{\Varid{let\hyp{}in}}^* \; c \; f = \gluetm{\lambda m.\ \hole{?1} }{\ensuremath{\Varid{let\hyp{}in}} \; c \; f}
\end{pmboxed}
\]
where the hole $\hole{?1}$ has type 
$P_\ensuremath{\Varid{co}}\;(\ensuremath{\Varid{let\hyp{}in}} \; c \; f)$, that is, by the definition of $P_\ensuremath{\Varid{co}}$ above,
\[
\forall (h : \ensuremath{\Conid{Handler}}^* \; H^*).\ \lpre (h\ensuremath{\Varid{.0}} \; B^*) \; 
  (\ensuremath{\Varid{hdl}} \; h \; (\ensuremath{\Varid{let\hyp{}in}} \; c \; f)).
\]
Since we do not have the equation \ensuremath{\Varid{hdl\hyp{}let}} in \Cref{rem:no:evallet} to
simplify the computation $\ensuremath{\Varid{hdl}} \; h \; (\ensuremath{\Varid{let\hyp{}in}} \; c \; f)$, we have no way
to fill in the hole \hole{?1} using $c$ and $f$.
\end{element}

\newcommand{\codeindent}{\hspace{1em}}
\begin{element}
To fix this problem, we strengthen $P_\ensuremath{\Varid{co}} \; c$ above to take into account
all possible \emph{continuations} after the computation \ensuremath{\Varid{c}}, which is essentially
the idea of \emph{$\top\top$-lifting}
\citep{Lindley_Stark_2005,Katsumata_2005,Katsumata_Sato_Uustalu_2018}.
We first define a type of continuations accepting $A^*$-values:
\begin{lgather*}
\ensuremath{\lhskeyword{record}} \; \ensuremath{\Conid{Con}} \; (H^* : \ensuremath{\Conid{RawHFunctor}}^*) \; (A^* : \ensuremath{\Varid{el}}^* \; \ensuremath{\Varid{ty}}^*) : U_1 \; \ensuremath{\lhskeyword{where}} \\
\codeindent 
  h^* : \ensuremath{\Conid{Handler}}^* \; H^* \\
\codeindent
  R^* : \ensuremath{\Varid{el}}^* \; \ensuremath{\Varid{ty}}^* \\
\codeindent
  k : \impfun{\sop} A^* \to \ensuremath{\Varid{co}} \; H^* \; R^* \\
\codeindent
  k^* : \extty{\ensuremath{\Varid{tm}}^* \; A^* \to \ensuremath{\Varid{tm}}^* \; (h^*\ensuremath{\Varid{.0}} \; R^*)}{\sop}%
        {\lambda a.\ \ensuremath{\Varid{hdl}} \; h^* \; (k \; a) }
\end{lgather*}
and the strengthened definition of $P_\ensuremath{\Varid{co}}$ is
\begin{lgather}{eq:m:star:co}
P_\ensuremath{\Varid{co}} : \impfun{H^*, A^*} (\impfun{\sop} \ensuremath{\Varid{co}} \; H^* \; A^*) \to \COmega \\
P_\ensuremath{\Varid{co}} \; c = \forall (K : \ensuremath{\Conid{Con}} \; H^* \; A^*).\
   \lpre (K.h^*.0 \;\; K.R^*) \; (\impabs{\_ : \sop} \\
   \hspace{6em} \ensuremath{\Varid{hdl}} \; K.h^* \; (\ensuremath{\Varid{let\hyp{}in}} \; c \; K.\ensuremath{\Varid{k}}))
\end{lgather}
Compared to the earlier version of $P_\ensuremath{\Varid{co}}$ \Cref{eq:m:star:co:naive}, the new
version asserts that the computation $c$ extended with an arbitrary `good'
continuation $k$ and evaluated into a raw monad results in a value satisfying
its logical predicate.
Here a continuation $k$ is `good' if $k$ followed
by \ensuremath{\Varid{hdl}} sends  input satisfying its logical predicate to output satisfying its
logical predicate, which is succinctly expressed by a
function $k^* : \ensuremath{\Varid{tm}}^* \; A^* \to \ensuremath{\Varid{tm}}^* \; (h^*\ensuremath{\Varid{.0}} \; R^*)$, c.f.\ \Cref{lem:fun:between:glue:tys}.
\end{element}

\begin{remark}
The new definition of $P_\ensuremath{\Varid{co}}$ is similar to the model of computations in the
realizability model (\Cref{el:real:co}), except that here we only consider
Kleisli morphisms $k^* : A^* \to h^*.0 \; R^*$ whose object-space component
factors through some $k : \impfun{\sop} A^* \to \ensuremath{\Varid{co}} \; H^* \; R^* $.
The author currently does not know if there could be a conceptual
explanation for such a modified codensity transformation.
\end{remark}

\begin{element}
The logical predicate for thunks is the same as that for computations, modulo
the isomorphism \ensuremath{\Varid{th\hyp{}iso}\mathbin{:}\{\mskip1.5mu \Conid{H},\Conid{A}\mskip1.5mu\}\to \Varid{tm}\;(\Varid{th}\;\Conid{H}\;\Conid{A})\;{\cong}\;\Varid{co}\;\Conid{H}\;\Conid{A}} from \Cref{el:tmnd}:
\begin{lgather*}
\ensuremath{\Varid{th}}^* : \extty{\ensuremath{\Conid{RawHFunctor}}^* \to \ensuremath{\Varid{el}}^* \; \ensuremath{\Varid{ty}}^* \to \ensuremath{\Varid{el}}^* \; \ensuremath{\Varid{ty}}^*}{\sop}{\bT} \\
\ensuremath{\Varid{th}}^*\; H^* \; A^* = \gluetm{\lambda t.\; P_\ensuremath{\Varid{co}}\;(\impabs{\_ : \sop} \ensuremath{{\Uparrow}} \; t)}{\bT \; H^* \; A}
\end{lgather*}
where \ensuremath{{\Uparrow}} is the forward direction of the isomorphism \ensuremath{\Varid{th\hyp{}iso}}.
The isomorphism $\ensuremath{\Varid{th\hyp{}iso}}^* : \ensuremath{\Varid{tm}}^* \; (\bT \; H^* \; A^*) \cong \ensuremath{\Varid{co}}^* \; H^* \;
A^*$ is also straightforward:
\[
\spread*{
\ensuremath{\Varid{fwd}} \; \gluetm{p}{t} = \gluetm{p}{\ensuremath{{\Uparrow}} \; t},
\also
\ensuremath{\Varid{bwd}} \; \gluetm{p}{c} = \gluetm{p}{\ensuremath{{\Downarrow}} \; c}.
}
\]
\end{element}

\subsection{Computation Terms}\label{sec:fha:co:star}

\begin{element}
Finally, we need to prove that the constructors \ensuremath{\Varid{val}}, \ensuremath{\Varid{let\hyp{}in}}, \ensuremath{\Varid{op}} and
the eliminator \ensuremath{\Varid{hdl}} of computations satisfy the logical predicates.
We start with \ensuremath{\Varid{val}}:
\begin{lgather*}
\ensuremath{\Varid{val}}^* : \extty{ \impfun{H^*, A^*}  \ensuremath{\Varid{tm}}^* \; A^* \to \ensuremath{\Varid{co}}^* \; H^* \; A^*  }{\sop}{\ensuremath{\Varid{val}}} \\
\ensuremath{\Varid{val}}^* \; \imparg{H^*} \; \imparg{A^*} \; a = \gluetm{\hole{?1}}{\ensuremath{\Varid{val}} \; a}
\end{lgather*}
where the hole \hole{?1} has type $P_\ensuremath{\Varid{co}}\;(\ensuremath{\Varid{val}} \; a)$, that is, by definition \Cref{eq:m:star:co},
\begin{align*}
& \forall (K : \ensuremath{\Conid{Con}} \; H^* \; A^*).\ \lpre (K.h^*.0 \;\; K.R^*) \; \\  
& \hspace{2em} (\lambda \imparg{\_ : \sop}.\ \ \ensuremath{\Varid{hdl}} \; K.h^* \; (\ensuremath{\Varid{let\hyp{}in}} \; (\ensuremath{\Varid{val}}\;a) \; K.\ensuremath{\Varid{k}})) \\
=\ & \reason{by axiom \ensuremath{\Varid{let\hyp{}val}} \Cref{eq:fha:co:laws}} \\
& \forall (K : \ensuremath{\Conid{Con}} \; H^* \; A^*).\ \lpre (K.h^*.0 \;\; K.R^*) \; \\  
& \hspace{2em} (\lambda \imparg{\_ : \sop}.\ \ \ensuremath{\Varid{hdl}} \; K.h^* \; (K.\ensuremath{\Varid{k}} \; a)) 
\end{align*}

We put $\hole{?1} = \lambda K.\ \lprf (K.k^* \; a)$, which is well typed because $K.k^*$ has type
\begin{equation}\label{eq:k:under:sop}
k^* : \extty{\ensuremath{\Varid{tm}}^*\;A^* \to \ensuremath{\Varid{tm}}^* \; (h^*\ensuremath{\Varid{.0}} \; R^*)}{\sop}{\lambda a.\ \ensuremath{\Varid{hdl}} \; h^* \; (k \; a) }
\end{equation}
so $K.k^* \; a$ has type $\ensuremath{\Varid{tm}}^*\ (K.h^*.0 \; K.R^*)$,
and $\lprf (K.k^* \; a)$ has type 
\begin{align*}
   & \lpre (K.h^*.0 \; K.R^*) \; (\lambda \imparg{\_ : \sop}.\ (K.k^* \; a)) \\
=\ & \reason{by the restriction of $k^*$ under $\sop$ in \Cref{eq:k:under:sop}} \\
   & \lpre (K.h^*.0 \; K.R^*) \; (\lambda \imparg{\_ : \sop}.\ \lambda a.\ \ensuremath{\Varid{hdl}}
   \; K.h^* \; (K.k \; a)) 
\end{align*}
which is the desired type of \hole{?1}.
\end{element}

\begin{element}
The case for \ensuremath{\Varid{let\hyp{}in}} is similar:
\begin{lgather*}
\ensuremath{\Varid{let\hyp{}in}}^* : \extty{\impfun{H^*, A^*, B^*} \ensuremath{\Varid{co}}^* \; H^* \; A^*
\to (\ensuremath{\Varid{co}}^* \; H^* \; A^* \to \ensuremath{\Varid{co}}^* \; H^* \; B^*) 
\\ \hspace{5em} \to  \ensuremath{\Varid{co}}^* \; H^* \; B}{\sop}{\ensuremath{\Varid{let\hyp{}in}}} \\
\ensuremath{\Varid{let\hyp{}in}}^* \; c \; f = \gluetm{\lambda (K : \ensuremath{\Conid{Con}} \; H^* \; A^*).\ \unglue c \; K'}{\ensuremath{\Varid{let\hyp{}in}} \; c \; f}
\end{lgather*}
where each field of $K' : \ensuremath{\Conid{Con}} \; H^* \; B^*$ is defined as follows:
\begin{align*}
K'.h^* &= K.h^* \\
K'.R^* &= K.R^* \\
K'.k &= \lambda \imparg{\_:\sop}\; a.\ \ensuremath{\Varid{let\hyp{}in}} \; (f \; a) \; K.k\\
K'.k^*  &= \lambda a.\ 
\gluetm{\unglue (f \; a) \; K}{\ensuremath{\Varid{hdl}} \; K'.h^* \; (\ensuremath{\Varid{let\hyp{}in}} \; (f \; a) \; K.\ensuremath{\Varid{k}})}
\end{align*}
The last line type checks because $f \; a : \ensuremath{\Varid{co}}^*\;H^*\;B^*$, so
$\unglue (f \; a) : P_\ensuremath{\Varid{co}} \; (f \; a)$, so by definition
\Cref{eq:m:star:co}, the type of $\unglue (f \; a) \; K$ is
\begin{lgather*}
\lpre (K.h^*.0 \;\; K.R^*) \; (\lambda \imparg{\_ : \sop}.\ \ensuremath{\Varid{hdl}} \; K.h^* \; (\ensuremath{\Varid{let\hyp{}in}} \; (f \; a) \; K.\ensuremath{\Varid{k}}))
\end{lgather*}
which is indeed the type of proofs that the syntactic component of $K'.k^*$
satisfies the logical predicate of the type $k.h^*.0 \; K.R^*$.
\end{element}

\begin{element}
The case for \ensuremath{\Varid{op}} is slightly more involved, so let us first show
that \ensuremath{\Varid{hdl}} satisfies the corresponding logical predicate:
\[
\begin{pmboxed}
\defaultcolumn{@{}l@{}}%
\> \ensuremath{\Varid{hdl}}^* : \extty{\{H^*, A^*\} \> \to  (h^* : \ensuremath{\Conid{Handler}}^* \; H^*) \\ 
\>\> \to \ensuremath{\Varid{co}}^* \; H^* \; A^* \to \ensuremath{\Varid{tm}}^* \; (h^*.0 \; A^*)}{\sop}{\ensuremath{\Varid{hdl}}} \\
\> \ensuremath{\Varid{hdl}}^* \; h^* \; c^* = \gluetm{\unglue c^* \; K }{\ensuremath{\Varid{hdl}} \; h^* \; c}
\end{pmboxed}
\]
where the continuation $K : \ensuremath{\Conid{Con}} \; H^* \; A^*$ is defined by
\begin{align*}
&K.h^* = h^*   &&K.R^* = A^* \\
&K.k    = \lambda \imparg{\_ : \sop}.\ \ensuremath{\Varid{val}} &&K.k^*  = h^*.\ensuremath{\Varid{ret}}
\end{align*}
The definition of $K.k^*$ is well typed because the expected type of $K.k^*$ is
\begin{align*}
 & \extty{\ensuremath{\Varid{tm}}^* \; A^* \to \ensuremath{\Varid{tm}}^* \; (h^*\ensuremath{\Varid{.0}} \; R^*)}{\sop}%
        {\lambda a.\ \ensuremath{\Varid{hdl}} \; h^* \; (k \; a) } \\
=\ & \reason{by the definition of $K.k$ above} \\
 \ & \extty{\ensuremath{\Varid{tm}}^* \; A^* \to \ensuremath{\Varid{tm}}^* \; (h^*\ensuremath{\Varid{.0}} \; A^*)}{\sop}%
        {\lambda a.\ \ensuremath{\Varid{hdl}} \; h^* \; (\ensuremath{\Varid{val}} \; a) } \\
=\ & \reason{by axiom \ensuremath{\Varid{hdl\hyp{}val}} \Cref{eq:fha:evalval}} \\
   &  \extty{\ensuremath{\Varid{tm}}^* \; A^* \to \ensuremath{\Varid{tm}}^* \; (h^*\ensuremath{\Varid{.0}} \; A^*)}{\sop}%
        {\lambda a.\ h^*.\ensuremath{\Varid{ret}} \; R^* \; a } 
\end{align*}
\end{element}

\begin{element}
Coming back to \ensuremath{\Varid{op}}, we start with some obvious steps and a hole:
\begin{lgather*}
\ensuremath{\Varid{op}}^* : \extty{\impfun{H^*, A^*, B^*} 
  \ensuremath{\Varid{tm}}^*\;(H^* .0 \; (\bT^* \; H^*) \; A^*) \\\hspace{4em} 
  \to (\ensuremath{\Varid{tm}}^*\; A^* \to \ensuremath{\Varid{co}}^* \; H^* \; B^*)
   \to \ensuremath{\Varid{co}}^* \; H^* \; B^*}{\sop}{\ensuremath{\Varid{op}}} \\
\ensuremath{\Varid{op}}^* \; o \; k = \gluetm{\lambda(K : \ensuremath{\Conid{Con}} \; H^* \; B^*).\ \hole{?1}}{\ensuremath{\Varid{op}} \; o \; k}
\end{lgather*}
where the hole \hole{?1} has type
\begin{align*}
   & \lpre (K.h^*.0 \;\; K.R^*)\; (\impabs{\_ : \sop} \ensuremath{\Varid{hdl}} \;
K.h^* \; (\ensuremath{\Varid{let\hyp{}in}} \; (\ensuremath{\Varid{op}} \; o \; k) \; K.k)) \\
=\ & \reason{by axiom \ensuremath{\Varid{let\hyp{}op}} \Cref{eq:fha:letop}} \\
  & \lpre (K.h^*.0 \;\; K.R^*)\; (\impabs{\_ : \sop} \ensuremath{\Varid{hdl}} \; 
                                    K.h^* \; (\ensuremath{\Varid{op}} \; o \; (\lambda a.\ \ensuremath{\Varid{let\hyp{}in}}\; (k\;a) \; K.k))) \\
=\ & \reason{by axiom \ensuremath{\Varid{hdl\hyp{}op}} \Cref{eq:fha:evalop}} \\
   &  \lpre (K.h^*.0 \;\; K.R^*)\; (\impabs{\_ : \sop} K.h^*.\ensuremath{\Varid{bind}}
   \; o' \; (\lambda a.\ \ensuremath{\Varid{hdl}}\; \_\; (\ensuremath{\Varid{let\hyp{}in}} \; (k\;a) \; K.k) ))
\end{align*}
where $o' : \ensuremath{\Varid{tm}}^* \;(K.h^*.0 \; A^*)$ is the result of evaluating
the operand $o$ inside the higher-order functor $H$ and then applying the operation
on the monad $K.h^*$:
\[
o' \defeq K.h^*.\ensuremath{\Varid{malg}} \; \_ \; (H^*.\ensuremath{\Varid{hmap}} \; \_ \; \_ \; e \; \_ \; o),
\]
and $e : \ensuremath{\Varid{tm}}^* \; (\ensuremath{\Varid{trans}}^* \; (\MM^*.\bT \; H^*) \; K.h^*.0)$ is
$\ensuremath{\Varid{hdl}}^*$ specialised to $K.h^*$:
\[
e \; A^* \; c = \ensuremath{\Varid{hdl}}^* \; K.h^* \; (\ensuremath{{\Uparrow}} \; c).
\]
Now coming back to the hole \hole{?1}, using $k$ we can define
\begin{lgather*}
f : (a : \ensuremath{\Varid{tm}}^* \; A^*) \to \ensuremath{\Varid{tm}}^* \; (K.h^*.0 \;\; K.R^*) \\
f = \gluetm{\unglue (k \; a) \; K}{\ensuremath{\Varid{hdl}}\; \_\; (\ensuremath{\Varid{let\hyp{}in}} \; (k\;a) \; K.k)}
\end{lgather*}
and finally we can put $\hole{?1} = \lprf (K.h^*.\ensuremath{\Varid{bind}} \; o' \; f)$.
\end{element}

\begin{element}
The last bit of our construction of the glued model $\MM^*$ is showing that it
satisfies the equational axioms of \Fha{} pertaining to computations, but this
is easy because our interpretation of computations and terms in $\MM^*$ is
proof \emph{irrelevant}.
For every universe $U$ of \TTSTC{}, there is a subuniverse 
\[
\Uir = \aset{A : U \mid \forall (a : \impfun{\sop} A).\ (x, y : \extty{A}{\sop}{a}) \to x = y}
\]
which classifies \emph{proof-irrelevant} logical predicates in $U$, in the
sense that partial elements $a : \impfun{\sop} A$ of a type $A : \Uir$ 
have unique total extensions (if exist).
\end{element}

\begin{lemma}
For all $A^* : \ensuremath{\Varid{el}}^* \; \ensuremath{\Varid{ty}}^*$ and $H^* : \ensuremath{\Conid{RawHFunctor}}^*$,
the types 
\[
\ensuremath{\Varid{tm}}^* \; A^* : U_0
\SpcAnd
\ensuremath{\Varid{co}}^* \; H^* \; A^* : U_0
\]
are classified by the subuniverse $\Uir_0$.
\end{lemma}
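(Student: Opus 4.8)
The plan is to observe that $\ensuremath{\Varid{tm}}^* \; A^*$ and $\ensuremath{\Varid{co}}^* \; H^* \; A^*$ share a single structural feature that alone guarantees membership in $\Uir_0$: each is a glue type whose meta-space fibre is a \emph{proposition}. First I would unfold both definitions. Writing $A^* = \gluetm{P}{A}$ (legitimate because $\ensuremath{\Varid{el}}^* \; \ensuremath{\Varid{ty}}^* = \gluety{A : \ensuremath{\Varid{el}} \; \ensuremath{\Varid{ty}}}{(\impfun{\sop} \ensuremath{\Varid{tm}} \; A) \to \COmega}$ by \Cref{eq:m:star:el,eq:m:star:ty}), definition \Cref{eq:m:star:tm} gives $\ensuremath{\Varid{tm}}^* \; A^* = \gluety{t : \ensuremath{\Varid{tm}} \; A}{P \; t}$ with $P \; t : \COmega$, while \Cref{eq:m:star:co} gives $\ensuremath{\Varid{co}}^* \; H^* \; A^* = \gluety{c : \ensuremath{\Varid{co}} \; H^* \; A^*}{P_\ensuremath{\Varid{co}} \; c}$ with $P_\ensuremath{\Varid{co}} \; c : \COmega$. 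In both cases the type has the form $\gluety{x : X}{Q \; x}$ where $X : \Omod{U_0}$ and $Q : (\impfun{\sop} X) \to \COmega$.

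It therefore suffices to prove the general lemma: for any universe $U_i$, any $X : \Omod{U_i}$, and any $Q : (\impfun{\sop} X) \to \COmega$, the glue type $\gluety{x : X}{Q \; x}$ lies in $\Uir_i$. This is well-posed: since $\COmega \subseteq \aset{Y : U_i \mid \Cmodal \; Y}$, the glue type is formed legally by \Cref{el:ttstc:glue:ty}, and each fibre $Q \; x$, being an element of $\COmega \subseteq \Omega$, is a proof-irrelevant proposition by \Cref{el:ttstc:topos}.

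The proof of the lemma is the crux and is short. Fix a partial element $a : \impfun{\sop} (\gluety{x : X}{Q \; x})$ and two total extensions $g, g' : \extty{\gluety{x : X}{Q \; x}}{\sop}{a}$. Because under $\sop$ the glue type is judgementally $X$, the partial element $a$ is literally an element of $\impfun{\sop} X$, and the extension condition forces $g = a = g'$ whenever $\sop$ holds; hence $\impabs{z : \sop} g = \impabs{z : \sop} a = \impabs{z : \sop} g'$ as elements of $\impfun{\sop} X$. Consequently $\unglue g$ and $\unglue g'$ inhabit the \emph{same} type $Q \; (\impabs{z : \sop} a)$, and since this type is a proposition in $\Omega$, proof irrelevance (\Cref{el:ttstc:topos}) yields $\unglue g = \unglue g'$. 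Finally, the isomorphism $\glue$ of \Cref{el:ttstc:glue:ty} makes $\glue.\ensuremath{\Varid{bwd}}$ injective and recovers each glued element from the pair $(\impabs{z : \sop} g, \unglue g)$; as both components of $g$ and $g'$ coincide, we conclude $g = g'$, so $\gluety{x : X}{Q \; x} \in \Uir_i$ and the statement follows.

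The main obstacle is the $\sop$-strictness bookkeeping in the last two steps. I must verify that the object-space components of the two extensions are the \emph{same} partial element of $\impfun{\sop} X$ — not merely isomorphic — since this is precisely what makes $\unglue g$ and $\unglue g'$ share a type so that proof irrelevance of $\Omega$ can be invoked. This uses that $\glue$ restricts to $\ensuremath{\pi_1\Varid{\hyp{}iso}}$ under $\sop$, so the first projection of $\glue.\ensuremath{\Varid{bwd}} \; g$ agrees with $\impabs{z : \sop} g$ under $\sop$ and hence on the nose; everything else is a direct appeal to the universal property of glue types and the topos axioms, requiring no computation.
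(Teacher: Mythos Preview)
Your proposal is correct and takes essentially the same approach as the paper: both arguments unfold the glue-type definitions, observe that the meta-space fibre is a proposition in $\COmega$, and conclude that two extensions of a common partial element agree because their unglued components coincide by proof irrelevance. The only difference is cosmetic---you factor the argument through a general lemma about glue types with propositional fibres, whereas the paper argues the specific case directly---but the content is identical.
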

\begin{proof}
Let $A^*$ be $\gluetm{P}{A}$ where $A : \impfun{\sop}\ensuremath{\Varid{el}}\;\ensuremath{\Varid{ty}}$
is an object-space type
and $P : (\impfun{\sop}\ensuremath{\Varid{tm}}\;A) \to \COmega$
is a meta-space predicate.
By definition \Cref{eq:m:star:tm}, $\ensuremath{\Varid{tm}}^* \; A^*$ is the glue type
$\gluety{a : \ensuremath{\Varid{tm}}\;A}{P\; a}$.
Thus a partial element $a$ of $\ensuremath{\Varid{tm}}^* \; A^*$ is exactly an element 
$a : \impfun{\sop} \ensuremath{\Varid{tm}} \; A$.
Given two elements $x, y : \extty{\ensuremath{\Varid{tm}}^* \; A^*}{\sop}{a}$, 
$\unglue x = \unglue y$ since they are elements of the propositional
type $P\; a$, so
$
x 
= \gluetm{\unglue x}{a}
= \gluetm{\unglue y}{a}
= y$.
The case for $\ensuremath{\Varid{co}}^*$ is similar.
\end{proof}
\begin{corollary}
The glued model $\MM^*$ satisfies the equational axioms \ensuremath{\Varid{val\hyp{}let}}, \ensuremath{\Varid{let\hyp{}val}}, \ensuremath{\Varid{let\hyp{}assoc}}
\Cref{eq:fha:co:laws}, \ensuremath{\Varid{let\hyp{}op}} \Cref{eq:fha:letop}, \ensuremath{\Varid{hdl\hyp{}val}}
\Cref{eq:fha:evalval}, \ensuremath{\Varid{hdl\hyp{}op}} \Cref{eq:fha:evalop} of \Fha{}.
\end{corollary}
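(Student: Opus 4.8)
The plan is to reduce the verification of all six computation axioms to the \emph{proof-irrelevance} established in the preceding lemma, combined with the fact that $\MM^*$ strictly restricts to the syntactic model $\MM$ under $\sop$. Recall from \Cref{eq:lr:model} that $\MM^* : \extty{\sem{\Fha}_{U_2}}{\sop}{\MM}$, so whenever $\sop$ holds each interpretation $\ensuremath{\Varid{dec}}^*$ is definitionally equal to the syntactic $\ensuremath{\Varid{dec}}$; in particular, the $*$-constructors $\ensuremath{\Varid{val}}^*$, $\ensuremath{\Varid{let\hyp{}in}}^*$, $\ensuremath{\Varid{op}}^*$, $\ensuremath{\Varid{hdl}}^*$ were each defined as a glued element $\gluetm{\cdot}{\cdot}$ whose object-space component is exactly the corresponding syntactic operation. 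Consequently, for any declared equation $e : \mathrm{LHS} = \mathrm{RHS}$ among \ensuremath{\Varid{val\hyp{}let}}, \ensuremath{\Varid{let\hyp{}val}}, \ensuremath{\Varid{let\hyp{}assoc}}, \ensuremath{\Varid{let\hyp{}op}}, \ensuremath{\Varid{hdl\hyp{}val}}, and \ensuremath{\Varid{hdl\hyp{}op}}, the two interpreted sides $\MM^*.\mathrm{LHS}$ and $\MM^*.\mathrm{RHS}$ restrict under $\sop$ to $\MM.\mathrm{LHS}$ and $\MM.\mathrm{RHS}$ respectively.

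First I would observe that $\MM$, being a genuine \Fha-model in the object space (it is an element of $\sem{\Fha}_{U_0}$ by \Cref{el:ttstc:obj}, and the equality-type fields \ensuremath{\Varid{val\hyp{}let}}, $\dots$, \ensuremath{\Varid{hdl\hyp{}op}} are part of that record), carries proofs of every equational axiom; hence $\MM.\mathrm{LHS} = \MM.\mathrm{RHS}$ holds in the object space for each of the six equations. This means $\MM^*.\mathrm{LHS}$ and $\MM^*.\mathrm{RHS}$ are two \emph{total} elements of a single type — namely $\ensuremath{\Varid{co}}^*\;H^*\;C^*$ for the three monadic laws \Cref{eq:fha:co:laws} and for \ensuremath{\Varid{let\hyp{}op}} \Cref{eq:fha:letop}, and $\ensuremath{\Varid{tm}}^*\;(h^*.0\;A^*)$ for \ensuremath{\Varid{hdl\hyp{}val}} \Cref{eq:fha:evalval} and \ensuremath{\Varid{hdl\hyp{}op}} \Cref{eq:fha:evalop} — both restricting to one and the same partial element defined under $\sop$. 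The key step is then to invoke the preceding lemma: both $\ensuremath{\Varid{tm}}^*\;A^*$ and $\ensuremath{\Varid{co}}^*\;H^*\;A^*$ are classified by the proof-irrelevant subuniverse $\Uir_0$, so any partial element of type $\impfun{\sop}(\cdots)$ admits \emph{at most one} total extension. Since $\MM^*.\mathrm{LHS}$ and $\MM^*.\mathrm{RHS}$ are both total extensions of the common partial element $\MM.\mathrm{LHS} = \MM.\mathrm{RHS}$, uniqueness forces $\MM^*.\mathrm{LHS} = \MM^*.\mathrm{RHS}$, which is exactly the inhabitant of the extensional identity type witnessing that $\MM^*$ validates $e$. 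Running this argument uniformly over all six equations finishes the proof.

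I do not expect a genuine obstacle here, because the substantive work has already been carried out: in defining the $*$-constructors so that they strictly restrict to their syntactic counterparts, and in establishing proof-irrelevance in the lemma. The one point requiring care is bookkeeping — confirming, for each of the six axioms, that both of its sides are elements of a type of the form $\ensuremath{\Varid{tm}}^*\;(\cdots)$ or $\ensuremath{\Varid{co}}^*\;(\cdots)$ (rather than, say, a $\ensuremath{\Varid{ki}}^*$- or $\ensuremath{\Varid{el}}^*$-indexed type, where proof-relevance would block the shortcut), and that the strict $\sop$-restriction genuinely identifies each side with the corresponding syntactic term. Once this is checked, the uniqueness-of-extension argument applies verbatim in every case, and no explicit computation with $\ensuremath{\Varid{bind}}$, $\ensuremath{\Varid{malg}}$, or the top-lifted predicate $P_\ensuremath{\Varid{co}}$ is needed.
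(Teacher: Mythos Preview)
Your proposal is correct and follows essentially the same approach as the paper: reduce each equation to its object-space instance under $\sop$ (where $\MM^*$ restricts to $\MM$ and $\MM$ validates the axiom), then invoke the proof-irrelevance lemma on $\ensuremath{\Varid{tm}}^*$ and $\ensuremath{\Varid{co}}^*$ to conclude that the two total extensions coincide. The paper's proof simply spells this out for $\ensuremath{\Varid{val\hyp{}let}}$ and says the remaining cases are similar.
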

\begin{proof}
Taking \ensuremath{\Varid{val\hyp{}let}} for example, we need to show
\begin{align*}
\ensuremath{\Varid{val\hyp{}let}}^* : \{H^*, A^*, B^* \}&\to (a : \ensuremath{\Varid{tm}}^*\; A^*) 
  \to (k : \ensuremath{\Varid{tm}}^*\;A \to \ensuremath{\Varid{co}}^*\; H^* \; B^*)\\
  &\to \ensuremath{\Varid{let\hyp{}in}}^* \; (\ensuremath{\Varid{val}}^* \; H^* \; a) \; k = k \; a 
\end{align*}
Since the type $\ensuremath{\Varid{co}}^*\;H^*\;B^*$ is in the universe $\Uir_0$,
it is sufficient to show that 
$\ensuremath{\Varid{let\hyp{}in}}^* \; (\ensuremath{\Varid{val}}^* \; H^* \; a) \; k$ and $k \; a$
are equal under $\sop$,
\begin{align*}
   & \ensuremath{\Varid{let\hyp{}in}}^* \; (\ensuremath{\Varid{val}}^* \; H^* \; a) \; k \\
=\ & \reason{under $\sop$, $\ensuremath{\Varid{let\hyp{}in}}^* = \ensuremath{\Varid{let\hyp{}in}}$ and $\ensuremath{\Varid{val}}^* = \ensuremath{\Varid{val}}$} \\
   & \ensuremath{\Varid{let\hyp{}in}} \; (\ensuremath{\Varid{val}} \; H^* \; a) \; k \\
=\ & \reason{$\MM$ satisfies \ensuremath{\Varid{val\hyp{}let}}} \\
   & k\;a
\end{align*}
The case for other equational axioms are similar.
\end{proof}

\begin{element}
We have completed the construction of $\MM^*$ and thus proved \Cref{lem:closed:fundamental}.
\end{element}

\section{Parametricity and Free Theorems}\label{sec:parametricity}

\begin{element}
An appealing aspect of the synthetic fundamental lemma (\ref{lem:closed:fundamental})
is that it is proved solely in the language \TTSTC{}, thus applicable to any
category $\CatG$ that models \TTSTC{}.
As an instance, we can deduce the \emph{abstraction theorem}
\citep{Reynolds1983types}, also known as \emph{parametricity}
\citep{Wadler1989}, for System \Fha.
\end{element}

\begin{element}
Let $\MM : \CatJdgOf{\Fha} \to \CatC$ be any model of \Fha{} in a small
LCCC $\CatC$.
We can interpret \TTSTC{} in the Artin
gluing $\CatG_\CatC$ of $\PrC$ and $\Set$ along the global section functor $\Hom_{\PrC}(1, \blank) : \PrC \to \Set$, with the object-space model $\MM$ of \TTSTC{} interpreted as the given functor $\MM : \CatJdgOf{\Fha} \to \CatC$ composed with Yoneda
embedding $\yo : \CatC \to \PrC$.

We have a functor $\overline{\MM^*} : \CatJdgOf{\Fha} \to \CatG_\CatC$ by
instantiating the fundamental lemma (\ref{lem:closed:fundamental}) with $P_\ensuremath{\Varid{can}}$ as in \Cref{eq:m:star:bool}.
For every $\ensuremath{\Conid{A}\mathbin{:}\mathrm{1}\to \Varid{el}\;\Varid{ty}} \in \CatJdgOf{\Fha}$, we let $P_A$ be 
$\overline{\MM^*}\; (\ensuremath{\Varid{tm}\;\Conid{A}}) \in \CatG$ viewed as a predicate
(in the ambient meta-theory) on the set $\CatC(1, \MM\; (\ensuremath{\Varid{tm}\;\Conid{A}}))$.
Similarly, for every $\ensuremath{\Conid{K}\mathbin{:}\mathrm{1}\to \Varid{ki}} \in \CatJdgOf{\Fha}$, we let $P_K$ be 
$\overline{\MM^*}\; (\ensuremath{\Varid{el}\;\Conid{K}}) \in \CatG$ viewed as a family of sets
indexed by the set $\CatC(1, \MM \; (\ensuremath{\Varid{el}\;\Conid{K}}))$.
\end{element}

\begin{theorem}[Unary Parametricity]\label{thm:unary:parametricity}
For every \ensuremath{\Conid{A}\mathbin{:}\mathrm{1}\to \Varid{el}\;\Varid{ty}} and \ensuremath{\Varid{t}\mathbin{:}\mathrm{1}\to \Varid{tm}\;\Conid{A}} in $\CatJdgOf{\Fha}$,
$P_A(\MM \; t)$ holds.
Moreover, for every \ensuremath{\Conid{K}\mathbin{:}\mathrm{1}\to \Varid{ki}} and \ensuremath{\Varid{t}\mathbin{:}\mathrm{1}\to \Varid{el}\;\Conid{K}}, there is an element
$t^* \in P_K(\MM\; t)$.
\end{theorem}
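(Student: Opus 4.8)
The plan is to carry out no new computation specific to $\CatC$, but rather to reinterpret the already-constructed glued model. Recall that the preamble has produced, via \Cref{lem:closed:fundamental} interpreted in the Artin gluing $\CatG_\CatC$ and then transported through the model–functor correspondence of \Cref{thm:fha:functor}, an LCC-functor $\overline{\MM^*} : \CatJdgOf{\Fha} \to \CatG_\CatC$. The first and decisive step I would isolate is the content of the extension-type constraint $\extty{\sem{\Fha}_{U_2}}{\sop}{\MM}$: semantically, the exponential $(\blank)^{\sem{\sop}}$ is the first projection $\CatG_\CatC \to \PrC$, so the constraint says exactly that composing $\overline{\MM^*}$ with this projection recovers the object-space model, which here is $\yo \circ \MM$. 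Thus for every closed judgement $J$ the object-space component of $\overline{\MM^*}(J)$ is $\yo(\MM\, J)$, and on morphisms the object-space part of $\overline{\MM^*}$ is determined by $\MM$ through Yoneda.

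With this in hand, I would apply $\overline{\MM^*}$ to the data of the theorem. Given $\ensuremath{\Conid{A}\mathbin{:}\mathrm{1}\to \Varid{el}\;\Varid{ty}}$ and $\ensuremath{\Varid{t}\mathbin{:}\mathrm{1}\to \Varid{tm}\;\Conid{A}}$, since $\overline{\MM^*}$ is an LCC-functor it preserves the terminal object, so $\overline{\MM^*}(t)$ is a morphism $1 \to \overline{\MM^*}(\ensuremath{\Varid{tm}\;\Conid{A}})$ in $\CatG_\CatC$. Unwinding the definition of morphisms in the comma category, this is a pair $\tuple{f, g}$ consisting of an object-space map $f$ and a meta-space map $g$ making the gluing square commute. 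By the previous paragraph $f = \yo(\MM\, t)$, which under the Yoneda identification $\Hom_{\PrC}(1, \yo X) \cong \CatC(1, X)$ is exactly $\MM\, t$. Writing $\overline{\MM^*}(\ensuremath{\Varid{tm}\;\Conid{A}}) = \tuple{\yo(\MM(\ensuremath{\Varid{tm}\;\Conid{A}})),\, Q,\, q}$, so that $P_A$ is the map $q$ viewed as a predicate on $\CatC(1, \MM(\ensuremath{\Varid{tm}\;\Conid{A}}))$, the commuting square says precisely that $g(\unitel) \in Q$ lies over $\MM\, t$.

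To finish, I would read off the conclusion separately in the two cases. For a type $A$ the logical predicate is proof-irrelevant, so the fibres of $q$ are subsingletons and $P_A$ is a genuine $\CatC(1, \MM(\ensuremath{\Varid{tm}\;\Conid{A}}))$-predicate; the element $g(\unitel)$ witnesses that $\MM\, t$ lies in it, i.e.\ $P_A(\MM\, t)$ holds. The `moreover' clause for kinds is the same argument with $\ensuremath{\Varid{tm}\;\Conid{A}}$ replaced by $\ensuremath{\Varid{el}\;\Conid{K}}$; here the predicate $P_K$ is proof-\emph{relevant}, i.e.\ set-valued, and I would simply set $t^* \defeq g(\unitel)$, which by the commuting square is an element of the fibre $P_K(\MM\, t)$, as required.

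The main obstacle is not the reasoning about $t$, which is routine bookkeeping once $\overline{\MM^*}$ is available, but the semantic setup that the preamble asserts and that I would want to make precise: that $\CatG_\CatC$ really interprets all of $\TTSTC$ (\Cref{el:ttstc:topos,el:ttstc:universe,el:ttstc:glue:ty,el:ttstc:obj}) and, crucially, that $\sem{\sop}$ behaves as the open modality reflecting onto $\PrC$, so that the $\sop$-restriction of $\MM^*$ coincides with the given $\yo \circ \MM$ rather than merely being isomorphic to it. Because the global-section functor $\Hom_{\PrC}(1, \blank)$ is left exact, the gluing is again an elementary topos, and the only change from \Cref{sec:fha:lr:model} is the gluing data ($\PrC$ in place of $\Pr(\CatJdgOf\Fha)$ along a different left-exact functor); since the internal derivation of \Cref{lem:closed:fundamental} is untouched, I expect these verifications to go through exactly as in the canonicity proof, with the canonicity-style universe $\COmega$ again sufficing.
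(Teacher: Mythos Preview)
Your proposal is correct and follows essentially the same approach as the paper: apply the functor $\overline{\MM^*}$ to $t$, unpack the resulting morphism $1 \to \overline{\MM^*}(\ensuremath{\Varid{tm}\;\Conid{A}})$ in $\CatG_\CatC$ as a commutative square, and read off that the meta-space component witnesses $P_A(\MM\,t)$ (respectively, for kinds, supplies the element $t^*$). Your write-up is more explicit about the bookkeeping (the role of the $\sop$-restriction, the Yoneda identification, and the proof-irrelevant/proof-relevant distinction), but the argument is the same.
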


\begin{proof}
Given $t : 1 \to \ensuremath{\Varid{tm}\;\Conid{A}} \in \CatJdgOf{\Fha}$, it is mapped by the logical predicate
model $\overline{\MM^*}$ to a morphism $1 \to \MM^*(\ensuremath{\Varid{tm}\;\Conid{A}})$ in $\CatG_\CatC$,
which amounts to a commutative square:
\[
  \begin{tikzcd}[ampersand replacement=\&]
\aset{\unitel} \& {\aset{t : 1 \to \MM (\ensuremath{\Varid{tm}\;\Conid{A}}) \mid P_A(t)}} \\
{\aset{\unitel}} \& {\Hom_{\PrC}(1, \MM (\ensuremath{\Varid{tm}\;\Conid{A}}))}
\arrow["{!}", from=1-1, to=2-1]
\arrow["{\lambda \unitel.\; \MM t }"', from=2-1, to=2-2]
\arrow["t^*", from=1-1, to=1-2]
\arrow["{\subseteq}", from=1-2, to=2-2]
\end{tikzcd}
\]
The commutativity of the square means that $\MM t$ satisfies $P_A$.

The statement for $t : 1 \to \ensuremath{\Varid{el}\;\Conid{K}}$ is essentially the same, with the element $t^*
\in P_K(t)$ given by the top arrow of the diagram.
\end{proof}

\begin{example}
Parametricity is useful for deriving `free theorems' of programming languages
\citep{Wadler1989}.
As a `hello world'-application, we can use parametricity to deduce that
for every closed \Fha{} term $t : \ensuremath{\Varid{tm}} \; (\ensuremath{\allfor}\; \ensuremath{\Varid{ty}} \; (\lambda \alpha.\; \alpha \mathbin{\ensuremath{{\Rightarrow_t}}} \alpha))$, $t$ applied to every closed type $A$ and closed term 
$a : \ensuremath{\Varid{tm}\;\Conid{A}}$ is equal to $a$.


First of all, internal to \TTSTC{}, we prove the following statement:
\[
\begin{pmboxed}
\defaultcolumn{@{}l@{}}%
\>\ensuremath{\Varid{lem}} \>: (t^* : \ensuremath{\Varid{tm}}^* \; (\ensuremath{\allfor}^*\; \ensuremath{\Varid{ty}}^* \; (\lambda \alpha.\; \; \alpha \mathbin{\ensuremath{{\Rightarrow^*_t}}} \alpha))) \\
\>\> \to (A : \impfun{\sop} \ensuremath{\Varid{el}\;\Varid{ty}}) \to (a : \impfun{\sop} \ensuremath{\Varid{tm}}\;A) \\
\>\> \to \Cmod( \impfun{\sop} t^* \; A \; a = a)\\
\>\ensuremath{\Varid{lem}} \; t^* \; A\; a\; = \hole{?0}
\end{pmboxed}
\]
Recall that $\lprf t^*$ is the proof that the object-space component of 
$t^*$ satisfies its logical predicate.
Expanding definitions (\ref{eq:m:star:tm}, \ref{notation:lpre:lprf}, \ref{el:m:star:p:fun}), we have
\[
\lprf t^* : \forall (\alpha^* : \ensuremath{\Varid{el}}^* \; \ensuremath{\Varid{ty}}^*).\ \lpre (\alpha^* \mathbin{\ensuremath{{\Rightarrow^*_t}}} \alpha^*) \; (\lambda \imparg{\_ : \sop}.\ (t^* \; \alpha^*)).
\]
To use $\lprf t^*$, we define a predicate $A^* : \extty{\ensuremath{\Varid{el}}^* \; \ensuremath{\Varid{ty}}^*}{\sop}{A}$ by
\[
A^* \defeq \gluetm{\lambda x.\; \Cmod (\impfun{\sop} x = a)}{A}
\]
for which only the element $a : {\sop} \to \ensuremath{\Varid{tm}\;\Conid{A}}$ is satisfied.
Now we have
\[
\lprf t^* \; A^* : \lpre (A^* \mathbin{\ensuremath{{\Rightarrow^*_t}}} A^*) \; (\lambda \imparg{\_ : \sop}.\ (t^* \; A)).
\]
Expanding the definition of \ensuremath{{\Rightarrow^*_t}} from \Cref{el:m:star:tfun}, we have
\[
\lprf t^* \; A^* : \forall(x : \impfun{\sop} A).\; \Cmod (\impfun{\sop} x = a) \to \Cmod (\impfun{\sop} t^* \; A \; x = a).
\]
The element $a$ is always equal to itself, so we can complete the hole:
\[
\hole{?0} = \lprf t^* \; A^* \; a \; (\Ceta \; \refl_a).
\]

Now we interpret \ensuremath{\Varid{lem}} in the glued topos $\GFha$.
Evaluating the interpretation of \ensuremath{\Varid{lem}} at $t$, $A$, and $a$, we get a global section
of the interpretation of $\Cmod (t \; A \; a = a)$, which implies $t \; A \; a$ and $a$ are equal morphisms $1 \to \ensuremath{\Varid{tm}\;\Conid{A}}$ in $\CatJdgOf{\Fha}$.
\end{example}

\begin{element}
It is also possible to extend the unary parametricity result above to the binary
(or $n$-ary) case.
Following \citet{sterling2021LRAT},
given two models $M_L : \CatJdgOf{\Fha} \to \CatC$ and $M_R : \CatJdgOf{\Fha} \to \CatD$,
we consider the Artin 
gluing $\CatG_{\CatC\CatD}$ of the product category $\PrC \times \Pr{\CatD}$
and the category of sets along the functor 
\[
\tuple{A, B} \mapsto \CatC(1, A) \times \CatD(1, B).
\]
The category $\CatG_{\CatC\CatD}$ is equivalent to the presheaf topos over 
$(\CatC + \CatD)_\top$, and every object in the category $\CatG_{\CatC\CatD}$ is a tuple
\[
\tuple{A \in \PrC, \; B \in \Pr{\CatD},\; P \in \Set,\; l : P \to \Hom(1, A),\;
r : P \to \Hom(1, B)},
\]
i.e.\ a proof-relevant binary relation (also known as a span) over global
elements of the presheaves $A$ and $B$.
The category $\CatG_{\CatC\CatD}$ has two subterminal objects
\[
\spread*{
\sop_L \defeq \tuple{1_{\PrC}, 0, \emptyset, !, !}
\also
\text{and}
\also
\sop_R \defeq \tuple{0, 1_{\PrD}, \emptyset, !, !},
}
\]
which determine two open subtoposes that are equivalent to $\PrC$ and $\PrD$
respectively.
The disjunction of $\sop_L$ and $\sop_R$ is another subterminal object
\[
\sop \defeq \tuple{1_{\PrC}, 1_{\PrD}, \emptyset, !, !},
\]
whose corresponding open subtopos is equivalent to $\Pr{(\CatC + \CatD)}$.
\end{element}

\begin{element}
The type theory $\TTSTC$ can be interpreted in $\CatG_{\CatC\CatD}$ as usual,
with $\sop : \Omega$ interpreted as the subterminal object $\sop$ above.
Moreover, we can extend $\TTSTC$ with the following new constants with the
evident interpretation in $\CatG_{\CatC\CatD}$:
\begin{gather*}
\spread*{
\sop_L : \Omega
\also
\sop_R : \Omega
\also
\_ : \sop_L \lor \sop_R = \sop
\also
\_ : \sop_L \land \sop_R = \emptyty
} \\
\spread*{
M_L : \impfun{\sop_L} \sem{\Fha}_{U_0}
\also
M_R : \impfun{\sop_R} \sem{\Fha}_{U_0}
}\\
\_ : \MM = \impabs{z : \sop} \ensuremath{\lhskeyword{case}\;\Varid{z}\;\lhskeyword{of}}\; \{ 
  \inl\; (\_ : \sop_L) \mapsto  M_L;\ 
  \inr\; (\_ : \sop_R) \mapsto M_R \}
\end{gather*}
We refer to the extended language by \TTSTCLR.
\end{element}

\begin{element}
The synthetic fundamental lemma (\ref{lem:closed:fundamental}) holds in
\TTSTCLR{} without needing any modification, since \TTSTCLR{} only adds new
axioms to \TTSTC.
However, in $\TTSTCLR$ an $\sop$-partial element $\impfun{\sop} A$ of some type
$A$ is now equal to an element of type $\impfun{\sop_L \lor \sop_R} A$, which are equivalently
two partial elements $\impfun{\sop_L} A$ and $\impfun{\sop_R} A$.
Therefore the unary logical predicates in the proof of
\Cref{lem:closed:fundamental} can be now read as binary logical
{relations}.

Specially, we can set both $M_L$ and $M_R$ to be $\IdF : \CatJdgOf\Fha
\to \CatJdgOf\Fha$, and we obtain the binary version of parametricity of
closed \Fha-terms (\Cref{thm:unary:parametricity})
by instantiating the fundamental lemma with the logical 
relation $P$ for \ensuremath{\Varid{bool}} to be equality (this relation cannot be internally defined
in \TTSTCLR {} though, since this relation only makes sense when $M_L = M_R$).
\end{element}

\section{The Realizability Model for General Recursion}\label{sec:fhar:real:model}

\begin{element}
Simply typed $\lambda$-calculi with general recursion famously can be modelled
by variations of \emph{complete partial orders}
from classical domain theory \citep{plotkin_lcf_1977,Scott1993,streicher2006domain}.
However, the language \Fhar{} has impredicative polymorphism, which is 
very tricky to model using classical domain theory, although not impossible 
\citep{Crole_1994,Coquand1989}.
\end{element}

\begin{element}
Alongside a few other reasons, the difficulty of modelling polymorphism in
classical domain theory motivated the development of \emph{synthetic domain
theory} (SDT) \citep{rosolini1986continuity,Hyland_first_1991,PhoaThesis}.
The idea of SDT is to axiomatise `domains', in the general sense of objects that
provide meaning to (recursive) programs,
as special `sets' satisfying certain
properties in the logic of toposes or constructive set theory, so that
every function between those special sets is automatically a `continuous map'
between domains. 
In this way, one can give denotational semantics to recursive programs
in a naive set-theoretic way.
\end{element}

\begin{element}
The exact axiomatisation of SDT varies across authors, but there
are mainly two kinds of models:
realizability toposes \citep{PhoaThesis,Longley_Simpson_1997} and Grothendieck toposes \citep{Fiore_extension_1997,Fiore_Rosolini_1997}.
Since we are already modelling \Fha{} using a realizability model in
\Cref{sec:fha:real:model}, we will stick with the realizability model,
following the ideas of SDT concretely in this model (as opposed to using SDT as
an axiomatic language).
\end{element}

\begin{element}
The rest of this section is a short introduction to SDT based on  \varcitet{Longley_Simpson_1997}{'s} approach
using \emph{well complete objects}, adapted to a type-theoretic language. 
See also \varcitet{Longley_1995}{'s} thesis,
the more general treatment by
\citet{simpson_computational_2004,Simpson_computational_1999}, and the type-theoretic formalisation of SDT using \emph{well complete $\Sigma$-spaces} by
\citet{reus1996program,reus_formalizing_1999} and \citet{reus_general_1999}.
With the machinery of SDT in this section, the interpretation of \Fhar{} will be almost trivial and will be presented in the next section.
\end{element}

\begin{element}
Before going into SDT, let us quickly recall a typical setup of interpreting
recursion in classical domain theory, which we are going to mirror in the
SDT.

A \emph{predomain} (or precisely, an $\omega$-cpo, in this setup) is a partially ordered set $\atuple{A, \sqsubseteq}$  that has suprema $\sqcup_i a_i$ for all $\omega$-chains $a_0 \sqsubseteq a_1 \sqsubseteq a_2 \sqsubseteq a_3 \sqsubseteq \cdots$ in $A$; a predomain need not have a bottom
element.
Morphisms
between predomains are monotone functions preserving those suprema of $\omega$-chains.

A \emph{(Scott-) open set} of a predomain $A$ is a subset $O \subseteq A$ that is
(1) upward closed: for all $x, y \in A$, if $x \in O$ and $x \sqsubseteq y$
then $y \in O$, and 
(2) continuous: for all $\omega$-chains $a_i$ in $A$,
  $\sqcup_i a_i \in O$ iff there exists some $n$ such that $a_n \in O$.
Open sets of a predomain $A$ are in bijection with morphisms $A \to \fS$, where
$\fS$  is the two-element predomain $\aset{\bot \sqsubseteq \top}$, sometimes called the \emph{\Sier{} space} ($\fS$ is the Fraktur letter for \emph{S}).
Namely, every open set $O \subseteq A$ corresponds to the morphism $\chi : A \to \fS$ where $\chi(a) = \top$ if $a \in O$ and $\chi(a) = \bot$ if $a \not\in O$.

The \emph{lifting monad} $L A$ on predomains adjoins a new \emph{bottom element} $\bot$ to $A$, with a monad structure similar to that of the monad
$1 + \blank$ on sets.
Kleisli morphisms of predomains $f : \Gamma \to L A$ 
are in bijection with \emph{partial} morphisms $\tuple{O, \bar{f}} : \Gamma \pto A$, 
each consisting of an open set $O \subseteq \Gamma$ and a (total)
morphism $\bar{f} : O \to A$.

A \emph{domain} $D$ is a predomain with bottom element $\bot_D$, which is the same as 
an Eilenberg-Moore algebra of the lifting monad $L$.
Every endo-morphism $f : D \to D$ on domains then has a least fixed point by taking the
supremum of the chain $\bot_D \sqsubseteq f(\bot_D) \sqsubseteq f(f(\bot_D)) \sqsubseteq \cdots$ in $D$.

Contexts $\Gamma$ and types $\sigma$ of a call-by-value programming language
with recursion are then interpreted as predomains $\sem{\Gamma}, \sem{\sigma}$. Terms $\Gamma\vdash t : \sigma$ are interpreted as morphisms
$\sem{\Gamma} \to L \sem{\sigma}$, i.e.\ partial morphisms between predomains.
\end{element}

\begin{element}
Recall that the internal language of assemblies $\Asm(\bA)$ over a partial
combinator algebra $\bA$, which we used to construct a model of \Fha{} in
\Cref{sec:real:mod:fha}, is an extensional MLTT with three cumulative universes $P : V_1 : V_2$ such that
\begin{itemize}
\item
each closed under the unit type, $\Sigma$, $\Pi$, and inductive types ($W$-types);

\item
for all types $A$ and $a, b : A$, the equality type $a = b$ is
in the universe $P$;

\item
for all types $A$ and $P$-valued type families $B : A \to P$,
$\Pi\; A\; B$ is in $P$.
\end{itemize}
The interpretation of $P$ is the assembly of modest sets (i.e.\ PERs), 
and $V_i$ is the assembly of $U_i$-small assemblies, for universe of sets $U_i$
in the meta-theory.
Details of the interpretation  can be found in 
Reus's thesis \citep[\S 8]{reus1996program}.
\end{element}

\begin{element}
In the following, we will further fix the PCA $\bA$ to be
\emph{Kleene's first algebra} $\mathbb{K}$ \citep{Oosten_2008}, whose elements are natural numbers 
(which intuitively play dual roles as both \emph{data} and \emph{computation} via G\"{o}del codes
of Turing machines), and partial application $n \; m$ is defined to be
$\phi_n(m)$, the possibly divergent result of running the Turing machine
coded by $n$ with input $m$.
We will write $n \; m \diverges$ and $n\;m\converges$ to mean that the partial application diverges and converges respectively.

Specialising $\bA$ to $\mathbb{K}$ is only for providing more intuition, and
interested readers can consult \citet{Longley_Simpson_1997} to see how it can be
done more generally with an arbitrary PCA equipped with a notion of \emph{divergence}.
\end{element}

\begin{element}
A type $A$ is said to be a \emph{proposition} if the type $\isprop \; A \defeq (a, b : A) \to x = y$ is inhabited
\citep{hottbook}.
The subuniverse $P_{-1} \subseteq P$ of \emph{propositional modest sets}
is then defined by
\begin{lgather*}
P_{-1} : V_1 \\  
P_{-1} = \Sigma(A : P).\; \isprop\; A
\end{lgather*}
whose elements are decoded as types by first projection $\fst$, which we will
leave as implicit, as if $P_{-1}$ is a Russell-style universe.

It might be useful to see an external description for the universe $P_{-1}$,
in the sense of universes in categories.
The semantics of the universe $(P_{-1}, \fst)$ is
(isomorphic to) an assembly morphism
$i : \tilde{P}_{-1} \to P_{-1}$, where $P_{-1}$ has an underlying set
containing all \emph{sub-singleton modest sets} $A$, and $\realizes[P_{-1}]{r}{A}$
holds for all $r$ and $A$.  The assembly $\tilde{P}_{-1}$ has an underlying
set containing all modest sets $A$ with exactly one element $a \in
\avert{A}$, and $\realizes[\tilde{P}_{-1}]{r}{A}$ if and only if
$\realizes[A]{r}{a}$.
The morphism $i : \tilde{P}_{-1} \to P_{-1}$ is the inclusion morphism.


From the above explicit description, we can see that the universe $P_{-1}$ satisfies Voevodsky's \emph{propositional resizing axiom}: 
in the language of $\AsmK$,
for every propositional type $A$, there is some $\classify{A} : P_{-1}$
isomorphic to $A$, since a sub-singleton assembly $A$ is 
necessarily a modest set.
\end{element}

\begin{element}
The universe $P_{-1}$ is similar to the universe $\Omega$ of propositions in elementary toposes as axiomatised in \ref{el:ttstc:topos},
and we can define logical connectives $\top$, $\bot$, $\land$, $\lor$, $\forall$, and $\exists$ on $P_{-1}$ in exactly the same way as we do in elementary toposes.

The crucial difference between $P_{-1}$ and the universe $\Omega$ in elementary
toposes is that $P_{-1}$ is \emph{not} univalent: 
given $p, q : P_{-1}$ with $p \cong q$, it is not always the case that $p = q$.
Indeed, two
singleton modest
sets $\tuple{\aset{\unitel}, \realizes[p]{}{}}$
and $\tuple{\aset{\unitel}, \realizes[q]{}{}}$
can have different realizing relations even when their underlying sets are exactly the same.

This seemingly insignificant flaw of $P_{-1}$ has an impact bigger than one may expect on doing mathematics
internal to $\AsmK$;
For one thing, we cannot construct \emph{quotient types} using
$P_{-1}$ in the way that it is usually done in elementary toposes.

We can switch to the realizability topos to use the better-behaved
universe $\Omega$, but we will stay in category of assemblies, as it turns out to be good enough for carrying out 
our development, and more importantly, the simplicity
of $\AsmK$ allows us to give simple external
descriptions of many constructions of SDT, which I found essential when learning SDT for the first time.
\end{element}


\begin{element}
The universe $P_{-1}$ has a subuniverse of \emph{semi-decidable} propositions:
\begin{equation}\label{eq:semidec:subuniv}
P_{-1}^s \defeq 
\aset{p : P_{-1} \mid \exists f : \bN \to 2.\; p \cong (\exists n : \bN.\; f\;n = 0)}.
\end{equation}
Roughly speaking, a proposition in $P_{-1}^s$ is determined by
the (semi-decidable) property of the existence of 
zero points for a computable function $f : \bN \to 2$.

A simple external description of $P_{-1}^s$ is available:
the assembly $P_{-1}^s$ is isomorphic, \emph{up to bi-implication} in $P_{-1}$,  to the assembly
$\fS \defeq \tuple{\aset{\bot, \top}, \realizes[\fS]{}{}}$ where
\[
\spread*{
\realizes[\fS]{r}{\bot} \ifft r \; 0 \diverges
\also
\text{and}
\also
\realizes[\fS]{r}{\top} \ifft r \; 0 \converges.
}
\]

In sketch, the direction $\fS \to P_{-1}^s$ sends $\bot$ and $\top$ to the empty and terminal assemblies respectively, and is realized by the Turing
machine accepting $r$ and returning the computable function $f : \bN \to 2$ that
accepts $n$ and  returns $0$ if and only if running the Turing machine
$r$ halts in $n$ steps.
The other direction $P_{-1}^s \to \fS$ sends an assembly to $\top$
iff it is non-empty, and this is realized by the Turing machine
accepting the code for $f : \bN \to 2$ (and $p \cong \exists n.\; f\;n = 0$)
and returns the machine $r$ that searches for a zero point of $f$ iteratively.
\end{element}

\begin{element}
The type $\fS$ can be viewed as a universe directly: every element $p : \fS$
is decoded as the equality type $p = \top$.
Although $P_{-1}^s$ and $\fS$ are equivalent universes,
we will prefer using the universe $\fS$ over $P_{-1}^s$ because $\fS$ is {univalent}: 
\[
(p, q : \fS) \to (p \cong q) \to (p = q).
\]

The universe $\fS$ is closed under truth $\top : \fS$ and dependent conjunction $\Sigma (p : \fS).\; q(p) : \fS$ for all $p : \fS$ and $q : p \to \fS$.
Therefore, it is a \emph{dominance} \citep{rosolini1986continuity}, which
is the fundamental notion in general SDT \citep{Hyland_first_1991}.
In the present situation, the dominance $\fS$ is moreover closed under falsity $\bot$ and 
countable disjunction $\exists n : \bN.\;p(n)$ for $p : \bN \to \fS$.

As suggested by the notation, the universe $\fS$ of semi-decidable propositions will play the role of the \Sier{} space $\aset{\bot \sqsubseteq \top}$ in classical domain theory.
In the internal language, a (Scott-) open of a type $A$ is defined as a function $O : A \to \fS$, giving rise to a subtype $\aset{a : A \mid  O\;a = \top}$, which we shall usually just write as $O$ when no confusion.
An ($\fS$-) partial function $\Gamma \pto A$ is again an open set
$O$ of $\Gamma$ with a function $O \to A$.
Externally, an open set of an assembly $\tuple{\avert{A}, \realizes[A]{}{}}$
is a subset $O \subseteq A$ such that there is a Turing machine $r$
satisfying that whenever $\realizes[A]{n}{a}$, then $r\;n \converges$ iff $a \in O$.
\end{element}

\begin{element}
Analogous to the lifting monad in classical domain theory, we have a lifting monad
\begin{lgather*}
L : P \to P \\
L \; A = \Sigma (p : \fS).\; (\impfun{p} A)
\end{lgather*}
on modest sets in the internal language of $\AsmK$.
We can actually define $L$ on all types but we shall only need it on $P$.
The monad structure for $L$ is 
\begin{gather*}
\spread*{
\begin{array}{l}
\eta :  A \to L \; A \\
\eta\; a = (\top , a)
\end{array}%
\also%
\begin{array}{l}
\mu : (A \to L\;B) \to L \; B \\
\mu \; (p, a) \; k = (\Sigma (\_ : p).\; \fst \; (k \; a),\  \snd \; (k\; a)
)  
\end{array}%
}
\end{gather*}

An isomorphic external description of the monad $L$ is that it sends every modest set
$\tuple{\avert{A}, \realizes[A]{{}}{}}$ to the modest set $\tuple{1 + \avert{A}, \realizes[L A]{{}}{}}$ where
\begin{align*}
\realizes[L A]{r}{\inl\;\unitel} &\ifft r \; 0 \diverges, \\
\realizes[L A]{r}{\inr\;a} &\ifft {r \; 0 \converges} \;\land\; {\realizes[A]{r \; 0}{a}}.
\end{align*}
That is to say, if a Turing machine $r$ diverges on the input $0$ then it realizes the `bottom element' $\inl\;\unitel$, and otherwise
$r$ realizes $\inr\; a$, for elements $a \in \avert{A}$ that are realized by $r\;0$.
The input $0$ here is completely arbitrary, and the definition will
be isomorphic if $0$ is replaced by any other fixed number or $r$ itself.
The monad structure on $L$ is the same as the one on
$1 + \blank : \Set \to \Set$; see \citet[\S 4]{Longley_Simpson_1997} for details.

Note that $L A$ is not the same as the coproduct $1 + A$ in $\AsmK$.
The latter has the same underlying set $1 + \avert{A}$, but the existence
predicate of $1 + A$ is
\begin{align*}
\realizes[1 + A]{r}{\inl\;\unitel} &\ifft \fst\; r = 0 \\
\realizes[1 + A]{r}{\inr\;a} &\ifft \fst\; r \neq 0 \land \realizes[A]{\snd\; r}{a},
\end{align*}
where $\fst$, $\snd$, and $\tuple{\blank, \blank}$ are some Turing machines implementing projections and pairing of natural numbers $\bN \times \bN \cong \bN$. 
The crucial difference between $L A$ and $1 + A$ is that morphisms of assemblies $f : X \to 1
+ A$ must be realised by Turing machines that can \emph{decide} whether $f(x)$ is
$\inr\;a$ given a realizer of $x$, while morphisms $f : X \to L A$ need only be realised by
Turing machines that \emph{semi-decide} whether $f(x)$ is $\inr\;a$.
Thus $L A$ is the right choice of the lifting monad capturing the idea of
`possibly divergent elements of $A$'.
\end{element}

\begin{element}
As an \emph{endofunctor} on the universe $P$, $L$ has both a final coalgebra $\tuple{\oomega : P, \sigma : \oomega \to L \oomega}$ 
and an initial algebra $\tuple{\omega : P, \tau : L  \omega \to \omega}$.
The following formulae of $\oomega$ and $\omega$ are due to \citet{Jibladze_1997}:
\begin{gather*}
\oomega = \aset{f : \bN \to \fS \mid \forall n.\; f \; (n + 1) \to f \; n} \\
\omega = \aset{f : \oomega \mid \forall p : P_{-1}.\; \big(\forall (n : \bN).\; (f\; n \to p) \to p\big) \to p}
\end{gather*}
which in fact works for any dominance in any elementary topos with a natural number
object (with $P_{-1}$ in the formula of $\omega$ replaced by $\Omega$).

Again, we have simple external descriptions for $\oomega$ and $\omega$ in the case of $\AsmK$.
The carrier $\oomega$ is isomorphic to the assembly
$\tuple{\bN \cup \aset{\infty}, \realizes[\oomega]{}{}}$ where
\begin{align*}
\realizes[\oomega]{r}{n} &\ifft \forall k \in \bN.\; (k < n) \leftrightarrow (r \; k \converges) \\  
\realizes[\oomega]{r}{\infty} &\ifft \forall k \in \bN.\; r \; k \converges
\end{align*}
with $\sigma : \oomega \to L \oomega$ given by $\sigma(0) = \inl \; *$, $\sigma(n+1) = \inr \; n$, and $\sigma(\infty) = \inr \; \infty$.
The type $\omega$ is given by the assembly
$\tuple{\bN, \realizes[\omega]{}{}}$ that
restricts $\oomega$ to the sub-underlying set $\bN$.
The algebra $\tau : L \omega \to \omega$ is simply $\tau(\inl\;*) = 0$ and
$\tau(\inr\;n) = n + 1$.

From the explicit description we see that the assembly $\omega$ is a non-standard representation of natural numbers as an assembly, different from the   
standard representation $\tuple{\bN, \aset{(n, n) \mid n\in\bN}}$ that satisfies the universal property of a
natural number object in $\AsmK$.
In $\omega$, every natural number $n \in \bN$ is represented as a Turing machine that halts exactly for inputs $k$ smaller than $n$.
Since Turing machines are unable to tell whether other
machine halts, assembly maps $\omega \to A$ are constrained to be `continuous' in a sense.
\end{element}

\begin{element}
Let $\kappa : \omega \to \oomega$ be the canonical inclusion morphism 
(given as the unique algebra homomorphism from the initial algebra $\tau : L \omega \to \omega$ to the $L$-algebra $\sigma^{-1} : L\oomega \to \oomega$).
The morphism $\kappa$ plays an important role in synthetic
domain theory:
a morphism $c : \omega \to X$ of assemblies will play the role of an $\omega$-chain 
of elements $c_0 \sqsubseteq c_1 \sqsubseteq \cdots$ in a partial order $X$.
Similarly, a morphism $c^* : \oomega \to X$ is analogous to a chain $c_i$ together with its supremum $\sqcup_i c_i$.
\end{element}

\begin{definition}\label{def:complete:mod:set}
A modest set $X : P$ is called \emph{complete} if the function 
\[(\blank \vcomp \kappa) : (\oomega \to X) \to (\omega \to X)\] is an 
isomorphism, i.e.\ the following proposition holds
\[
\ensuremath{\Varid{complete}} \; X \defeq 
\exists \overline{(\blank)} : (\omega \to X) \to (\oomega \to X).\ (\forall c.\ 
  \bar{c} \vcomp \kappa = c) \land (\forall d.\ \overline{d \vcomp \kappa} = d)
\]
internally in $\AsmK$.
A modest set $X : P$ is called \emph{well complete} if $L X$ is complete.
\end{definition}

\begin{element}
Well complete types will be our synthetic version of predomains:
\begin{lgather*}
\PDom : V_1 \\
\PDom = \Sigma(A : P).\; \ensuremath{\Varid{complete}} \; (L\;A)
\end{lgather*}
They are intuitively modest sets in which an $\omega$-chain of partial
elements has a unique (partial) supremum.
Their crucial difference from predomains in classical domain theory
is that they are just `sets' satisfying a property, rather than sets
carrying additional data (the partial order).
\end{element}

\begin{element}
There are some nuances in the external meaning of (well) completeness.
Firstly, we notice that $\ensuremath{\Varid{complete}} : P \to P_{-1}$ in \Cref{def:complete:mod:set} is a proper \emph{realizability predicate} in the sense that $\ensuremath{\Varid{complete}\;\Conid{X}}$ for
a modest set $X : P$ has non-trivial realizers.
Namely, $\ensuremath{\Varid{complete}\;\Conid{X}}$ is realized by machines sending realizers of $\omega \to X$
to realizers of $\oomega \to X$, in a way that is an inverse to $\blank \vcomp \kappa$.
Secondly, $\ensuremath{\Varid{complete}\;\Conid{X}} : P_{-1}$ makes sense in an arbitrary context
$\Gamma$ in the internal language of $\AsmK$.
Therefore, externally $X$ is not one modest set but
a family of modest sets $\Gamma \to P$ indexed by
an assembly $\Gamma$ of the context.

If we forget about realizers of completeness and consider only \emph{global}
elements $X : P$, then $\ensuremath{\Varid{complete}\;\Conid{X}}$ has a realizer if and only if the morphism
$X^\kappa : X^{\oomega} \to X^\omega$ in $\AsmK$ is an isomorphism. 
The latter condition is precisely the definition of completeness for an object in
$\Asm(\bA)$ by \citet{Longley_Simpson_1997}.
This further means that for all assemblies $\Gamma$ and 
$c : \Gamma \times \omega \to X$, there is a unique $\bar{c} : \Gamma \times \oomega \to X$ making the following
diagram commute:
\[
\begin{tikzcd}[ampersand replacement=\&]
	{\Gamma \times \omega} \& X \\
	{\Gamma \times \oomega}
	\arrow["c", from=1-1, to=1-2]
	\arrow["{\kappa}"', from=1-1, to=2-1]
	\arrow["{\bar{c}}"', from=2-1, to=1-2]
\end{tikzcd}
\]
To see this, by Yoneda embedding, $X$ is complete if and only if 
\[
(X^\kappa \vcomp \blank) : \Hom(\Gamma, X^{\oomega}) \to \Hom(\Gamma, X^\omega)
\]
is an isomorphism, natural in $\Gamma$.
By adjointness, this is equivalent to asking
\[
(\blank \vcomp \Gamma \times \kappa) : \Hom(\Gamma \times \oomega, X) \to \Hom(\Gamma \times \omega, X)
\]
to be a natural isomorphism.
A natural transformation is a natural isomorphism iff every component of it
is an isomorphism, so $X$ is complete if and only if for every $c : \Gamma \times \omega \to X$, there is a unique $\bar{c} : \Gamma \times \oomega \to X$ such
that $\bar{c} \vcomp \Gamma \times \kappa = c$.
\end{element}

\begin{theorem}\label{thm:pdom:closure}
The subuniverse $\PDom \subseteq P$ is closed under liftings $L$, 
the unit type, $\Sigma$-types, $\Pi$-types, equality types, coproducts,
the natural number type $\bN$ in $P$.
Moreover, predomains are also complete (i.e.\ well completeness implies
completeness).
\end{theorem}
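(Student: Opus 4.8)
The plan is to reduce every clause to statements about \emph{completeness} alone, using the definition $\PDom = \Sigma(A : P).\,\ensuremath{\Varid{complete}}\,(L\,A)$ and following \citet{Longley_Simpson_1997} inside the internal language of $\AsmK$. Completeness of $X$ is exactly orthogonality to $\kappa : \omega \to \oomega$ (\Cref{def:complete:mod:set}): the map $(\blank\vcomp\kappa) : (\oomega \to X) \to (\omega \to X)$ is an isomorphism. Three workhorses drive everything. (W1) Because orthogonality to a fixed map is preserved by all limits — $\Hom(\blank, \lim_i X_i) = \lim_i \Hom(\blank, X_i)$, and a limit of isomorphisms is an isomorphism — the complete objects are closed under the terminal object, products, equalizers, and pullbacks. (W2) The dominance $\fS = L\,\unitty$ is complete; I would prove this directly from the explicit descriptions of $\omega$ and $\oomega$ in $\AsmK$ (equivalently, from \citeauthor{Jibladze_1997}'s formulae), which say precisely that a chain $\omega \to \fS$ has a unique extension $\oomega \to \fS$ recording its supremum. (W3) The lift preserves completeness: if $X$ is complete then $L\,X$ is complete. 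I expect (W3) to be the crux.

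Granting these, the final clause — well completeness implies completeness — is the easy one. For $A : \PDom$ the monad unit $\eta_A : A \to L\,A$ identifies $A$ with the fibre $\{x : L\,A \mid \fst\,x = \top\}$, i.e.\ with the equalizer of $\fst : L\,A \to \fS$ and the constant map $\top : L\,A \to \fS$. Since $L\,A$ is complete by hypothesis and $\fS$ is complete by (W2), this equalizer is complete by (W1), so $A$ is complete. The purely limit-shaped closures of $\PDom$ go the same way once wrapped through the lift: the unit type is complete and well complete since $L\,\unitty = \fS$ (W2); equality types $a = b$ are equalizers, hence complete; and closure under $\Pi$-types and products uses that the well-complete objects form a reflective exponential ideal, the completion reflection being left exact, so that $\Pi(a:A).\,B\,a$ with each $B\,a$ well complete is again well complete. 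I would either cite this reflectivity from \citet{Longley_Simpson_1997} or re-derive it, the left-exactness of the reflection being the one nontrivial input here.

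The genuinely positive closures — under $L$, $\Sigma$-types, coproducts, and $\bN$ — all rest on (W3) and its strengthening to $\fS$-indexed sums. Closure under $L$ is immediate from (W3): if $A$ is well complete then $L\,A$ is complete, whence $L\,(L\,A)$ is complete by (W3), i.e.\ $L\,A$ is well complete. For $\Sigma$-types I would prove the sharper statement that $\Sigma(p : \fS).\,F\,p$ is complete whenever each fibre $F\,p$ is complete, specialising to $L\,X = \Sigma(p:\fS).\,(\impfun{p}X)$, the fibres $\impfun{p}X$ being complete as subterminal powers of the complete $X$; the dependent-sum closure for general $\PDom$-valued families then follows by routing $L(\Sigma_A B)$ through this construction. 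Coproducts need care, since $L(A + B) \not\cong L\,A + L\,B$, so I would again express $L(A+B)$ as an $\fS$-indexed sum of complete fibres and apply the strengthened (W3); the natural number object $\bN$ is then handled as the countable coproduct of copies of the well-complete $\unitty$.

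The main obstacle is (W3), the preservation of completeness by $L$ (equivalently, completeness of $\fS$-indexed sums of complete fibres). Unlike the limit-shaped closures, this is not formal orthogonality bookkeeping: $L$ and $\Sigma$ are not limits, so the extension of a chain $\omega \to L\,X$ to $\oomega \to L\,X$ must be built by hand, separating the semi-decidable support valued in $\fS$ (controlled by (W2)) from the conditional $X$-valued part (controlled by completeness of $X$) and gluing them coherently. This is exactly where the concrete structure of $\AsmK$ — the descriptions of $\omega$, $\oomega$, and $\fS$, together with the semi-decidability closure of the dominance — does the real work, and it is the heart of \citeauthor{Longley_Simpson_1997}'s well-complete machinery that I would adapt.
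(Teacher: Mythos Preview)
Your proposal and the paper both defer to \citet[\S 7]{Longley_Simpson_1997}, and the workhorses (W1)--(W3) you identify are exactly the ingredients of that argument; in that sense the approach is the same. The paper, however, does not reprove any of this---it simply cites Longley--Simpson and spends its entire proof on one point you mention only in passing: the predicate $\ensuremath{\Varid{complete}}$ in \Cref{def:complete:mod:set} is \emph{internal} to $\AsmK$ and has non-trivial realizers (a Turing machine producing code for $\bar c : \oomega \to X$ from code for $c : \omega \to X$), whereas Longley--Simpson's completeness is the external statement that $X^\kappa$ is an iso. The paper's actual contribution is the observation that Longley--Simpson's closure proofs can all be carried out in the internal language and are therefore realizable; nothing more is argued. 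So your ``main obstacle'' is misplaced: the paper does not re-derive (W3) at all, and from its point of view the only hurdle is the internal/external discrepancy. Your detailed sketch is more than the paper supplies, and a couple of your reductions (the reflective-exponential-ideal framing for $\Pi$-types, and the handling of $L(\Sigma_A B)$) go beyond what Longley--Simpson state directly---not wrong, but also not what either source actually does.
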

\begin{proof}
This is essentially shown by \citet[\S 7]{Longley_Simpson_1997} aside from
the difference between \citeauthor{Longley_Simpson_1997}'s external definition
of completeness and our internal definition.
\citeauthor{Longley_Simpson_1997} defined
completeness of an assembly $X$ as a proposition in the ambient logic ($X^\kappa :
X^{\oomega} \to X^\omega$ being an isomorphism), while our definition is in internal in the language of $\AsmK$, which has non-trivial realizers (Turing machines accepting code of $c : \omega \to X$ and
outputting code of $\oomega \to X$).
Therefore, we have to check that the proofs of the closure properties by \citeauthor{Longley_Simpson_1997} are realizable.
For example, to show $L : P \to P$ restricts to $L : \PDom \to \PDom$, 
we need to check that there is a Turing machine sending 
realizers of $X$ being well complete 
to realizers of $L X$ being well complete.
This is indeed the case by observing that the proofs by \citeauthor{Longley_Simpson_1997} can be carried internally in the language of $\AsmK$.
\end{proof}

\begin{definition}
Mirroring the setup of classical domain theory,
the universe of \emph{domains} is defined as the type of Eilenberg-Moore algebras of the monad $L : \PDom \to \PDom$:
\begin{hscode}\SaveRestoreHook
\column{B}{@{}>{\hspre}l<{\hspost}@{}}%
\column{3}{@{}>{\hspre}l<{\hspost}@{}}%
\column{10}{@{}>{\hspre}l<{\hspost}@{}}%
\column{42}{@{}>{\hspre}l<{\hspost}@{}}%
\column{E}{@{}>{\hspre}l<{\hspost}@{}}%
\>[B]{}\lhskeyword{record}\;\DOM\mathbin{:}\Conid{V}_{\mathrm{1}}\;\lhskeyword{where}{}\<[E]%
\\
\>[B]{}\hsindent{3}{}\<[3]%
\>[3]{}\Conid{A}{}\<[10]%
\>[10]{}\mathbin{:}\PDom{}\<[E]%
\\
\>[B]{}\hsindent{3}{}\<[3]%
\>[3]{}\Varid{\alpha}\mathbin{:}\Conid{L}\;\Conid{A}\to \Conid{A}{}\<[E]%
\\
\>[B]{}\hsindent{3}{}\<[3]%
\>[3]{}\anonymous {}\<[10]%
\>[10]{}\mathbin{:}(\Varid{x}\mathbin{:}\Conid{A})\to \Varid{\alpha}\;(\Varid{\eta^L}\;\Varid{x})\mathrel{=}\Varid{x}{}\<[E]%
\\
\>[B]{}\hsindent{3}{}\<[3]%
\>[3]{}\anonymous {}\<[10]%
\>[10]{}\mathbin{:}(\Varid{x}\mathbin{:}\Conid{L}\;(\Conid{L}\;\Conid{A}))\to \Varid{\alpha}\;(\Varid{\mu^L}\;{}\<[42]%
\>[42]{}\Varid{x})\mathrel{=}\Varid{\alpha}\;(\Conid{L}\;\Varid{\alpha}\;\Varid{x}){}\<[E]%
\ColumnHook
\end{hscode}\resethooks
As usual, given $D : \DOM$, we usually write the type $\fst\; (D.A)$ as just $D$.
\end{definition}

\begin{element}
The crucial property of domains $D : \DOM$ is that they admit fixed points for all endofunctions:
\begin{gather}\label{el:sdt:domain:fixed:point}
\ensuremath{\Varid{fix}} : \impfun{D : \DOM} (f : D \to D) \to D 
\end{gather}
which is defined as follows: 
first we 
define a function $\alpha_f : L \; D \to D$ by $\alpha_f = \alpha \vcomp L f$.
By the initiality of $\tau : L\omega \to \omega$, we have a homomorphism
$c : \omega \to D$.
Then using the completeness of $D$, we have $\bar{c} : \oomega \to D$,
and we let $\ensuremath{\Varid{fix}\;\Varid{f}} \defeq \bar{c} \; \infty$.
It is then the case that $f\;(\,\ensuremath{\Varid{fix}\;\Varid{f}}) = \ensuremath{\Varid{fix}\;\Varid{f}}$
\citep[Theorem 7.3]{reus_general_1999}.
\end{element}

\begin{element}
We note that by \citet[Theorem 5.6]{Longley_Simpson_1997},
Eilenberg-Moore $L$-algebras on a predomain are unique if exist, so it makes 
sense to say `a predomain is a domain' as a proposition.
\end{element}

\subsection{The Interpretation of \Fhar}

\begin{element}
Now we are ready to construct a ($V_2$-small) model of the signature \Fhar{} (\Cref{sec:gen:rec}) in $\AsmK$.
Our goal is to define an element $\rM$
of the record type $\sem{\Fhar}_{V_2}$ containing
all the declarations of \Fhar{} with $\Jdg$ replaced by the universe $V_2$.
\end{element}

\begin{element}
The non-recursive fragment of \Fhar{} will be interpreted in almost the same way
as the model $\RM$ of \Fha{} in \Cref{sec:real:mod:fha}, except that all the occurrences of the
universe $P : V_1$
will be replaced by the subuniverse $\PDom : V_1$.
For example, $\rM.\ensuremath{\Varid{ty}}$ is now $\PDom$ instead of $P$, and the computation
judgement $\rM.\ensuremath{\Varid{co}}$ is now
\begin{lgather*}
\rM.\ensuremath{\Varid{co}} : \rM.\ensuremath{\Conid{RawHFunctor}} \to \rM.\ensuremath{\Varid{el}} \; \rM.\ensuremath{\Varid{ty}} \to \PDom \\
\rM.\ensuremath{\Varid{co}} \; H \; A = (h : \rM.\ensuremath{\Conid{Handler}\;\Conid{H}}) \to (B : \PDom) \to (A \to h\;B) \to h \; B
\end{lgather*}
The constructions in \Cref{sec:real:mod:fha} still work because by \Cref{thm:pdom:closure}, the universe
$\PDom$ is closed under the type formers that we used to interpret \Fha{}, in particular, impredicative $\Pi$-types.
\end{element}

\begin{element}
The empty type from \Cref{eq:fhar:empty} is as expected interpreted as the empty
modest set $0$, which is trivially well complete.
\end{element}

\begin{element}
The interesting thing is modelling partial computations \ensuremath{\Varid{pco}} \Cref{el:fhar:pco}.
In \Cref{sec:gen:rec}, we had a type \ensuremath{\Conid{HandlerRec}} of monads supporting
recursion (and some effectful operations).
However, we cannot simply define $\rM.\ensuremath{\Varid{pco}}$ by
replacing \ensuremath{\Conid{Handler}} in the definition of $\rM.\ensuremath{\Varid{co}}$ above with
\ensuremath{\Conid{HandlerRec}}, because 
the definition of $\ensuremath{\Conid{HandlerRec}}$ depends on $\rM.\ensuremath{\Varid{pth}}$ and thus $\rM.\ensuremath{\Varid{pco}}$.

The type \ensuremath{\Conid{HandlerRec}} ensures that a monad $M$ in \Fhar{} supports recursion by 
requiring the monad $M$ to be partial thunks \emph{syntactically}.
What we need here is a semantic counterpart of monad supporting recursion:
\begin{hscode}\SaveRestoreHook
\column{B}{@{}>{\hspre}l<{\hspost}@{}}%
\column{3}{@{}>{\hspre}l<{\hspost}@{}}%
\column{E}{@{}>{\hspre}l<{\hspost}@{}}%
\>[B]{}\lhskeyword{record}\;\Conid{HandlerL}\;(\Conid{H}\mathbin{:}\Conid{RawHFunctor})\mathbin{:}\Conid{V}_{\mathrm{2}}\;\lhskeyword{where}{}\<[E]%
\\
\>[B]{}\hsindent{3}{}\<[3]%
\>[3]{}\lhskeyword{include}\;\Conid{Handler}\;\Conid{H}\;\lhskeyword{as}\;\Varid{h}{}\<[E]%
\\
\>[B]{}\hsindent{3}{}\<[3]%
\>[3]{}\Varid{dom}\mathbin{:}(\Conid{B}\mathbin{:}\PDom)\to \{\mskip1.5mu \Conid{D}\mathbin{:}\DOM\;\mid\Conid{\Conid{D}.A}\mathrel{=}\Varid{h}\;\Conid{B}\mskip1.5mu\}{}\<[E]%
\ColumnHook
\end{hscode}\resethooks
which requires that \ensuremath{\Varid{h}\;\Conid{B}\mathbin{:}\PDom} is a domain for all \ensuremath{\Conid{B}\mathbin{:}\PDom}. 

The model of partial computations is then
\begin{lgather*}
\rM.\ensuremath{\Varid{pco}} : \rM.\ensuremath{\Conid{RawHFunctor}} \to \rM.\ensuremath{\Varid{el}} \; \rM.\ensuremath{\Varid{ty}} \to \PDom \\
\rM.\ensuremath{\Varid{pco}} \; H \; A = (h : \ensuremath{\Conid{HandlerL}\;\Conid{H}}) \to (B : \PDom) \to (A \to h\;B) \to h \; B
\end{lgather*}
The models of the declarations \ensuremath{\Varid{val}}, \ensuremath{\Varid{let\hyp{}in}}, \ensuremath{\Varid{pth}} and \ensuremath{\Varid{op}} are the same as
those for \ensuremath{\Varid{co}} in \Cref{sec:fha:real:model}, which we shall not repeat here.
\end{element}

\begin{element}
The model for the fixed-point combinator has type
\begin{gather}\label{el:fhar:mod:Y}
\rM.\ensuremath{\Varid{fix}} : \impfun{H, A} (\rM.\ensuremath{\Varid{pth}}\;H\;A \to \rM.\ensuremath{\Varid{pco}}\;H\;A) \to \rM.\ensuremath{\Varid{pco}}\;H\;A
\end{gather}
In the present model, $\ensuremath{\Varid{pth}}\;H\;A$ is simply equal to \ensuremath{\Varid{pco}\;\Conid{H}\;\Conid{A}}, so by 
\cref{el:sdt:domain:fixed:point},
it is sufficient
to show that $\rM.\ensuremath{\Varid{pco}}\;H\;A$ is a domain.
We define the algebra by
\begin{lgather*}
\alpha : \impfun{H,A} L\; (\rM.\ensuremath{\Varid{pco}} \; H \; A) \to \rM.\ensuremath{\Varid{pco}} \; H \; A\\
\alpha \; (p, c) = \lambda h\;B\;k.\; \beta_{h\; B} \;  (p, c\;h\;B\;k)
\end{lgather*}
where $\beta_{h \; B} : L \; (h\;B) \to h\;B$ is $\beta_{h\; B} \defeq (h.\ensuremath{\Varid{dom}}\;B).\alpha$.
The $L$-algebra $\alpha$ is a product of a family of $L$-algebras, so 
it is easy to check that $\alpha$ satisfies the laws:
\begin{align*}
   & \alpha \; (\top,\;c) \\
=\ & \lambda h\;B\;k.\; \beta_{h\;B}\;(\top,\; c\; h \; b\; k) \\
=\ & \reason{$\beta_{h \; B}$ is an Eilenberg-Moore algebra} \\
   & \lambda h\;B\;k.\; c\;h\;b\;k \\
=\ & c
\end{align*}
and similarly for all $(p, (q, c)) : L \; (L\; (\rM.\ensuremath{\Varid{pco}} \; H \; A))$,
{\allowdisplaybreaks
\begin{align*}
   & \alpha \; (L \; \alpha\; (p, (q, c))) \\
=\ & \alpha \; (p, \lambda h\;B\;k.\; \beta_{h\;B} \;  (q, c\;h\;B\;k)) \\
=\ & \lambda h\;B\;k.\; \beta_{h \; B} \; (p, \beta_{h \; B} \; (q, c\;h\;B\;k))\\
=\ & \reason{$\beta_{h \; B}$ is an Eilenberg-Moore algebra}\\
   & \lambda h\;B\;k.\; \beta_{h\;B}\;(\mu^L \; (p, (q, c\;h\;B\;k))) \\
=\ & \lambda h\;B\;k.\; \beta_{h\;B}\;(\Sigma(\_ : p).\;q,\; c\;h\;B\;k) \\
=\ & \alpha\; (\mu^L \; (p, (q, c)))
\end{align*}
}
We have shown that \ensuremath{\Varid{pco}\;\Conid{H}\;\Conid{A}} is a domain, so we can
use \ensuremath{\Varid{fix}} (\cref{el:sdt:domain:fixed:point}) to define 
\[
\rM.\ensuremath{\Varid{fix}} \; f = \ensuremath{\Varid{fix}}\;f.
\]
\end{element}

\begin{element}
Finally, we need to give an interpretation of \ensuremath{\Varid{hdl}} \Cref{eq:fhar:eval}:
\begin{align*}
\rM.\ensuremath{\Varid{hdl}} & : \impfun{H, A} (h : \rM.\ensuremath{\Conid{HandlerRec}}\; H) \\
         & \to \rM.\ensuremath{\Varid{pco}}\; H\; A \to \rM.\ensuremath{\Varid{tm}}\; (h \; A)
\end{align*}
Note that the type of $h$ is $\ensuremath{\Conid{HandlerRec}}$ rather than \ensuremath{\Conid{HandlerL}}.
By the definition of \ensuremath{\Conid{HandlerRec}} in \Cref{sec:gen:rec}, there exists some
$F : \rM.\ensuremath{\Varid{el}}\;\rM.\ensuremath{\Varid{ty}} \to \rM.\ensuremath{\Varid{el}}\;\rM.\ensuremath{\Varid{ty}}$ such that the underlying monad of $h$ maps every $A : \PDom$ to $\rM.\ensuremath{\Varid{pth}}\;H\;(F\;A)$.
By the discussion above around \Cref{el:fhar:mod:Y}, $\rM.\ensuremath{\Varid{pth}}\;H\;(F\;A)$, which
is just $\rM.\ensuremath{\Varid{pco}}\;H\;(F\;A)$, is always a domain.
Therefore we have a conversion function
\[
\sigma : (h : \rM.\ensuremath{\Conid{HandlerRec}}\;H) \to \aset{h' : \ensuremath{\Conid{HandlerL}} \; H \mid h.h = h'.h},
\]
and we define the model of \ensuremath{\Varid{hdl}} to be
\[
\rM.\ensuremath{\Varid{hdl}}\; \imparg{H}\;\imparg{A}\;h \; c = c \; (\sigma\; h)\;A\;h.\ensuremath{\Varid{ret}}
\]
This completes the definition of the model $\rM : \sem{\Fhar}_{V_2}$.
\end{element}

\begin{element}
The realizability model $\rM$ of \Fhar{} gives a way to compute recursive
programs written in \Fhar{} by program extraction.
Since $L\;A$ is a domain, the monad $L : \PDom \to \PDom$ can be extended to 
\[
L' : \aset{L' : \rM.\ensuremath{\Conid{HandlerL}} \; \ensuremath{\Conid{VoidH}}  \mid L'.h = L}
\]
Therefore we have a function
\begin{lgather*}
\ensuremath{\Varid{toL}} : \impfun{A : \PDom}  \rM.\ensuremath{\Varid{pco}} \; \ensuremath{\Conid{VoidH}} \; A \to L\; A\\
\ensuremath{\Varid{toL}}\; c = c \; L' \; A \; \eta^L
\end{lgather*}
Every closed program \ensuremath{\Varid{p}\mathbin{:}\Varid{pco}\;\Conid{VoidH}\;\Varid{bool}} is interpreted as a global element 
of $\rM.\ensuremath{\Varid{pco}}\; \ensuremath{\Conid{VoidH}} \; 2$ in $\AsmK$.
Composing it with \ensuremath{\Varid{toL}}, we then have a global element of the modest 
set $L \; 2$.
The realizer of this element is then a possibly divergent Turing machine $r$ 
that yields a Boolean value if it halts.
\end{element}

\begin{element}
We conjecture that the realizability model of \Fhar{} in this section is
\emph{adequate}.

\begin{conjecture}
For all closed programs $c : \ensuremath{\Varid{pco}\;\Conid{VoidH}\;\Varid{bool}}$
in \Fhar{}, if the morphism $\ensuremath{\Varid{toL}} \vcomp \sem{c} : 1 \to L \; 2$ in $\AsmK$ is 
$\inr\;\ensuremath{\Varid{tt}}$ (or $\inr\;\ensuremath{\Varid{ff}}$), then $c = \ensuremath{\Varid{val}\;\Varid{tt}}$ (or \ensuremath{\Varid{c}\mathrel{=}\Varid{val}\;\Varid{ff}}) in the theory of \Fhar.
\end{conjecture}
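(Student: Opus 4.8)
The plan is to establish adequacy by a \emph{synthetic logical relations} argument, reusing the construction of \Cref{sec:fha:lr:model,app:lr:model} but replacing the meta-space $\Set$ of the canonicity proof with the synthetic domain theory that lives inside $\AsmK$. Concretely, I would glue the syntactic presheaf topos $\Pr(\CatJdgOf\Fhar)$ with $\AsmK$ so that each syntactic type $A$ is paired with its realizability denotation $\rM\,A$ and a logical predicate over $A$ is an \emph{admissible} relation between closed syntactic terms of $A$ and global elements of $\rM\,A$; this is the same two-space pattern as the binary parametricity setup of \Cref{sec:parametricity}, with the syntactic model playing the object-space role and $\rM$ supplying the meta-space denotations. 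Working in the internal language $\TTSTC$ of the resulting glued topos $\CatG$, I would build a relational model $\MM^* : \extty{\sem{\Fhar}_{U_2}}{\sop}{\MM}$ following \Cref{app:lr:model} verbatim for the non-recursive fragment, so that $\MM^*$ restricts to the syntactic model under $\sop$ and to $\rM$ in the meta-space. The predicate at $\Varid{bool}$ encodes adequacy — a closed term is forced equal to $\Varid{val}\;\Varid{tt}$ (resp.\ $\Varid{val}\;\Varid{ff}$) whenever its denotation is $\inr\;\Varid{tt}$ (resp.\ $\inr\;\Varid{ff}$) — and the predicate at $\Varid{pco}$-judgements is the $\top\top$-lifted predicate of \Cref{eq:m:star:co}, which is still needed because $\Varid{hdl}$ does not commute with $\Varid{let\hyp{}in}$ (\Cref{rem:no:evallet}).

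The decisive new ingredient, absent from the canonicity proof, is \emph{admissibility}. Because the semantic fixed point $\rM.\Varid{fix}$ is computed, via \Cref{el:sdt:domain:fixed:point,el:fhar:mod:Y}, as the domain-theoretic least fixed point — the supremum along $\oomega$ of the chain $\bot \sqsubseteq f(\bot) \sqsubseteq f^2(\bot) \sqsubseteq \cdots$ — the logical predicate at $\Varid{pco}$-judgements must be \emph{chain-closed}: whenever every finite approximant is related to the corresponding syntactic unfolding, the supremum must be related too. I would therefore replace the universe $\COmega$ of meta-space propositions used throughout \Cref{app:lr:model} with a subuniverse of \emph{admissible} propositions, namely those closed under the suprema furnished by the completeness structure of \Cref{def:complete:mod:set}, and verify that every type and kind former of $\Fhar$ can be reinterpreted landing in this admissible subuniverse. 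Granting such a universe, showing that $\Varid{fix}^*$ respects its logical predicate becomes a routine fixed-point induction: $\bot$ is vacuously related (a divergent denotation places no constraint on the syntax), the successor step uses the hypothesis that $f^*$ preserves the relation, and admissibility closes the chain at $\infty \in \oomega$ to cover $\rM.\Varid{fix}\,f$ itself.

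With $\MM^*$ constructed, the conclusion follows exactly as in the canonicity and parametricity proofs. The fundamental lemma — the $\Fhar$-analogue of \Cref{lem:closed:fundamental}, obtained via the $\Fhar$ version of the classifying-category correspondence \Cref{thm:fha:functor} — yields for every closed $c : \Varid{pco}\;\Conid{VoidH}\;\Varid{bool}$ a global section of the glued object interpreting $\Varid{pco}^*\;\Conid{VoidH}^*\;\Varid{bool}^*$, which externally asserts precisely that $\Varid{toL} \vcomp \sem{c}$ stands in the adequacy relation to $c$. Unfolding that relation gives the desired implication: if $\Varid{toL} \vcomp \sem{c} = \inr\;\Varid{tt}$ then $c = \Varid{val}\;\Varid{tt}$, and symmetrically for $\Varid{ff}$.

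The hard part will be carving out the admissible, impredicative universe of logical predicates that makes polymorphism and recursion cohabit. The canonicity model leaned on the impredicativity of $\COmega$ to interpret the polymorphic type former (its relational interpretation $\allfor^*$ in \Cref{el:m:star:p:fun}), but impredicative quantification and chain-closedness are in well-known tension — this is exactly why classical domain-theoretic adequacy for polymorphic recursive calculi is subtle. I expect the well-complete structure of \Cref{thm:pdom:closure} can be used to isolate a universe that is at once closed under impredicative $\Pi$ and under $\oomega$-suprema, but the step I would scrutinise most is checking that the $\top\top$-lifted computation predicate remains admissible — that a chain of ``good continuation'' obligations has a supremum that is again such an obligation — since it is there that the lifting, the impredicative encoding, and the chain-completeness all interact at once.
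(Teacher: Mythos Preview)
This statement is a \emph{conjecture} in the paper, not a theorem: the paper does not give a proof, only a one-paragraph sketch of how one might proceed. So there is no proof to compare against in the strict sense; what can be compared is your proposed strategy versus the paper's sketched one.

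Your plan and the paper's sketch share the overall shape---an STC-style gluing argument with admissibility to handle $\Varid{fix}$---but differ in a load-bearing way. You propose gluing $\Pr(\CatJdgOf\Fhar)$ with $\AsmK$ \emph{externally}, mirroring the binary parametricity setup. The paper instead proposes doing the entire STC argument \emph{internally} in the effective topos $\Eff$: construct the category of judgements of $\Fhar$ internally, then glue the internal universe $P$ of modest sets with $P$-valued presheaves over it. The point of internalising is exactly the obstacle you flag at the end: the impredicative universe you need for $\allfor^*$ must also support the SDT structure (completeness, admissibility), and inside $\Eff$ the universe $P$ already carries both. Working externally as you suggest, it is not clear what plays the role of $\COmega$---you would need an impredicative universe of admissible relations living in the glued category, and the paper's choice to internalise is precisely to make that universe available for free.

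You should also note that the paper explicitly hedges: it cites Sterling--Gratzer--Birkedal's attempt at a similar internal-STC adequacy proof (for recursion without polymorphism) and points out that their argument has a currently \emph{unfixed} erratum. So even the paper's own suggested route is not known to go through cleanly; your identification of the impredicativity/admissibility tension as ``the hard part'' is accurate, and neither you nor the paper resolves it.
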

This implies that if $\ensuremath{\Varid{toL}} \vcomp \sem{c} = \inl\;\unitel$, then $c$ does
not equal to \ensuremath{\Varid{val}\;\Varid{tt}} or \ensuremath{\Varid{val}\;\Varid{ff}}, otherwise $\ensuremath{\Varid{toL}} \vcomp \sem{c}$ would not
be $\inl \; \unitel$.

We expect adequacy can be proved using synthetic Tait computability (STC) \emph{internally}
in the effective topos $\Eff$, in which we glue the (internal) category $P$
with the  category of $P$-valued presheaves over the category of
judgements of $\Fhar{}$ (constructed internally in $\Eff$).
Such an internal STC argument has been attempted by
\citet{Sterling_et_al_2022} to prove adequacy for a language with \emph{security
levels} and \emph{general recursion} but without impredicative polymorphism, whose
denotational semantics is a sheaf-model of SDT, although there is a currently
unfixed problem in their proof \citep{sterling2023_erratum}.
\end{element}

\end{document}